%------------------------------------------------------------------------------
% Beginning of jfg-template.tex
%------------------------------------------------------------------------------

% LaTeX sample file for Journal of Fractal Geometry
%
\documentclass{amsart}
\usepackage{graphicx} 
\usepackage{epsfig}
\usepackage{pstricks,pst-eucl}
\usepackage{mathptmx}
\usepackage{mathrsfs}

\usepackage{bbm,bm}

\usepackage{xcolor}

%\usepackage{pdftricks}
%\usepackage{ps4pdf}
%\usepackage{epstopdf}
%\usepackage{psfrag}

%\usepackage[italian]{babel}
%\usepackage[T1]{fontenc}
%\usepackage[utf8x]{inputenc}

%\usepackage{mathrsfs, amsthm, amsmath}
%\usepackage{amsfonts}

%\usepackage{pict2e}

%\usepackage{multind}
%\makeindex{n}
%\makeindex{a}

%\usepackage[all]{xy}

%%%%%%%%%%%

%\numberwithin{section}{chapter}
%\numberwithin{equation}{section}

%\makeatletter
%\let\c@proposition\c@theorem
%\let\c@corollary\c@theorem
%\let\c@lemma\c@theorem
%\let\c@definition\c@theorem
%\let\c@example\c@theorem
%\let\c@remark\c@theorem
%\let\c@problem\c@theorem
%\makeatother

%%%%%%%%%%%

%\usepackage{amsthm} % http://tex.stackexchange.com/questions/88721/defining-shared-counters-for-theorem-environments-after-environments-are-defin

%\newtheorem{cor}[theorem]{Corollary}
%\newtheorem{prop}[theorem]{Proposition}

%\theoremstyle{definition}
%\newtheorem{defn}[theorem]{Definition}

%\spnewtheorem{example}[theorem]{Example}

\let\text=\mbox

\catcode`\@=11
\let\ced=\c
\renewcommand{\a}{\alpha}
\renewcommand{\b}{\beta}
\renewcommand{\d}{\delta}
\newcommand{\g}{\lambda}
\renewcommand{\o}{\omega}
\newcommand{\q}{\quad}

\newcommand{\s}{\sigma}

\newcommand{\cal}{\mathcal}

\newcommand{\M}{{\cal M}}

\newcommand{\ty}{\infty}
\newcommand{\e}{\varepsilon}
\newcommand{\f}{\varphi}
\newcommand{\ov}[1]{\overline{#1}}

\renewcommand{\O}{\Omega}
\newcommand{\pa}{\partial}
\newcommand{\I}{\mathbbm{i}}
\newcommand{\D}{\mathrm{d}}

\newcount\br@j
\br@j=0

\newcommand{\udesno}[1]{\unskip\nobreak\hfil\penalty50\hskip1em\hbox{}
             \nobreak\hfil{#1\unskip\ignorespaces}
                 \parfillskip=\z@ \finalhyphendemerits=\z@\par
                 \parfillskip=0pt plus 1fil}
\catcode`\@=11

\newcommand{\eR}{\mathbb{R}}
\newcommand{\eN}{\mathbb{N}}
\newcommand{\Ze}{\mathbb{Z}}

\newcommand{\Ce}{\mathbb{C}}

\newcommand{\re}{\mathop{\mathrm{Re}}}
\newcommand{\im}{\mathop{\mathrm{Im}}}

\newcommand{\po}{{\mathop{\mathcal P}}}

\newcommand{\res}{\operatorname{res}}

%\newcommand{\cal}{\mathcal}

%\advance\baselineskip by 6pt

\newcommand{\sideremark}[1]{\ifvmode\leavevmode\fi\vadjust{\vbox to0pt{\vss % the remark
      \hbox to 0pt{\hskip\hsize\hskip1em           %                will appear only
 \vbox{\hsize2cm\tiny\raggedright\pretolerance10000%                on the side
 \noindent #1\hfill}\hss}\vbox to8pt{\vfil}\vss}}}%
                                                   %                in 3cm

                                                   %                wide box
                                                   %

\newcommand{\qs}{\q}

\newcommand{\ovb}[1]{\mkern 1.5mu\overline{\mkern-1.5mu#1\mkern-1.5mu}\mkern 1.5mu}

\newcommand{\E}{\mathrm{e}}
\newcommand{\di}{\,\mathrm{d}}

%\address[e-mail address]{Author's name, Department, University, PO Box or Street, City, Country}

\address[lapidus@math.ucr.edu]{Michel L.\ Lapidus, University of California, Riverside, 900 University Avenue, Riverside, CA 92521}

\address[goran.radunovic@fer.hr]{Goran Radunovi\'c, University of Zagreb, Faculty of Electrical Engineering and Computing, Unska 3, 10000 Zagreb, Croatia}

\address[darko.zubrinic@fer.hr]{Darko \v Zubrini\'c, University of Zagreb, Faculty of Electrical Engineering and Computing, Unska 3, 10000 Zagreb, Croatia}

% Uncomment the next line if you would like your equations to be numbered according to sections:
%\numberwithin{equation}{section}

\newtheorem{theorem}{Theorem}[section]
\newtheorem{corollary}[theorem]{Corollary}

\newtheorem{proposition}[theorem]{Proposition}

 %%%% for unnumbered statements

\theoremstyle{definition}
\newtheorem{definition}[theorem]{Definition}
\newtheorem{example}[theorem]{Example}
\newtheorem{remark}[theorem]{Remark}

%% Create a new command named "\eps" as abbreviation for \varepsilon

	 % Epsilon

\numberwithin{equation}{section}

% -- New Operators --
%Use the command \DeclareMathOperator to define a new log-like function in LaTeX;
%see Section 2.4 below

%\DeclareMathOperator{\sgn}{sgn}

\title[Minkowski measurability criteria for compact sets and relative fractal drums]{Minkowski measurability criteria for compact sets and relative fractal drums in Euclidean spaces}

\author[M.\ L.\ Lapidus, G.\ Radunovi\'c and D.\ \v Zubrini\' c]{Michel L.\ Lapidus, Goran Radunovi\'c and Darko \v Zubrini\' c}

\thanks{The work of Michel L.\ Lapidus was partially supported by the US National Science Foundation (NSF) under the research grants DMS-0707524 and DMS-1107750, as well as by the Institut des Hautes Etudes Scientifiques (IHES) in Paris/Bures-sur-Yvette, France, where the first author was a visiting professor in the Spring of 2012 while part of this work was completed.}
\thanks{The research of Goran Radunovi\'c and Darko \v Zubrini\'c was supported by the Croatian Science Foundation under the project IP-2014-09-2285 and by the Franco-Croatian 
PHC-COGITO~project.}

\begin{document}

\begin{abstract}
We survey recent advances dealing with Minkowski measurability criterion for a large class of relative fractal drums (or, in short, RFDs), in Euclidean spaces of arbitrary dimensions in terms of their complex dimensions, which are defined as the poles of their associated fractal zeta functions.
Relative fractal drums represent a far-reaching generalization of bounded subsets of Euclidean spaces as well as of fractal strings studied extensively by the first author and his collaborators.
In fact, the Minkowski measurability criterion discussed here is a generalization of the corresponding one obtained for fractal strings by the first author and M.\ van Frankenhuijsen.
Similarly as in the case of fractal strings, the criterion is formulated in terms of the locations of the principal complex dimensions associated with the relative drum under consideration.
These complex dimensions are defined as poles or, more generally, singularities of the corresponding distance (or tube) zeta function associated.
We also discuss the notion of gauge-Minkowski measurability of RFDs and survey several results connecting it to the nature and location of the complex dimensions.
(This is especially useful when the underlying scaling law does not follow a classic power law.)
We illustrate our results and their applications by means of a number of interesting examples.
\end{abstract}

\subjclass[2010]{
Primary: 11M41, 28A12, 28A75, 28A80, 28B15, 42B20, 44A05.
Secondary: 35P20, 40A10, 42B35, 44A10, 45Q05.
}

\keywords{
fractal set, box dimension, relative fractal drum, fractal string, Minkowski content, Minkowski measurability criterion, Minkowski measurable set, fractal zeta functions complex dimensions of a relative fractal drum, distance zeta function, tube zeta function, Mellin transform, residue, meromorphic extension, singularities of fractal zeta functions, gauge-Minkowski measurability.
}

\maketitle

\tableofcontents

\section{Introduction}

For a nonempty bounded subset $A$ of the $N$-dimensional Euclidean space $\eR^N$, with $N\geq 1$, the attribute `fractal' is most readily associated with the notion of fractal dimension.
As is well known, there are several different notions of fractal dimensions, for example, the Minkowski (or box) dimension, the Hausdorff dimension and the packing dimension (see~\cite{falc}), but not any single one of them encompasses all of the sets that we would like to call fractal.
One only has to recall the example of the famous `devil's staircase'; namely, the graph of the Cantor function, for which all of the known fractal dimensions are trivial (more specifically, are equal to 1).
Nevertheless, one glance at the Cantor graph suffices to make us want to call it ``fractal''.

The recent development of the higher-dimensional theory of complex dimensions in [LapRa\v Zu1--8] 
%\cite{fzf,dtzf,mefzf,refds,cras_ftf,brezis,tabarz,ftf_A}
provides us with a new tool which can be used, among many other things, to reveal some of the elusive `fractality' of sets for which the other well-known fractal dimensions fail.
The complex dimensions of a relative fractal drum or RFD (and, in particular, of a bounded set) in $\eR^n$ are defined as the multiset of poles (or of more general singularities) of the meromorphic extension of the associated fractal distance (or tube) zeta function.
These complex dimensions also generalize the notion of the Minkowski (or box) dimension of a given relative fractal drum.
Actually, even more importantly, they play a major role in determining the asymptotics of the volume of the $t$-neighborhoods of the given RFD (or bounded set) as $t\to0^+$ and therefore, can be considered as a footprint of its inner geometry.
(See \cite{ftf_A} and \cite[Chapter 5]{fzf}.)
 
Of course, being a generalization of the Minkowski dimension, the complex dimensions should also be connected to the property of Minkowski measurability.
Showing this in an explicit manner is one of the main goals of the present survey paper.

Recall that the value of the Minkowski content of a bounded subset $A$ of $\eR^N$ can be used as one of the equivalent ways to define the Minkowski dimension.
More precisely, for a nonempty bounded subset $A$ of $\eR^N$ and $0\leq r\leq N$, we denote the $r$-dimensional Minkowski content of $A$ by
\begin{equation}\label{mink_c}
{{\M}}^{r}(A):=\lim_{\d\to0^+}\frac{|A_\d|}{\d^{N-r}}, 
\end{equation}
whenever this limit exists as a value in $[0,+\ty]$.
Here, $|\cdot|$ denotes the $N$-dimensional Lebesgue measure in $\eR^N$ and
\begin{equation}\label{Adelta}
A_{\d}:=\{x\in\eR^N:d(x,A)<\d\}
\end{equation}
is the {\em $\d$-neighborhood} (or the {\em $\d$-parallel set}) of $A$, with $d(x,A):=\inf\{|x-a|:a\in A\}$ denoting the Euclidean distance from $x\in\eR^N$ to $A$.
The set $A$ is said to be {\em Minkowski measurable} (of dimension $r$) if $\M^r(A)$ exists and satisfies $0<\M^r(A)<\ty$. The Minkowski (or box) dimension of $A$ is then equal to~$r$.

It has been of considerable interest in the past to determine whether or not a set $A$ is Minkowski measurable.
One of the motivations can be found in Mandelbrot's suggestion in~\cite{Man} to use the Minkowski content as a characteristic for the texture of sets (see~\cite[\S X]{Man}).
Mandelbrot called the quantity $1/\M^r(A)$ the {\em lacunarity} of the set $A$ and made the heuristic observation that for subsets of $\eR^N$, small lacunarity corresponds to the spatial homogeneity of the set, which means that the set has small, uniformly distributed holes.
On the other hand, large lacunarity corresponds to the clustering of the set and to large holes between different clusters.
In other words, a small Minkowski content means that $A$ has large, uniformly distributed holes, while a large Minkowski content means that we have small, uniformly distributed holes.  This fits well with the intuitive meaning of $\M^r(A)$ as a `measure' of the `fractal content' of $A$.
More information on this subject can be found in~\cite{BedFi,Fr,lapidusfrank} and in~\cite[\S12.1.3]{lapidusfrank12}. We note that in the last reference (and in [Lap-vFr1]), complex dimensions and the associated residues are suggested to provide a more precise measure of the elusive notion of lacunarity. 

More directly relevant to our present work, prior to that, a lot of attention was devoted to the notion of Minkowski content in connection to the (modified) Weyl--Berry conjecture \cite{Lap1} which relates the spectral asymptotics of the Laplacian on a bounded open set and the Minkowski content of its boundary.
In dimension one, that is, for fractal strings (i.e., for one-dimensional drums with fractal boundary), this conjecture was resolved affirmatively in [LapPo1].\footnote{For the original Weyl--Berry conjecture and its physical applications, see Berry's papers~[Berr1--2].
%\cite{Berr1,Berr2}.
Furthermore, early mathematical work on this conjecture and its applications can be found in~\cite{BroCar, Lap1, Lap2, Lap3, lapiduspom, LapPo2, FlVa}.
For a more extensive list of later work, see~\cite[\S12.5]{lapidusfrank12}.}  % \cite{LaPo1,lapiduspom}
A crucial part of this result was the characterization of Minkowski measurability of bounded subsets of $\eR$ obtained in~\cite{lapiduspom}.\footnote{A new proof of a part of this result was later given in \cite{Fal2} and, more recently, in \cite{winter}.}
In particular, this led to an important reformulation of the Riemann hypothesis, 
%as announced in \cite{LaMa2}, 
in terms of an inverse spectral problem for fractal strings; see~\cite{LapMa2}.
See the formulation given in~\cite{Lap1} and which was proved for subsets of $\eR$ in 1993 by M.\ L.\ Lapidus and C.\ Pomerance~\cite{lapiduspom}.

Our motivations for introducing relative fractal drums (in \cite{refds} and \cite[Chapter\ 4]{fzf}) and stating the results of this paper in such a generality are two-fold.
They are partly of a practical nature and are also partly due to the emergence of new and interesting phenomena arising in this general context, such as the possibility of negative Minkowski dimensions and of studying the local properties of fractals.
The practicality of working with relative fractal drums (RFDs, in short) reflects itself in the fact that it provides us with a unified approach to both the higher-dimensional theory of complex dimensions [LapRa\v Zu1] and to the well-known theory of complex dimensions for fractal strings [Lap-vFr1--2].
Furthermore, RFDs provide us with a more elegant and convenient way of computing some of the fractal zeta functions for bounded sets by subdividing them into appropriate relative fractal subdrums.

%INS
The main goal of this paper is to provide an overview of fractal tube formulas and especially, of the corresponding Minkowski measurability criteria, in the general context of relative fractal drums (and, in particular, of bounded sets) in $\eR^N$, with $N\ge1$ arbitrary, as well as to illustrate these results via a variety of examples. We next give a very brief and informal discussion of some of those results, in a simple situation.

The (pointwise or distributional) {\em fractal tube formula} obtained in [LapRa\v Zu8] (see also [LapRa\v Zu1, \S5.3]) is of the following kind, when expressed in terms of the distance zeta function $\zeta_{A,\O}=\zeta_{A,\O}(s)$ of an RFD $(A,\O)$ in $\eR^N$ and in the case of simple complex dimensions (i.e., when the poles $\o$ of $\zeta_{A,\O}$ are simple, with associated residues denoted by $\res(\zeta_{A,\O},\o)$):

Under appropriate hypotheses on the growth of $\zeta_{A,\O}$, 
\begin{equation}\label{Vt}
\begin{aligned}
V(t)=V_{A,\O}(t)=\sum_{\o\in {\mathcal D}_{A,\O}}\res(\zeta_{A,\O},\o)\frac{t^{N-\o}}{N-\o}+R(t),
\end{aligned}
\end{equation}
where ${\mathcal D}_{A,\O}={\mathcal P}(\zeta_{A,\O})$ is the set of (visible) complex dimensions of the RFD $(A,\O)$ (i.e., the visible poles of $\zeta_{A,\O}$) and $R(t)$ is an error term which can be explicitly estimated when $t\to0^+$. Here, the tube function $V(t)$ denotes the volume of the $t$-neighborhood of $(A,\O)$.\footnote{For the special case of a bounded subset of $A$ of $\eR^N$ and given $t>0$, $V(t)$ is simply the $N$-dimensional measure $|A_t|$ of $A_t$, i.e., of the set of points of $\eR^N$ within a distance less than $t$ from $A$, as given by~\eqref{Adelta}. For a general $RFD$ $(A,\O)$ in $\eR^N$, we let $V(t):=|A_t\cap\O|$.} (See \S2 below for the precise definitions.) Furthermore, under stronger growth hypotheses on $\zeta_{A,\O}$ (which are typically satisfied, for example, in self-similar situations), we can let $R(t)\equiv0$ in Equation~\eqref{Vt} and we therefore obtain an {\em exact} fractal tube formula.

Assuming for simplicity that we could easily separate the {\em principal} complex dimensions (i.e., the poles of $\zeta_{A,\O}$ with maximal real $D$, necessarily equal to the Minkowski dimension of $(A,\O)$) from the other complex dimensions (i.e., the poles of $\zeta_{A,\O}$ with real part strictly less than $D$), we would then conclude that
\begin{equation}\label{Wt}
{\mathcal W}(t):=\frac{V(t)}{t^{N-D}}
\end{equation}
has a (nontrivial) limit as $t\to0^+$ if and only if the only complex dimension with maximal real part $D$ is $D$ itself. Indeed, in light of the fractal tube formula \eqref{Vt}, the presence of {\em nonreal} principal complex dimensions would give rise to actual oscillations in the leading asymptotic term for ${\mathcal W}(t)$ in \eqref{Wt}, and hence, prevent ${\mathcal W}(t)$ from having a limit as $t\to0^+$.\footnote{Naturally, the actual proof of the Minkowski measurability criterion (which is described in \S4 below) is significantly more complicated than is suggested here.} 

In fact, our {\em Minkowski measurability criterion} for relative fractal drums, which extends to arbitrary dimension $N\ge1$ the corresponding criterion for bounded fractal strings obtained in [Lap-vFr1--2, Chapter 8], can be stated as follows (see Theorem \ref{criterion_p} or Theorem \ref{criterion} and Corollary \ref{5.4.20.1/4} below for the precise statement and hypotheses; see also \cite[\S5.4.3]{fzf}).

\begin{theorem}
Let $(A,\O)$ be an RFD in $\eR^N$, with $N\ge1$ arbitrary. Then, under suitable hypotheses on its distance zeta function $\zeta_{A,\O}$, the following conditions are equivalent$:$
\medskip

$(i)$ The RFD $(A,\O)$ is Minkowski measurable, with Minkowski dimension $D$.
\smallskip

$(ii)$ The only principal complex dimension of $(A,\O)$ is $D$, and it is simple $($i.e., it is a simple pole of $\zeta_{A,\O})$.
\end{theorem}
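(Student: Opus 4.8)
The plan is to derive the Minkowski measurability criterion directly from the (pointwise or distributional) fractal tube formula \eqref{Vt}, combined with a careful analysis of the associated Dirichlet-type series expansion. The key observation is that, under the stated growth hypotheses on $\zeta_{A,\O}$, the tube function admits an asymptotic expansion in which each complex dimension $\o\in{\mathcal D}_{A,\O}$ on the critical line $\re s = D$ contributes a term proportional to $t^{N-\o}$, so that after dividing by $t^{N-D}$ we obtain
\begin{equation}\label{Wt-expansion}
{\mathcal W}(t) = \sum_{\substack{\o\in{\mathcal D}_{A,\O}\\ \re\o=D}}\res(\zeta_{A,\O},\o)\,\frac{t^{D-\o}}{N-\o} + o(1)
\end{equation}
as $t\to 0^+$, where the error $o(1)$ collects the contributions of the complex dimensions with $\re\o<D$ (which give powers $t^{D-\re\o}\to 0$) together with the remainder $R(t)$ and, if nonzero, with appropriate tail estimates. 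Writing each nonreal principal complex dimension as $\o = D + \I\tau$ with $\tau\neq0$, the corresponding term becomes $c_\o\, t^{-\I\tau}=c_\o\,\E^{-\I\tau\log t}$, a genuinely oscillatory function of $\log t$ that does not converge as $t\to0^+$.

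The implication $(ii)\Rightarrow(i)$ is the easier direction. First I would assume the only principal complex dimension is $D$ and that it is simple; then every other term in \eqref{Wt-expansion} has $\re\o<D$, so ${\mathcal W}(t)\to \res(\zeta_{A,\O},D)/(N-D)$ as $t\to0^+$. One then checks that this limit is finite and nonzero: finiteness follows from the growth hypotheses (the sum over the remaining complex dimensions converges and the error term is controlled), while strict positivity follows because $\res(\zeta_{A,\O},D)>0$ for an RFD of Minkowski dimension $D$ with $D$ being a (simple) complex dimension—this is a consequence of the integral representation of $\zeta_{A,\O}$ and the fact that $D$ is the abscissa of convergence. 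Hence $(A,\O)$ is Minkowski measurable with $\M^D(A,\O)=\res(\zeta_{A,\O},D)/(N-D)$, which also yields the expected formula for the Minkowski content.

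The implication $(i)\Rightarrow(ii)$ is the main obstacle and requires more care, since one must rule out the possibility of cancellation among oscillatory terms. The strategy is contrapositive: assuming that $(ii)$ fails, I would show ${\mathcal W}(t)$ has no limit. If $D$ is a principal complex dimension but is not simple (say a pole of order $m\ge 2$), then the tube formula produces a term behaving like $(\log(1/t))^{m-1}$, so ${\mathcal W}(t)\to+\infty$ and $(A,\O)$ is not Minkowski measurable (one must also verify it is not Minkowski degenerate in a way that would trivialize the statement—handled by the lower bound from $\res$). If there is a nonreal principal complex dimension, the argument is more delicate: the finite sum of oscillatory exponentials $\sum c_\o \E^{-\I\tau_\o\log t}$ over the (finitely many, by the growth hypothesis) principal complex dimensions in a given window is an almost periodic function of the variable $u:=\log(1/t)$, and one shows that such a trigonometric polynomial, together with the vanishing contributions from the lower complex dimensions, cannot tend to a constant unless all the nonreal coefficients vanish. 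This is where one invokes a Wiener–Tauberian or almost-periodicity argument: an almost periodic function converging at infinity must be constant, forcing every $c_\o=0$ for $\o\neq D$; but $c_\o$ is a nonzero multiple of $\res(\zeta_{A,\O},\o)$, contradicting the assumption that $\o$ is a complex dimension.

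A technical subtlety worth flagging is that in the purely distributional case (when only a distributional fractal tube formula is available, not a pointwise one), the convergence of ${\mathcal W}(t)$ must first be upgraded, or the Minkowski measurability must be read off from the distributional asymptotics via an appropriate Tauberian theorem; alternatively, one reduces to the tube (rather than distance) zeta function $\widetilde\zeta_{A,\O}$, for which the relation between $V(t)$ and the Mellin transform is cleaner, and whose poles coincide with those of $\zeta_{A,\O}$ with the same principal set. The bridge between the two zeta functions (the functional equation $\zeta_{A,\O}(s)=(N-s)\widetilde\zeta_{A,\O}(s)+$ boundary term, valid in a suitable half-plane) must be used to transfer the residue computations and to ensure that the factor $N-\o$ in \eqref{Wt-expansion} never vanishes on the critical line (which is automatic here since $D<N$ for a nondegenerate RFD, or handled separately via the tube zeta function if $D=N$).
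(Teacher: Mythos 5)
Your proposal is close in spirit to the paper's proof for the necessity direction, but it departs from the paper's strategy for sufficiency and contains one genuine gap.

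For the direction $(ii)\Rightarrow(i)$, the paper does \emph{not} derive Minkowski measurability from the fractal tube formula. Instead, it proves a standalone sufficiency result (Theorem~\ref{mink_char}) via the Wiener--Pitt Tauberian theorem applied to $\widetilde\zeta_{A,\O}$ viewed as a Laplace transform. That route needs only a meromorphic extension of $\widetilde\zeta_{A,\O}$ to a connected neighborhood of the critical line with $D$ as the sole (simple) pole: no $d$-languidity, no screen. This is not a cosmetic difference. For nonlattice self-similar strings/sprays one cannot in general find a screen $\bm{S}$ along which the zeta function is languid and which separates $\{\re s=D\}$ from the remaining complex dimensions (see Remark~\ref{5.4.19.1/2} and \cite[Example~5.32]{lapidusfrank12}), so a sufficiency statement relying on the tube formula would not cover those cases; the Tauberian sufficiency does, and this is precisely what is exploited in Corollary~\ref{5.4.22.1/4C}. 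Your route is valid under the hypotheses of Theorem~\ref{criterion_p} but yields a strictly weaker, less portable sufficiency statement.

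For the direction $(i)\Rightarrow(ii)$, the almost-periodicity idea is the right one and matches the paper's Theorem~\ref{necess}, but your claim that there are ``finitely many, by the growth hypothesis'' principal complex dimensions is incorrect: $d$-languidity does not bound the number of poles on the critical line. The Cantor string already has infinitely many principal complex dimensions ($\log_3 2 + \I k\mathbf{p}$, $k\in\Ze$), and it satisfies every languidity hypothesis one could wish. Consequently the sum over principal poles is in general an \emph{infinite} trigonometric series, and the Bohr-type ``almost periodic function converging at infinity is constant'' lemma does not apply directly; one needs to justify both the convergence of the series and the vanishing of its coefficients in a framework robust enough to handle infinitely many frequencies and the merely distributional validity of the tube formula. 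The paper resolves this precisely by working with the Mellin zeta function, taking the distributional fractal tube formula (Theorem~\ref{dist_error_mellin}), subtracting the assumed pointwise asymptotics \eqref{A_t_func_m} (viewed as a distributional identity), and invoking Schwartz's uniqueness theorem for almost periodic \emph{distributions} to conclude that every nonreal residue vanishes. You flag the distributional subtlety at the end of your write-up, but as written you never actually close it; without the distributional upgrade, the argument has a gap as soon as there are infinitely many principal poles or the pointwise tube formula is unavailable.
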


\medskip

Of course, as a corollary, an entirely analogous result holds for bounded subsets $A$ of $\eR^N$, still for any $N\ge1$.
\bigskip

In closing the main part of this introduction, we mention that works involving the notion of Minkowski measurability (or of Minkowski content) in a related context include [BroCar, CaLapPe-vFr1--2, Fal1--2, Fed2, FlVa, Fr, Gat, HamLap, HeLap, HerLap1--3, KeKom, Kom, KomPeWi, Lap1--7, LapL\'eRo, LapLu, LapLu-vFr1--2, LapMa, LapPe1--2, LapPeWi1--2, LapPo1--2, LapRa\v Zu1--8, Lap-vFr1--2, Man2, Pe, PeWi, Ra1--2, RatWi, Res, Sta, Tri1--2, Wi, WiZ\"a, Z\"a3, \v Zu] and the many relevant references therein. 

Furthermore, we refer the interested reader to the end of \S2 for references on tube formulas, including, especially, the fractal tube formulas which play a key role in the proofs of our main results.
\bigskip

The plan of the paper is as follows:

In \S2, we recall some of the main definitions, including the notion of a relative fractal drum (or RFD), as well as of the associated fractal zeta functions (i.e., the distance and tube zeta functions) and their poles in a suitable region (i.e., the visible complex dimensions). We also recall the notion of (relative) Minkowski dimensions and content. Furthermore, we provide some of the basic properties of the fractal zeta functions.

In \S3, we state our main result (Theorem \ref{criterion_p}), according to which, under suitable assumptions, an RFD is Minkowski measurable if and only if the only complex dimension with real part $D$ (the Minkowski dimension of the RFD) is $D$ itself, and it is simple. 

We also recover earlier results about self-similar fractal strings obtained in [Lap-vFr2] (Corollary \ref{5.4.22.1/4C} in \S3.2) and discuss a variety of higher-dimensional examples illustrating our main results; see \S3.3, \S3.4 and \S3.5, respectively, about the Sierpi\'nski gasket and the 3-dimensional carpet, families of fractal nests and unbounded geometric chirps, and self-similar sprays (higher-dimensional analogs of self-similar strings, but of a more general nature than those considered in [LapPo2, Lap2--3, Lap-vFr1--2, LapPe2, LapPeWi1--2]).

In \S4, we give the essence of the proof of the main result which we want to emphasize in this paper (Theorem \ref{criterion_p}, restated more specifically in Theorems \ref{criterion} and \ref{tilde_criterion}, along with Corollary \ref{5.4.20.1/4}), the Minkowski measurability criterion for RFDs (and, in particular, for bounded sets) in $\eR^N$, with $N\ge1$ arbitrary. In the process, we state and establish new results, including a sufficient condition for Minkowski measurability (Theorem \ref{mink_char}), whose proof makes use of the Wiener--Pitt Tauberian theorem, and a necessary condition for Minkowski measurability (Theorem \ref{necess}), the proof of which makes use, in particular, of the distributional tube formula for RFDs obtained in [LapRa\v Zu8] and the uniqueness theorem for almost periodic distributions.

Moreover, in \S5, we construct and study a family of $h$-Minkowski measurable RFDs with respect to a suitable (nontrivial) gauge function $h$ and therefore not obeying a standard power scaling law.  (See, especially, Theorems 5.4, 5.7 and 5.9.) In the process, we show, in particular, that the hypotheses of some of our earlier results in [LapRa\v Zu3--4] about the existence of meromorphic continuations of fractal zeta functions (of bounded sets and RFDs in $\eR^N$) are optimal; see Theorem \ref{tubealpha}.

Finally, in \S6, we propose several future research directions in this area and briefly discuss joint work in progress by the authors [LapRa\v Zu9,10] about various extensions of the notion of complex dimensions which involve essential singularities (in addition to poles) of fractal zeta functions, along with the connections between the non-pole singularities, generalized Minkowski contents and generalized (i.e., non-power like) scaling laws.

\section{Preliminaries: Minkowski content and fractal zeta functions}

We begin by introducing some necessary definitions and results from \cite{refds} (see also~\cite{fzf}) that are needed in order to establish the main results of this paper.
We will always assume throughout this paper that all the sets $A$ and $\O$ are nonempty, in order to avoid dealing with trivial cases.

\begin{definition}\label{zeta_r}\label{drum} 
%\cite{fzf,refds}
Let $\Omega$ be a Lebesgue measurable subset of $\eR^N$, not necessarily bounded, but of finite $N$-dimensional Lebesgue measure (or ``volume'').
Furthermore, let $A\subseteq\eR^N$, also possibly unbounded, be such that $\Omega$ is contained in $A_\delta$ for some $\delta>0$.
Then, the {\em distance zeta function $\zeta_{A,\O}$ of $A$ relative to $\Omega$} (or the {\em relative distance 
zeta function} of $(A,\O)$)
 is defined by the following Lebesgue integral:
\begin{equation}\label{rel_dist_zeta}
\zeta_{A,\O}(s):=\int_{\Omega} d(x,A)^{s-N}\D x,
\end{equation}
for all $s\in\Ce$ with $\re s$ sufficiently large.

The ordered pair $(A,\Omega)$, appearing in Definition~\ref{zeta_r} is called   
a {\em relative fractal drum} or RFD, in short. 
We will also use the phrase {\em zeta functions of relative fractal drums} instead of relative zeta functions.
\end{definition}

\medskip

\begin{remark}\label{holo_diff}
In the above definition, we may replace the domain of integration $\O$ in~\eqref{rel_dist_zeta} with $A_\d\cap\O$ for some fixed $\d>0$; that is, we may let
\begin{equation}\label{rel_dist_zeta_d}
\zeta_{A,\O}(s;\d):=\int_{A_\d\cap\Omega} d(x,A)^{s-N}\D x.
\end{equation}
Indeed, the difference $\zeta_{A,\O}(s)-\zeta_{A,\O}(s;\d)$ is then an entire function (see \cite{refds} or \cite{fzf}) so that the above change of the domain of integration does not affect the principal part of the distance zeta function in any way.
Therefore, we can alternatively define the relative distance function of $(A,\O)$ by~\eqref{rel_dist_zeta_d},
since in the theory of complex dimensions we are mostly interested in poles (or, more generally, in singularities) of meromorphic extensions of (various) fractal zeta functions.
Then, in light of the principle of analytic continuation, the dependence of $\zeta_{A,\O}(\,\cdot\,;\d)$ on $\d$ is inessential.
  
The condition that $\O\subseteq A_\d$ for some $\d>0$ is of a technical nature and ensures that the map $x\mapsto d(x,A)$ is bounded for $x\in\O$.
If $\O$ does not satisfy this condition, we can still use the alternative definition given by Equation~\eqref{rel_dist_zeta_d}.\footnote{Since then, $\O\setminus A_\d$ and $A$ are a positive distance apart, this replacement will not affect the relative box dimension of $(A,\O)$ introduced just below or any other fractal properties of $(A,\O)$ that will be introduced later on.}
%In fact, one can further generalize the notion of an RFD in the sense that instead of $|\O|<\ty$, it is enough to assume that $|A_\d\cap\O|<\ty$ for some arbitrary $\d>0$.

%A simple consequence of this, which we will often use in the subsequent examples without mentioning it explicitly, is that if $(A,\O)$ is a relative fractal drum in $\eR^N$ such that $\O$ is contained in $A_{\delta_1}$ for some $\delta_1>0$, we can replace $A_{\delta}\cap\O$ with $\O$ in~\eqref{rel_dist_zeta}, provided $\delta<\delta_1$.
%It is easy to see that there are relative fractal drums which do not satisfy this assumption. 
\end{remark}

\begin{remark}\label{setrfd}
As was already pointed out in the introduction, the notion of a relative fractal drum generalizes the notion of a bounded subset.
Indeed, any bounded subset $A$ of $\eR^N$ may be identified with the relative fractal drum $(A,A_{\d_0})$, for some fixed $\d_0>0$, or even more practically, with $(A,\O)$, where $\O$ is any bounded open set containing $A_{\d_0}$, for some $\d_0>0$.
Of course, in light of Equation \eqref{rel_dist_zeta}, one can then choose the most convenient $\O$.
\end{remark}

Entirely analogous remarks can be made about the tube zeta function of a relative fractal drum, which we now introduce.

\begin{definition}\label{tube_zeta_deff}
%\cite{fzf,refds}
Let $(A,\O)$ be an RFD in $\eR^N$ and fix $\d>0$. We define the {\em tube zeta function} $\widetilde{\zeta}_{A,\O}$ {\em of $A$ relative to $\O$} (or the {\em relative tube zeta function}) by the Lebesgue integral
\begin{equation}\label{401/2}
\widetilde{\zeta}_{A,\O}(s):=\widetilde{\zeta}_{A,\O}(s;\d):=\int_0^{\delta}t^{s-N-1}|A_t\cap\O|\,\D t,
\end{equation}
for all $s\in\Ce$ with $\re s$ sufficiently large.
Here,  $|A_t\cap\O|:=|A_t\cap\O|_N$ denotes the $N$-dimensional Lebesgue measure (volume) of $A_t\cap\O\subseteq\eR^N$.
\end{definition} 

The distance and tube zeta functions of relative fractal drums belong to the class of Dirichlet-type integrals (or, in short, DTIs), and as such, have a well-defined {\em abscissa of $($absolute$)$ convergence}.
The abscissa of convergence of a DTI $f\colon E\to\Ce$, where $E\subseteq\Ce$ is a domain, is defined as the infimum of all real numbers $\a$ for which the integral $f(\a)$ is absolutely convergent and we denote it by $D(f)$.\footnote{For more information on DTIs, as well as for their generalizations, we refer the interested reader to~\cite[Appendix~A]{fzf}.} 
%\begin{equation}\label{a_conv}
%D({\zeta}_{A,\O}):=\inf\left\{\alpha\in\eR:\int_{\Omega} d(x,A)^{\a-N}\D x<\ty\right\},
%\end{equation}
%\begin{equation}\label{a_conv_t}
%D(\widetilde{\zeta}_{A,\O}):=\inf\left\{\alpha\in\eR:\int_0^{\delta}t^{s-N-1}|A_t\cap\O|\,\D t<\ty\right\},
%\end{equation}
It then follows that the Lebesgue integral defining $f(s)$ is convergent (and hence, absolutely convergent) for any $s\in\Ce$ such that $\re s>D(f)$.
A basic result about a DTI $f$ is the fact that it is a holomorphic function in the open half-plane to the right of its abscissa of convergence; that is, on the {\em half-plane of $($absolute$)$ convergence} $\Pi(f):=\{\re s>D(f)\}$.\footnote{Here, and in the rest of this paper we will abbreviate in the following way open half-planes and vertical lines; that is, given $\alpha\in\eR$, subsets of $\Ce$ of the type $\{s\in\Ce:\re s> \a\}$ and $\{s\in\Ce:\re s= \a\}$ are denoted $\{\re s>\alpha\}$ and $\{\re s=\alpha\}$, respectively.}
%For more results on DTIs and their generalizations we point out the interested reader to~\cite[Appendix~A]{fzf}.

The relative distance and tube zeta functions are connected by the functional equation
\begin{equation}\label{equ_tilde}
\zeta_{A,\O}(s;\d)=\delta^{s-N}|A_\delta\cap\O|+(N-s)\widetilde\zeta_{A,\O}(s;\d),
\end{equation}
which is valid on any open connected subset $U$ of $\Ce$ to which any of these two zeta functions has a meromorphic continuation (see \cite{refds} or \cite{fzf}).
Observe that in the above functional equation, we use the modified definition of the relative distance zeta function $\zeta_{A,\O}$ given by \eqref{rel_dist_zeta_d}.
If we want to use the definition given by \eqref{rel_dist_zeta}, the above functional equality is valid only if we choose $\delta$ large enough so that $\Omega\subseteq A_\delta$; hence, $A_\delta\cap\O=\O$.
Otherwise, we must add to the right-hand side of \eqref{equ_tilde} another term which is an entire function.
This result is very important since the distance zeta function is much more practical to calculate in concrete examples, as opposed to the tube zeta function for which we need information about the {\em tube function} $t\mapsto|A_t\cap\O|$ itself.
On the other hand, the tube zeta function has an important theoretical value and gives us a way to obtain the information about the asymptotics of the tube function via an application of the inverse Mellin transform directly from the distribution of the complex dimensions of the relative fractal drum under consideration.
This connection enabled us to obtain our {\em pointwise and distributional fractal tube formulas} in \cite{ftf_A} (see also \cite[Chapter 5]{fzf}) which will play a central part in obtaining the aforementioned Minkowski measurability criterion emphasized in the present paper.
These fractal tube formulas extend to any dimension $N\geq 1$ the corresponding pointwise and distributional fractal tube formulas for fractal strings (i.e., when $N=1$) obtained in [Lap-vFr1--3]; %\cite{lapidusfrank,lapidusfrank06,lapidusfrank12}
see \cite[Chapter\ 8, esp.\ \S8.1]{lapidusfrank12}.

We now introduce the notions of Minkowski content and Minkowski (or box) dimension of a relative fractal drum $(A,\O)$ and relate them to the distance (and tube) zeta functions of $(A,\O)$.
For any {\em real} number $r$, we define the {\em upper $r$-dimensional Minkowski content of $A$ relative to $\Omega$}
(or {\em the upper relative Minkowski content}, or {\em the upper Minkowski content of the relative fractal drum $(A,\Omega)$}) by
\begin{equation}\label{minkrel}
{{\mathcal{M}}}^{*r}(A,\Omega):=\limsup_{t\to0^+}\frac{|A_t\cap\Omega|}{t^{N-r}}, 
\end{equation}
and we then proceed in the usual way (except for considering $r\in\eR$ instead of $r\ge0$):
\begin{equation}\label{dimrel}
\begin{aligned}
\ov\dim_B(A,\Omega)&=\inf\{r\in\eR:{{\mathcal{M}}}^{*r}(A,\Omega)=0\} \\
&=\sup\{r\in\eR:{{\mathcal{M}}}^{*r}(A,\Omega)=+\ty\}.
\end{aligned}
\end{equation}
We call $\ov{\dim}_B(A,\O)$ the {\em relative upper box dimension}
 $($\rm{or} {\em relative Minkowski dimension}$)$ of $A$ with respect to $\Omega$ (or else the {\em relative upper box dimension of $(A,\Omega)$}).
Note that $\ov\dim_B(A,\Omega)\in[-\ty,N]$, and that the values can indeed be negative, even equal to $-\ty$; 
see~\cite[\S4.1.2]{fzf}.
Also note that for these definitions to make sense, it is sufficient that $|A_\d\cap\O|<\ty$ for some $\d>0.$

The value of the {\em lower $r$-dimensional Minkowski content} of $(A,\Omega)$, denoted by ${\mathcal{M}}_*^{r}(A,\Omega)$, is defined as in \eqref{minkrel}, except for a lower instead of an upper limit.
Analogously as in \eqref{dimrel}, we define the {\em relative lower box $(${\rm or} Minkowski$)$ dimension} of $(A,\Omega)$ by using the lower instead of the upper $r$-dimensional Minkowski content of $(A,\Omega)$.
%\begin{equation}\label{dimrel2}
%\begin{aligned}
%\underline\dim_B(A,\Omega)&=\inf\{r\in\eR:{\mathcal{M}}_*^r(A,\Omega)=0\}\\
%&=\sup\{r\in\eR:{\mathcal{M}}_*^r(A,\Omega)=+\infty\}.
%\end{aligned}
%\end{equation} 
Furthermore, in the case when $\underline\dim_B(A,\Omega)=\ov\dim_B(A,\Omega)$, we denote by
$
\dim_B(A,\Omega)
$ 
this common value and call it the {\em relative box $(${\rm or} Minkowski$)$ dimension\label{rel_box_dim}}.
If there exists a real number~$D$ such that
\begin{equation}
0<{\mathcal{M}}_*^D(A,\Omega)\le{\mathcal{M}}^{*D}(A,\Omega)<\ty,
\end{equation}
we say that the relative 
fractal drum $(A,\O)$ is {\em Minkowski nondegenerate}.\label{nondeg_rel}
It then follows that $\dim_B(A,\O)$ exists and is equal to $D$. Otherwise, we say that the RFD $(A,\O)$ is {\em Minkowski degenerate}.

If ${\mathcal{M}}_*^D(A,\Omega)={\mathcal{M}}^{*D}(A,\Omega)$, we denote this common value by $\mathcal{M}^D(A,\Omega)$ and call it the {\em relative 
Minkowski content} of $(A,\O)$.
If $\mathcal{M}^D(A,\Omega)$ exists for some $D\in\eR$ and 
\begin{equation}
0<\mathcal{M}^D(A,\Omega)<\ty
\end{equation}
 (in which case $\dim_B(A,\Omega)$ exists and then, necessarily, $D=\dim_B(A,\Omega)$), we say that the relative fractal drum $(A,\Omega)$ is {\em Minkowski 
measurable}.
Various examples and properties of the relative box dimension can be found in [Lap1--3], [LapPo1--3], \cite{lapidushe}, [Lap--vFr1--3], \cite{rae}, [LaPe2--3], [LapPeWi1--2] and, in full generality, in [LapRa\v Zu1--8].

Theorems \ref{an_rel}, \ref{pole1} and \ref{pole1mink_tilde} below, cited from [LapRa\v Zu1,4], provide some basic results from the theory of zeta functions of relative fractal drums.

\begin{theorem}\label{an_rel}
%\cite{fzf,refds}
Let $(A,\O)$ be a relative fractal drum in $\eR^N$. Then the following properties hold$:$

\bigskip 

$(a)$ The relative distance zeta function $\zeta_{A,\O}(s)$ is holomorphic in the half-plane $\{\re s>\overline{\dim}_B(A,\Omega)\}$.
More precisely,
\begin{equation}
D(\zeta_{A,\O})=\overline{\dim}_B(A,\Omega).
\end{equation}

\bigskip

$(b)$ If the relative box $($or Minkowski$)$ dimension $D:=\dim_B(A,\O)$ exists, $D<N$, and ${\M}_*^{D}(A,\O)>0$, then $\zeta_{A,\O}(s)\to+\ty$ as $s\in\eR$
converges to $D$ from the right.   
\end{theorem}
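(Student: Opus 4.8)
The plan is to prove parts $(a)$ and $(b)$ separately, in both cases reducing matters to elementary estimates on the tube function $t\mapsto|A_t\cap\O|$ and its relation, via~\eqref{rel_dist_zeta_d} and~\eqref{401/2}, to the Dirichlet-type integrals defining $\zeta_{A,\O}$ and $\widetilde\zeta_{A,\O}$. For part $(a)$, I would first show that $\zeta_{A,\O}$ is holomorphic on $\{\re s>\overline{\dim}_B(A,\O)\}$, which amounts to the inequality $D(\zeta_{A,\O})\le\overline{\dim}_B(A,\O)$. Fix $r>\overline{\dim}_B(A,\O)$; by definition~\eqref{dimrel} of the upper box dimension we have ${\mathcal M}^{*r}(A,\O)=0$, so in particular $|A_t\cap\O|\le C\,t^{N-r}$ for all sufficiently small $t$. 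Using the layer-cake (co-area) representation $\int_{A_\d\cap\O}d(x,A)^{s-N}\,\D x = \d^{s-N}|A_\d\cap\O| + (N-s)\int_0^\d t^{s-N-1}|A_t\cap\O|\,\D t$ (which is exactly the functional equation~\eqref{equ_tilde}), the absolute convergence of the integral for $\re s>r$ follows since $\int_0^\d t^{\re s-N-1}\cdot t^{N-r}\,\D t=\int_0^\d t^{\re s-r-1}\,\D t<\infty$. Holomorphy on the corresponding half-plane is then the standard fact, recalled in the excerpt, that a DTI is holomorphic to the right of its abscissa of convergence. This gives $D(\zeta_{A,\O})\le\overline{\dim}_B(A,\O)$.

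For the reverse inequality $D(\zeta_{A,\O})\ge\overline{\dim}_B(A,\O)$, I would argue by contradiction: if $\zeta_{A,\O}$ converged absolutely at some real $s_0<\overline{\dim}_B(A,\O)$, pick $r$ with $s_0<r<\overline{\dim}_B(A,\O)$. By~\eqref{dimrel} we have ${\mathcal M}^{*r}(A,\O)=+\infty$, i.e.\ there is a sequence $t_k\to0^+$ with $|A_{t_k}\cap\O|\ge k\,t_k^{N-r}$. Since $d(x,A)<2t_k$ for $x\in A_{2t_k}\cap\O$, restricting the integral~\eqref{rel_dist_zeta_d} (at $s=s_0<N$, so the exponent $s_0-N$ is negative) to the annular region $(A_{2t_k}\setminus A_{t_k})\cap\O$ — or more cleanly, bounding $\int_{A_{2t_k}\cap\O}d(x,A)^{s_0-N}\,\D x\ge (2t_k)^{s_0-N}|A_{2t_k}\cap\O|$ after adjusting — forces the integral to be $\gtrsim k\,t_k^{s_0-r}\to\infty$, contradicting absolute convergence. (One should take a little care if $\overline{\dim}_B(A,\O)\ge N$, but then $D(\zeta_{A,\O})\le N$ automatically and a separate short argument applies; this boundary bookkeeping is the one slightly fussy point.) Combining the two inequalities yields $D(\zeta_{A,\O})=\overline{\dim}_B(A,\O)$.

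For part $(b)$, assume $D=\dim_B(A,\O)$ exists with $D<N$ and ${\mathcal M}_*^D(A,\O)>0$. Then for some $c>0$ and all sufficiently small $t$, $|A_t\cap\O|\ge c\,t^{N-D}$. For real $s>D$ close to $D$, I would estimate from below, using the tube zeta function representation~\eqref{401/2} and the functional equation~\eqref{equ_tilde}: the dominant term is $(N-s)\widetilde\zeta_{A,\O}(s;\d)=(N-s)\int_0^\d t^{s-N-1}|A_t\cap\O|\,\D t\ge (N-s)\,c\int_0^{\d'} t^{s-D-1}\,\D t = \frac{(N-s)\,c\,(\d')^{s-D}}{s-D}$, which blows up as $s\to D^+$ because $N-s\to N-D>0$; the remaining term $\d^{s-N}|A_\d\cap\O|$ stays bounded. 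One must check positivity of the integrand (automatic) and that the lower bound on $|A_t\cap\O|$ holds uniformly on some $(0,\d']$, which follows from the definition of ${\mathcal M}_*^D$. Hence $\zeta_{A,\O}(s)\to+\infty$ as $s\to D^+$ along the reals. I expect the main obstacle to be purely organizational — handling the case distinctions around $\overline{\dim}_B(A,\O)$ versus $N$ in part $(a)$, and justifying that the relevant lower/upper Minkowski-content bounds can be taken to hold on a full one-sided neighborhood of $0$ rather than merely along a subsequence — rather than conceptual; the estimates themselves are routine manipulations of the layer-cake formula. Since these are standard results cited from [LapRa\v Zu1,4], I would in the actual write-up likely just sketch this and refer to \cite{refds,fzf} for the details.
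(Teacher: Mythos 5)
The paper states Theorem 2.9 without proof, citing [LapRa\v Zu1,4]; your argument is correct and follows essentially the same route as the one in those references. The upper bound on $D(\zeta_{A,\O})$ via the layer-cake (co-area) identity together with $|A_t\cap\O|\lesssim t^{N-r}$ for $r>\ov{\dim}_B(A,\O)$, the lower bound by contradiction using a sequence $t_k$ witnessing $\M^{*r}(A,\O)=+\ty$ for $r<\ov{\dim}_B(A,\O)$, and the blow-up in $(b)$ from $\M_*^D>0$ and $\int_0^{\d'}t^{s-D-1}\di t=(\d')^{s-D}/(s-D)$ with the prefactor $N-s\to N-D>0$ all reproduce the standard proof; the corner cases you flag ($\ov{\dim}_B(A,\O)=-\ty$, where the reverse inequality is vacuous, and $\ov{\dim}_B(A,\O)=N$, where $D(\zeta_{A,\O})\le N$ is immediate since the integrand is bounded for $\re s\ge N$) are genuinely only bookkeeping, exactly as you say.
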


We always have that $\ov{\dim}_B(A,\O)\in[-\infty,N]$.
However, if $A\subset\eR^N$ is a bounded set viewed as an RFD, as in Remark \ref{setrfd}, then $\ov{\dim}_B(A,\O)\in[0,N]$.

\begin{remark}\label{tube_holo}%[\cite{fzf}]\label{2.31/2r}
If $\ovb{\dim}_B(A,\O)<N$, then in light of the functional equation~\eqref{equ_tilde}, Theorem~\ref{an_rel} is also valid if we replace the relative distance zeta function with the relative tube zeta function in its statement.
Moreover, it can be shown directly (i.e., without the use of the functional equation), that in the case of the tube zeta function, Theorem~\ref{an_rel} is also valid when $\ovb{\dim}_B(A,\O)=N$; see \cite{fzf}.
\end{remark}

\begin{theorem}\label{pole1}
%\cite{fzf,refds}
Assume that $(A,\O)$ is a nondegenerate RFD in $\eR^N$, 
that is, $0<{\M}_*^{D}(A,\O)\le{\M}^{*D}(A,\O)<\ty$ $($in particular, $\dim_B(A,\O)=D)$, 
and $D<N$.
If $\zeta_{A,\O}(s)$ can be meromorphically extended to a connected open neighborhood of $s=D$,
then $D$ is necessarily a simple pole of $\zeta_{A,\O}(s)$.
Furthermore, the residue $\res(\zeta_{A,\O},D)$ of $\zeta_{A,\O}(s)$ at $s=D$ is independent of the choice of $\d>0$ and satisfies the inequalities
\begin{equation}\label{res}
(N-D){\M}_*^{D}(A,\O)\le\res(\zeta_{A,\O},D)\le(N-D){\M}^{*D}(A,\O).
\end{equation}
Moreover, if $(A,\O)$ is Minkowski measurable, then 
\begin{equation}\label{pole1minkg1=}
\res(\zeta_{A,\O}, D)=(N-D)\M^D(A,\O).
\end{equation}
\end{theorem}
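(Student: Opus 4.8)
The plan is to work with the tube zeta function $\widetilde\zeta_{A,\O}$ rather than the distance zeta function, since the tube function $t\mapsto|A_t\cap\O|$ is precisely the object controlled by the Minkowski content hypotheses, and then to transfer the conclusion back to $\zeta_{A,\O}$ via the functional equation \eqref{equ_tilde}. First I would record the elementary but crucial observation that the nondegeneracy hypothesis $0<{\M}_*^{D}(A,\O)\le{\M}^{*D}(A,\O)<\ty$ means there exist constants $0<c\le C<\ty$ and $\d_0>0$ such that $c\,t^{N-D}\le|A_t\cap\O|\le C\,t^{N-D}$ for all $t\in(0,\d_0)$. Substituting these two-sided bounds into the defining integral \eqref{401/2}, $\widetilde\zeta_{A,\O}(s;\d)=\int_0^{\delta}t^{s-N-1}|A_t\cap\O|\,\D t$ for $\d\le\d_0$, and comparing with $\int_0^{\delta}t^{s-D-1}\,\D t=\delta^{s-D}/(s-D)$ (valid for $\re s>D$), I get that near $s=D$ the function $(s-D)\widetilde\zeta_{A,\O}(s;\d)$ is bounded above and below by positive constants as $s\to D^+$ along the real axis; more precisely $c/(\re s-D)\le \widetilde\zeta_{A,\O}(\re s;\d)-(\text{bounded})\le C/(\re s-D)$. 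This already forces that if $\widetilde\zeta_{A,\O}$ extends meromorphically to a neighborhood of $s=D$, then $D$ is a pole, and the order of that pole is exactly one: an order $\ge2$ pole would make $(\re s-D)\widetilde\zeta_{A,\O}(\re s;\d)\to\infty$, contradicting the upper bound, while order $0$ (i.e.\ no pole) would make it $\to0$, contradicting the lower bound.

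Next I would pin down the residue. Since $D$ is a simple pole of $\widetilde\zeta_{A,\O}$, we have $\res(\widetilde\zeta_{A,\O},D)=\lim_{s\to D}(s-D)\widetilde\zeta_{A,\O}(s;\d)$, and by the principle of analytic continuation this limit may be computed along the real axis, $s=\sigma\to D^+$. Writing $|A_t\cap\O|=t^{N-D}\,g(t)$ where $\liminf_{t\to0^+}g(t)={\M}_*^{D}(A,\O)$ and $\limsup_{t\to0^+}g(t)={\M}^{*D}(A,\O)$, an Abelian-type estimate gives
\begin{equation}\label{res-abelian}
{\M}_*^{D}(A,\O)\le\liminf_{\sigma\to D^+}(\sigma-D)\int_0^{\delta}t^{\sigma-D-1}g(t)\,\D t\le\limsup_{\sigma\to D^+}(\sigma-D)\int_0^{\delta}t^{\sigma-D-1}g(t)\,\D t\le{\M}^{*D}(A,\O),
\end{equation}
because $(\sigma-D)t^{\sigma-D-1}\,\D t$ is (on $(0,\delta)$, after normalization) an approximate identity concentrating at $t=0^+$ as $\sigma\to D^+$, so it averages $g$ against a probability-like weight that localizes near the origin. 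Hence $\res(\widetilde\zeta_{A,\O},D)$ exists and lies in $[{\M}_*^{D}(A,\O),{\M}^{*D}(A,\O)]$; moreover this shows the residue is independent of $\d$ since the left-hand limit does not see $\delta$ (changing $\delta$ alters only a holomorphic term near $s=D$). Finally, the functional equation \eqref{equ_tilde}, $\zeta_{A,\O}(s;\d)=\delta^{s-N}|A_\delta\cap\O|+(N-s)\widetilde\zeta_{A,\O}(s;\d)$, transfers everything: the first term is entire, $N-s$ is holomorphic and nonzero at $s=D$ (here I use $D<N$), so $\zeta_{A,\O}$ has a simple pole at $D$ with $\res(\zeta_{A,\O},D)=(N-D)\,\res(\widetilde\zeta_{A,\O},D)$, and combining with the bracket in \eqref{res-abelian} yields \eqref{res}. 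When $(A,\O)$ is Minkowski measurable, ${\M}_*^{D}(A,\O)={\M}^{*D}(A,\O)={\M}^{D}(A,\O)$, so the two outer terms of \eqref{res-abelian} coincide, the limit exists and equals ${\M}^{D}(A,\O)$, giving \eqref{pole1minkg1=}.

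The main obstacle I anticipate is making the Abelian estimate \eqref{res-abelian} fully rigorous with $\liminf$ and $\limsup$ rather than assuming the limit exists: one must carefully split the integral at a small threshold $\eta$, use the two-sided local bounds on $g$ on $(0,\eta)$ together with $\liminf/\limsup$ control, and show that the contribution of $(\eta,\delta)$ vanishes after multiplication by $(\sigma-D)$ as $\sigma\to D^+$, then let $\eta\to0^+$ — this is the standard but slightly delicate Karamata-type argument. Everything else (the pole order dichotomy, the $\d$-independence, the passage through the functional equation) is essentially bookkeeping once the local two-sided volume bounds are in hand. A further small point to handle cleanly is that the hypotheses only assume ${\M}_*^{D}(A,\O)>0$ and ${\M}^{*D}(A,\O)<\infty$ for the simple-pole statement; the upper finiteness is what we already know from $D=\dim_B(A,\O)$ together with nondegeneracy, and the positivity of the lower content is exactly what rules out the removable-singularity case.
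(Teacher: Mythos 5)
Your proof is correct and follows the same route that the authors use (the theorem is cited here from [LapRa\v Zu1,4], but Theorem~\ref{an_rel}$(b)$ and Theorem~\ref{pole1mink_tilde} make clear the intended path): establish the two-sided Karamata/Abelian bounds for the tube zeta function $\widetilde\zeta_{A,\O}$ along the real axis to the right of $D$, deduce the simple-pole dichotomy and the bracketing of the residue, and then transfer to $\zeta_{A,\O}$ through the functional equation \eqref{equ_tilde}, using $D<N$ to make the factor $N-s$ harmless. The one extra observation worth making explicit in your pole-order step is that $\widetilde\zeta_{A,\O}(\sigma;\delta)\geq 0$ for real $\sigma>D$ (and its Laurent coefficients at $D$ are real by reflection), so boundedness of $(\sigma-D)\widetilde\zeta_{A,\O}(\sigma;\delta)$ from above together with positive lower bound genuinely rules out all orders other than one; as written you implicitly use this positivity.
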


Also, much as in Remark \ref{tube_holo} above, one can reformulate the above theorem in terms of the relative tube zeta function and, in that case, we can remove the condition $\dim_B(A,\O)<N$.

\begin{theorem}\label{pole1mink_tilde}
%\cite{fzf,refds}
Assume that $(A,\O)$ is a nondegenerate RFD in $\eR^N$ $($so that $D:=\dim_B(A,\O)$ exists$)$, 
and that for some $\d>0$ there exists a meromorphic extension of $\widetilde\zeta_{A,\O}:=\widetilde\zeta_{A,\O}(\,\cdot\,;\d)$ to a connected open neighborhood of $D$.
Then,  $D$ is a simple  pole,
and $\res(\widetilde{\zeta}_{A,\O},D)$ is independent of $\delta$.
Furthermore, we have
\begin{equation}\label{zeta_tilde_M}
{\M}_*^{D}(A,\O)\le\res(\widetilde\zeta_{A,\O}, D)\le {\M}^{*D}(A,\O).
\end{equation}
In particular, if $(A,\O)$ is Minkowski measurable, then 
\begin{equation}\label{zeta_tilde_Mm}
\res(\widetilde\zeta_{A,\O}, D)=\M^D(A,\O).
\end{equation}
\end{theorem}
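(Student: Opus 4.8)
The plan is to treat $\widetilde\zeta_{A,\O}(\,\cdot\,;\d)$ as a Mellin-type transform of the relative tube function $t\mapsto V(t):=|A_t\cap\O|$: by \eqref{401/2} we have $\widetilde\zeta_{A,\O}(s;\d)=\int_0^\d t^{s-N-1}V(t)\,\D t$, which (since $(A,\O)$ is nondegenerate, so that $D:=\dim_B(A,\O)$ exists; see Remark \ref{tube_holo}) converges and is holomorphic for $\re s>D$. First I would fix $t_0\in(0,\d)$ and split the integral at $t_0$; the tail $\int_{t_0}^{\d}t^{s-N-1}V(t)\,\D t$ is an entire function of $s$ (because $V$ is bounded and measurable on $[t_0,\d]$), so it contributes nothing to the principal part at $s=D$ and, multiplied by $(s-D)$, tends to $0$ as $s\to D$.

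The core of the argument is an elementary Abelian (sandwiching) estimate along the real axis near $s=D$. Given $\e\in(0,\M_*^D(A,\O))$ (permissible since $\M_*^D(A,\O)>0$), the definitions of the relative lower and upper Minkowski contents provide $t_0=t_0(\e)\in(0,\d)$ with $(\M_*^D(A,\O)-\e)\,t^{N-D}\le V(t)\le(\M^{*D}(A,\O)+\e)\,t^{N-D}$ for all $t\in(0,t_0]$. Multiplying by $t^{s-N-1}$, integrating over $(0,t_0]$ for real $s>D$, and using $\int_0^{t_0}t^{s-D-1}\,\D t=t_0^{s-D}/(s-D)$, I obtain $(\M_*^D(A,\O)-\e)\,t_0^{s-D}\le(s-D)\int_0^{t_0}t^{s-N-1}V(t)\,\D t\le(\M^{*D}(A,\O)+\e)\,t_0^{s-D}$. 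Combining this with the entire-tail remark, letting $s\to D^+$ (so that $t_0^{s-D}\to1$), and then letting $\e\to0^+$, I arrive at
\begin{equation*}
\M_*^D(A,\O)\le\liminf_{s\to D^+}(s-D)\,\widetilde\zeta_{A,\O}(s;\d)\le\limsup_{s\to D^+}(s-D)\,\widetilde\zeta_{A,\O}(s;\d)\le\M^{*D}(A,\O).
\end{equation*}

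Next I would bring in the meromorphic continuation hypothesis. Writing the Laurent expansion of $\widetilde\zeta_{A,\O}(\,\cdot\,;\d)$ at $D$ as $\sum_{k\ge-m}c_k(s-D)^k$ with $m\ge0$ an integer, nondegeneracy gives $0<\M_*^D(A,\O)\le\M^{*D}(A,\O)<\ty$, so the inequality just displayed shows that $(s-D)\widetilde\zeta_{A,\O}(s;\d)$ stays bounded as $s\to D^+$ --- which forces $m\le1$ --- while $\liminf_{s\to D^+}(s-D)\widetilde\zeta_{A,\O}(s;\d)\ge\M_*^D(A,\O)>0$ excludes $m=0$; hence $m=1$, i.e.\ $D$ is a simple pole, $\lim_{s\to D^+}(s-D)\widetilde\zeta_{A,\O}(s;\d)=c_{-1}=\res(\widetilde\zeta_{A,\O},D)$ exists, and the displayed inequality becomes \eqref{zeta_tilde_M}. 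For independence of $\d$, note that for $0<\d_2<\d_1$ the difference $\widetilde\zeta_{A,\O}(s;\d_1)-\widetilde\zeta_{A,\O}(s;\d_2)=\int_{\d_2}^{\d_1}t^{s-N-1}V(t)\,\D t$ is entire, so the two meromorphic extensions have the same principal part, hence the same residue, at $D$. Finally, if $(A,\O)$ is Minkowski measurable, then $\M_*^D(A,\O)=\M^{*D}(A,\O)=\M^D(A,\O)$ and the squeeze in \eqref{zeta_tilde_M} collapses to \eqref{zeta_tilde_Mm}.

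The step I expect to be the most delicate is the passage from the sandwich estimate to the conclusion about the pole: that estimate only controls $\widetilde\zeta_{A,\O}(s;\d)$ as $s\to D$ \emph{radially from the right}, whereas the statement is about the exact order of the pole and the exact residue, so it is essential that the assumed meromorphic continuation convert this one-sided radial asymptotics into genuine information about the Laurent coefficients at $D$ (thereby excluding, a priori, a removable singularity or a higher-order pole). I would also note that when $D<N$ the inequalities \eqref{zeta_tilde_M} and the equality \eqref{zeta_tilde_Mm} can be derived alternatively from Theorem \ref{pole1} through the functional equation \eqref{equ_tilde}, since $s\mapsto\d^{s-N}|A_\d\cap\O|$ is entire and hence $\res(\zeta_{A,\O},D)=(N-D)\res(\widetilde\zeta_{A,\O},D)$; the virtue of the direct Mellin argument is that it works verbatim when $D=N$ as well, which is exactly why the hypothesis $\dim_B(A,\O)<N$ can be removed here.
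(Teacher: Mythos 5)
Your proof is correct, and it follows the same Abelian argument used in the paper's source references [LapRa\v Zu1,2,4]: split the Mellin integral at some small $t_0$, sandwich the relative tube function $V(t)=|A_t\cap\O|$ by $(\M_*^D-\e)t^{N-D}$ and $(\M^{*D}+\e)t^{N-D}$ near $0$, integrate to bound $(s-D)\widetilde\zeta_{A,\O}(s;\d)$ as $s\to D^+$ along the real axis, and then use nondegeneracy (finiteness and positivity of the Minkowski contents) together with the assumed meromorphic continuation to pin down the pole order as exactly one and identify the residue. The steps you flag as delicate --- converting one-sided radial asymptotics into Laurent-coefficient information, and the tacit use of the Schwarz reflection principle to see that $\widetilde\zeta_{A,\O}$ takes real values on the real axis near $D$ so that the $\liminf/\limsup$ argument applies to its Laurent coefficients --- are handled correctly, and your closing observation (that working directly with the tube zeta function is what lets one drop the hypothesis $D<N$, which would be needed if one merely transferred Theorem~\ref{pole1} through the functional equation~\eqref{equ_tilde}) matches exactly the remark the paper itself makes between Theorems~\ref{pole1} and~\ref{pole1mink_tilde}.
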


Let us now introduce the notion of complex dimensions of a relative fractal drum.

\begin{definition}[{\it Complex dimensions of an RFD} {[LapRa\v Zu1,4]}] 
%\cite{fzf,refds}
Let $(A,\O)$ be a relative fractal drum in $\eR^N$. % and $\d>0$ fixed.
Assume that the relative distance zeta function ${\zeta}_{A,\O}$ has a meromorphic extension to a connected open neighborhood $U$ of the critical line $\{\re s=\ov{\dim}_B(A,\O)\}$. Then, the set of {\em visible complex dimensions of $(A,\O)$ $($with respect to $U)$} is the set of poles of ${\zeta}_{A,\O}$ that are contained in $U$ and we denote it by
\begin{equation}
\po({\zeta}_{A,\O},U):=\{\omega\in U:\omega\textrm{ is a pole of }{\zeta}_{A,\O}\}.
\end{equation}
If $U=\Ce$, we say that $\po({\zeta}_{A,\O},\Ce)$ is the set of {\em complex dimensions} of $(A,\O)$ and denote it by $\dim_{\Ce}(A,\O)$.

Furthermore, we call the set of poles located on the critical line $\{\re s=\ov{\dim}_B(A,\O)\}$ the set of {\em principal complex dimensions of $(A,\O)$} and denote it by
\begin{equation}
\dim_{PC}(A,\O):=\{\omega\in \po({\zeta}_{A,\O},U):\re \omega=\ov{\dim}_B(A,\O)\}.
\end{equation}

\medskip

Finally, for each (visible or principal) complex dimension $\omega$, we define its {\em multiplicity} as the order of the pole $\omega$ of ${\zeta}_{A,\O}$.
In light of this, we will sometimes refer to the set of (visible or principal) complex dimensions as a multiset (that is, a set with integer multiplicities).
Observe that the multiset of principal complex dimensions is clearly independent of the domain $U$ chosen as above. 
\end{definition}

\begin{remark}
We may also define the complex dimensions in terms of the tube zeta function instead of the distance zeta function; that is, analogously as in the above definition but with ${\zeta}_{A,\O}$ replaced by $\widetilde{\zeta}_{A,\O}:=\widetilde{\zeta}_{A,\O}(\,\cdot\,;\delta)$ for some $\delta>0$.\footnote{It does not matter which $\delta>0$ we choose since it does not affect the singularities of the tube zeta function.}
Furthermore, if $\ov{\dim}_B(A,\O)<N$, then in light of the functional equation \eqref{equ_tilde}, the complex dimensions do not depend on the choice of the fractal zeta function.
This is also true in the special case when $\ov{\dim}_B(A,\O)=N$ except that care must be taken in this situation since then, it may happen that the tube zeta function has a pole at $s=N$ while the distance function does not because the pole at $s=N$ may be canceled by the factor $(N-s)$.
In order to avoid this problem, we will always assume that $\ov{\dim}_B(A,\O)<N$ when working with the distance zeta function.
\end{remark}
 
In order to be able to formulate our main result, we will need to introduce several definitions connected to the growth properties of the distance (or tube) zeta functions.
These definitions are adapted from~[Lap-vFr1--2] where they were used in the setting of generalized fractal strings and their geometric zeta functions. 

\begin{definition}\label{window_def_5}
The {\em screen} $\bm{S}$ is the graph of a bounded, real-valued Lipschitz continuous function $S(\tau)$, with the horizontal and vertical axes interchanged:
\begin{equation}\label{g_screen}
\bm{S}:=\{S(\tau)+\I \tau\,:\,\tau\in\eR\}.
\end{equation}
The Lipschitz constant of $S$ is denoted by $\|S\|_{\mathrm{Lip}}$; so that
$$
|S(x)-S(y)|\leq\|S\|_{\mathrm{Lip}}|x-y|,\quad\textrm{ for all }x,y\in\eR.
$$
Furthermore, the following quantities are associated to the screen:
\begin{equation}\nonumber
\inf S:=\inf_{\tau\in\eR}S(\tau)\quad\textrm{ and }\quad\sup S:=\sup_{\tau\in\eR}S(\tau).
\end{equation}
\end{definition}

From now on, given an RFD $(A,\O)$ in $\eR^N$, we denote its upper relative box dimension by $\ov{D}:=\ov{\dim}_B(A,\O)$; recall that $\ov{D}\leq N$.
We always assume, additionally, that $\ov{D}>-\ty$ and the screen $\bm{S}$ lies to the left of the {\em critical line} $\{\re s=\ov{D}\}$, i.e., that $\sup S\leq\ov{D}$.
Also, in the sequel, we assume that $\inf S>-\ty$; thus, we have that
\begin{equation}\label{5.1.1.1/2-}
-\ty<\inf S\leq\sup S\leq\ov{D}.
\end{equation}

Moreover, the {\em window} $\bm{W}$ is defined as the part of the complex plane to the right of $\bm{S}$; that is,
\begin{equation}
\bm{W}:=\{s\in\Ce:\re s\geq S(\im s)\}.
\end{equation}
(Note that $\bm{W}$ is a closed subset of $\Ce$.)
We say that the relative fractal drum $(A,\O)$ is {\em admissible} if its relative tube (or distance) zeta function can be meromorphically extended (necessarily uniquely) to an open connected neighborhood of some window $\bm{W}$, defined as above.

%\begin{remark}\label{5.1.1.1/2R}
%Occasionally, in the strongly languid case, in the sense of Definition \ref{str_languid} or Definition \ref{d_lang} below (and hence, in particular, when the fractal zeta function involved is meromorphic on all of $\Ce$), it is convenient to implicitly move the screen $S$ to $-$infinity (i.e., to let $S\equiv -\ty$) and thus to choose $W:=\Ce$. 
%\end{remark} 

The next definition adapts~\cite[Definition~5.2]{lapidusfrank12} to the case of relative fractal drums in $\eR^N$ (and, in particular, to the case of bounded subsets of $\eR^N$).

\begin{definition}[{$d$-languidity}]\label{languid}
An admissible relative fractal drum $(A,\O)$ in $\eR^N$ is said to be {\em $d$-languid} if for some fixed $\d>0$, its distance zeta function ${\zeta}_{A,\O}:={\zeta}_{A,\O}(\,\cdot\,;\d)$ satisfies the following growth conditions:

\medskip

There exists a real constant $\kappa_d$ and a two-sided sequence $(T_n)_{n\in\Ze}$ of real numbers such that $T_{-n}<0<T_n$ for all $n\geq 1$ and
\begin{equation}\label{seq_T_n}
\lim_{n\to\ty}T_n=+\ty,\quad\lim_{n\to\ty}T_{-n}=-\ty%,\quad\lim_{n\to\ty}\frac{T_n}{|T_{-n}|}=1
\end{equation}
satisfying the following two hypotheses, {\bf L1} and {\bf L2}:
\newline

{\bf L1}\ \ For a fixed real constant $c>N$, there exists a positive constant $C>0$ such that for all $n\in\Ze$ and all $\s\in (S(T_n),c)$,
\begin{equation}\label{L1}
|{\zeta}_{A,\O}(\s+\I T_n;\d)|\leq C(|T_n|+1)^{\kappa_d}.
\end{equation}

{\bf L2}\ \ For all $\tau\in\eR$, $|\tau|\geq 1$,
\begin{equation}\label{L2}
|{\zeta}_{A,\O}(S(\tau)+\I \tau;\d)|\leq C|\tau|^{\kappa_d},
\end{equation}
where $C$ is a positive constant which can be chosen to be the same one as in condition~{\bf L1}.
\end{definition}

Note that these are (at most) polynomial growth conditions on ${\zeta}_{A,\O}(\,\cdot\,;\d)$ along a sequence of horizontal segments and along the vertical direction of the screen.
We call the exponent $\kappa_d\in\eR$ appearing in the above definition the {\em $d$-languidity exponent} of $(A,\O)$ (or of ${\zeta}_{A,\O}$). Here, the index $d$ in $\kappa_d$ stands for the Euclidean distance appearing in the definition of the relative distance zeta function $\zeta_{A,\O}$ defined by~\ref{rel_dist_zeta}.

We will also use the notion of {\em languid} relative tube zeta function (or {\em languid} RFD) if the analogous conditions as in Definition \ref{languid} are satisfied for the tube zeta function.
(See \cite[Definition 2.12 and 2.13]{ftf_A} or \cite{fzf}.)
In that case, we denote the {\em languidity exponent} by $\kappa\in\eR$.
There exists also conditions of {\em strong $d$-languidity} and {\em strong languidity} in which we assume that the screen may be ``pushed'' to $-\ty$; these conditions were needed in order to obtain the exact fractal tube formulas in \cite{ftf_A}.
Since we will not need them explicitly in this paper, we refrain from defining these conditions rigorously.

\medskip

It turns out that all of the geometrically interesting examples of RFDs (and, in particular, of bounded sets) in $\eR^N$ considered here are $d$-languid (relative to a suitable screen).
This illustrates the fact that the results of this paper can be applied to a large class of RFDs.
For instance the so-called ``self-similar'' RFDs with generators that are nice ``enough'' (including, but not limited to, monophase and pluriphase generators in the sense of [LapPe2, LapPeWi1--2]), %\cite{lappewi1,lappewi2}) 
are in this class.
Furthermore, a wide range of RFDs belonging to this class and which do not exhibit any kind of self-similarity can be constructed. 
Hence, the theory developed here can be also applied to them. 
%or else, strongly languid, in the sense of Definition \ref{str_languid} just above (or, again, in the sense of Definition \ref{d_lang}).

Although, as was already explained earlier, the dependence of the distance zeta function $\zeta_{A,\O}(\,\cdot\,;\d)$ on $\d>0$ is inessential, it is not clear if this is also true for the $d$-languidity conditions.
More precisely, it is true that changing the parameter $\d>0$ will preserve $d$-languidity but possibly with a different exponent $\kappa_d$.
This is the statement of the next proposition, which was proved in \cite[Proposition 5.27]{ftf_A} and which we partly state next, for the sake of exposition. 

\begin{proposition}\label{propB}
Let $(A,\O)$ be a relative fractal drum in $\eR^N$.
If the relative distance zeta function ${\zeta}_{A,\O}(\,\cdot\,;\d)$ satisfies the languidity conditions {\bf L1} and {\bf L2} for some $\d>0$ and $\kappa_d\in\eR$, then so does
${\zeta}_{A,\O}(\,\cdot\,;\d_1)$ for any $\d_1>0$ and with ${(\kappa_d)}_{\d_1}:=\max\{\kappa_d,0\}$.
\end{proposition}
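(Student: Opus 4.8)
\textbf{Proof proposal for Proposition \ref{propB}.}

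The plan is to reduce the assertion to controlling the difference $\zeta_{A,\O}(s;\d_1)-\zeta_{A,\O}(s;\d)$ on the relevant horizontal segments and on the screen, using the fact (recalled in Remark \ref{holo_diff}) that this difference is an entire function, together with an \emph{explicit} integral representation of it. Without loss of generality assume $\d_1<\d$ (the case $\d_1\ge\d$ being handled by swapping the roles). Then
\begin{equation}\label{diff_rep}
\zeta_{A,\O}(s;\d)-\zeta_{A,\O}(s;\d_1)=\int_{(A_\d\setminus A_{\d_1})\cap\O}d(x,A)^{s-N}\,\D x,
\end{equation}
and on the domain of integration one has $\d_1\le d(x,A)<\d$, so that $|d(x,A)^{s-N}|=d(x,A)^{\re s-N}$ is uniformly bounded above, for $\re s$ ranging over any fixed compact interval, by $\max\{\d_1^{\re s-N},\d^{\re s-N}\}$ times a constant. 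Since $|(A_\d\setminus A_{\d_1})\cap\O|\le|A_\d\cap\O|<\infty$, the integral in \eqref{diff_rep} is bounded by a constant (depending only on $\d,\d_1,N$ and the compact range of $\re s$, but \emph{not} on $\im s$) on every vertical strip of the form $\{a\le\re s\le b\}$.

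With this in hand, I would argue as follows. First I would fix the same two-sided sequence $(T_n)_{n\in\Ze}$ and the same screen $\bm S$ that witness the $d$-languidity of $\zeta_{A,\O}(\,\cdot\,;\d)$; these are purely ``external'' data and need not change. Next, to verify \textbf{L1} for $\zeta_{A,\O}(\,\cdot\,;\d_1)$: on each horizontal segment $\{\s+\I T_n:\s\in(S(T_n),c)\}$, write $\zeta_{A,\O}(\s+\I T_n;\d_1)=\zeta_{A,\O}(\s+\I T_n;\d)+[\zeta_{A,\O}(\s+\I T_n;\d_1)-\zeta_{A,\O}(\s+\I T_n;\d)]$; the first term is bounded by $C(|T_n|+1)^{\kappa_d}$ by hypothesis, and the second term is bounded by a constant independent of $n$ by the estimate on \eqref{diff_rep} applied with the compact interval $[\inf S,c]$ (here one uses $\inf S>-\infty$, guaranteed by \eqref{5.1.1.1/2-}). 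A constant is dominated by $(|T_n|+1)^{\max\{\kappa_d,0\}}$, which is why the exponent is replaced by $(\kappa_d)_{\d_1}=\max\{\kappa_d,0\}$; adjusting the overall constant $C$ absorbs both contributions. The verification of \textbf{L2} is identical: along the screen, $\re s=S(\im s)\in[\inf S,\sup S]\subseteq[\inf S,\ov D]$, a fixed compact interval, so the difference term is again bounded by an absolute constant, while the term $\zeta_{A,\O}(S(\tau)+\I\tau;\d)$ is bounded by $C|\tau|^{\kappa_d}$; hence $|\zeta_{A,\O}(S(\tau)+\I\tau;\d_1)|\le C'|\tau|^{\max\{\kappa_d,0\}}$ for $|\tau|\ge1$. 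Finally, admissibility of $(A,\O)$ with respect to the same window $\bm W$ is automatic, since $\zeta_{A,\O}(\,\cdot\,;\d_1)$ differs from $\zeta_{A,\O}(\,\cdot\,;\d)$ by an entire function and therefore has a meromorphic continuation to exactly the same neighborhood of $\bm W$, with the same poles.

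The only genuinely delicate point — and the one I would be most careful about — is the passage from a hypothesis stated on $(S(T_n),c)$ (an interval whose left endpoint $S(T_n)$ varies with $n$) to the use of a \emph{single} compact interval of real parts when bounding \eqref{diff_rep}: this is exactly where the lower bound $\inf S>-\infty$ from \eqref{5.1.1.1/2-} is essential, since it lets me replace each $S(T_n)$ by the fixed number $\inf S$ and thereby obtain an $n$-independent constant. A secondary point to check is that the integral representation \eqref{diff_rep} is valid for all $s$ in the relevant region and not merely for $\re s$ large: this follows because $(A_\d\setminus A_{\d_1})\cap\O$ has finite measure and $d(x,A)$ is bounded away from $0$ and from $\infty$ on it, so the integrand is bounded and integrable for \emph{every} $s\in\Ce$, making the representation an entire function of $s$ — consistent with Remark \ref{holo_diff}. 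Everything else is routine bookkeeping with constants.
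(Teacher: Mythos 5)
Your argument is correct and is the natural (and, as far as one can tell from the reference to \cite[Proposition~5.27]{ftf_A}, essentially the intended) approach: write the difference $\zeta_{A,\O}(\,\cdot\,;\d)-\zeta_{A,\O}(\,\cdot\,;\d_1)$ as the entire function $\int_{(A_\d\setminus A_{\d_1})\cap\O}d(x,A)^{s-N}\,\D x$ (w.l.o.g.\ $\d_1<\d$), observe that on the domain of integration $d(x,A)$ is pinched between $\d_1$ and $\d$ so the integrand is uniformly bounded on any vertical strip $\{a\le\re s\le b\}$ independently of $\im s$, and conclude that the difference contributes only a bounded additive error along the segments in \textbf{L1} and along the screen in \textbf{L2}, which is why the exponent must be raised from $\kappa_d$ to $\max\{\kappa_d,0\}$ when $\kappa_d<0$. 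You correctly identify the one nontrivial uniformity issue — that $(S(T_n),c)$ varies with $n$ — and resolve it via $\inf S>-\infty$ (which here is automatic from the screen function being bounded); you also correctly note that admissibility with respect to the same window is preserved because the two zeta functions differ by an entire function.
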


We note that tube formulas in convex, `smooth' and `fractal' geometry, as well as in integral geometry and geometric measure theory, have a long history, going back to the work of Steiner [Stein], Minkowski [Mink], Weyl [Wey] and Federer [Fed1], in particular. Works on these topics include [BergGos, Bla, DemDenKo\"U, DemKo\"O\"U, DeK\"O\"U, Gra, HamLap, HuLaWe, KlRot, LapLu, LapLu-vFr1--2, LapPe1--2, LapPeWi1--2, LapRa\v Zu1,7--8, Lap-vFr1--2, Ol1--2, Ra1, Schn, Wi, WiZ\"a, Z\"a1--3] and the many other relevant references therein.

\section{Statement of the main result and applications}

In this section, we state our main result in Theorem \ref{criterion_p} and illustrate its applications in a variety of examples and situations, including the recovery of the corresponding result for self-similar fractal strings in Corollary \ref{5.4.22.1/4C}, the 
 Cantor set in Example \ref{ecant}, the $a$-string in Example \ref{ex_a}, as well as the higher-dimensional examples of the Sierpi\'nski gasket and $3$-carpet in \S\ref{subsec_sier}, the fractal nests and unbounded geometric chirps in \S\ref{subsec_nestch}, and self-similar sprays in \S\ref{subsec_self_similar_sp}.
We will give a sketch of the proof of Theorem \ref{criterion_p} in \S\ref{main_proof}.

\subsection{The Minkowski measurability criterion}

We begin by stating the following theorem, in which we give a characterization of the Minkowski measurability of a large class of relative fractal drums in terms of the distribution of the location of their complex dimensions with maximal real part (i.e., of their principal complex dimensions).

\begin{theorem}[Minkowski measurability criterion]\label{criterion_p}
Let $(A,\O)$ be a relative fractal drum in $\eR^N$ such that $D:=\dim_B(A,\O)$ exists and $D<N$.
Furthermore, assume that $(A,\O)$ is $d$-languid for a screen passing {\rm strictly} between the critical line $\{\re s=D\}$ and all the complex dimensions of $(A,\O)$ with real part strictly less than $D$.
Then the following statements are equivalent$:$

\medskip

$(a)$ The RFD $(A,\O)$ is Minkowski measurable.

\medskip

$(b)$ $D$ is the only pole of the relative distance zeta function ${\zeta}_{A,\O}$ located on the critical line $\{\re s=D\}$ and it is simple.

\medskip

Furthermore, if $(a)$ or $(b)$ is satisfied, then the Minkowski content of $(A,\O)$ is given by
\begin{equation}
\mathcal{M}^D(A,\O)=\frac{\res(\zeta_{A,\O},D)}{N-D}.
\end{equation}
\end{theorem}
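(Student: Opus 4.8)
The plan is to deduce Theorem~\ref{criterion_p} from the pointwise fractal tube formula~\eqref{Vt} (valid here because $(A,\O)$ is $d$-languid relative to a screen separating the critical line from all lower complex dimensions) together with the refined sufficient and necessary conditions for Minkowski measurability announced in \S4 (Theorems~\ref{mink_char} and~\ref{necess}). The strategy is thus to treat the two implications separately, with the tube formula as the common backbone; the $d$-languidity hypothesis and the screen placement are precisely what guarantee that $R(t) = o(t^{N-D})$ as $t \to 0^+$, so that the leading asymptotics of $V(t)$ are governed exactly by the principal complex dimensions.

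For the implication $(b) \Rightarrow (a)$, I would argue as follows. Under $(b)$, the only pole on $\{\re s = D\}$ is $D$ itself and it is simple, so the fractal tube formula~\eqref{Vt} yields
\begin{equation*}
V(t) = \res(\zeta_{A,\O},D)\,\frac{t^{N-D}}{N-D} + R(t),
\end{equation*}
where $R(t)$ collects the contributions of the complex dimensions with real part strictly less than $D$ plus the languid error term; since the screen passes strictly to the left of $\{\re s = D\}$ and strictly to the right of all such lower complex dimensions, one gets $R(t) = o(t^{N-D})$ as $t \to 0^+$. Dividing by $t^{N-D}$ shows that $\mathcal{W}(t)$ in~\eqref{Wt} converges to $\res(\zeta_{A,\O},D)/(N-D)$, which is finite; it is also strictly positive because, by Theorem~\ref{pole1} (or Theorem~\ref{an_rel}(b)), the residue at the simple pole $D$ is positive whenever the lower Minkowski content is positive — and here nondegeneracy can be read off from the existence of a finite nonzero limit of $\mathcal{W}(t)$, or one invokes the sufficient-condition Theorem~\ref{mink_char} (via the Wiener--Pitt Tauberian theorem) to pass cleanly from the asymptotic expansion of $V(t)$ to genuine Minkowski measurability. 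Either way, $(A,\O)$ is Minkowski measurable with $\mathcal{M}^D(A,\O) = \res(\zeta_{A,\O},D)/(N-D)$, which also establishes the final formula.

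For the implication $(a) \Rightarrow (b)$, I would argue by contraposition: suppose $(b)$ fails, so either $D$ is not simple or there is a nonreal pole $\omega = D + \I\gamma$ (with $\gamma \neq 0$) on the critical line. In either case the fractal tube formula~\eqref{Vt} shows that $\mathcal{W}(t) = V(t)/t^{N-D}$ has, as $t \to 0^+$, a leading term that is \emph{not} constant: a higher-order pole at $D$ contributes a term involving $\log(1/t)$, while a conjugate pair of nonreal principal poles contributes a genuinely oscillatory term of the form $t^{-\I\gamma}$ plus its conjugate, i.e.\ a bounded oscillation in $\log t$ that does not converge. To make this rigorous one invokes the necessary condition, Theorem~\ref{necess}, whose proof uses the distributional tube formula from [LapRa\v Zu8] together with the uniqueness theorem for almost periodic distributions: if $\mathcal{W}(t)$ did converge, the associated almost periodic distribution coming from the critical-line poles would have to be constant, forcing all nonreal principal complex dimensions to have zero residue and $D$ to be simple — a contradiction. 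Hence $(a)$ implies $(b)$.

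The main obstacle, and the point where genuine work is hidden, is the passage from the \emph{asymptotic expansion} of $V(t)$ furnished by the pointwise tube formula to the \emph{existence (or non-existence) of the limit} defining Minkowski measurability, in full rigor and under only $d$-languidity rather than strong $d$-languidity (so that the error term $R(t)$ is merely $o(t^{N-D})$ and not identically zero). In the $(b) \Rightarrow (a)$ direction this is exactly what the Wiener--Pitt Tauberian argument of Theorem~\ref{mink_char} is designed to handle, converting a one-sided asymptotic control of the Mellin-type integral into a two-sided Minkowski content; in the $(a) \Rightarrow (b)$ direction it is the uniqueness theorem for almost periodic distributions that does the analogous job at the distributional level. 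Verifying that the hypotheses of these two Tauberian/uniqueness tools are met — in particular, the requisite decay and growth estimates on $\zeta_{A,\O}$ along horizontal and vertical segments, and the admissibility of the screen — is the technical heart of the proof, and the reason the actual argument in \S4 is ``significantly more complicated than is suggested here.''
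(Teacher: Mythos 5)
Your proposal correctly identifies the two analytic tools that carry the proof (the Wiener--Pitt Tauberian theorem for $(b)\Rightarrow(a)$ and the uniqueness theorem for almost periodic distributions for $(a)\Rightarrow(b)$), and the overall decomposition into Theorems~\ref{mink_char} and \ref{necess} matches the paper. But the framing of the sufficient direction through the \emph{pointwise} fractal tube formula~\eqref{Vt} is not how the paper proceeds, and under the stated hypotheses it would not work as written. General $d$-languidity (arbitrary $\kappa_d$) combined with a suitable screen yields only the \emph{distributional} tube formula with error $o(t^{N-\sup S})$; the pointwise version at level $k=0$ requires the stronger bound $\kappa_d<0$ (equivalently $\kappa<1$ for $\widetilde{\zeta}_{A,\O}$), which is not assumed here. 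The paper's Theorem~\ref{mink_char} therefore bypasses the tube formula altogether: it recognizes $\widetilde{\zeta}_{A,\O}(\cdot+\ov{D})$ as the Laplace transform of $\sigma(v):=\E^{v(N-\ov{D})}|A_{\E^{-v}}\cap\O|$ and applies Wiener--Pitt directly, using only that $\widetilde{\zeta}_{A,\O}$ extends meromorphically to a neighborhood of the critical line with $D$ as its unique (simple) pole there. This makes the sufficient direction both simpler and strictly more general (no languidity is needed for Theorem~\ref{mink_char}), and it is what lets the paper conclude genuine Minkowski measurability rather than a conditional asymptotic identity. Your remark that ``one invokes Theorem~\ref{mink_char}'' is of course a way out, but then the appeal to the pointwise tube formula that precedes it is idle.

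For the necessary direction you are essentially on the paper's track, but you omit a step the paper does first and cannot skip: establishing that \emph{every} pole of $\zeta_{A,\O}^{\mathfrak M}$ on the critical line is simple, not just $D$. This is obtained from the elementary bound $|\zeta_{A,\O}^{\mathfrak M}(s)|\leq\zeta_{A,\O}^{\mathfrak M}(\re s)$ for $D<\re s<N$ (combined with Theorem~\ref{pole_mellin} for $D$ itself, using nondegeneracy). Without it, the putative non-real principal poles $\xi_n=D+\I\gamma_n$ could a priori be of higher order, and the residue contributions $\sum_n a_n t^{-\I\gamma_n}$ in~\eqref{almost_per_a} would have to be replaced by sums of terms $t^{-\I\gamma_n}(\log t^{-1})^k$, which are no longer almost periodic in $\log t^{-1}$; the uniqueness theorem for almost periodic distributions as cited would then not directly apply. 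Once simplicity is secured, your sketch of the contradiction via the distributional tube formula (Theorem~\ref{dist_error_mellin}) and the uniqueness theorem is correct, and is indeed what the paper's Theorem~\ref{necess} does.
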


\begin{remark}\label{5.4.19.1/2}
Theorem \ref{criterion_p} extends to RFDs in $\eR^N$, with $N\geq 1$ arbitrary, the Minkowski measurability criterion for fractal strings obtained in \cite[Theorem~8.15 of \S8.3]{lapidusfrank12}.
The latter criterion corresponds to the $N=1$ case of Theorem \ref{criterion}.
We note that in \cite{lapidusfrank12}, the criterion was formulated in terms of the principal complex dimensions of the underlying fractal string (interpreted as the poles of the associated geometric zeta function with real part equal to $D$, the Minkowski dimension of the string).
However, in light of the results of \S\ref{subsec_frstr} below (see, especially, Proposition \ref{geo_dist}), they can now be restated equivalently in terms of the principal poles of the distance zeta function of the corresponding relative fractal drum $(\pa\O,\O)$, where $\O$ is any geometric realization of the fractal string.
We mention that in the statement of \cite[Theorem~8.15 of \S8.3]{lapidusfrank12}, the fact that the fractal string was weakly (rather than strongly) Minkowski measurable should have been stressed more explicitly.\footnote{See Definition \ref{5.4.16.1/2} below for the precise definition of strong and weak Minkowski measurability.}
On the other hand, in \cite[Theorems~8.23 and 8.36 of \S8.4]{lapidusfrank12}, the characterization of Minkowski measurability obtained for self-similar strings was stated in terms of the strong (i.e., ordinary) Minkowski measurability of the fractal strings.\footnote{Namely, a self-similar string (of Minkowski dimension $D\in(0,1)$) is (strongly) Minkowski measurable if and only if it is a nonlattice string (i.e., the multiplicative group generated by its distinct scaling ratios is not of rank $1$) or, equivalently, if its Minkowski dimension $D$ is the only complex dimension of real part $D$.
(It is known from \cite[Theorems~2.16 and~3.6]{lapidusfrank12}) that for a self-similar string, $D$ is always simple.)}
We do not consider the $N$-dimensional counterpart of such a situation here, although this would certainly be interesting.
Our approach allows us to unify (as well as significantly extend) these two types of results and to obtain results that are stated in terms of the strong (i..e., ordinary) Minkowski measurability.
%(See, however, Subsection \ref{subsec_self_similar_sp} below, where we obtain an appropriate analog for self-similar sprays of this characterization and discuss its potential $N$-dimensional counterpart for self-similar sets; see also Problems \ref{periodicp} and \ref{221/2}, along with the text surrounding them.) 

Beside the obvious difficulty of computing the distance (or another fractal) zeta function of a (suitable) `self-similar RFD' in $\eR^N$ (and, in particular, of a compact self-similar set in $\eR^N$ satisfying the open set condition, say), an important remaining issue is to remove (as was done in \cite[\S8.4]{lapidusfrank12} when $N=1$) the condition concerning the existence of an appropriate screen $\bm{S}$ (as is assumed in Theorem \ref{criterion}, as well as in Theorem \ref{tilde_criterion} and Corollary \ref{5.4.20.1/4} below).
Indeed, in the lattice case, this condition is clearly always satisfied as long as the `base RFD' is `nice enough' (see \cite[\S4.3 and esp., Theorem 4.21]{refds} or \cite[\S4.2.2 and esp., Theorem 4.2.7]{fzf} where the distance zeta function for self-similar RFDs was computed).
On the other hand, it is shown in \cite{lapidusfrank12} that in the nonlattice case, there are examples of nonlattice self-similar strings (and hence, of self-similar RFDs and sets in $\eR^N$) for which it is not fulfilled.
(See \cite[Example~5.32]{lapidusfrank12} showing that for a given self-similar string, it is not always possible to choose a screen $\bm{S}$ passing strictly between the critical line $\{\re s=D\}$ and the complex dimensions to the left of this line and along which the RFD is languid.)

However, we also stress that this issue regarding the nonlattice case is, a priori, occurring only in establishing one direction of the desired (Minkowski measurability) characterization theorem; that is, in the direction which aims at proving that, under suitable hypotheses, a Minkowski measurable self-similar RFD (with a `nice enough' base RFD or generator) is always nonlattice.\footnote{Actually, this is only a problem at first glance.
In fact, a moment's reflection shows that it suffices to reason by contradiction in order to resolve this problem. 
Indeed, in the lattice case, we can always find a suitable screen satisfying the required hypotheses of Theorem \ref{criterion}
and, consequently, conclude that a lattice RFD is not Minkowski measurable and reach a contradiction after an application of Theorem \ref{criterion}.
We will proceed exactly in this manner in the proof of Corollary \ref{5.4.22.1/4C} (where $N=1$).} %as well as in \S\ref{subsec_self_similar_sp}, Remark \ref{5.5.18..}$(c)$, when dealing in a similar manner with (higher-dimensional) self-similar sprays.}

Indeed, for the other direction, we do not need the restrictive assumption about the existence of a suitable screen along which the RFD is languid since the desired Minkowski measurability conclusion would follow from the sufficient condition provided in Theorem \ref{mink_char}.
More specifically, %in light of Theorem \ref{ss_spray_zeta}, and 
under appropriate hypotheses,\footnote{In particular, we assume that the generators of the associated self-similar tilings (or sprays) are pluriphase (in the sense of \cite{lappe2} and \cite{lappewi1}, along with \cite[\S13.1]{lapidusfrank12}) and 
that the Minkowski dimension of their boundary is strictly smaller than their similarity dimension (or, equivalently, than the similarity dimension of the associated self-similar tiling or spray).} we expect to conclude that a nonlattice self-similar RFD is always Minkowski measurable since the only pole on the critical line is its Minkowski dimension (which is equal to the maximum of the inner Minkowski dimension of the boundary of the generator and the unique real solution of their associated complexified Moran equation, %see Equation \eqref{5.5.**} below), 
and it is simple.\footnote{We would then recover via our general theory a result of \cite{gatzouras} establishing one direction of the geometric part of \cite[Conjecture 3]{Lap3} (the converse direction having recently been established in \cite{KomPeWi}).}
Of course, the above potential ``theorem'' should be more precisely stated, with the expressions of `self-similar RFD', `nice enough' and `base RFD' (or `generator') being unambiguously specified.
We leave this task for a future work or to the interested readers.
(See also the end of \S\ref{subsec_self_similar_sp} below where these issues are addressed and essentially resolved in the important special case of self-similar sprays, under mild assumptions.)
\end{remark}

\subsubsection{Self-similar strings: Minkowski measurability and the lattice/nonlattice dichotomy}

In the next result (Corollary \ref{5.4.22.1/4C}), which follows from a combination of Theorem \ref{mink_char} and Theorem \ref{criterion}, we recover the aforementioned characterization of the Minkowski measurability of self-similar fractal strings (with possibly multiple gaps,\footnote{We note that the general case of multiple gaps precisely corresponds to the general case of multiple generators for self-similar sprays.} in the sense of \cite[Chapters~2 and~3]{lapidusfrank12}), obtained in \cite[\S8.4, esp., Theorems 8.23, 8.25 and 8.36, along with Corollary 8.27]{lapidusfrank12}.
In the proof of the next corollary, we will use the known fact (established in \cite[Theorems 2.16 and 3.6]{lapidusfrank12}) that a self-similar string is nonlattice if and only if its only scaling complex dimension (i.e., the only pole of its geometric or scaling zeta function) with real part $\sigma_0$ (the similarity dimension of the string) is $\sigma_0$ itself.
Note that this last statement also uses the fact (established in a part of \cite[Corollary 8.27]{lapidusfrank12}) according to which a lattice self-similar string (with multiple gaps) has infinitely many principal scaling complex dimensions (i.e., potential poles of the geometric or scaling zeta function with real part $\sigma_0$ and with nonzero residues) of the form $\sigma_0+\I k\mathbf{p}$, where $\mathbf{p}:=2\pi/\log r^{-1}$, and $r\in(0,1)$ is the single generator of the multiplicative group (of rank one) generated by the underlying distinct scaling ratios; therefore, it has at least one nonreal principal complex dimension.
(This latter fact is easy to check in the case of a single gap; then, the set of principal scaling complex dimensions is all of $\sigma_0+\mathbf{p}\I\Ze$.)
Finally, we recall from \cite[Chapters 2 and 3]{lapidusfrank12} that $\sigma_0$ (and hence, each of the other principal scaling complex dimensions) is always simple, either in the lattice case or in the nonlattice case.
(This latter fact is also easy to check directly from the definitions.)

In the following corollary of Theorem 4.2 and Theorem 4.15, 
combined with the results in \cite[Chapters 2 and 3]{lapidusfrank12}, $\O$ denotes an arbitrary geometric realization of a (nontrivial) bounded self-similar fractal string $\mathcal{L}:=(\ell_j)_{j=1}^{\ty}$, as a bounded open subset of $\eR$.
Furthermore, $\pa\O$ denotes its boundary (in $\eR$) and $(\pa\O,\O)$ is the associated relative fractal drum (or RFD in $\eR$).
We note that in \cite{lapidusfrank12}, the term RFD (or `relative fractal drum') was not used but that an equivalent notion was used instead in the present situation of fractal strings.

\begin{corollary}[Characterization of the Minkowski measurability of self-simi\-lar strings, {\cite[\S8.4]{lapidusfrank12}}]\label{5.4.22.1/4C}
Let $(\pa\O,\O)$ $($or $\mathcal{L}$$)$ be a $($nontrivial, bounded$)$ self-similar fractal string, with $($upper$)$ Minkowski dimension $D<1$; so that $D=\sigma_0$, its similarity dimension.
Then the following statements are equivalent$:$

\medskip

$(i)$ The RFD $(\pa\O,\O)$ is Minkowski measurable.

\medskip

$(ii)$ The self-similar string $\mathcal{L}$ $($or, equivalently, the self-similar RFD $(\pa\O,\O)$$)$ is nonlattice.

\medskip

$(iii)$ The only principal scaling complex dimension of $(\pa\O,\O)$ is $\sigma_0$.
\end{corollary}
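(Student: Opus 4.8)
\emph{Overall plan.} Throughout, $N=1$ and $D=\sigma_0\in(0,1)$, so the RFD $(\pa\O,\O)$ lies within the scope of the Minkowski measurability criterion (Theorem \ref{criterion_p}, in the sharper form of Theorem \ref{criterion}). I would prove the cycle $(iii)\Rightarrow(i)\Rightarrow(ii)\Rightarrow(iii)$, after first disposing of the (essentially known) equivalence $(ii)\Leftrightarrow(iii)$. For the latter, the idea is to use Proposition \ref{geo_dist} of \S\ref{subsec_frstr} to pass from the geometric (or scaling) zeta function of $\mathcal L$ to the relative distance zeta function $\zeta_{\pa\O,\O}$: on the critical line $\{\re s=\sigma_0\}$ these functions have exactly the same poles, with the same multiplicities, so $(iii)$ amounts to saying that $\sigma_0$ is the only scaling complex dimension of $\mathcal L$ with real part $\sigma_0$. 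By \cite[Theorems 2.16 and 3.6]{lapidusfrank12}, this holds if and only if $\mathcal L$ is nonlattice; this is $(ii)$.

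\emph{Step 1: $(iii)\Rightarrow(i)$.} Here I would invoke the sufficient condition for Minkowski measurability, Theorem \ref{mink_char}. Granting $(iii)$, the point $D=\sigma_0$ is the unique principal complex dimension of $(\pa\O,\O)$ and it is a simple pole of $\zeta_{\pa\O,\O}$ (the simplicity of $\sigma_0$ for a self-similar string, whether lattice or nonlattice, being classical; see \cite[Chapters 2 and 3]{lapidusfrank12}). Using the explicit formula for the distance zeta function of a self-similar RFD computed in \cite[Theorem 4.21]{refds} (or \cite[Theorem 4.2.7]{fzf}), one verifies the growth hypotheses of Theorem \ref{mink_char} and concludes that $(\pa\O,\O)$ is Minkowski measurable, with $\mathcal M^D(\pa\O,\O)=\res(\zeta_{\pa\O,\O},D)/(1-D)$. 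The reason Theorem \ref{mink_char}, rather than Theorem \ref{criterion}, is used here is essential: in the nonlattice case the non-principal complex dimensions accumulate on $\{\re s=\sigma_0\}$ from the left, so no screen passing \emph{strictly} between the critical line and those dimensions exists, and Theorem \ref{mink_char} --- whose proof runs through the Wiener--Pitt Tauberian theorem --- is precisely what allows one to bypass this (cf.\ Remark \ref{5.4.19.1/2}).

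\emph{Step 2: $(i)\Rightarrow(ii)$, by contraposition.} Assume $\mathcal L$ is lattice. Then, by \cite[Corollary 8.27]{lapidusfrank12} together with the structure theory of \cite[Chapters 2 and 3]{lapidusfrank12} and via Proposition \ref{geo_dist}, the principal scaling complex dimensions of $(\pa\O,\O)$ form a nonempty, infinite subset of $\sigma_0+\mathbf p\I\Ze$ (with $\mathbf p=2\pi/\log r^{-1}$ and $r\in(0,1)$ the generator of the rank-one multiplicative group spanned by the distinct scaling ratios); in particular, $(\pa\O,\O)$ has at least one \emph{nonreal} principal complex dimension. Moreover, in the lattice case all complex dimensions of $(\pa\O,\O)$ have real parts in a \emph{finite} subset of $(-\infty,\sigma_0]$, so those of real part $<\sigma_0$ are bounded away from the critical line; hence one may choose a screen $\bm S$ passing strictly between $\{\re s=\sigma_0\}$ and all such complex dimensions, and --- by the essentially rational-in-$r^s$ form of $\zeta_{\pa\O,\O}$ --- along which the RFD is $d$-languid. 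Theorem \ref{criterion} then applies, and since its condition $(b)$ fails (there is more than one pole of $\zeta_{\pa\O,\O}$ on $\{\re s=\sigma_0\}$), so does $(a)$: $(\pa\O,\O)$ is not Minkowski measurable. This yields $(i)\Rightarrow(ii)$ and, together with $(ii)\Rightarrow(iii)$, closes the cycle; the Minkowski content formula is the one obtained in Step 1.

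\emph{Main obstacle.} I expect the crux to be Step 1 in the \emph{nonlattice} regime --- checking that $\zeta_{\pa\O,\O}$ satisfies the growth hypotheses of Theorem \ref{mink_char}. Because of the quasiperiodic clustering of the complex dimensions against the critical line (\cite[Chapter 3]{lapidusfrank12}), the horizontal truncation heights $T_n$ must be chosen so as to remain uniformly away from the poles, and one then needs estimates of $\zeta_{\pa\O,\O}$ along those segments and along a screen sharp enough to feed the Tauberian argument with an adequate remainder term. By contrast, the lattice direction in Step 2 is comparatively soft, since there $\zeta_{\pa\O,\O}$ is essentially rational in $r^s$ and a suitable screen is available at once.
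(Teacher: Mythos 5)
Your proposal tracks the paper's proof almost exactly: the equivalence $(ii)\Leftrightarrow(iii)$ is imported from the lattice/nonlattice structure theory in \cite[Chapters 2, 3 and 8]{lapidusfrank12}; $(iii)\Rightarrow(i)$ is obtained from the Tauberian sufficient condition of Theorem \ref{mink_char}, routed through the functional equation of Proposition \ref{geo_dist} (equivalently, Equation \eqref{5.4.52.1/2E}); and the remaining implication is proved by contradiction via Theorem \ref{criterion} in the lattice case, where a screen strictly between $\{\re s=\sigma_0\}$ and the finitely many other vertical lines carrying complex dimensions is available (taking into account the possible pole at $s=0$ coming from the factor $1/s$, hence $\Theta\in(\max\{0,\sigma_1\},\sigma_0)$). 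So your Step 1, Step 2 and the cycle closure coincide with the paper's argument; the only presentational difference is that the paper organizes the equivalences as $(ii)\Leftrightarrow(iii)$ and then $(i)\Leftrightarrow(iii)$, rather than as a three-step cycle.

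One small correction to your ``Main obstacle'' paragraph: Theorem \ref{mink_char} has no languidity hypotheses ({\bf L1}/{\bf L2}), hence no truncation heights $T_n$ to choose and no screen to estimate along --- it only requires that $\zeta_{\pa\O,\O}$ (or $\widetilde\zeta_{\pa\O,\O}$) extend meromorphically to an open connected neighborhood of the critical line with $D$ as its single, simple pole there. The nonlattice accumulation of complex dimensions on $\{\re s=\sigma_0\}$ is therefore not an obstacle for Theorem \ref{mink_char} at all; it is only an obstacle to invoking Theorem \ref{criterion} (or Theorem \ref{necess}), which is exactly why the sufficient condition is used in that direction. What you do need in Step 1 is the meromorphic continuation of $\zeta_{\pa\O,\O}$ past the critical line and the simplicity of $\sigma_0$ (both classical for self-similar strings), plus, to make the Wiener--Pitt hypothesis about $L^1$ boundary convergence checkable in practice, the strong $d$-languidity that the paper deduces from $\zeta_{\mathfrak S}$ via \eqref{5.4.52.1/2E} --- but this is global growth control in the half-plane, not a choice of $T_n$.
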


\begin{proof}
We already know from the discussion preceding the statement of the corollary that $(ii)$ and $(iii)$ are equivalent, based on the results of \cite[Chapters 2, 3 and 8]{lapidusfrank12}; see, especially, \cite[Theorems 8.23 and 8.36]{lapidusfrank12}.

Next, we show that $(i)$ and $(iii)$ are equivalent.
Note that since $D=\sigma_0$ and $\sigma_0>0$, we have that $D\in(0,1)$.

First, observe that (since $\sigma_0$ is always simple) the fact that $(iii)$ implies $(i)$ follows from Theorem \ref{mink_char}, the sufficient condition for the Minkowski measurability of an RFD.
Note that in order to verify that the hypotheses of Theorem \ref{mink_char} are satisfied by $(\pa\O,\O)$ and its distance zeta function $\zeta_{\pa\O,\O}$, we use the fact that the scaling (and hence, geometric) zeta function $\zeta_{\mathfrak{S}}$ of a self-similar string is strongly languid with exponent $\kappa:=0$ (and hence, also for any $\kappa\geq 0$), as is shown in \cite[\S6.4, just above Remark 6.12]{lapidusfrank12}, combined with the functional equation (recalled in Proposition 3.7 below)
\begin{equation}\label{5.4.52.1/2E}
\zeta_{\pa\O,\O}(s)=\frac{2^{1-s}\zeta_{\mathcal{L}}(s)}{s}
\end{equation}
connecting the distance zeta function of $(\pa\O,\O)$ and the geometric zeta function of the fractal string $\mathcal{L}$.
For the proof of the functional equation \eqref{5.4.52.1/2E}, we refer to \cite[Proposition 6.3 and Remark 6.4]{ftf_A} or \cite[\S5.5.2]{fzf}.
Therefore, the distance zeta function $\zeta_{\pa\O,\O}$ is strongly $d$-languid with exponent $\kappa_d:=-1$.
Consequently (and assuming that $(iii)$ holds), the hypotheses of Theorem \ref{mink_char} are satisfied and so, it follows that $(\pa\O,\O)$ is Minkowski measurable; i.e., $(i)$ holds, as desired.

Now, all that remains to show is that $(i)$ implies $(iii)$.
More explicitly, we need to show that the fact that $(\pa\O,\O)$ is Minkowski measurable implies that $\mathcal{L}$ (or, equivalently, $(\pa\O,\O)$) is nonlattice.
For this purpose, we reason by contradiction.
Namely, we assume that $(i)$ holds (i.e., $(\pa\O,\O)$ is Minkowski measurable) but that $\mathcal{L}$ is a {\em lattice} (self-similar) string.
Since $\mathcal{L}$ is lattice,\footnote{In the case of a self-similar string with multiple gaps, this means that the multiplicative group generated by both the distinct scaling ratios and gaps is of rank 1; see \cite[Chapters 2 and~3]{lapidusfrank12}.} its scaling complex dimensions are located (and periodically distributed) on finitely many vertical lines (possibly on a single such line), the right most of which is the vertical line $\{\re s=\sigma_0\}$, the critical line (since $\sigma_0=D$).
Therefore, we can obviously choose, as is required by the hypotheses of Theorem \ref{criterion} (the Minkowski measurability criterion), a screen $\bm{S}$ passing strictly between the vertical line $\{\re s=\sigma_0=D\}$ and all the complex dimensions (i.e., the poles of $\zeta_{\pa\O,\O}$) with real part strictly less than $D=\sigma_0$.
In light of Equation \eqref{5.4.52.1/2E}, it suffices to let $S$ be any vertical line $\{\re s=\Theta\}$, where $\Theta\in(\max\{0,\sigma_1\},\sigma_0)$ and $\sigma_1$ is the abscissa of the second to last (right most) vertical line on which the scaling complex dimensions of $\mathcal{L}$ (or of $(\pa\O,\O)$) are located.  
(If $\sigma_1$ does not exist, then we can choose any $\Theta\in(0,\sigma_0)$.)\footnote{The fact that the strong $d$-languidity assumption is satisfied by $(\pa\O,\O)$ is explained in the previous step of the proof.
The corresponding argument is valid for any self-similar string.}

We may therefore apply Theorem \ref{criterion} and deduce from the fact that the RFD $(\pa\O,\O)$ is Minkowski measurable and that $D=\sigma_0$ must be its only complex dimension of real part $D=\sigma_0$.\footnote{Note that, in light of \eqref{5.4.52.1/2E} and since $\sigma_0>0$, it follows that the principal complex dimensions of $(\pa\O,\O)$ and the principal scaling complex dimensions coincide: $\po_c(\zeta_{\pa\O,\O})=\po_c(\zeta_{\mathcal{L}})$.}
This contradicts the fact that $\mathcal{L}$ is a lattice string, and hence has infinitely many (and thus at least two complex conjugate) nonreal principal scaling complex dimensions.
We deduce from this contradiction that $\mathcal{L}$ must be a nonlattice string (i.e., $(ii)$ holds) and hence (since $(ii)$ and $(iii)$ are equivalent), that $(iii)$ holds, as desired.

This concludes the proof of the corollary.
\end{proof}

\begin{remark}
Naturally, it follows from Corollary \ref{5.4.22.1/4C} that a (nontrivial, bounded) {\em self-similar string is} not {\em Minkowski measurable if and only if it is lattice}. Recall that a self-similar string is always Minkowski nondegenerate.
\end{remark}

\begin{remark}\label{5.4.22.1/2R}
In \S\ref{subsec_self_similar_sp}, we will extend Corollary \ref{5.4.22.1/4C} to higher dimensions, that is, to a large class of self-similar sprays in $\eR^N$, with $N\geq 1$ arbitrary.
In the general case (and under some mild assumptions), Minkowski measurability will have to be replaced by `possibly subcritical Minkowski measurability' in a sense to be explained there.
%(See especially, part $(c)$ of Remark \ref{5.5.18..}.)
\end{remark}

In the rest of this section, we illustrate the results of this paper by applying them to several examples of bounded (fractal) sets and relative fractal drums.
These examples include the line segment, the Cantor string, the Sierpi\'nski $N$-gasket and the $3$-dimensional Sierpi\'nski carpet,
   %along with the inhomogeneous higher-dimensional $N$-gasket RFDs, 
with $N\geq 3$ (\S\ref{subsec_sier}), fractal nests and (unbounded) geometric chirps (\S\ref{subsec_nestch}), as well as, finally, the recovery and significant extensions of the known fractal tube formulas (from [LapPe2--3, LapPeWi1--2]) 
%\cite{lappe3,lappe2,lappewi1,lappewi2} 
for self-similar sprays (\S\ref{subsec_self_similar_sp}), with  applications to the $1/2$-square and $1/3$-square fractals.

\subsection{Application to general fractal strings}\label{subsec_frstr}

In the present subsection, we illustrate how we can apply the Minkowski measurability citerion stated in Theorem \ref{criterion} (or Theorem \ref{criterion_p}) to the case of general (not necessarily self-similar) fractal strings.

We begin by discussing the prototypical example of the Cantor string (viewed as an RFD), in Example \ref{ecant}, and further illustrate our results by means of the well-known example of the $a$-string (in Example \ref{ex_a}).
Along the way, we discuss the case of general fractal strings as well as the associated fractal tube formulas.

\begin{example}\label{ecant}({\em The standard ternary Cantor set and string}).
Let $C$ be the standard ternary Cantor set in $[0,1]$ and fix $\d\geq 1/6$.
Then, the `absolute' distance zeta function of $C$ is meromorphic in all of $\Ce$ and given by
\begin{equation}\label{cant}
\zeta_{C,C_{\d}}(s)=\frac{2^{1-s}}{s(3^s-2)}+\frac{2\d^s}{s},\quad\textrm{for all}\ s\in\Ce;
\end{equation}
see \cite[Example 3.4]{dtzf} or \cite[Example 2.1.18]{fzf}.
Similarly, the relative distance zeta function of the relative fractal drum $(C,(0,1))$ is also meromorphic on all of $\Ce$ and given by
\begin{equation}\label{Cant_zeta}
\zeta_{C,(0,1)}(s)=\frac{2^{1-s}}{s(3^s-2)},\quad\textrm{for all}\ s\in\Ce.
\end{equation}
Furthermore, the sets of complex dimensions of the Cantor set $C$ and of the Cantor string $(C,(0,1))$, viewed as an RFD, coincide:
\begin{equation}\label{comp_dim_CC}
\po({\zeta}_C)=\po({\zeta}_{C,(0,1)})=\{0\}\cup\left(\log_32+\frac{2\pi}{\log3}\I\Ze\right).
\end{equation}
In \eqref{comp_dim_CC}, each of the complex dimensions is simple.
Furthermore, $D:=\dim_B(C,(0,1))$, the Minkowski dimension of the Cantor string, exists and $D=\log_32$, the Minkowski dimension of the Cantor set, which also exists.
Moreover, $\mathbf{p}:={2\pi}/{\log 3}$ is the oscillatory period of the Cantor set (or string), viewed as a  {\em lattice} self-similar set (or string); see \cite[Chapter 2, esp., \S2.3.1 and \S2.4]{lapidusfrank12}.

It is easy to check that $(C,(0,1))$ is $d$-languid where the screen may be choosen to coincide with any vertical line $\{\re s=\sigma\}$ for $\sigma\in(0,\log_32)$.
Therefore, the assumptions of Theorem \ref{criterion} are satisfied but since $D$ is not the only pole on the critical line $\{\re s=D\}$ we conclude that, as is well-known, the relative fractal drum $(C,(0,1))$, i.e., the Cantor string is not Minkowski measurable.

In fact, from \cite[Example 6.2]{ftf_A} we have the following exact pointwise fractal tube formula for the `inner' $t$-neighborhood of $C$, valid for all $t\in(0,1/2)$:
\begin{equation}\label{CC_compute}
\begin{aligned}
|C_t\cap(0,1)|&=t^{1-D}G\left(\log_{3}(2t)^{-1}\right)-2t,
\end{aligned}
\end{equation}
where $G$ is the positive, nonconstant $1$-periodic function, which is bounded away from zero and infinity and given by the following Fourier series expansion:
\begin{equation}\label{5.5.10.1/4}
G(x):=\sum_{k\in\Ze}\frac{2^{-D}\E^{2\pi\I kx}}{\omega_k(1-\omega_k)\log 3},
\end{equation}
where $\omega_k:=D+\mathbf{p}\I k$ and $\mathbf{p}$ is the oscillatory period of the Cantor string.

As expected, the above exact pointwise fractal tube formula \eqref{CC_compute} coincides with the one obtained by direct computation for the Cantor string (see~\cite[\S1.1.2]{lapidusfrank12}) or from the general theory of fractal tube formulas for fractal strings (see \cite[Chapter~8, esp., \S8.1 and \S8.2]{lapidusfrank12}) and, in particular, for self-similar strings (see, especially, \cite[\S8.4.1, Example 8.2.2]{lapidusfrank12}).\footnote{{\em Caution}: in \cite[\S8.4]{lapidusfrank}, the Cantor string is defined slightly differently, and hence, $C$ is replaced by $3^{-1}C$.}

Finally, observe that, in agreement with the lattice case of the general theory of self-similar strings developed in \cite[Chapters~2--3 and \S8.4]{lapidusfrank12}, we can rewrite the pointwise fractal tube formula \eqref{CC_compute} as follows (with $D:=\dim_BC=\log_23$):
\begin{equation}\label{5.5.10.3/4}
t^{-(1-D)}V_{C,(0,1)}(t)=t^{-(1-D)}|C_t\cap(0,1)|=G\left(\log_{3}(2t)^{-1}\right)+o(1),
\end{equation}
where $G$ is given by \eqref{5.5.10.1/4}.
Therefore, since $G$ is periodic and nonconstant, it is clear that $t^{-(1-D)}V_{C,(0,1)}(t)$ cannot have a limit as $t\to 0^+$.
It also follows from the above formula \eqref{5.5.10.3/4} that the Cantor string RFD $(C,(0,1))$ (or, equivalently, the Cantor string ${\mathcal L}_{CS}$) is {\em not} Minkowski measurable but (since $G$ is also bounded away from zero and infinity) is Minkowski nondegenerate.
(This was first proved in [LapPo1--2] 
%\cite{LaPo1,lapiduspom}
via a direct computation, leading to the precise values of $\mathcal{M}_*$ and $\mathcal{M^*}$, and reproved in \cite[\S8.4.2]{lapidusfrank12} by using either the pointwise fractal tube formulas or a self-similar fractal string analog of the Minkowski measurability criterion; i.e., of the $N=1$ case of Theorem \ref{criterion}; see Remark \ref{5.4.19.1/2} and, especially, Corollary \ref{5.4.22.1/4C}.)
\end{example}

The above example demonstrates how the theory developed in this chapter generalizes (to arbitrary dimensions $N\geq 1$) the corresponding one for fractal strings developed in~\cite[Chapter~8]{lapidusfrank12}.\footnote{One should slightly qualify this statement, however, because the higher-dimensional counterpart of the theory of fractal tube formulas for self-similar strings developed in \cite[\S8.4]{lapidusfrank12} is not developed here in the general case of self-similar RFDs (and, for example, of self-similar sets satisfying the open set condition), except in the special case of self-similar sprays discussed in \S\ref{subsec_self_similar_sp} below.}
More generally, the following result gives a general connection between the geometric zeta function of a nontrivial fractal string $\mathcal{L}=(\ell_j)_{j\geq 1}$ and the (relative) distance zeta function of the bounded subset of $\eR$ given by
\begin{equation}
A_{\mathcal{L}}:=\left\{a_k:=\sum_{j\geq k}\ell_j:k\geq 1\right\}
\end{equation}
or, more specifically, of the RFD $(A_{\mathcal L},(0,\ell))$.
For the proof of this proposition, we refer the reader to \cite[Proposition 6.3]{ftf_A} or \cite[Proposition 5.5.4]{fzf}.

\begin{proposition}\label{geo_dist}
Let $\mathcal{L}=(\ell_j)_{j\geq 1}$ be a bounded nontrivial fractal string and let $\ell:=\zeta_{\mathcal{L}}(1)=\sum_{j=1}^{\ty}\ell_j$ denote its total length.
Then, for every $\delta\geq \ell_1/2$, we have the following functional equation for the distance zeta function of the relative fractal drum $(A_{\mathcal{L}},(0,\ell))$$:$
\begin{equation}\label{geo_equ}
\zeta_{A_{\mathcal{L}},(0,\ell)}(s;\d)=\frac{2^{1-s}\zeta_{\mathcal{L}}(s)}{s},
\end{equation}
valid on any connected open neighborhood $U\subseteq\Ce$ of the critical line $\{\re s=\ov{\dim}_B(A_{\mathcal{L}},(0,\ell))\}$ to which any of the two fractal zeta functions $\zeta_{{A_{\mathcal{L}}},(0,\ell)}$ and $\zeta_{\mathcal{L}}$ possesses a meromorphic continuation.\footnote{If we do not require that $\delta\geq \ell_1/2$, then we have that $\zeta_{A_{\mathcal{L}}}(s;\d)={2^{1-s}s^{-1}\zeta_{\mathcal{L}}(s)}+v(s)$, where $v$ is holomorphic on $\{\re s>0\}$.
On the other hand, in order to apply the theory, we may restrict ourselves to the case when $\delta\geq \ell_1/2$.}

Furthermore, if $\zeta_{\mathcal{L}}$ is languid for some $\kappa_{\mathcal{L}}\in\eR$, then $\zeta_{A_{\mathcal{L}},(0,\ell)}(\,\cdot\,;\d)$ is $d$-languid for $\kappa_d=\kappa_{\mathcal{L}}-1$, with any $\d\geq \ell_1/2$.
%
%Moreover, if $\zeta_{\mathcal{L}}$ is strongly languid, then so is $\zeta_{\lambda A_{\mathcal{L}},(0,\lambda \ell)}(s;\d\lambda)$ for any $\lambda\geq 2$ and any $\d\geq \ell_1/2$.
\end{proposition}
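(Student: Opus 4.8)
The plan is to compute the relative distance zeta function $\zeta_{A_{\mathcal{L}},(0,\ell)}(s;\d)$ directly from its defining integral \eqref{rel_dist_zeta_d} and to recognize the result as a rescaled copy of $\zeta_{\mathcal{L}}(s)=\sum_{j\ge1}\ell_j^{\,s}$. First I would recall the geometric picture: the set $A_{\mathcal{L}}=\{a_k:=\sum_{j\ge k}\ell_j\}$ sits inside $(0,\ell)$ as a decreasing sequence of points $a_1=\ell>a_2>\cdots\to 0$, and the consecutive gaps have lengths $a_k-a_{k+1}=\ell_k$. Thus the complement of $A_{\mathcal{L}}$ in $(0,\ell)$ is a disjoint union of open intervals of lengths $\ell_1,\ell_2,\dots$, i.e.\ $(A_{\mathcal{L}},(0,\ell))$ is (the RFD attached to) a geometric realization of $\mathcal{L}$. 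For $N=1$ the exponent in \eqref{rel_dist_zeta_d} is $s-N=s-1$, so $\zeta_{A_{\mathcal{L}},(0,\ell)}(s;\d)=\int_{(0,\ell)\cap (A_{\mathcal{L}})_\d} d(x,A_{\mathcal{L}})^{\,s-1}\,\D x$; with the choice $\d\ge\ell_1/2$ the whole of $(0,\ell)$ lies in the $\d$-neighborhood of $A_{\mathcal{L}}$ (the farthest a point of a gap of length $\ell_k\le\ell_1$ can be from an endpoint is $\ell_k/2\le\ell_1/2\le\d$), so the domain of integration is simply $(0,\ell)$.

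The key computational step is to split $(0,\ell)=\bigsqcup_{k\ge1}(a_{k+1},a_k)$ and to evaluate the contribution of each gap. On the $k$-th gap $(a_{k+1},a_k)$, which has length $\ell_k$, the distance to $A_{\mathcal{L}}$ is the distance to the nearer of the two endpoints $a_{k+1},a_k$, so $\int_{a_{k+1}}^{a_k} d(x,A_{\mathcal{L}})^{\,s-1}\,\D x = 2\int_0^{\ell_k/2} u^{\,s-1}\,\D u = 2\cdot\frac{(\ell_k/2)^{s}}{s}=\frac{2^{1-s}}{s}\,\ell_k^{\,s}$, valid for $\re s>0$ where the integral converges. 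Summing over $k$ gives $\zeta_{A_{\mathcal{L}},(0,\ell)}(s;\d)=\frac{2^{1-s}}{s}\sum_{k\ge1}\ell_k^{\,s}=\frac{2^{1-s}\zeta_{\mathcal{L}}(s)}{s}$ on the half-plane of absolute convergence, which is $\{\re s>\ov{\dim}_B(A_{\mathcal{L}},(0,\ell))\}=\{\re s>D(\zeta_{\mathcal{L}})\}$ (the abscissae agree because of the identity just derived, together with Theorem \ref{an_rel}). The extension of \eqref{geo_equ} to any connected open neighborhood $U$ of the critical line to which either side has a meromorphic continuation is then immediate from the principle of analytic continuation, since $2^{1-s}/s$ is meromorphic on $\Ce$ with a single simple pole at $s=0$.

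For the last assertion about languidity, I would transport the growth bounds {\bf L1}, {\bf L2} through the functional equation \eqref{geo_equ}. On any vertical line or horizontal segment, $|2^{1-s}/s| = 2^{1-\re s}/|s|$ is bounded above by a constant times $(|\im s|+1)^{-1}$ (for $\re s$ in a fixed bounded range and away from $s=0$, which the screen/horizontal segments avoid by admissibility), so if $|\zeta_{\mathcal{L}}(s)|\le C(|\im s|+1)^{\kappa_{\mathcal{L}}}$ along the relevant screen and the sequence $(T_n)$, then $|\zeta_{A_{\mathcal{L}},(0,\ell)}(s;\d)|\le C'(|\im s|+1)^{\kappa_{\mathcal{L}}-1}$ there; hence $(A_{\mathcal{L}},(0,\ell))$ is $d$-languid with $\kappa_d=\kappa_{\mathcal{L}}-1$. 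The only mild subtlety, and the point I expect to require the most care, is the bookkeeping around $\d$ and around the pole of $2^{1-s}/s$ at $s=0$: one must either assume $\d\ge\ell_1/2$ as in the statement (so that the domain of integration is exactly $(0,\ell)$ and no extra entire term appears) or, if $\d<\ell_1/2$, account for the finitely many gaps longer than $2\d$, which contributes the holomorphic correction term $v(s)$ on $\{\re s>0\}$ mentioned in the footnote; and since the screen is assumed to pass strictly to the right of $\Re s = 0$ in the situations where this proposition is applied (one has $D>0$ for nontrivial strings here), the pole at $s=0$ of the factor $2^{1-s}/s$ never interferes with the languidity estimates.
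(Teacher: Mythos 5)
Your proof is correct and follows the natural (and essentially only) route: decompose $(0,\ell)$ into the constituent gaps $(a_{k+1},a_k)$ of lengths $\ell_k$, evaluate $\int_{a_{k+1}}^{a_k} d(x,A_{\mathcal L})^{s-1}\,\D x = 2\int_0^{\ell_k/2}u^{s-1}\,\D u = 2^{1-s}\ell_k^s/s$, sum, and then invoke analytic continuation and transport the {\bf L1}/{\bf L2} bounds through the elementary factor $2^{1-s}/s$. The paper itself omits the proof and refers to [LapRa\v Zu1, Prop.\ 5.5.4] and [LapRa\v Zu8, Prop.\ 6.3], which carry out this same direct computation; your handling of the $\delta\ge\ell_1/2$ hypothesis (ensuring $(0,\ell)\subseteq(A_{\mathcal L})_\delta$ so no extra entire correction $v(s)$ appears), of the removable role of $s=0$ in the languidity estimates (since {\bf L1} involves $T_n\neq0$ and {\bf L2} only $|\tau|\ge1$), and of the gain of one power of $|\tau|$ yielding $\kappa_d=\kappa_{\mathcal L}-1$, are all accurate.
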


We note that in the above proposition, $(A_{\mathcal{L}},(0,\ell))$ can be replaced by any geometric realization $(\pa\O,\O)$ of the fractal string $\mathcal{L}$.

\begin{example}\label{ex_a}({\em The $a$-string}).
For a given $a>0$, the $a$-string $\mathcal{L}_{a}$ can be realized as the bounded open set $\O_a\subset\eR$ obtained by removing the points $j^{-a}$ for $j\in\eN$ from the interval $(0,1)$; that is, 
\begin{equation}\label{O_a}
\O_a=\bigcup_{j=1}^{\ty}\big((j+1)^{-a},j^{-a}\big),
\end{equation}
so that the sequence of lengths of $\mathcal{L}_a$ is defined by
\begin{equation}\label{lj}
\ell_j=j^{-a}-(j+1)^{-a},\ \textrm{ for }\ j=1,2,\ldots,
\end{equation}
and $\pa\O_a=\{j^{-a}\,:\,j\geq 1\}\cup\{0\}=A_{\mathcal{L}_a}\cup\{0\}$.
Hence, its geometric zeta function is given (for all $s\in\Ce$ such that $\re s>\dim_B\mathcal{L}_a$) by
$$
\zeta_{{\mathcal{L}}_a}(s)=\sum_{j=1}^{\ty}\ell_j^s=\sum_{j=1}^{\ty}\big(j^{-a}-(j+1)^{-a}\big)^s
$$
and it then follows from Proposition \ref{geo_dist} that for $\d>(1-2^{-a})/2$, its distance zeta function is given by
\begin{equation}\label{a-dist}
\zeta_{A_{\mathcal{L}_a,(0,1)}}(s;\d)=\frac{\zeta_{\mathcal{L}_a}(s)}{2^{s-1}s}=\frac{1}{2^{s-1}s}\sum_{j=1}^{\ty}\big(j^{-a}-(j+1)^{-a}\big)^s,
\end{equation}
where the second equality holds for all $s\in\Ce$ such that $\re s>\dim_B\mathcal{L}_a$ while the first equality holds for all $s\in\Ce$ (since, as will be recalled just below, $\zeta_{\mathcal{L}_a}$ and hence also $\zeta_{A_{\mathcal{L}_a},(0,1)}$, admits a meromorphic extension to all of $\Ce$).

Furthermore, the properties of the geometric zeta function $\zeta_{\mathcal{L}_a}$ of the $a$-string are well-known (see \cite[Theorem~6.21]{lapidusfrank12}).
Namely, $\zeta_{\mathcal{L}_a}$ has a meromorphic continuation to the whole of $\Ce$ and its poles in $\Ce$ are located at 
\begin{equation}
D:=\dim_B\mathcal{L}_a=\dim_BA_{\mathcal{L}_a}=\frac{1}{a+1}
\end{equation}
and at (a subset of) $\{-\frac{m}{a+1}:m\in\eN\}$.
Moreover, all of its poles are simple and we have that $\res(\zeta_{\mathcal{L}_a},D)=Da^D$.\footnote{In \cite[Theorem~6.21]{lapidusfrank12}, it is stated that $\res(\zeta_{\mathcal{L}_a},D)=a^D$, which is a misprint.
More specifically, in the proof of that theorem, the source of the misprint is the fact that the residue of $\zeta((a+1)s)$ at $s=1/(a+1)$ is equal to $1/(a+1)$ and not to $1$. Here, $\zeta$ is the Riemann zeta function.\label{25}}
In addition, for any screen $\bm{S}$ not passing through a pole, the function $\zeta_{\mathcal{L}}$ satisfies {\bf L1} and {\bf L2} with $\kappa=\frac{1}{2}-(a+1)\inf S$, if $\inf S\leq 0$ and $\kappa=\frac{1}{2}$ if $\inf S\geq 0$.
From these facts and Equation \eqref{a-dist}, we conclude that the set $A_{\mathcal{L}_a}$ is $d$-languid with $\kappa_d=-\frac{1}{2}-(a+1)\inf S$ if $\inf S\leq 0$ and with $\kappa_d=-\frac{1}{2}$ if $\inf S\geq 0$.

By choosing the screen $\bm{S}$ to be some vertical line $\{\re s=\sigma\}$ for any value of $\sigma\in(-1/(a+1),1/(a+1))$, we conclude that the assumptions of Theorem \ref{criterion} are satisfied. 
Therefore, since $D=1/(a+1)$ is the only pole on the critical line $\{\re s=D\}$ and, additionally, since it is simple, we conclude that the $a$-string is Minkowski measurable and
%
%For $M\in\eN\cup\{0\}$, we can now choose the screen $S_M$ to be some vertical line between $-\frac{M+1}{1+a}$ and $-\frac{M+2}{1+a}$ and let $W_M$ be the corresponding window.
%Applying Theorem~\ref{dist_tube_formula_d}, we now obtain the following asymptotic distributional formula for the tube function $t\mapsto|(A_{\mathcal{L}_a})_t\cap(0,1)|$ when $t\to0^+$:
%\begin{equation}
%%\begin{aligned}
%|(A_{\mathcal{L}_a})_t\cap(0,1)|=\!\!\sum_{\omega\in\po({\zeta}_{A_{{\mathcal{L}_a}}},W_M)}\!\!\res\left(\frac{t^{1-s}}{1-s}\zeta_{A_{\mathcal{L}_a}}(s;\d),\omega\right)+O(t^{1-\sup{S_M}}).
%%\end{aligned}
%\end{equation}
%More specifically, since we know that all the poles are simple and $\zeta_{\mathcal{L}_a}(0)=-1/2$ (see \cite[p.~205]{lapidusfrank12}),
%we have that
%\begin{equation}
%\begin{aligned}
%\res(\zeta_{A_{\mathcal{L}_a}},D)&={2^{1-D}}D^{-1}\res(\zeta_{{\mathcal{L}_a}},D)=2^{1-D}a^D,\\
%\res(\zeta_{A_{\mathcal{L}_a}},0)&=2\zeta_{\mathcal{L}_a}(0)=-1.
%\end{aligned}
%\end{equation}
%Consequently, and in agreement with the discussion following Proposition \ref{geo_dist} in the special case of simple complex dimensions (see, especially, Equation \eqref{5.5.12.3/4} above), we have that
%\begin{equation}\label{a-str-t}
%\begin{aligned}
%|(A_{\mathcal{L}_a})_t\cap(0,1)|&=\frac{2^{1-D}a^D}{1-D}t^{1-D}-t-\sum_{m=1}^M\frac{\res\left(\zeta_{{\mathcal{L}_a}},-mD\right)(2t)^{1+mD}}{(1+mD)mD}\\
%&+O\big(t^{1+(M+1)D}\big),\quad\textrm{as}\quad t\to0^+,
%\end{aligned}
%\end{equation}
%
with Minkowski content given by
\begin{equation}
\mathcal{M}^D(A_{\mathcal{L}_a})=\frac{2^{1-D}a^D}{1-D},
\end{equation}
as was first established in \cite[Example~5.1]{Lap1} and later reproved in [LapPo1--2] 
%\cite{LaPo1,lapiduspom}
via a general Minkowski measurability  criterion for fractal strings (expressed in terms of the asymptotic behavior of $(\ell_j)_{j=1}^{\ty}$, here, $\ell_j\sim aj^{-1/D}$ as $j\to\ty$) and then, in [Lap--vFr1--3] 
%\cite{lapidusfrank,lapidusfrank06,lapidusfrank12}
(via the the theory of complex dimensions of fractal strings, specifically, via the special case of Theorem~\ref{criterion} when $N=1$).
%We point out that \eqref{a-str-t} coincides with the `inner' tube formula of the $a$-string (see \cite[Subsection~8.1.2]{lapidusfrank12}).\footnote{More precisely, the two expressions coincide after we have taken into account the misprint mentioned in footnote \ref{25} and add the term $2\zeta_{\mathcal{L}}(0)$ which seems to be forgotten in \cite{lapidusfrank12}.}
%Furthermore, by choosing a screen to the right of $-D/2$, we conclude that \eqref{a-str-t} is actually valid pointwise since then, $\kappa_d<0$ (see Theorem~\ref{pointwise_thm_d}).
\end{example}

%\begin{remark}\label{5.5.9.1/}
%Throughout the discussion provided in Example \ref{ex_a}, and without affecting any of the results, we could have replaced the RFD $(A_{\mathcal{L}_a},(0,1))$ by the equivalent RFD $(\pa\O_a,\O_a)$, where $\O_a$ is defined by \eqref{O_a} and, more generally, by the RFD $(\pa\O,\O)$, where $\O$ is an arbitrary geometric realization of the fractal string $\mathcal{L}_a$.
%Indeed, as we know from Subsection \ref{subsec_frstr} (see, especially, Remark \ref{5.5.4.1/4}), all of the results obtained here are independent of the choice of the geometric realization of the fractal string $\mathcal{L}_a$.
%\end{remark} 

\begin{remark}
References on or related to fractal strings include [CaLapPe-vFr1--2, Fal2, DemDenKo\"U, DemKo\"O\"U, DubSep, ElLapMacRo, EsLi1--2, Fal2, Fr, HamLap, HeLap, HerLap1--3, KeKom, KomPeWi, LaLap1--2, Lap1--7, LapL\'eRo, LapLu, LapLu-vFr1--2, LapMa, LapPe1--2, LapPeWi1--2, LapPo1--2, LapRa\v Zu1,8, Lap-vFr1--2, L\'eMen, MorSep, MorSepVi, Ol1--2, Pe, PeWi, Ra1, Tep1--2], with applications to (or motivations from) a variety of subjects, such as number theory, fractal geometry, dynamical systems, harmonic analysis, spectral theory and mathematical physics.
\end{remark}

\subsection{The Sierpi\'nski gasket and $3$-carpet}\label{subsec_sier}

In this subsection, we provide an exact, pointwise fractal tube formula for the Sierpi\'nski gasket (Example \ref{gsk_fract}) and for a three-dimensional analog of the Sierpi\'nski carpet (Example \ref{ex2}).

We leave it to the interested reader to carry out the corresponding detailed computations and to imagine other (two- or higher-dimensional) examples of self-similar fractal sets or self-similar RFDs which can be dealt with explicitly within the present general theory of (higher-dimensional) fractal tube formulas.
The example of the $3$-carpet discussed in detail in Example \ref{ex2} below should give a good idea as to how  to proceed in other, related situations, including especially for the higher-dimensional inhomogeneous $N$-gasket RFDs (with $N\geq 3$), which is treated in [LapRa\v Zu1, Example 4.2.26].

\begin{example}({\em The Sierpi\'nski gasket}).\label{gsk_fract}
Let $A$ be the Sierpi\'nski gasket in $\eR^2$, constructed in the usual way inside the unit triangle.
Furthermore, we assume without loss of generality that $\d>1/4\sqrt{3}$, so that $A_\d$ be simply connected.
Then, the distance zeta function $\zeta_A$ of the Sierpi\'nski gasket is meromorphic on the whole complex plane and is given by
\begin{equation}\label{sierp}
\zeta_A(s;\d)=\frac{6(\sqrt3)^{1-s}2^{-s}}{s(s-1)(2^s-3)}+2\pi\frac{\d^s}s+3\frac{\d^{s-1}}{s-1},
\end{equation}
for all $s\in\Ce$ (see \cite[Example 4.12]{mefzf} and also \cite[Proposition 3.2.3]{fzf}). In particular, the set of complex dimensions of the Sierpi\'nski gasket is given by
\begin{equation}
\po({\zeta}_{A}):=\po(\zeta_A,\Ce)=\{0,1\}\cup\left(\log_23+\frac{2\pi}{\log 2}\I\Ze\right),
\end{equation}
with each complex dimension being simple.
It is now easy to check that $\zeta_A$ is $d$-languid for any screen $\bm{S}$ chosen to be some vertical line ${\re s=\sigma}$, where $\sigma\in(1,\log_23)$.
Therefore, the assumptions of Theorem \ref{criterion} are satisfied and since, clearly, $D$ is not the only pole on the critical line $\{\re s=D\}$, we conclude that $A$ is not Minkowski measurable, a fact which is well known.
%
%
%
%By letting $\omega_k:=\log_23+\I k\mathbf{p}$ (for each $k\in\Ze$) and $\mathbf{p}:=2\pi/\log 2$, we have that
%\begin{equation}
%\res(\zeta_A,\omega_k)=\frac{6(\sqrt3)^{1-\omega_k}}{4^{\omega_k}(\log2)\omega_k(\omega_k-1)} \quad\textrm{for all}\ k\in\Ze,
%\end{equation}
%\begin{equation}
%\res(\zeta_A,0)=3\sqrt{3}+2\pi,\quad\textrm{and}\quad\res(\zeta_A,1)=0.
%\end{equation}
%Similarly as in Example \ref{ecant}, one can check that $\zeta_{\lambda A}(\,\cdot\,;\d\lambda)$ is strongly languid with $\kappa_d:=-1$ for every $\d\geq 1/2\sqrt{3}$ and any $\lambda\geq 2\sqrt{3}$; so that we can apply Theorem~\ref{pointwise_thm_d} (or, more specifically, its corollary given in Theorem~\ref{5.3.15.1/2} at level $k=0$ and in the case of simple poles) in order to obtain the following exact pointwise fractal tube formula:
%\begin{equation}\nonumber
%\begin{aligned}
%|A_t|&=\sum_{\omega\in\po({\zeta}_{A})}\res\left(\frac{t^{2-s}}{2-s}{\zeta}_A(s;\d),\omega\right)\\
%&=t^{2-\log_23}\,\frac{6\sqrt{3}}{\log 2}\sum_{k=-\ty}^{+\ty}\frac{(4\sqrt{3})^{-\omega_k}t^{-\I k\mathbf{p}}}{(2-\omega_k)(\omega_k-1)\omega_k}+\left(\frac{3\sqrt{3}}{2}+\pi\right)t^2,\\
%\end{aligned}
%\end{equation}
%valid for all $t\in(0,1/2\sqrt{3})$.
Furthermore, from the fractal tube formula for $A$ which was obtained in \cite{lappe2} and \cite{lappewi1} and, more recently, via a different (but related) technique in~\cite{DeKoOzUr}, but also now follows from our general fractal tube formulas in [LapRa\v Zu1,8], one can see that $A$ is actually Minkowski nondegenerate.
\end{example}

%
%\begin{figure}[ht]
%\begin{center}
%%\psfrag{N+1}{$N+1$}
%\includegraphics[trim=1.5cm 2.5cm 2.5cm 1.5cm,clip=true,width=8cm]{Piramida}
%%\psfig{file=hopf_infty_20.eps,width=12cm} %100 percent
%\end{center}
%\caption{The mutually congruent pyramids into which we subdivide the cube $A_1$ from Example \ref{ex2}. Eight of them, corresponding to one face of $A_1$, are shown here.}
%\label{piramide}
%\end{figure}

%%%%%%%%%%%%%%%%%%%%

The scaling properties of fractal zeta functions are an important tool for the efficient computation of distance zeta functions of various compact sets and relative fractal drums, especially those possessing a self-similar structure, like the Cantor set or the Sierpi\'nski gasket and the corresponding RFDs. 

\begin{theorem} {\rm (Scaling property of relative distance zeta functions; \cite[Theorem~4.1.40]{fzf})}.\label{scalingp}
Let $\zeta_{A,\O}(s)$ be the relative distance zeta function. Then, for any positive real number $\g$, we have that $D(\zeta_{\g A,\g\O})=D(\zeta_{A,\O})=\ov\dim_B(A,\O)$ and
\begin{equation}
\zeta_{\g A,\g\O}(s)=\g^s\zeta_{A,\O}(s).
\end{equation}
\end{theorem}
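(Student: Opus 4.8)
The plan is to prove the scaling property of the relative distance zeta function $\zeta_{A,\O}$ by a direct change of variables in the defining Lebesgue integral \eqref{rel_dist_zeta}. First I would record the elementary geometric fact that the Euclidean distance function scales homogeneously: for any $\g>0$ and any $x\in\eR^N$, one has $d(\g x,\g A)=\g\,d(x,A)$, since $\g A=\{\g a:a\in A\}$ and $|\g x-\g a|=\g|x-a|$. This immediately gives $(\g A)_{\g\d}=\g(A_\d)$, so that the RFD $(\g A,\g\O)$ is indeed admissible whenever $(A,\O)$ is (the condition $\g\O\subseteq(\g A)_{\g\d}$ follows from $\O\subseteq A_\d$), and the volume of $\g\O$ is finite since $|\g\O|=\g^N|\O|<\ty$.

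Next I would perform the substitution $x=\g y$ in the integral defining $\zeta_{\g A,\g\O}(s)$. With $\D x=\g^N\D y$ and using $d(\g y,\g A)^{s-N}=(\g\,d(y,A))^{s-N}=\g^{s-N}d(y,A)^{s-N}$, we obtain
\begin{equation*}
\zeta_{\g A,\g\O}(s)=\int_{\g\O}d(x,\g A)^{s-N}\D x=\int_{\O}\g^{s-N}d(y,A)^{s-N}\,\g^N\D y=\g^s\int_{\O}d(y,A)^{s-N}\D y=\g^s\zeta_{A,\O}(s),
\end{equation*}
valid a priori for $\re s$ large enough that both sides are given by absolutely convergent Lebesgue integrals. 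The same computation applied to absolute values shows that $|d(x,\g A)|^{\re s-N}$ is integrable over $\g\O$ if and only if $d(y,A)^{\re s-N}$ is integrable over $\O$, from which I would deduce that the two abscissae of convergence coincide: $D(\zeta_{\g A,\g\O})=D(\zeta_{A,\O})$. Combined with part $(a)$ of Theorem \ref{an_rel}, which identifies the abscissa of convergence of a relative distance zeta function with the upper relative box dimension, this yields $D(\zeta_{\g A,\g\O})=D(\zeta_{A,\O})=\ov\dim_B(A,\O)$, as claimed; alternatively one can note directly that the map $x\mapsto\g x$ multiplies all $t$-neighborhood volumes appropriately, so that $\ov\dim_B(\g A,\g\O)=\ov\dim_B(A,\O)$.

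Finally, to extend the identity $\zeta_{\g A,\g\O}(s)=\g^s\zeta_{A,\O}(s)$ from the half-plane $\{\re s>D(\zeta_{A,\O})\}$ to any connected open set on which both sides are defined, I would invoke the principle of analytic continuation: both $\zeta_{\g A,\g\O}(s)$ and $\g^s\zeta_{A,\O}(s)$ are meromorphic (the factor $\g^s=\E^{s\log\g}$ being entire and nowhere vanishing), they agree on an open half-plane, and hence they agree wherever both are defined. There is essentially no serious obstacle here; the only point requiring a little care is the bookkeeping of the domains (verifying admissibility of $(\g A,\g\O)$ and the equality of the two half-planes of convergence before invoking analytic continuation), together with the trivial but necessary observation that the change of variables $x=\g y$ is a $C^\infty$ diffeomorphism of $\eR^N$ with constant Jacobian $\g^N$, so Fubini–Tonelli and the change-of-variables theorem for the Lebesgue integral apply without difficulty.
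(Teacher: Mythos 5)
Your proof is correct and follows the standard and essentially unique route to this result (which is also the one used in the cited source): reduce $d(\g x,\g A)=\g\,d(x,A)$ to homogeneity of the Euclidean norm, change variables $x=\g y$ with Jacobian $\g^N$ in the defining Lebesgue integral, read off the factor $\g^s$, observe that positivity of $|\g^s|=\g^{\re s}$ forces the two abscissae of absolute convergence to coincide, identify them with $\ov\dim_B(A,\O)$ via Theorem~\ref{an_rel}(a), and extend by analytic continuation. Nothing is missing; the bookkeeping you flag (admissibility of $(\g A,\g\O)$ and finiteness of $|\g\O|$) is exactly the right small check to make before invoking the change-of-variables theorem.
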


We illustrate our method in the case of the Sierpi\'nski gasket. More specifically, we derive Equation~\eqref{sierp} by using a natural relative fractal drum associated with the gasket; see Case $(b)$ just below, especially Equations~\eqref{rfdeq}--\eqref{zetaeqs}.

\setlength{\unitlength}{.6mm}
\begin{figure}[ht]
\begin{center}
\begin{pspicture}(-4,0)(4,4.3)
\psset{algebraic,unit=.75cm}
\psline(-3,0)(3,0)(0,5.196)(-3,0)
\psline(0,0)(1.5,2.598)(-1.5,2.598)(0,0)

\psline(1.5,0)(0.75,1.299)(2.25,1.299)(1.5,0)
\psline(-1.5,0)(-0.75,1.299)(-2.25,1.299)(-1.5,0)
\psline(0,2.598)(-0.75,3.897)(0.75,3.897)(0,2.598)

%\put(-4,-1){\circle*{0.2}}
%\psline[linewidth=1.3pt](1,-1)(3,-1)(3,3)(1,3)(1,-1)
\end{pspicture}%
\end{center}
\caption{The Sierpi\'nski gasket $A$, where only the first two generations of the triangles are shown.}\label{01sierpinski}
\end{figure}
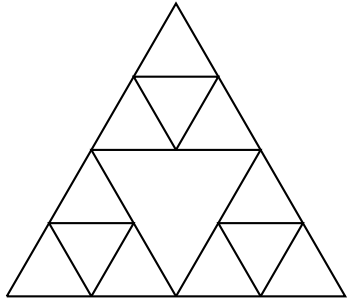

We show that the distance zeta function of the Sierpi\'nski gasket (see Figure~\ref{01sierpinski}) is given by Equation~\eqref{sierp},
where $\d$ is a positive real number larger than $1/4\sqrt3$, so that the $\d$-neighborhood of $A$ be simply connected.
We first consider the $\d$-neighborhood of the Sierpi\'nski gasket $A$, as shown in Figure~\ref{01sierpd}. We proceed in the following two steps, provided in Case~$(a)$ and Case~$(b)$ below.
\medskip

\setlength{\unitlength}{.6mm}
\begin{figure}[ht]
\begin{center}
\begin{pspicture}(-4,-1.2)(4,5)
\psset{algebraic,unit=.75cm}
\psline*[linecolor=lightgray](-3,0)(3,0)(0,5.196)(-3,0)
\psline*[linecolor=lightgray](-3,0)(3,0)(3,-1.3)(-3,-1.3)(-3,0)
\psline*[linecolor=lightgray](-3,0)(0,5.196)(-1.126,5.846)(-4.123,0.65)(-3,0)
\psline*[linecolor=lightgray](3,0)(0,5.196)(1.126,5.846)(4.123,0.65)(3,0)

\put(-3,0){\pscircle*[linecolor=lightgray]{1.3}}
\put(3,0){\pscircle*[linecolor=lightgray]{1.3}}
\put(0,5.196){\pscircle*[linecolor=lightgray]{1.3}}

\psline(-3,0)(3,0)(0,5.196)(-3,0)
\psline(0,0)(1.5,2.598)(-1.5,2.598)(0,0)

\psline(1.5,0)(0.75,1.299)(2.25,1.299)(1.5,0)
\psline(-1.5,0)(-0.75,1.299)(-2.25,1.299)(-1.5,0)
\psline(0,2.598)(-0.75,3.897)(0.75,3.897)(0,2.598)

%\put(-4,-1){\circle*{0.2}}
%\psline[linewidth=1.3pt](1,-1)(3,-1)(3,3)(1,3)(1,-1)
\end{pspicture}%
\end{center}
\caption{The Sierpi\'nski gasket $A$ and its $\d$-neighborhood indicated in gray color, with $\d>1/4\sqrt3$, so that the $\d$-neighborhood be simply connected.}\label{01sierpd}
\end{figure}
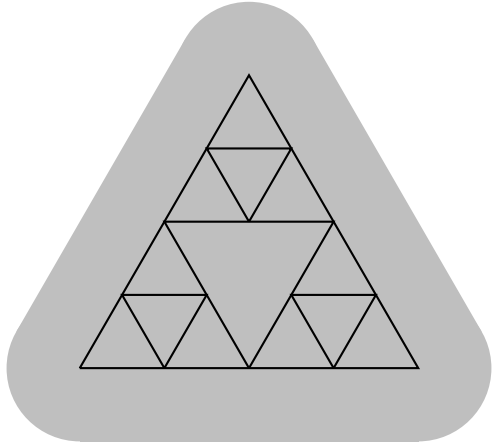

{\em Case $(a)$.} Let us first consider the part of the neighborhood $A_\d$  outside of the triangle, which is an easier case. It consists of three congruent sectorial sets near the vertices of the outer triangle of the gasket, and of three congruent rectangles; see Figure~\ref{01sierpds}. Note that if $x$ is any point in that set, then the distance $d(x,A)$ is equal to the distance $d(x,x')$, where $x'$ is either one of the vertices of the unit triangle $\O$, if $x$ belongs to any of the sectorial sets, and $x'$ is the orthogonal projection of $x$ onto one of the sides of the rectangles, if $x$ is in any of the rectangles.

\setlength{\unitlength}{.6mm}
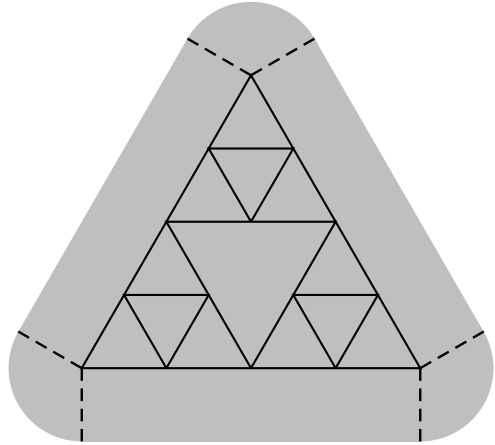
\begin{figure}[ht]
\begin{center}
\begin{pspicture}(-4,-1.2)(4,5)
\psset{algebraic,unit=.75cm}
\psline*[linecolor=lightgray](-3,0)(3,0)(0,5.196)(-3,0)
\psline*[linecolor=lightgray](-3,0)(3,0)(3,-1.3)(-3,-1.3)(-3,0)
\psline*[linecolor=lightgray](-3,0)(0,5.196)(-1.126,5.846)(-4.123,0.65)(-3,0)
\psline*[linecolor=lightgray](3,0)(0,5.196)(1.126,5.846)(4.123,0.65)(3,0)

\put(-3,0){\pscircle*[linecolor=lightgray]{1.3}}
\put(3,0){\pscircle*[linecolor=lightgray]{1.3}}
\put(0,5.196){\pscircle*[linecolor=lightgray]{1.3}}

\psline[linestyle=dashed,linewidth=0.9pt](-3,0)(-3,-1.3)
\psline[linestyle=dashed,linewidth=0.9pt](3,0)(3,-1.3)

\psline[linestyle=dashed,linewidth=0.9pt](-3,0)(-4.126,0.65)
\psline[linestyle=dashed,linewidth=0.9pt](0,5.196)(-1.126,5.846)

\psline[linestyle=dashed,linewidth=0.9pt](3,0)(4.126,0.65)
\psline[linestyle=dashed,linewidth=0.9pt](0,5.196)(1.126,5.846)

\psline(-3,0)(3,0)(0,5.196)(-3,0)
\psline(0,0)(1.5,2.598)(-1.5,2.598)(0,0)

\psline(1.5,0)(0.75,1.299)(2.25,1.299)(1.5,0)
\psline(-1.5,0)(-0.75,1.299)(-2.25,1.299)(-1.5,0)
\psline(0,2.598)(-0.75,3.897)(0.75,3.897)(0,2.598)

%\put(-4,-1){\circle*{0.2}}
%\psline[linewidth=1.3pt](1,-1)(3,-1)(3,3)(1,3)(1,-1)
\end{pspicture}%
\end{center}
\caption{In Case $(a)$, we consider the set $A_\d\setminus \O$, by splitting it into six parts, as indicated by dashed lines. In Case $(b)$, we consider the RFD $(A,\O)$.}\label{01sierpds}
\end{figure}

The disjoint union of the three sectorial RFDs can be identified with an RFD consisting of a point and an open disk of radius~$\d$ around it, as shown in Figure~\ref{01sierpsect} on the right. Its relative distance zeta function is easy to compute in polar coordinates:
\begin{equation}\label{0Bd0}
\zeta_{\{0\},B_\d(0)}=\iint_{B_\d(0)}r^{s-2}r\,\D r\,\D\f=\int_0^{2\pi}\D \f\int_0^\d r^{s-1}\D r=2\pi\frac{\d^s}s
\end{equation}
(where $B_\d(0)$ is the unit open disk of radius $\d$, centered at the origin),
and it coincides with the second term on the right-hand side of Equation~\eqref{sierp}.

\setlength{\unitlength}{.6mm}
\begin{figure}[ht]
\begin{center}%
\begin{minipage}{0.5\textwidth}
\begin{pspicture}(-4.2,-1)(2,1.3)
\psset{algebraic,unit=.75cm}

\put(0,0){\pscircle*[linecolor=lightgray]{1.3}}
\put(0,0){\circle*{0.2}}

\psline[linestyle=dashed,linewidth=0.9pt](0,0)(0,-1.3)
\psline[linestyle=dashed,linewidth=0.9pt](0,0)(1.126,0.65)
\psline[linestyle=dashed,linewidth=0.9pt](0,0)(-1.126,0.65)

%\put(-4,-1){\circle*{0.2}}
%\psline[linewidth=1.3pt](1,-1)(3,-1)(3,3)(1,3)(1,-1)
\end{pspicture}%
\end{minipage}%
\begin{minipage}{0.5\textwidth}
\begin{pspicture}(-2.5,-1)(2,1.3)
\psset{algebraic,unit=.75cm}

\put(0,0){\pscircle*[linecolor=lightgray]{1.3}}

\put(0,0){\circle*{0.2}}

%\put(-4,-1){\circle*{0.2}}
%\psline[linewidth=1.3pt](1,-1)(3,-1)(3,3)(1,3)(1,-1)
\end{pspicture}%
\end{minipage}
\end{center}
\caption{On the left is depicted the disjoint union of three sectorial parts of $\d$-neighborhood of the Sierpi\'nski gasket from the preceding Figure~\ref{01sierpds}. It can be viewed as an RFD represented on the right, consisting of a point and of the open disk of radius $\d$ arround it.}\label{01sierpsect}
\end{figure}
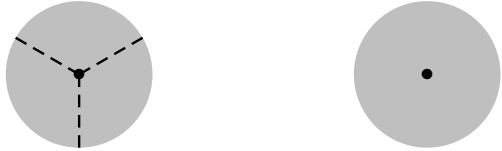

Furthermore, we have three congruent copies of rectangular RFDs, and each of them can be identified with the one shown in Figure~\ref{01rectangle}, consisting of a closed interval of length~$1$ and of the open rectangle of height~$\d$.

\setlength{\unitlength}{.6mm}
\begin{figure}[ht]
\begin{center}
\begin{pspicture}(-4,0)(4,1.3)
\psset{algebraic,unit=.75cm}
\psline*[linecolor=lightgray](-3,0)(3,0)(3,1.3)(-3,1.3)(-3,0)
\psline[linewidth=1pt](-3,0)(3,0)
\put(3.2,1){\small $(I,R)$}
\end{pspicture}%
\end{center}
\caption{In Figure \ref{01sierpds} above, we can see three copies of rectangular RFDs, each of them congruent to the RFD $(I,R)$, where $I=[0,1]\times \{0\}$ and $R=(0,1)\times(0,\d)$, as represented in the present figure.}\label{01rectangle}
\end{figure}
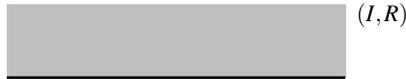

The corresponding relative zeta function is given by
\begin{equation}\label{IR}
3\zeta_{I,R}(s)=3\int_0^1\D x\int_0^\d y^{s-2}\D y=3\frac{\d^{s-1}}{s-1},
\end{equation}
which coincides with the last term on the right-hand side of Equation~\eqref{sierp}.
\medskip

{\em Case $(b)$.}\label{caseb} Here, we consider the most interesting object: an RFD $(A,\O)$ consiting of the Sierpi\'nski gasket $A$ and of the open set $\O$ which is just the interior of the convex hull of $A$, as shown in Figure~\ref{01serpinskirfd}.

\setlength{\unitlength}{.6mm}
\begin{figure}[ht]
\begin{center}
\begin{pspicture}(-4,0)(4,4.3)
\psset{algebraic,unit=.75cm}
\psline*[linecolor=lightgray](-3,0)(3,0)(0,5.196)(-3,0)
\psline(-3,0)(3,0)(0,5.196)(-3,0)
\psline(0,0)(1.5,2.598)(-1.5,2.598)(0,0)

\psline(1.5,0)(0.75,1.299)(2.25,1.299)(1.5,0)
\psline(-1.5,0)(-0.75,1.299)(-2.25,1.299)(-1.5,0)
\psline(0,2.598)(-0.75,3.897)(0.75,3.897)(0,2.598)

%\put(-4,-1){\circle*{0.2}}
%\psline[linewidth=1.3pt](1,-1)(3,-1)(3,3)(1,3)(1,-1)
\end{pspicture}%
\end{center}
\caption{The Sierpi\'nski gasket RFD $(A,\O)$.}\label{01serpinskirfd}
\end{figure}
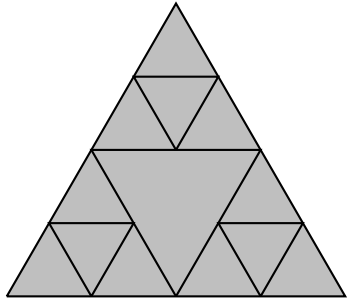

The Sierpi\'nski gasket $A$ is self-similar; more specifically, it is equal to the disjoint union (denoted by $\sqcup$) of three of its scaled translates, using the scaling factor $\g=1/2$.
Therefore, we can write
\begin{equation}\label{rfdeq}
(A,\O)=\frac12(A,\O)\,\sqcup\,\frac12(A,\O)\,\sqcup\,\frac12(A,\O)\,\sqcup\,(\pa \O_0,\O_0),
\end{equation}
where $\O_0$ is the inner open isosceles triangle, with sides of lengths $1/2$; i.e., the first deleted open triangle during the construction of the Sierpi\'nski gasket.
From Equation~\eqref{rfdeq}, by using the additivity and scaling properties of relative distance zeta functions (see, in particular, Theorem~\ref{scalingp} above), we deduce that
\begin{equation}\label{zetaeq}
\begin{aligned}
\zeta_{A,\O}(s)&=3\zeta_{\frac12(A,\O)}(s)+\zeta_{\pa\O_0,\O_0}(s)\\
&=3\cdot\frac1{2^s}\zeta_{A,\O}(s)+\zeta_{\pa \O_0,\O_0}(s),
\end{aligned}
\end{equation}
and hence,
\begin{equation}\label{zetaeqs}
\zeta_{A,\O}(s)=\frac{2^s}{2^s-3}\cdot \zeta_{\pa\O_0,\O_0}(s),
\end{equation}
for all $s$ with $\re s>\log_23$.

It remains to compute $\zeta_{\pa \O_0,\O_0}(s)$. The triangle RFD $(\pa\O_0,\O_0)$ can be split into six congruent parts (more precisely, into six congruent RFDs), as shown in Figure~\ref{6triangles}.

\setlength{\unitlength}{.6mm}
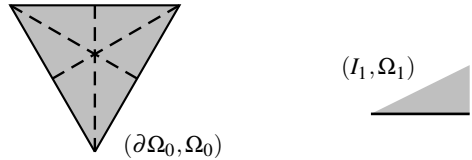
\begin{figure}[ht]
\begin{center}
\begin{minipage}{0.5\textwidth}
\begin{pspicture}(-4,0)(1,2)
\psset{algebraic,unit=.75cm}
\psline*[linecolor=lightgray](0,0)(1.5,2.598)(-1.5,2.598)(0,0)
\psline(0,0)(1.5,2.598)(-1.5,2.598)(0,0)
\psline[linestyle=dashed,linewidth=0.9pt](0,0)(0,2.598)
\psline[linestyle=dashed,linewidth=0.9pt](1.5,2.598)(-0.75,1.299)
\psline[linestyle=dashed,linewidth=0.9pt](-1.5,2.598)(0.75,1.299)

\put(0.5,0){\small$(\pa\O_0,\O_0)$}
%\psline[linewidth=1.3pt](1,-1)(3,-1)(3,3)(1,3)(1,-1)
\end{pspicture}%
\end{minipage}%
\begin{minipage}{0.5\textwidth}
\begin{pspicture}(-2,-1)(1,2)
\psset{algebraic,unit=.75cm}
\psline*[linecolor=lightgray](-0.875,0)(0.875,0.866)(0.875,0)(-0.875,0)
\psline[linewidth=1pt](-0.875,0)(0.875,0)
%\put(0,-0.5){\small$A_1$}
\put(-1.4,0.7){\small$(I_1,\O_1)$}

%\put(-4,-1){\circle*{0.2}}
%\psline[linewidth=1.3pt](1,-1)(3,-1)(3,3)(1,3)(1,-1)
\end{pspicture}%
\end{minipage}
\end{center}
\vskip-2mm
\caption{On the left is the triangle RFD $(\pa\O_0,\O_0)$, where $\O_0$ is the isosceles triangle with sides of lengths $1/2$, decomposed into six congruent RFDs, indicated by dashed vertices.
On the right is the triangle RFD $(I_1,\O_1)$ consisting of the side $I_1$ of length $1/4$ and of the open right-angle triangle $\O_1$ with angles $30^\circ$ and $60^\circ$.}\label{6triangles}
\end{figure}

By the additivity property of the relative zeta function, we have that $\zeta_{\pa\O_0,\O_0}(s)=6\zeta_{I_1,\O_1}(s)$, and $\zeta_{I_1,\O_1}(s)$ is very easy to compute by placing the origin of the plane at the left vertex of the triangle. We have (see Figure~\ref{6triangles}) that
\begin{equation}\label{I1O1}
\begin{aligned}
\zeta_{\pa\O_0,\O_0}(s)&=6\zeta_{I_1,\O_1}(s)=6\iint_{\O_1}y^{s-2}\D x\,\D y\\
&=6\int_0^{1/4} \D x\int_0^{x/\sqrt3}y^{s-2}\,\D y,
\end{aligned}
\end{equation}
where the last integral is easy to compute. Substituting into Equation~\eqref{zetaeqs}, we deduce that $\zeta_{A,\O}(s)$ coincides with the first term on the righ-hand side of Equation~\eqref{sierp}. This completes the verification of the closed formula~\eqref{sierp} for the distance zeta function of the Sierpi\'nski gasket.

We therefore obtain formula~\eqref{sierp}, at first valid for all $s\in\Ce$ with $\re s$ large enough. However, in light of the principle of analytic continuation (i.e., here, meromorphic continuation), we deduce that Equation~\eqref{sierp} holds for all~$s\in\Ce$.

It is interesting to note that the computation of the distance zeta function of the Sierpi\'nski gasket was reduced to the computation of the distance zeta functions of three very simple RFDs: $(\{0\},B_\d(0))$ appearing in Equation~\eqref{0Bd0}, $(I,R)$ in Equation~\eqref{IR} and $(I_1,\O_1)$ in Equation~\eqref{I1O1}. A list containing about two dozens of basic RFDs, with the corresponding distance zeta functions and principal complex dimensions, can be found in~\cite[Appendix~C]{fzf}.

\medskip

%%%%%%%%%%%%%%%%%%%%%

\begin{example}({\em The $3$-carpet}).\label{ex2}
Let $A$ be a three-dimensional analog of the Sierpi\'nski carpet.
More specifically, we construct $A$ by dividing the closed unit cube of $\eR^3$ into $27$ congruent cubes and remove the open middle cube.
Then, we iterate this step with each of the $26$ remaining smaller closed cubes; and so on, ad infinitum.
By choosing $\d>1/6$, we have that $A_\d$ is simply connected.

From \cite[Example 6.9]{ftf_A} (see also \cite[Example 5.5.13]{fzf}), we have that $\zeta_A$ is meromorphic on all of $\Ce$ and given by
\begin{equation}
\zeta_A(s):=\zeta_A(s,\d)=\frac{48\cdot 2^{-s}}{s(s-1)(s-2)(3^s-26)}+\frac{4\pi\d^s}{s}+\frac{6\pi\d^{s-1}}{s-1}+\frac{6\d^{s-2}}{s-2},
\end{equation}
for every $s\in\Ce$.

It follows that the set of complex dimensions of the $3$-carpet $A$ is given by
\begin{equation}\label{3-carp_po}
\po(\zeta_A):=\po({\zeta}_A,\Ce)=\{0,1,2\}\cup\big(\log_326+\mathbf{p}\I\Ze\big),
\end{equation}
where $D:=\log_326\ (=D(\zeta_A))$ is the Minkowski (or box) dimension of the $3$-carpet $A$ and $\mathbf{p}:=2\pi/\log 3$ is the oscillatory period of $A$ (viewed as a lattice self-similar set).
In \eqref{3-carp_po}, each of the complex dimensions is simple.
%Furthermore, a routine computation shows that
%\begin{equation}
%\res(\zeta_A,0)=4\pi-\frac{24}{25},\q \res(\zeta_A,1)=6\pi+\frac{24}{23},\q \res(\zeta_A,2)=\frac{96}{17}
%\end{equation}
%and, by letting $\omega_k:=\log_326+\I k\mathbf{p}$ (for all $k\in\Ze$),
%\begin{equation}
%\res(\zeta_A,\omega_k)=\frac{24}{13\cdot 2^{\omega_k}\omega_k(\omega_k-1)(\omega_k-2)\log 3}.
%\end{equation}

%One also easily checks that the hypotheses of Theorem \ref{pointwise_thm_d} (or, really, Theorem \ref{5.3.15.1/2} since all of the complex dimensions in \eqref{3-carp_po} are simple) are satisfied for every $\d\geq 1/2$ and any scaling factor $\lambda\geq 2$, and thus we obtain the following exact pointwise tube formula, valid for all $t\in(0,1/{2})$:
%\begin{equation}\label{tube_3_carp}
%\begin{aligned}
%|A_t|&=\frac{24\,t^{3-\log_326}}{13\log 3}\sum_{k=-\ty}^{+\ty}\frac{2^{-\omega_k}t^{-\I k\mathbf{p}}}{(3-\omega_k)(\omega_k-1)(\omega_k-2)\omega_k}\\
%&\phantom{=}+\left(6-\frac{6}{17}\right)t+\left(3\pi+\frac{12}{23}\right)t^2+\left(\frac{4\pi}{3}-\frac{8}{25}\right)t^3.
%\end{aligned}
%\end{equation}
In particular, we conclude that $D:=\dim_BA=\log_326$ (as was noted before) and, by Theorem\ \ref{criterion}, that the three-dimensional Sierpi\'nski carpet is not Minkowski measurable, which is expected (see \cite{Lap3}).
%We also point out that the part $6t+3\pi t^2+4\pi t^3/3$ from the above Equation~\eqref{tube_3_carp} is exactly equal to $|I_t|-|I|$, where $I$ is the closed unit cube of $\eR^3$.
%
%Finally, we note that clearly, the first term in the right-hand side of \eqref{tube_3_carp} can be rewritten in the following form (still with $D:=\dim_BA$):
%\begin{equation}\label{5.5.27.1/2}
%(2t)^{3-D}G\left(\log_3(2t)^{-1}\right),
%\end{equation}
%where $G$ is a positive, nonconstant $1$-periodic function which is bounded away from zero and infinity.
%Therefore, also as expected (see \cite{Lap3}), the $3$-carpet is Minkowski nondegenerate: $0<\mathcal{M}_*(A)<\mathcal{M}^*(A)<\ty$.
%
%\medskip
%
%Of course, exactly the same comment as above about the Minkowski nonmeasurability and the Minkowski nondegeneracy could have been made about the Sierpi\'nski gasket discussed in Example \ref{gsk_fract}.
\end{example}

\subsection{Fractal nests and unbounded geometric chirps}\label{subsec_nestch}

In this subsection, we apply our Minkowski measurability criterion to examples of fractal nests (Example \ref{ex_nest}) and of (unbounded) geometric chirps (Example \ref{unb_chirp}).
Both of these families of sets are examples of ``fractal'' sets which are {\em not} self-similar or, more generally, `self-alike' in any sense.
By carefully examining the example of the fractal nest we acquire new insights into the situation when the fractal zeta function has a pole of second order at $s=D$, where $D$ is the Minkowski dimension of the bounded set or RFD under consideration.
This situation will be further investigated in \S\ref{gauge_mink} below where we will obtain some general results about the gauge-Minkowski measurability (for a specific family of gauge functions) in terms of the presence of complex dimensions of higher order on the critical line.

\begin{example}({\em Fractal nests}).\label{ex_nest}
We let $\mathcal{L}=(\ell_j)_{j\geq 1}$ be a bounded fractal string and, as before, let $A_{\mathcal{L}}=\{a_k:k\in\eN\}\subset\eR$, with $a_k:=\sum_{j\geq k}\ell_j$ for each $k\geq 1$.
Furthermore, consider now $A_{\mathcal{L}}$ as a subset of the (horizontal) $x_1$-axis in $\eR^2$ and let $A$ be the planar set obtained by rotating $A_{\mathcal{L}}$ around the origin; i.e., $A$ is a union of concentric circles of radii $a_k$ and center at the origin.
For $\d>\ell_1/2$, the distance zeta function of $A$ is then given by
\begin{equation}
\zeta_{A}(s)=\frac{2^{2-s}\pi}{s-1}\sum_{j=1}^{\ty}\ell_j^{s-1}(a_j+a_{j+1})+\frac{2\pi\d^{s}}{s}+\frac{2\pi a_1\d^{s-1}}{s-1};
\end{equation}
see \cite[Example 3.5.1]{fzf}.
The last two terms in the above formula correspond to the annulus $a_1<r<a_1+\d$ and we will neglect them; that is, we will consider only the relative distance zeta function $\zeta_{A,\O}$, with $\O:=B_{a_1}(0)$.\footnote{Here, $B_r(x)$ denotes the open ball of radius $r$ with center at $x$.}
%Furthermore, since $a_{j+1}=a_j-\ell_j$ for each $j\geq 1$, we have
%\begin{equation}
%\begin{aligned}
%\zeta_{A,\O}(s)&=\frac{2^{2-s}\pi}{s-1}\sum_{k=1}^{\ty}\ell_j^{s-1}(2a_j-\ell_j)\\
%&=\frac{2^{3-s}\pi}{s-1}\sum_{j=1}^{\ty}a_j\ell_j^{s-1}-\frac{2^{2-s}\pi}{s-1}\sum_{j=1}^{\ty}\ell_j^{s}\\
%&=\frac{2^{3-s}\pi}{s-1}\zeta_1(s)-\frac{2^{2-s}\pi}{s-1}\zeta_{\mathcal{L}}(s),
%\end{aligned}
%\end{equation}
%where we have denoted by $\zeta_1$ the first of the two sums after the second equality  and where $\zeta_{\mathcal{L}}$ is the geometric zeta function of the fractal string $\mathcal{L}$.

\medskip

We will now consider a special case of the fractal nest above; that is, the relative fractal drum $(A_a,\O)$ corresponding to the $a$-string $\mathcal{L}:=\mathcal{L}_a$, with $a>0$; so that for all $j\ge1$, we have $\ell_j:=j^{-a}-(j+1)^{-a}$ and hence, $a_j=j^{-a}$.
In this case, it is not difficult to see that
\begin{equation}\label{zeta_nest}
\zeta_{A_a,\O}(s)=\frac{2^{3-s}\pi}{s-1}\sum_{j=1}^{\ty}j^{-a}\ell_j^{s-1}-\frac{2^{2-s}\pi}{s-1}\zeta_{\mathcal{L}}(s),
\end{equation}
where $\zeta_{\mathcal{L}}$ is the geometric zeta function of the $a$-string.
(See \cite[Example 6.12]{ftf_A} for details.)

Furthermore, the above zeta function given by \eqref{zeta_nest} has been analyzed in \cite[Example 6.12]{ftf_A} and shown to possess a meromorphic continuation to all of $\Ce$. Furthermore for $a\neq 1$ it was shown that the set of complex dimensions of $(A_a,\O)$ satisfies the following inclusion
\begin{equation}\label{pozeta}
\begin{aligned}
\po({\zeta}_{A_a,\O})&:=\po({\zeta}_{A_a,\O},\Ce)\subseteq\left\{1,\frac{2}{a+1},\frac{1}{a+1}\right\}\cup\left\{-\frac{m}{a+1}:m\in\eN\right\}.
\end{aligned}
\end{equation}
Moreover, all of the above (potential) complex dimension are simple and we are certain that ${2}/{(a+1)}$ is always a complex dimension of $({A_a},\O)$, i.e., by letting $D:={2}/{(a+1)}$, we have for all positive $a\neq 1$ that
\begin{equation}
\res\left({\zeta}_{A_a,\O},D\right)=\frac{2^{2-D}D\pi}{D-1}a^{D-1}.
\end{equation}
It was also shown in \cite{ftf_A} that $(A_a,\O)$ is $d$-languid for any screen chosen to be a vertical line not passing through the (potential) poles of ${\zeta}_{A_a,\O}$ given on the right-hand side of~\eqref{pozeta}.

It is not difficult to show directly that for $a>0$ the Minkowski dimension $\dim_B(A_a,\O)$ exists and is equal to $\max\{1,2/(a+1)\}$.  
We now conclude from Theorem~\ref{criterion} that if $a\in(0,1)$, $\dim_B({A_a},\O)=D(\zeta_{A_a,\O})=D$ and $(A_a,\O)$ is Minkowski measurable with Minkowski content given by
\begin{equation}
\mathcal{M}^D({A_a},\O)=\frac{\res\left({\zeta}_{A_a,\O},D\right)}{2-D}=\frac{2^{2-D}D\pi}{(2-D)(D-1)}a^{D-1}.
\end{equation}

Furthermore, if $a>1$, we have that $\dim_B({A_a},\O)=1$ and it also follows from Theorem \ref{criterion} that $({A_a},\O)$ is Minkowwski measurable with
\begin{equation}
\mathcal{M}^1({A_a},\O)=\frac{\res\left({\zeta}_{A_a,\O},1\right)}{2-1}=4\pi\zeta(a)-2\pi,
\end{equation}
where $\zeta$ is the Riemann zeta function.
Note also that since $\zeta(a)>1$ for $a>1$, we have $$2\pi<\mathcal{M}^1({A_a},\O)<\ty.$$
We stress here that the computation of $\res\left({\zeta}_{A_a,\O},1\right)$ above is not trivial and we refer the interested reader to \cite[Example 6.12]{ftf_A} for the details.

We now turn our attention to the critical case when $a=1$.
In this case, we have that $s=1$ is a pole of second order (i.e., of multiplicity two) of ${\zeta}_{A_1,\O}$ since it is a simple pole of $\zeta_1(s):=\sum_{j=1}^{\ty}j^{-a}\ell_j^{s-1}$ appearing in \eqref{zeta_nest}.

By results from \cite{ftf_A} (see, especially \cite[Example 6.12]{ftf_A}), we can obtain in this particular case the following fractal tube formula for $(A_1,\O)$:
\begin{equation}
\begin{aligned}
|(A_1)_t\cap\O|&=2\pi t\log t^{-1}+\mathrm{const.}\cdot t+o(t)\quad \textrm{as}\ t\to0^+.
\end{aligned}
\end{equation}

As we can clearly see, $\dim_B(A_1,\O)=1$; furthermore, the RFD $(A_1,\O)$ is Minkowski degenerate with $\mathcal{M}^1(A_1,\O)=+\ty$.
On the other hand, if we consider the notion of gauge-Minkowski measurability with the gauge function $h(t):=\log t^{-1}$ defined for all $t\in(0,1)$, we have that $(A_1,\O)$ is gauge-Minkowski measurable, meaning, more specifically, that its generalized Minkowski content exists and is given by
\begin{equation}
\mathcal{M}^1(A_1,\O,h):=\lim_{t\to0^+}\frac{|(A_1)_t\cap\O|}{t^{2-1}h(t)}=2\pi.
\end{equation}

We will rigorously define the notion of gauge-Minkowski measurability in \S\ref{gauge_mink} and obtain a result (see Theorems 5.4 and 5.9) explaining how this notion is connected to the presence of complex dimensions of higher order on the critical line.
Here, we have considered the motivating example of $(A_1,\O)$ for which the asymptotics of its tube function do not obey a power law but, nevertheless, are such that this property is reflected in the nature of its distance zeta function; i.e., in the fact that it possesses a (unique) pole of order two on the critical line.
One may also view this situation as a kind of ``merging'' of two simple complex dimensions of $(A_a,\O)$ when $a\neq 1$, namely, $1$ and $2/(a+1)$, into one complex dimension of order two as $a\to 1$. 
\end{example}

Next, we consider an example of an unbounded geometric chirp, considered as a relative fractal drum $(A,\O)$ where both of the sets $A$ and $\O$ are unbounded.
A standard {\em geometric $(\a,\b)$-chirp}, with positive parameters $\a$ and $\b$, is a simple geometric approximation of the graph of the chirp function $f(x)=x^{\a}\sin(\pi x^{-\b})$, for all $x\in(0,1)$.
(See \cite[Example 4.4.1 and Proposition 4.4.2]{fzf}.)

If we choose parameters $-1<\a<0<\beta$, we obtain an example of an unbounded chirp function $f$ which we can approximate by the unbounded geometric $(\a,\b)$-chirp.

\begin{example}({\em Unbounded geometric chirps}).\label{unb_chirp}
Let $A_{\a,\b}$ be the union of vertical segments with abscissae $x=j^{-1/\b}$ and of lengths $j^{-\a/\b}$, for every $j\in\eN$.
Furthermore, define $\O$ as a union of the rectangles $R_j$ for $j\in\eN$, where $R_j$ has a base of length $j^{-1/\b}-(j+1)^{-1/\b}$ and height $j^{-\a/\b}$.
The relative distance zeta function of $(A,\O)$ is computed in \cite[Example 4.4.1]{fzf} and is given by
\begin{equation}\label{zetaab}
\begin{aligned}
\zeta_{A_{\a,\b},\O}(s)&=\frac{2^{2-s}}{s-1}\sum_{j=1}^{\ty}j^{-\a/\b}\left(j^{-1/\b}-(j+1)^{-1/\b}\right)^{s-1}.
\end{aligned}
\end{equation}
From \cite[Example 6.15]{ftf_A} we have that $\zeta_{A_{\a,\b},\O}$ has a meromorphic continuation to all of $\Ce$ and
\begin{equation}%\nonumber
\po({\zeta}_{A_{\a,\b},\O})\subseteq\left\{1,2-\frac{1+\a}{1+\b}\right\}\cup\left\{D_m:m\in\eN\right\},
\end{equation}
where $D_m:=2-\frac{1+\a+m\b}{1+\b}$ and all of the (potential) complex dimensions are simple.
Furthermore, by letting $D:=2-\frac{1+\a}{1+\b}>1$ it can be shown directly that $\dim_B(A,\O)=D$ and also, it is always true that $1$ and $D$ are always complex dimensions of $(A,\O)$, i.e., $1,D\in\po({\zeta}_{A_{\a,\b},\O})$.
Moreover, it follows from \cite{ftf_A} that ${\zeta}_{A_{\a,\b},\O}$ is $d$-languid for any vertical line through $\{\re s=\sigma\}$ with $\sigma\in(1,D)$.
Therefore, the assumptions of Theorem \ref{criterion} are satisfied and since $D$ is the only pole on the critical line $\{\re s=D\}$ and is simple, we conclude that the RFD $(A_{\a,\b},\O)$ is Minkowski measurable.

Also, by \cite[Example 6.15]{ftf_A} and in light of \eqref{zetaab}, we have that both $1$ and $D$ are simple poles of $\zeta_{A_{\a,\b}}$.
Furthermore, we have that $D>1$ and, consequently, $\dim_B(A_{\a,\b},\O)=D$ with
\begin{equation}
\begin{aligned}
\mathcal{M}^D(A_{\a,\b},\O)&=\frac{\res({\zeta}_{A_{\a,\b},\O},D)}{2-D}=\frac{2^{2-D}}{(2-D)(D-1)}\frac{\b^{\frac{1+\a}{1+\b}}}{1+\b}\\
&=\frac{(2\b)^{2-D}}{(2-D)(D-1)(1+\b)}.
\end{aligned}
\end{equation}
\end{example}

\subsection{Minkowski measurability criteria for self-similar sprays}\label{subsec_self_similar_sp}

We conclude this section by showing how the results of the present chapter may also be applied to recover and significantly extend, as well as place within a general conceptual framework, the tube formulas for  self-similar sprays generated by an arbitrary open set $G\subset\eR^N$ of finite $N$-dimensional Lebesgue measure.
(See, especially, [LapPe2--3] 
%\cite{lappe3,lappe2}
extended to a significantly more general setting in \cite{lappewi1}, along with the exposition of those results given in \cite[\S13.1]{lapidusfrank12}; see also \cite{DeKoOzUr} for an alternative, but related, proof of some of those results.)
%obtained in \cite{lappe2} and also, more recently, by a different method in~\cite{DeKoOzUr}.

Let us recall that a self-similar spray (with a single generator $G$, assumed to be bounded and open) is defined as a collection $(G_k)_{k\in\eN}$ of pairwise disjoint (bounded) open sets $G_k\subset\eR^N$, with $G_0:=G$ and such that for each $k\in\eN$, $G_k$ is a scaled copy of $G$ by some factor $\lambda_k>0$.
(We let $\lambda_0:=1$.)
Then, the associated scaling sequence $(\lambda_k)_{k\in\eN}$ is obtained from a ratio list $\{r_1,r_2,\ldots,r_J\}$, with $0<r_j<1$ for each $j=1,\ldots,J$ and such that $\sum_{j=1}^{J}r_j^N<1$, by considering all possible words built out of the scaling ratios $r_j$.
Here, $J\geq 2$ and the scaling ratios $r_1,\ldots,r_J$ are repeated according to their multiplicities

We now assume that $(A,\O)$ is the self-similar spray considered as a relative fractal drum  and defined as $A:=\partial(\sqcup_{k=0}^{\ty} G_k)$ and $\O:=\sqcup_{k=0}^{\ty} G_k$, with $\ov{\dim}_B(\partial G,G)<N$.
%Then, since $(A,\O)=(\partial G,G)\cup\bigcup_{j=1}^{J}(r_j A,r_j\O)$, it is clear that its relative distance zeta function satisfies the following functional equation:
%\begin{equation}
%\zeta_{A,\O}(s)=\zeta_{\partial G,G}(s)+\zeta_{r_1A}(s,r_1\O)+\cdots+\zeta_{r_JA}(s,r_J\O),
%\end{equation}
%where $(r_jA,r_j\O)$ denotes the relative fractal drum $(A,\O)$ scaled by the factor $r_j$.
%Furthermore, by using the scaling property of the relative distance zeta function (Theorem \ref{scaling}) the above equation becomes
%\begin{equation}
%\zeta_{A,\O}(s)=\zeta_{\partial G,G}(s)+r_1^s\zeta_{A,\O}(s)+\cdots+r_J^s\zeta_{A,\O}(s),
%\end{equation}
%which yields that
The associated distance zeta function $\zeta_{A,\O}$ of these kinds of relative fractal drums satisfies an important {\em factorization formula} (see \cite[Theorem 3.36]{refds} or \cite[Theorem 4.2.17]{fzf}), expressed in terms of the distance zeta function of the boundary of the generator (relative to the generator), $\zeta_{\pa G,G}$, and the scaling ratios $\{r_j\}_{j=1}^{J}$:
\begin{equation}\label{5.5.102..}
\zeta_{A,\O}(s)=\frac{\zeta_{\partial G,G}(s)}{1-\sum_{j=1}^{J}r_j^s}.
\end{equation}

\begin{remark}\label{5.5.20.1/2R}
Note that we can rewrite Equation \eqref{5.5.102..} as follows:
\begin{equation}\label{5.5.105.1/2E}
\zeta_{A,\O}(s)=\zeta_{\mathfrak{S}}(s)\cdot\zeta_{\pa G,G}(s),
\end{equation}
where the geometric zeta function $\zeta_{\mathfrak{S}}$ of the associated self-similar string (with scaling ratios $\{r_j\}_{j=1}^J$ and a single gap length, equal to one, in the terminology of \cite[Chapters 2 and 3]{lapidusfrank12}) is meromorphic in all of $\Ce$ and given for all $s\in\Ce$ by
\begin{equation}\label{5.5.105.3/4E}
\zeta_{\mathfrak{S}}(s)=\frac{1}{1-\sum_{j=1}^Jr_j^s}.
\end{equation}
In light of the above discussion, we see that generally, given a connected open set $U\subseteq\Ce$, $\zeta_{A,\O}$ is meromorphic in $U$ if and only if $\zeta_{\pa G,G}$ is.
Also, in that case, the {\em factorization formula} \ref{5.5.105.1/2E} then holds for all $s\in U$.
Furthermore, in the sequel and following [LapPe2--3], 
%\cite{lappe3,lappe2}
[LapPeWi1--2] 
%\cite{lappewi1,lappewi2}
and [LapRa\v Zu1,8], we will often refer to $\zeta_{\mathfrak{S}}$ as the {\em scaling zeta function} of the self-similar spray $(A,\O)$ and to its poles in $\Ce$ (composing the multiset $\mathfrak{D}$)\label{matfrD} as the {\em scaling complex dimensions} of $(A,\O)$; the latter are the solutions (counting multiplicities) of the {\em complexified Moran equation} $\sum_{j=1}^Jr_j^s=1$.
We will also sometimes write $\mathfrak{D}_{\mathfrak{S}}$ instead of $\mathfrak{D}$, so that $\mathfrak{D}_{\mathfrak{S}}:=\mathfrak{D}$; hence, similarly, $\mathfrak{D}_{\mathfrak{S}}\cap \bm{W}=\mathfrak{D}\cap \bm{W}$, the set of {\em visible scaling complex dimensions} of $(A,\O)$, denotes the set of poles of $\zeta_{\mathfrak{S}}$ visible through the window~$\bm{W}$.
\end{remark}

Let us define the {\em similarity dimension} of $(A,\O)$ as the unique real solution of the Moran equation $\sum_{j=1}^Jr_j^s=1$ and denote it by $\sigma_0$.
Furthermore, observe that from \eqref{5.5.102..} we always have that 
\begin{equation}\label{max_sigma}
\ov{\dim}_B(A,\O)=D(\zeta_{A,\O})=\max\{\sigma_0,D(\zeta_{\pa G,G})\},
\end{equation}
where the last equality follows from the factorization formula \eqref{5.5.20.1/2R}.
Let us now assume that $\dim_B(A,\O)$ exists and also that the generator $G$ is such that $D_G:=\dim_B(\pa G,G)=D(\zeta_{\pa G,G})$ exists and is a simple pole of $\zeta_{\pa G,G}$.
Additionally, assume that $\zeta_{\pa G,G}$ is $d$-languid for some window as is the case when $G$ is, for instance, monophase or pluriphase in the sense of \cite{lappe1,lappe2}
and [LapPeWi1--2].
%\cite{lappewi1,lappewi2}
(Note that a large class of polytopes is monophase, as was conjectured in \cite{lappe1} and later proved in \cite{KoRati}.)
Observe that $D_G\in[N-1,N]$ since $\pa G$ is the boundary of a bounded open set in $\eR^N$.
Namely, consider the orthogonal projection $\pi(\pa G)$ of $\pa G$ onto $\eR^{N-1}$.
Since $G$ is bounded and open, then $\pi(G)\subseteq\pi(\pa G)$ and hence, $\pi(\pa G)_\d\cap\pi(G)=\pi(G)$ for every $\d>0$.
From the fact that $\pi(G)$ is bounded and open in $\eR^{N-1}$, we conclude that $\dim_B(\pa G,G)\geq\dim_B(\pi(\pa G),\pi(G))=N-1$.  

Now, the following three cases naturally arise: $(i)$ $D_G<\sigma_0$; $(ii)$ $D_G=\sigma_0$, and $(iii)$ $D_G>\sigma_0$.

\medskip

{\em Case} $(i)$:\ \ $D_G<\sigma_0$.
Then, by \eqref{max_sigma}, $D=\sigma_0$ and all of the poles of $\zeta_{\pa G,G}$ are located strictly to the left of $D$. 
Therefore, the factorization formula \eqref{5.5.105.1/2E} implies that the principal complex dimensions of $(A,\O)$ coincide with the complex solutions of the complexified Moran equation $\sum_{j=1}^{J}r_j^s=1$ (i.e., with the scaling complex dimensions of $(A,\O)$).
These solutions were extensively investigated in \cite{lapidusfrank12} and here we recall from \cite[Theoem~3.6]{lapidusfrank12} that $\sigma_0$ is always a simple pole of $\zeta_{\mathfrak{S}}$ and that, in the nonlattice case, it is the only pole of $\zeta_{\mathfrak{S}}$ on the vertical line $\{\re s=\sigma_0\}$.
On the other hand, in the lattice case, the poles of $\zeta_{\mathfrak{S}}$ form an infinite subset of $\sigma_0+\mathbf{p}\I\Ze$, where $\mathbf{p}=2\pi/\log(r^{-1})$ is the {\em oscillatory period}, with $r\in(0,1)$ being the single generator of the multiplicative group (of rank $1$) generated by the distinct values of the scaling ratios $r_1,\ldots,r_J$.\footnote{Note that the lattice-nonlattice dichotomy of self-similar sprays or sets is defined in terms of the nature of the multiplicative group generated by the distinct values of the scaling ratios $r_1,\ldots,r_J$. In the lattice case the generating set consists of exactly one element, while in the nonlattice case it has more than one element.}

In particular, since $D$ is simple, it follows that the RFD $(A,\O)$ has a nonreal complex dimension with real part $D\ (=\sigma_0)$ if and only if we are in the lattice case, and hence, in light of Theorems \ref{mink_char} and \ref{criterion}, if and only if $(A,\O)$ is Minkowski measurable.
More specifically, we reason exactly as in the proof of Corollary \ref{5.4.22.1/4C} (which corresponds to the case when $N=1$).
Namely, if $(A,\O)$ is lattice, then it satisfies the hypotheses of Theorem \ref{necess} concerning the languidity and the screen.
Furthermore, in this case we can choose the screen to be strictly to the right of all of the poles of $\zeta_{\pa G,G}$ and also strictly to the right of all of the poles of $\zeta_{\mathfrak{S}}$ having real part strictly less than $\sigma_0$.
A result from \cite{lapidusfrank12} implies that $\zeta_{\mathfrak{S}}$ satisfies the strong $d$-languidity conditions (after a possible scaling) and hence, by the factorization formula \eqref{5.5.105.1/2E}, this is also true for $\zeta_{A,\O}$ and our chosen screen (under the assumption of $d$-languidity we made about $\zeta_{\pa G,G}$ above).

Consequently, since in the lattice case, in addition to $D$ there are other poles with real part $D$, we conclude that $(A,\O)$ cannot be Minkowski measurable.
On the other hand, if $(A,\O)$ is nonlattice, then $D$ is the only pole with real part $D$ and $(A,\O)$ satisfies the hypotheses of Theorem \ref{mink_char}, which implies that $(A,\O)$ is Minkowski measurable.\footnote{For a detailed analysis of the structure of the scaling complex dimensions in the lattice and nonlattice cases, see \cite{lapidusfrank12}, Chapters 2 and 3, especially, Theorem 2.16 and Theorem 3.6.
Note that in \cite[Chapter 3]{lapidusfrank12}, no assumption is made about the underlying scaling ratios (and gaps), so that the corresponding results (and hence also, \cite[Theorems 2.16 and 3.6]{lapidusfrank12}) can be applied to the scaling complex dimensions of  self-similar sprays in $\eR^N$, for an arbitrary dimension $N\geq 1$.}
%(For the above facts about the location of the poles of the scaling zeta function $\zeta_{\mathfrak{S}}$ see \cite[Chapters 2 and 3, along with Section 8.4]{lapidusfrank12}.)

{\em Therefore, in case $(i)$, $(A,\O)$ is Minkowski measurable if and only if $D$ is its only principal complex dimension and also, if and only if the self-similar spray $(A,\O)$ is nonlattice.}

This proves (for the case of self-similar sprays) the geometric part of a conjecture of the first author in \cite[Conjecture~3, p.\ 163]{Lap3}, in case $(i)$.
Note that in the case of self-similar strings (i.e., when $N=1$), we are always in case $(i)$ and therefore, we have reproved the characterization of the Minkowski measurability for self-similar strings obtained in \cite[\S8.4, esp., Theorems 8.23 and 8.36]{lapidusfrank12}.

\medskip

{\em Case} $(ii)$:\ \ $D_G=\sigma_0$.
In this case, $D_G$ and $\sigma_0$ are simple poles of $\zeta_{\pa G,G}$ and $\zeta_{\mathfrak{S}}$, respectively.
Hence, from the factorization formula (3.39), $\zeta_{A,\O}=\zeta_{\mathfrak{S}}\cdot\zeta_{\pa G,G}$, we conclude that $D$ is a double pole of $\zeta_{A,\O}$.
Hence, it follows directly from Theorem \ref{pole1} that $(A,\O)$ {\em is not Minkowski measurable}, regardless of whether or not the self-similar spray is lattice or nonlattice.

\medskip

{\em Case} $(iii)$:\ \ $D_G>\sigma_0$.
In light of \eqref{max_sigma}, we then have $D=D_G$.
Now, according to \cite[Theorem~1.10]{lapidusfrank12}, all of the poles of $\zeta_{\mathfrak{S}}$ have real part $\leq\sigma_0$ and thus, have real part strictly less than $D$.
Furthermore, the factorization formula \eqref{5.5.102..} implies that the principal complex dimension of $(A,\O)$  depend only on the generating relative fractal drum $(\pa G,G)$.
In most cases, we then have that $\dim_{PC}(A,\O)=\{D\}$ and that $D=D_G=N-1$ is simple (since $D_G$ is a simple pole of  $\zeta_{\pa G,G}$).
However, we emphasize here that other cases are theoretically possible, depending on the fractal properties of the boundary $\pa G$.
It is an interesting (but difficult) open problem to investigate all the possible cases of the distributions of the complex dimensions for relative fractal drums of the type $(\pa G,G)$, where $G$ is a bounded open set in $\eR^N$.
On the other hand, in the case when $D=D_G$ is the only principal complex dimension and is simple, we conclude from Theorem \ref{criterion} (the Minkowski measurability criterion) that the RFD $(A,\O)$ is Minkowski measurable in both the lattice and nonlattice cases.

\medskip

To sumarize, in case $(i)$, the RFD $(A,\O)$ is Minkowski measurable if and only if it is nonlattice, implying that the geometric part of Conjecture $3$ of \cite[p.\ 163]{Lap3} is true in this case.
(See the comments below about the results of \cite{KomPeWi}.)

Furthermore, in case $(ii)$, when $D_G$ is a simple (or possibly even multiple) pole of $\zeta_{\pa G,G}$, then $\sigma_0=D_G\ (=D)$; hence, $(A,\O)$ is not Minkowski measurable since $D$ is then a pole of $\zeta_{A,\O}$ which is (at least) of second order. 

Finally, in case $(iii)$, i.e., when $\sigma_0<D_G\ (=D)$, in general, the Minkowski measurability of $(A,\O)$ will depend only on the Minkowski measurability of the generating RFD $(\pa G,G)$.
Clearly, in light of the above, the conclusion of the analog of the geometric part of \cite[Conjecture 3]{Lap3} fails when we are not in case $(i)$.

For each of the cases $(i)$--$(iii)$, there is a natural realization for general self-similar sprays (or RFDs), as will be illustrated in Examples \ref{1/2-tube_formula} and \ref{1/3-tube_formula} below.
On the other hand, for classical self similar sets $F$ (satisfying the open set condition), which corresponds to the original conjecture (in \cite[Conjecture 3]{Lap3}), only cases $(i)$ or $(ii)$ can occur.
Namely, it is well-known that for such sets, we have $\dim_BF=\sigma_0$.
%\footnote{Actually, for a self-similar set not necessarily satisfying the open set condition. we have $D_F\leq\sigma_0$ and hence, again, only case $(i)$ or case $(ii)$ can occur in that situation.
%(See, e.g., the comments following the proof of Theoram 9.3 in \cite{falc}).}

Recall that the geometric part of \cite[Conjecture 3]{Lap3} has been proved for self-similar sets (rather than for general self-similar sprays or RFDs) satisfying the open set condition, first when $N=1$ in [Lap--vFr1--3] 
%\cite{lapidusfrank,lapidusfrank06,lapidusfrank12}
(see \cite[\S8.4]{lapidusfrank12} and the earlier books [Lap--vFr1--2]) 
%\cite{lapidusfrank,lapidusfrank06}
(by using the fractal tube formulas for fractal strings) and then, when $N\geq 1$ in \cite{KomPeWi} (by using the renewal theorem), thereby extending a variety of results previously obtained in \cite{Lap3} (when $N=1$ and by using the renewal theorem), in \cite{Fal2} (also when $N=1$ and by using the renewal theorem) and then, in \cite{gatzouras} (when $N\geq 2$, and also by using the renewal theorem) and in [Lap--vFr1--3] 
%\cite{lapidusfrank,lapidusfrank06,lapidusfrank12}
(as mentioned above), as well as later, under some relatively restrictive hypotheses, for self-similar sprays in \cite{lappewi2} (when $N\geq 1$ is arbitrary and by using the fractal tube formulas for self-similar sprays of \cite{lappewi1} along with techniques from \cite{lapidusfrank12}).
What was missing to the results of \cite{Lap3,Fal2,gatzouras} (but not of [Lap--vFr1--3] 
%\cite{lapidusfrank,lapidusfrank06,lapidusfrank12}
and of \cite{lappewi2}) was to show that lattice self-similar sets are {\em not} Minkowski measurable (as was the content of part of the conjecture of \cite{Lap3}), which is now known to be true when $D$ is not an integer.
We point out that case $(ii)$ was also considered in \cite{KomPeWi}, with the same conclusion as above.
% but unfortunately, not case $(iii)$, even though the higher-dimensional $N$-gasket (with $N\geq 4$) discussed in Subsection \ref{relative_other} is a natural geometric example illustrating that situation (althoug without using the conceptual tools of fractal zeta functions and complex dimensions).

Finally, we expect that the above results about self-similar sprays in cases $(i)$ and $(ii)$ can also be proved by analogous methods for self-similar sets in $\eR^N$ (satisfying the open set condition), by finding a suitable functional equation that connects the fractal zeta functions of the self-similar set and the associated self-similar tiling (or spray).

\medskip

The next three examples illustrate interesting phenomena that may occur in the setting of self-similar sprays (or RFDs).
We point out that these examples can also be viewed as inhomogeneous self-similar sets (in the sense of [BarDemk] and \cite{Hat}).
Furthermore, we will also reflect on the corresponding fractal tube formulas obtained in \cite{ftf_A} (see also \cite{fzf}).

\begin{figure}[ht]
\begin{center}
\includegraphics[width=5cm]{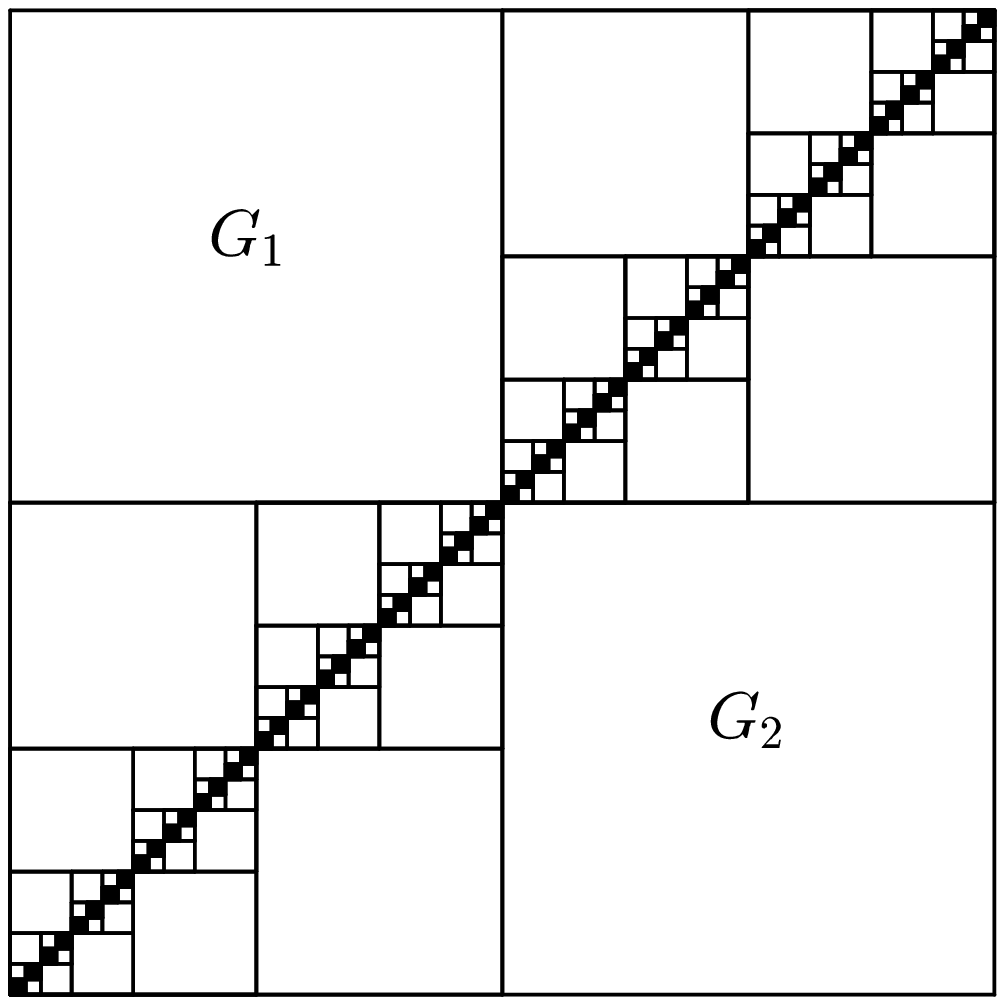}
\includegraphics[trim=0 0.8cm 0 0,clip,width=5.7cm]{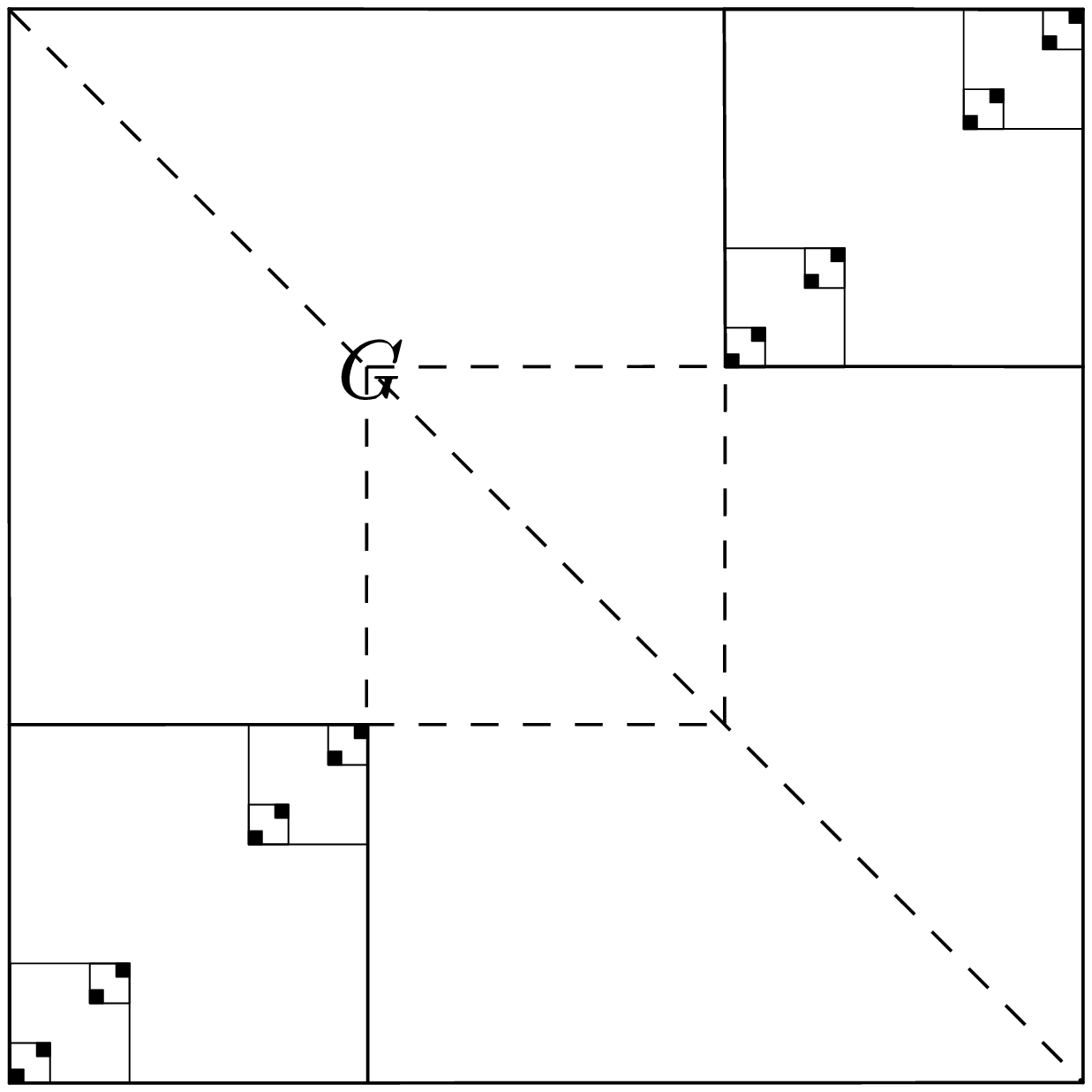}
\caption{{\bf Left:} The $1/2$-square fractal $A$ from Example \ref{1/2-tube_formula}. We start with a unit square $[0,1]^2$ and, in the first step, remove the open squares $G_1$ and $G_2$. In the next step, we repeat this construction with the remaining closed squares $[1/2,1]^2$ and $[0,1/2]^2$; we then continue this process ad infinitum. The $1/2$-{\em square fractal} $A$ is then the set which remains behind.
The first 6 iterations are shown.
Note that $G:=G_1\cup G_2$ is the single generator of the corresponding self-similar spray or RFD $(A,\O)$, where $\O=(0,1)^2$. 
{\bf Right:} The $1/3$-square fractal $A$ from Example \ref{1/3-tube_formula}. We start with a unit square $[0,1]^2$ and, in the first step, remove the open polygon $G$. In the next step, we repeat this construction with the remaining closed squares $[1/3,1]^2$ and $[0,1/3]^2$; 
we then continue this process ad infinitum. 
The $1/3$-{\em square fractal} $A$ is then the set which remains behind. 
The first 4 iterations are shown. 
Note that $G$ is the single generator of the corresponding self-similar spray or RFD $(A,\O)$, 
where $\O:=(0,1)^2$.
}\label{kv_0.5}
\end{center}
\end{figure}

\begin{example}({\em The $1/2$-square fractal}).\label{1/2-tube_formula}
Let us consider the $1/2$-square fractal $A$ from \cite[Example 3.38]{refds} and depicted in Figure \ref{kv_0.5}, left.
This is a case $(ii)$ example from the above discussion about self-similar sprays.
The distance zeta function of $A$ was obtained in \cite{refds}, where it was shown to be meromorphic on all of $\Ce$ and given by
\begin{equation}\label{dist_1/2_1}
\zeta_A(s)=\frac{2^{-s}}{s(s-1)(2^s-2)}+\frac{4}{s-1}+\frac{2\pi}{s},
\end{equation}
for every $s\in\Ce$.
Furthermore, as was discussed in \cite{refds}, it follows at once from \eqref{dist_1/2_1} that
\begin{equation}
\begin{aligned}
D(\zeta_A)=1,\q
\po(\zeta_A):=\po(\zeta_A,\Ce)=\{0\}\cup\left(1+\mathbf{p}\I\Ze\right)
\end{aligned}
\end{equation}
and 
\begin{equation}\label{PCsq1}
\dim_{PC} A:=\po_c(\zeta_A)=\{1\}.
\end{equation}
Here, the oscillatory period $\mathbf{p}$ of $A$ is given by $\mathbf{p}:={2\pi}/{\log 2}$ and all of the complex dimensions in $\po(\zeta_A)$ are simple, except for $\omega_0:=1$ which is a double pole of $\zeta_A$.
Theorem \ref{pole1} immediately implies that $A$ cannot be Minkowski measurable. 

We can now obtain the fractal tube formula for $A$ directly from its distance zeta function.
Namely, by \cite[Example 6.18]{ftf_A}, we have that for all $t\in(0,1/2)$,
\begin{equation}\label{1/2_tube_formula_eq}
\begin{aligned}
|A_t|&=\sum_{\omega\in\po({\zeta}_{A})}\res\left(\frac{t^{2-s}}{2-s}{\zeta}_A(s),\omega\right)\\
%&=\res\left(\frac{t^{2-s}}{2-s}{\zeta}_A(s),1\right)+\sum_{\omega\in\po({\zeta}_{A})\setminus\{1\}}\frac{t^{2-\omega}}{2-\omega}\res\left({\zeta}_A,\omega\right)\\
&=\frac{1}{4\log 2}t\log t^{-1}+t\,G\left(\log_2(4t)^{-1}\right)+\frac{1+2\pi}{2}t^2.
\end{aligned}
\end{equation}
Here, $G$ is a {\em nonconstant} $1$-periodic function on $\eR$ which is bounded away from zero and infinity and is given by the following convergent Fourier series:
\begin{equation}
G(x):=\frac{29\log 2-4}{8\log 2}+\frac{1}{4}\sum_{k\in\Ze\setminus\{0\}}\frac{\E^{2\pi\I kx}}{(2-\omega_k)(\omega_k-1)\omega_k},\quad\textrm{for all}\ x\in\eR, 
\end{equation}
where we have let $\omega_k:=1+\I\mathbf{p}k$ for each $k\in\Ze$.

In conclusion, it is now also clear from the fractal tube formula \eqref{1/2_tube_formula_eq} for the $1/2$-square fractal that $\dim_BA=1$ and that $A$ is actually Minkowski degenerate with $\mathcal{M}^1(A)=+\ty$.
Furthermore, it is also clear that $A$ is $h$-Minkowski measurable with $h(t):=\log t^{-1}$ (for all $t\in(0,1)$) and with $h$-Minkowski content given by
$
\mathcal{M}^1(A,h)=(4\log 2)^{-1}.
$\footnote{Recall that the notions of $h$-Minkowski measurability and $h$-Minkowski content will be precisely defined towards the beginning of $\S$\ref{gauge_mink}.}
Finally, we emphasize that although $D:=\dim_BA=1$ and hence, $A$ would not be considered fractal in the classical sense, we can also see from \eqref{1/2_tube_formula_eq} that the nonreal complex dimensions of $A$ with real part equal to $D$ give rise to (intrinsic) geometric oscillations of order $O(t^{2-D})$ in its fractal tube formula.
Hence, according to our new proposed definition of fractality given in \cite{fzf,ftf_A}, $A$ is {\em critically fractal} in dimension $d:=D=\dim_BA=1$.
\end{example}

\begin{example}({The $1/3$-{\em square fractal}}).\label{1/3-tube_formula}
We next consider the $1/3$-square fractal $A$ from \cite[Example 3.39]{refds} and illustrated in Figure \ref{kv_0.5}, right.
This is a case $(iii)$ example of self-similar sprays discussed above since here the similarity dimension $\sigma_0=\log_32$ is strictly less than the dimension $D_G=1$ of the generating relative fractal drum $(\pa G,G)$.
The distance zeta function of $A$ was obtained in \cite{refds}, where it was shown to be meromorphic on all of $\Ce$ and given by
\begin{equation}\label{zeta_1/3_square_2}
\zeta_{A}(s)=\frac{2}{s(3^s-2)}\left(\frac{6}{s-1}+Z(s)\right)+\frac{4}{s-1}+\frac{2\pi}{s},
\end{equation}
for all $s\in\Ce$.
The function $Z(s)$ is entire and given by
$
Z(s):=\int_0^{\pi/2}(\cos\varphi+\sin\varphi)^{-s}\di\varphi.
$
Furthermore, we have that
$
D(\zeta_A)=1
$
and
\begin{equation}\label{c_dim_A1/3_2}
\po(\zeta_A):=\po(\zeta_A,\Ce)\subseteq\{0\}\cup\left(\log_32+\mathbf{p}\I\Ze\right)\cup\{1\}.
\end{equation}
Here, the oscillatory period $\mathbf{p}$ of $A$ is given by $\mathbf{p}:={2\pi}/{\log 3}$ and all of the complex dimensions in $\po(\zeta_A)$ are simple.
In Equation \eqref{c_dim_A1/3_2}, we have only an inclusion since, in theory, some of the complex dimensions with real part $\log_32$ may be canceled by the zeros of the term $6/(s-1)+Z(s)$.
However, in addition to $D=\log_32$, there are nonreal complex dimensions with real part $\log_32$ in $\po(\zeta_A)$ which are not canceled out.
This fact can be checked numerically and we conjecture that, in fact, we have an equality in \eqref{c_dim_A1/3_2}.
Furthermore, it now follows from Theorem \ref{mink_char} below that $A$ is Minkowski measurable with $\mathcal{M}^1(A)=\res(\zeta_A,1)=16$.

We can next state the fractal tube formula of $A$ from its distance zeta function given in \eqref{zeta_1/3_square_2}.
This formula was obtained in \cite[Example 4.2.34]{ftf_A} and is given as follows:
\begin{equation}\label{racun_1/3}
\begin{aligned}
|A_t|&=\sum_{\omega\in\po({\zeta}_{A})}\res\left(\frac{t^{2-s}}{2-s}{\zeta}_A(s),\omega\right)\\
%&=\sum_{\omega\in\po({\zeta}_{A})}\frac{t^{2-s}}{2-s}\res\left({\zeta}_A(s),\omega\right)\\
%&=16t+\frac{t^{2-\log_32}}{\log 3}\sum_{k=-\ty}^{+\ty}\frac{(3t)^{-\I\mathbf{p}k}}{\omega_k(2-\omega_k)}\left(\frac{6}{\omega_k-1}+Z(\omega_k)\right)\\
%&\phantom{=}+\frac{12+\pi}{2}t^2\\
&=16t+t^{2-\log_32}G\left(\log_3(3t)^{-1}\right)+\frac{12+\pi}{2}t^2,
\end{aligned}
\end{equation}
valid for every $t\in(0,1/\sqrt{2})$.
Here, the function $G$ is a {\em nonconstant} $1$-periodic function on $\eR$, which is bounded and is given by the following Fourier series:
\begin{equation}
G(x):=\frac{1}{\log 3}\sum_{k=-\ty}^{+\ty}\frac{\E^{2\pi\I kx}}{(2-\omega_k)\omega_k}\left(\frac{6}{\omega_k-1}+Z(\omega_k)\right),\quad\textrm{for all}\ x\in\eR, 
\end{equation}
where we have let $\omega_k:=\log_32+\I\mathbf{p}k$ for each $k\in\Ze$.

In conclusion, we can see from the fractal tube formula \eqref{racun_1/3} that $\dim_BA=1$ and $A$ is Minkowski measurable, with Minkowski content given by
$
\mathcal{M}^1(A)=16.
$

Finally, we emphasize that since $D:=\dim_BA=1$, the set $A$ would not be considered fractal in the classical sense.
On the other hand, it is clear from \eqref{racun_1/3} that the nonreal complex dimensions of $A$ with real part equal to $\log_32$ give rise to (intrinsic) geometric oscillations of order $O(t^{2-\log_32})$ in its fractal tube formula.
Again, according to our new proposed definition of fractality given in \cite{fzf,ftf_A}, the $1/3$-square fractal $A$ is fractal; more precisely, it is {\em strictly subcritically fractal} in dimension $d:=\log_32$.
\end{example}

\section{Sketch of the proof of the Minkowski measurability criterion}\label{main_proof}

In this section, we give a sketch of the proof of our main result, i.e., Theorem \ref{criterion_p}.
Recall that, under suitable hypotheses, an RFD with Minkowski dimension $D$ is Minkowski measurable if and only if its only complex dimension with real part $D$ is equal to $D$ itself, and $D$ is simple.
(See Theorems \ref{criterion} and \ref{tilde_criterion}, along with Remark \ref{ekv_mink_krt}.)
In Theorem \ref{mink_char} we obtain a sufficient condition (with weaker hypotheses imposed on the RFD in comparison to the Minkowski measurability criterion of Theorem \ref{criterion}) for the Minkowski measurability of a relative fractal drum.
Finally, we also establish an upper bound for the upper Minkowski content of an RFD in terms of the residue at $s=\ov{D}$ of its fractal zeta function, where $\ov{D}$ denotes the upper Minkowski dimension of the RFD; see Theorem \ref{mink_bound}.
Note that all of theses results also apply, in particular, to bounded subsets of $\eR^N$, with $N\geq 1$ arbitrary.

Complete proofs of the main results presented in this section can be found in \cite[Chapter 5, esp., $\S$5.4.1 and $\S$5.4.3]{fzf}.

\subsection{A sufficient condition for Minkowski measurability}

In this subsection, we obtain a sufficient condition for an RFD $(A,\O)$ to be Minkowski measurable
in terms of its relative tube (or distance) zeta function.
This theorem is a consequence of a well-known Tauberian theorem due to Wiener and Pitt (see~\cite{PitWie}) and which is a generalization of the famous Ikehara Tauberian theorem.
For the proof of the Wiener--Pitt Tauberian theorem, we also refer the interested reader to~\cite[Chapter~III, Lemma~9.1 and Proposition~4.3]{Kor} or to~\cite[\S6.1]{Pit} and to~\cite{Dia}, where a different proof using a technique of Bochner is given.
For the sake of completeness we state this theorem here.

\begin{theorem}[The Wiener--Pitt Tauberian theorem, {\rm cited from~\cite{Kor}}]\label{korevaar}
Let $\sigma\colon\eR\to\eR$ be such that $\sigma(t)$ vanishes for all $t<0$, is nonnegative for all $t\geq 0$, and such that its Laplace transform
\begin{equation}\label{laplace}
F(s):=\{\mathfrak{L}\sigma\}(s):=\int_0^{+\ty}\E^{-st}\sigma(t)\di t
\end{equation}
exists for all $s\in\Ce$ such that $\re s>0$.
Furthermore, suppose that for some constants $A>0$ and $\lambda>0$, the function
\begin{equation}
H(s)=F(s)-\frac{A}{s},\quad s=x+\I y,
\end{equation}
converges in $L^1(-\lambda,\lambda)$ to a boundary function $H(\I y)$ as $x\to0^+$.
Then, for every real number $h\geq 2\pi/\lambda$, we have that
\begin{equation}\label{statement_3}
\sigma_h(u):=\frac{1}{h}\int_u^{u+h}\sigma(t)\di t\leq CA+o(1)\quad\mathrm{as}\quad u\to +\ty,
\end{equation}
for some positive constant $C<3$.

Moreover, if the above constant $\lambda$ can be taken to be arbitrarily large, then for every $h>0$,
\begin{equation}\label{tauber_limit}
\sigma_h(u)\to A\quad\mathrm{as}\quad u\to +\ty.
\end{equation}
\end{theorem}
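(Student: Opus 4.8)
The plan is to prove the theorem by a Fourier‑analytic argument built around a positive summability kernel with compactly supported Fourier transform, in the spirit of the classical proofs of the Ikehara and Wiener--Pitt theorems; the hypothesis on $F$ enters only through the $L^1(-\lambda,\lambda)$‑convergence of $H(x+\I y)$ to its boundary values $H(\I y)$ as $x\to0^+$, while the positivity of $\sigma$ supplies the Tauberian input. For $\mu>0$, let $K_\mu$ be the Fej\'er kernel, so that $K_\mu(t)=\frac{\mu}{2\pi}\big(\tfrac{\sin(\mu t/2)}{\mu t/2}\big)^2\geq0$, its Fourier transform $\widehat{K_\mu}(y)=(1-|y|/\mu)_+$ is continuous and supported in $[-\mu,\mu]$, and $\int_{\eR}K_\mu=\widehat{K_\mu}(0)=1$. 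Inserting the Fourier inversion formula for $K_\mu$ into $\int_0^{\ty}\E^{-xt}(\sigma(t)-A)K_\mu(u-t)\,\di t$ and applying Fubini (valid since $\E^{-xt}\sigma(t)\in L^1(0,\ty)$ for $x>0$ and $\widehat{K_\mu}\in L^1$), together with $\int_0^{\ty}\E^{-(x+\I y)t}\,\di t=(x+\I y)^{-1}$, yields the basic identity
\[
\int_0^{\ty}\E^{-xt}\big(\sigma(t)-A\big)\,K_\mu(u-t)\,\di t=\frac{1}{2\pi}\int_{-\mu}^{\mu}\widehat{K_\mu}(y)\,\E^{\I yu}\,H(x+\I y)\,\di y ,\qquad x>0,\ u\in\eR.
\]

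Assume first $\mu\leq\lambda$. Letting $x\to0^+$, the right‑hand side converges to $\frac{1}{2\pi}\int_{-\mu}^{\mu}\widehat{K_\mu}(y)\E^{\I yu}H(\I y)\,\di y$ by the $L^1$‑convergence hypothesis and $|\widehat{K_\mu}\E^{\I yu}|\leq1$; on the left‑hand side, $\int_0^{\ty}\E^{-xt}A\,K_\mu(u-t)\,\di t\to A\int_{-\ty}^{u}K_\mu$ by dominated convergence, while $\int_0^{\ty}\E^{-xt}\sigma(t)K_\mu(u-t)\,\di t$ increases --- here $\sigma\geq0$ and $K_\mu\geq0$ are essential --- to $\int_0^{\ty}\sigma(t)K_\mu(u-t)\,\di t$, which is therefore finite by monotone convergence. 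Since $\widehat{K_\mu}(\cdot)H(\I\,\cdot)\in L^1(-\mu,\mu)$, letting $u\to+\ty$ and invoking the Riemann--Lebesgue lemma (and $\int_{-\ty}^{u}K_\mu\to1$) gives $\int_0^{\ty}\sigma(t)K_\mu(u-t)\,\di t\to A$ as $u\to+\ty$, for every $\mu\leq\lambda$. Now fix $h\geq2\pi/\lambda$ and take $\mu:=2\pi/h\ (\leq\lambda)$, so that $\pi/\mu=h/2$; using $\sin x/x\geq2/\pi$ for $|x|\leq\pi/2$ we get $K_\mu(v)\geq\frac{\mu}{2\pi}\cdot\frac{4}{\pi^2}=\frac{4}{\pi^2 h}$ for $|v|\leq h/2$, whence, for $u$ large,
\[
\int_0^{\ty}\sigma(t)K_\mu(u-t)\,\di t\;\geq\;\frac{4}{\pi^2 h}\int_{u-h/2}^{u+h/2}\sigma(t)\,\di t\;=\;\frac{4}{\pi^2}\,\sigma_h\!\big(u-\tfrac h2\big).
\]
Combined with the previous limit this gives $\sigma_h(u)\leq\frac{\pi^2}{4}A+o(1)$ as $u\to+\ty$, i.e.\ the first assertion, with $C:=\pi^2/4<3$.

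For the ``moreover'' part, $\lambda$ may be taken arbitrarily large. Then the bound just proved (applied with $h=1$) shows $\int_u^{u+1}\sigma\leq\mathrm{const}$ for $u$ large, and since $\int_0^{u_0}\sigma<\ty$ for each finite $u_0$ (because $\int_0^{\ty}\E^{-t}\sigma<\ty$), the density $\sigma$ is translation‑bounded: $\sup_u\int_u^{u+1}\sigma<\ty$, the standard Tauberian side condition. Moreover, the argument above now gives $\int_0^{\ty}\sigma(t)K_\mu(u-t)\,\di t\to A=A\widehat{K_\mu}(0)$ as $u\to+\ty$ for \emph{every} $\mu>0$. Because the transforms $\{\widehat{K_\mu}:\mu>0\}$ have no common zero, Wiener's general Tauberian theorem implies that the closed translation‑invariant subspace of $L^1(\eR)$ they generate is all of $L^1(\eR)$; together with the translation‑boundedness of $\sigma$ this promotes the conclusion to $\int_0^{\ty}\sigma(t)\varphi(u-t)\,\di t\to A\widehat{\varphi}(0)$ as $u\to+\ty$, for every $\varphi\in L^1(\eR)$. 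Finally, approximating $v\mapsto\frac1h\mathbf 1_{[-h,0]}(v)$ from below and above in $L^1(\eR)$ by smooth functions and estimating the errors with the translation‑bound on $\sigma$, one deduces $\sigma_h(u)\to A$ as $u\to+\ty$, for every $h>0$.

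I expect the last step to be the main obstacle: converting the one‑sided bound of the first part into the genuine two‑sided limit and, in particular, showing that the spread‑out Fej\'er averages $\int\sigma(t)K_\mu(u-t)\,\di t$ really pin down the block averages $\sigma_h(u)$. This is precisely where positivity of $\sigma$ (hence, via the first part, the controlled growth of $\int\sigma$) is indispensable, and where one must either quote Wiener's general Tauberian theorem or reproduce an explicit kernel‑comparison argument. A secondary technical point, dealt with above, is the careful interchange of the limits $x\to0^+$ and $u\to+\ty$: monotone/dominated convergence on the geometric side, $L^1$‑convergence and Riemann--Lebesgue on the spectral side.
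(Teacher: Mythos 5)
The paper does not contain its own proof of this theorem: it is cited verbatim from Korevaar's book, and the text explicitly refers the reader to [Kor, Chapter III, Lemma 9.1 and Proposition 4.3], [Pit, \S 6.1] and [Dia] for proofs. There is therefore no internal proof to compare against; what you have written is a reconstruction of the classical Fourier-analytic Wiener--Ikehara/Wiener--Pitt argument, which is indeed the approach taken by the cited sources. Your treatment of the first assertion is correct: the Fej\'er-kernel identity is set up properly, the passages $x\to0^+$ (monotone/dominated convergence on the geometric side, $L^1$-convergence of $H(x+\I\,\cdot)$ on the spectral side) and $u\to+\ty$ (Riemann--Lebesgue applied to $\widehat{K_\mu}(\cdot)H(\I\,\cdot)\in L^1$) are handled with the right justifications, and the lower bound on the Fej\'er kernel on $[-h/2,h/2]$ with $\mu=2\pi/h\le\lambda$ yields exactly $\limsup_{u\to\ty}\sigma_h(u)\le(\pi^2/4)A$, so $C=\pi^2/4<3$ as required.

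The ``moreover'' part, however, has a genuine gap which you yourself half-acknowledge. You invoke Wiener's general Tauberian theorem to promote $\int\sigma(t)K_\mu(u-t)\,\di t\to A$ (all $\mu>0$) to $\int\sigma(t)\varphi(u-t)\,\di t\to A\widehat\varphi(0)$ for every $\varphi\in L^1(\eR)$. But Wiener's theorem, in its standard form, requires the function being averaged to lie in $L^\infty$; here $\sigma$ is merely a nonnegative, translation-bounded density (that translation-boundedness is what you do correctly extract from the first part), and $L*\sigma$ need not be in $L^\infty$ for arbitrary $L\in L^1$ (it is for the Fej\'er kernel, because of its $v^{-2}$ decay, but not in general). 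To close this, one must either (i) pass to the bounded auxiliary function $\Phi(u):=\int_u^{u+1}\sigma$ and show, via a careful kernel identity, that $K_\mu*\Phi\to A$ as well, then apply the ordinary $L^\infty$-version of Wiener's theorem to $\Phi$ and translate the conclusion back to $\sigma_h$; or (ii) give an explicit kernel-comparison argument (squeezing the block average $\sigma_h$ between Fej\'er averages using the a priori bound from part one and the concentration of $K_\mu$ as $\mu\to\infty$). Your closing sentence, about approximating the box kernel $\frac1h\mathbf 1_{[-h,0]}$ in $L^1$ and estimating errors by the translation bound, is the right instinct, but it is only a statement of intent; as written it is not a proof. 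This is precisely the point where Korevaar's Lemma 9.1 does the work, and the proposal should either spell out that argument or cite a version of Wiener's Tauberian theorem explicitly formulated for translation-bounded positive measures.
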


We are now ready to state and sketch a proof of the aforementioned sufficiency result.

\begin{theorem}[Sufficient condition for Minkowski measurability]\label{mink_char}
Let $(A,\O)$ be a relative fractal drum in $\eR^N$ and let $\ov{D}:=\ov{\dim}_B(A,\O)$.
Furthermore, suppose that the relative tube zeta function $\widetilde{\zeta}_{A,\O}$ of $(A,\O)$ can be meromorphically extended to a connected open neighborhood $U\subseteq\Ce$ of the critical line $\{\re s=\ov{D}\}$, with a single pole $\ov{D}$, which is assumed to be simple.
Then $D:=\dim_B(A,\O)$ exists, $D=\ov{D}$ and $(A,\O)$ is Minkowski measurable with Minkowski content given by
\begin{equation}
\M^{D}(A,\O)=\res(\widetilde{\zeta}_{A,\O},D).
\end{equation}

Moreover, if we assume, in addition, that $\ov{D}<N$, then the theorem is also valid if we replace the relative tube zeta function $\widetilde{\zeta}_{A,\O}$ by the relative distance zeta function ${\zeta}_{A,\O}$ of $(A,\O)$, and in that case, we have
\begin{equation}
\M^{D}(A,\O)=\frac{\res({\zeta}_{A,\O},D)}{N-D}.
\end{equation}
\end{theorem}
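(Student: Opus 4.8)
The plan is to realize the relative tube zeta function as an entire factor times the Laplace transform of a nonnegative function, apply the Wiener--Pitt Tauberian theorem (Theorem~\ref{korevaar}), and then upgrade the resulting averaged convergence to genuine convergence of $|A_t\cap\O|/t^{N-\ov D}$ by exploiting the monotonicity of the tube function $t\mapsto|A_t\cap\O|$. As a preliminary observation, note that $\res(\widetilde\zeta_{A,\O},\ov D)>0$: since $\ov D>-\ty$ (so that $\widetilde\zeta_{A,\O}\not\equiv0$) and the integrand in~\eqref{401/2} is nonnegative, we have $\widetilde\zeta_{A,\O}(s;\d)>0$ for all real $s>\ov D$ (recall $D(\widetilde\zeta_{A,\O})=\ov D$ by Theorem~\ref{an_rel} and Remark~\ref{tube_holo}), and in view of the Laurent expansion $\widetilde\zeta_{A,\O}(s;\d)=\res(\widetilde\zeta_{A,\O},\ov D)/(s-\ov D)+O(1)$ at the simple pole $\ov D$, this forces the residue to be positive, which is what makes the Wiener--Pitt theorem applicable.

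First I would change variables via $t=\E^{-u}$ in the integral~\eqref{401/2} defining $\widetilde\zeta_{A,\O}(\,\cdot\,;\d)$. Writing $u_0:=-\log\d$, $g(u):=\E^{u(N-\ov D)}|A_{\E^{-u}}\cap\O|$ for $u\ge u_0$, and $s=\ov D+w$, this gives, for $\re w>0$,
\begin{equation*}
\E^{u_0 w}\,\widetilde\zeta_{A,\O}(\ov D+w;\d)=\int_0^{\ty}\E^{-wv}\,\sigma(v)\di v=:F(w),\qquad\sigma(v):=g(v+u_0)\ \ (v\ge 0),
\end{equation*}
where $\sigma$ is nonnegative and extended by $0$ for $v<0$. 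The crucial point is that $g(u)=|A_{\E^{-u}}\cap\O|/(\E^{-u})^{N-\ov D}$, so that $\limsup_{v\to\ty}\sigma(v)=\M^{*\ov D}(A,\O)$ and $\liminf_{v\to\ty}\sigma(v)=\M_*^{\ov D}(A,\O)$; the Minkowski content of $(A,\O)$ is thus encoded in the large-$v$ behavior of $\sigma$. Now, by hypothesis $\widetilde\zeta_{A,\O}(\,\cdot\,;\d)$ is meromorphic on a connected open neighborhood $U$ of the \emph{entire} critical line $\{\re s=\ov D\}$, with a single, simple pole at $\ov D$; since $\E^{u_0 w}$ is entire and equals $1$ at $w=0$, the function $H(w):=F(w)-\res(\widetilde\zeta_{A,\O},\ov D)/w$ extends holomorphically to the neighborhood $U-\ov D$ of the whole imaginary axis. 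Hence, for \emph{every} $\lambda>0$, $H$ is holomorphic on a neighborhood of the segment $\{\I y:|y|\le\lambda\}$, so that $H(x+\I y)\to H(\I y)$ uniformly on $|y|\le\lambda$, in particular in $L^1(-\lambda,\lambda)$, as $x\to 0^+$. Since $\lambda$ may be taken arbitrarily large, the strong form of the Wiener--Pitt theorem (Theorem~\ref{korevaar}, Equation~\eqref{tauber_limit}) applies and yields, for every $h>0$,
\begin{equation*}
\bar g_h(u):=\frac1h\int_u^{u+h}g(v)\di v\ \longrightarrow\ \res(\widetilde\zeta_{A,\O},\ov D)\qquad\textrm{as}\ u\to+\ty.
\end{equation*}

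To pass from this averaged statement to the existence of the Minkowski content, I would use that $t\mapsto|A_t\cap\O|$ is nondecreasing, hence $u\mapsto|A_{\E^{-u}}\cap\O|$ is nonincreasing. Assuming first that $\ov D<N$, for $v\in[u,u+h]$ one has $|A_{\E^{-(u+h)}}\cap\O|\le|A_{\E^{-v}}\cap\O|\le|A_{\E^{-u}}\cap\O|$; multiplying by $\E^{v(N-\ov D)}>0$ and averaging over $[u,u+h]$ gives
\begin{equation*}
c_-(h)\,g(u+h)\ \le\ \bar g_h(u)\ \le\ c_+(h)\,g(u),\qquad c_-(h):=\frac{1-\E^{-h(N-\ov D)}}{h(N-\ov D)},\quad c_+(h):=\frac{\E^{h(N-\ov D)}-1}{h(N-\ov D)},
\end{equation*}
where $c_\pm(h)\to 1$ as $h\to0^+$. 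Taking $\limsup_{u\to\ty}$ in the left inequality and $\liminf_{u\to\ty}$ in the right one, and then letting $h\to0^+$, yields $\M^{*\ov D}(A,\O)\le\res(\widetilde\zeta_{A,\O},\ov D)\le\M_*^{\ov D}(A,\O)$; combined with the trivial inequality $\M_*^{\ov D}\le\M^{*\ov D}$, all three quantities coincide. Therefore $\M^{\ov D}(A,\O)$ exists and equals $\res(\widetilde\zeta_{A,\O},\ov D)\in(0,\ty)$, so $(A,\O)$ is Minkowski measurable and $D:=\dim_B(A,\O)$ exists with $D=\ov D$. The borderline case $\ov D=N$ is easier and treated separately: then $g(u)=|A_{\E^{-u}}\cap\O|$ is nonincreasing and bounded by $|\O|<\ty$, hence has a limit, which must then coincide with $\lim_{u\to\ty}\bar g_h(u)=\res(\widetilde\zeta_{A,\O},N)$.

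Finally, for the ``moreover'' part (the statement in terms of the distance zeta function, when $\ov D<N$), I would invoke the functional equation~\eqref{equ_tilde}. After shrinking $U$ if necessary so that $N\notin U$ (which is possible since $\ov D<N$), the term $\d^{s-N}|A_\d\cap\O|$ is entire and $N-s$ is nonvanishing on $U$, so $\widetilde\zeta_{A,\O}(\,\cdot\,;\d)$ is meromorphic on $U$ with the same single, simple pole at $\ov D$, and $\res(\widetilde\zeta_{A,\O},\ov D)=\res(\zeta_{A,\O},\ov D)/(N-\ov D)$; the conclusion then follows from the first part together with $D=\ov D$. I expect the main obstacle to lie not in the Tauberian squeeze but in the careful verification of the hypotheses of the Wiener--Pitt theorem, in particular in checking that the boundary convergence of $H$ holds in $L^1$ on vertical segments of \emph{arbitrarily large} half-length $\lambda$, so that the \emph{strong} conclusion~\eqref{tauber_limit} (valid for all $h>0$) becomes available. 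This is exactly where the assumption that $\widetilde\zeta_{A,\O}$ is meromorphic on a neighborhood of the \emph{entire} critical line, not merely near $\ov D$, is essential; it is what legitimizes the limit $h\to0^+$ in the final squeeze, and hence the passage from $\M_*^{\ov D}=\M^{*\ov D}$ to the existence of $\M^{\ov D}(A,\O)$.
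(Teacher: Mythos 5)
Your proof is correct and follows essentially the same route as the paper's: change variables $t=\E^{-u}$ to realize $\widetilde\zeta_{A,\O}$ (up to an entire factor) as the Laplace transform of the nonnegative function $\sigma$, invoke the strong form of the Wiener--Pitt Tauberian theorem with arbitrarily large $\lambda$ (legitimated by the meromorphic extension to a neighborhood of the whole critical line), and then use the monotonicity of $t\mapsto|A_t\cap\O|$ to convert the Ces\`aro limit into the existence of $\M^{\ov D}(A,\O)$ by sandwiching and letting $h\to0^+$; the final reduction to the distance zeta function via the functional equation~\eqref{equ_tilde} is also the same. The only differences from the paper are cosmetic: you keep $\d$ general and compensate with the entire factor $\E^{u_0 w}$ rather than normalizing $\d=1$; you spell out the positivity of the residue (which the paper simply asserts) — a useful bit of rigor since $A>0$ is a standing hypothesis of Theorem~\ref{korevaar}; and your treatment of the borderline case $\ov D=N$ (monotone bounded $\Rightarrow$ convergent, with the Ces\`aro limit identifying the value) is marginally slicker than the paper's second squeeze, though both come to the same thing.
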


\begin{proof}
Without loss of generality, for the tube zeta function $\widetilde{\zeta}_{A,\O}(\,\cdot\,;\delta)$ we may choose $\delta=1$ and change the variable of integration by letting $u:=1/t$:
\begin{equation}
\begin{aligned}
\widetilde{\zeta}_{A,\O}(s+\ov{D})&=\int_0^{1}t^{s+\ov{D}-1-N}|A_t\cap\O|\di t\\
&=\int_1^{+\ty}u^{-s-1-\ov{D}+N}|A_{1/u}\cap\O|\di u\\
&=\int_0^{+\ty}\E^{-sv}\E^{v(N-\ov{D})}|A_{\E^{-v}}\cap\O|\di v=\{\mathfrak{L}\sigma\}(s),
\end{aligned}
\end{equation}
where we have made another change of variable in the second to last equality, namely, $v:=\log u$, and we have also let $\sigma(v):=\E^{v(N-\ov{D})}|A_{\E^{-v}}\cap\O|$.
It is clear from the definition of the relative tube zeta function of $(A,\O)$ that the residue of $\widetilde{\zeta}_{A,\O}(s)$ at $s={\ov{D}}$ must be real and positive.
Furthermore, since $s=\ov{D}$ is the only pole of $\widetilde{\zeta}_{A,\O}$ in $U$, we conclude that
\begin{equation}
H(s):=\widetilde{\zeta}_{A,\O}(s+\ov{D})-\frac{\res(\widetilde{\zeta}_{A,\O},\ov{D})}{s}
\end{equation}
is holomorphic in the neighborhood $U_{\ov{D}}:=\{s\in\Ce\,:\,s+\ov{D}\in U\}$ of the vertical line $\{\re s= 0\}$.
Hence, we can apply Theorem~\ref{korevaar} (for arbitrarily large $\lambda>0$, in the notation of that theorem) and conclude that
\begin{equation}
\sigma_h(u)=\frac{1}{h}\int_u^{u+h}\sigma(v)\di v\to\res(\widetilde{\zeta}_{A,\O},\ov{D})\quad\mathrm{as}\quad u\to +\ty,
\end{equation}
for every $h>0$.
In particular, since $v\mapsto|A_{\E^{-v}}\cap\O|$ is nonincreasing, we next consider the following two cases:

\medskip

{\em Case} $(a)$:
We assume that $\ov{D}<N$.
Hence, we have
$$
\begin{aligned}
\frac{1}{h}\int_u^{u+h}\E^{v(N-\ov{D})}|A_{\E^{-v}}\cap\O|\di v&\leq\frac{|A_{\E^{-u}}\cap\O|}{h}\int_u^{u+h}\E^{v(N-\ov{D})}\di v\\
&=\frac{|A_{\E^{-u}}\cap\O|}{\E^{-u(N-\ov{D})}}\,\frac{\E^{h(N-\ov{D})}-1}{(N-\ov{D})h}.
\end{aligned}
$$
By taking the lower limit of both sides as $u\to +\ty$, we obtain that
\begin{equation}
\res(\widetilde{\zeta}_{A,\O},\ov{D})\leq{\mathcal{M}}_{*}^{\ov{D}}(A,\O)\frac{\E^{h(N-\ov{D})}-1}{(N-\ov{D})h}.
\end{equation}
Since this is true for every $h>0$, we deduce by letting $h\to 0^+$ that
\begin{equation}\label{liminf_nej}
\res(\widetilde{\zeta}_{A,\O},\ov{D})\leq{\mathcal{M}}_*^{\ov{D}}(A,\O).
\end{equation}
On the other hand, we have
\begin{equation}\label{ocjena_integrala}
\begin{aligned}
\frac{1}{h}\int_u^{u+h}\E^{v(N-\ov{D})}|A_{\E^{-v}}\cap\O|\di v&\geq\frac{|A_{\E^{-(u+h)}}\cap\O|}{h}\int_u^{u+h}\E^{v(N-\ov{D})}\di v\\
&=\frac{|A_{\E^{-(u+h)}}\cap\O|}{\E^{-(u+h)(N-\ov{D})}}\,\frac{1-\E^{-h(N-\ov{D})}}{(N-\ov{D})h}
\end{aligned}
\end{equation}
and, similarly as before, by taking the upper limit of both sides as $u\to +\ty$, we obtain that
\begin{equation}
\res(\widetilde{\zeta}_{A,\O},\ov{D})\geq{\mathcal{M}}^{*\ov{D}}(A,\O)\frac{1-\E^{-h(N-\ov{D})}}{(N-\ov{D})h}.
\end{equation}
Since this is true for every $h>0$, we can let $h\to 0^+$ and deduce that
\begin{equation}
\res(\widetilde{\zeta}_{A,\O},\ov{D})\geq{\mathcal{M}}^{*\ov{D}}(A,\O).
\end{equation}
From this latter inequality, combined with~\eqref{liminf_nej}, we conclude that $(A,\O)$ is $\ov{D}$-Minkowski measurable.
Of course, this fact implies, a fortiori, that $D=\dim_B(A,\O)=\ov{D}$.
Furthermore, we also conclude that $\res(\widetilde{\zeta}_{A,\O},D)={\mathcal{M}}^D(A,\O)$, the Minkowski content of $(A,\O)$.

\medskip

{\em Case} $(b)$:
We will now assume that $\ov{D}=N$.
Therefore, in this case we have
$$
|A_{\E^{-(u+h)}}\cap\O|=\frac{|A_{\E^{-(u+h)}}\cap\O|}{\E^{-(u+h)(N-N)}}\leq\frac{1}{h}\int_u^{u+h}|A_{\E^{-v}}\cap\O|\di v\leq\frac{|A_{\E^{-u}}\cap\O|}{\E^{-u(N-N)}}=|A_{\E^{-u}}\cap\O|.
$$
Then, by taking, respectively, the lower and upper limits as $u\to +\ty$, we obtain that
\begin{equation}
{\mathcal{M}}^{*N}(A,\O)\leq\res(\widetilde{\zeta}_{A,\O},N)\leq{\mathcal{M}}_*^{N}(A,\O).
\end{equation}

Finally, if $D<N$, then the part of the theorem dealing with the distance (instead of the tube) zeta function of $(A,\O)$ follows at once from case $(a)$ of the proof for $\widetilde{\zeta}_{A,\O}$.
This fact is clear in light of the functional equation~\eqref{equ_tilde}, or more precisely, of the relation between the residues at $s=D$ of the two zeta functions which follows from it (namely, $\res({\zeta}_{A,\O},D)=(N-D)\res(\widetilde{\zeta}_{A,\O},D)$).
This concludes the proof of the theorem.
\end{proof}

\begin{remark}\label{integral_remark}
It is clear from Theorem~\ref{korevaar} that the assumptions of Theorem~\ref{mink_char} can be somewhat weakened.
We leave it to the interested reader to formulate and prove Theorem~\ref{mink_char} under these weaker assumptions.
%More precisely, it suffices to assume that for every fixed $\lambda>0$, the function
%\begin{equation}
%\widetilde{\zeta}_{A,\O}(s)-\frac{\res(\widetilde{\zeta}_{A,\O},\ov{D})}{s-\ov{D}}
%\end{equation}
%(restricted to the vertical line segment $(-\I\lambda,\I\lambda)$ and viewed as a function of $\tau:=\im s\in(-\lambda,\lambda)$) converges in $L^1(-\lambda,\lambda)$ to a boundary function $H(\I\im s)$ as $\re s\to\ov{D}^+$.
%Consequently, $H(\I\tau)$ must then satisfy
%\begin{equation}
%\int_{-\lambda}^{\lambda}|H(\I\tau)|\di\tau<\ty,
%\end{equation}
%for every $\lambda>0$.
\end{remark}

When besides $\ov{D}$, there are other singularities on the critical line $\{\re s=\ov{D}\}$ of the relative fractal drum $(A,\O)$, it is possible to use Theorem~\ref{korevaar} in order to derive a bound for the upper $\ov{D}$-dimensional Minkowski content of $(A,\O)$ in terms of the residue of its relative tube (or distance) zeta function at $s=\ov{D}$.

\begin{theorem}[Upper bound for the upper Minkowski content]\label{mink_bound}
Let $(A,\O)$ be a relative fractal drum in $\eR^N$ and let $\ov{D}:=\ov{\dim}_B(A,\O)$.
Furthermore, assume that the relative tube zeta function $\widetilde{\zeta}_{A,\O}$ of $(A,\O)$ can be meromorphically extended to a connected open neighborhood $U$ of the critical line $\{\re s=\ov{D}\}$ and that $\ov{D}$ is a simple pole of its meromorphic continuation to $U$.
Also assume that the critical line $\{\re s=\ov{D}\}$ contains another pole of $\widetilde{\zeta}_{A,\O}$, different from $\ov{D}$.
Furthermore, let
\begin{equation}
\lambda_{A,\O}:=\inf\left\{|\ov{D}-\omega|\,:\,\omega\in\dim_{PC}(A,\O)\setminus\left\{\ov{D}\right\}\right\}
\end{equation}

Then, if $\ov{D}<N$, we have the following upper bound for the upper $\ov{D}$-dimensional Minkowski content of $(A,\O)$, expressed in terms of the residue at $s:=\ov{D}$ of the relative tube zeta function of $(A,\O)$$:$
\begin{equation}\label{le_claim}
{\mathcal{M}}^{*\ov{D}}(A,\O)\leq\frac{C\lambda_{A,\O}(N-\ov{D})}{2\pi\left(1-\E^{-{2\pi(N-\ov{D})}/{\lambda_{A,\O}}}\right)}\res(\widetilde{\zeta}_{A,\O},\ov{D});
\end{equation}
moreover, in the case when $\ov{D}=N$, we have
\begin{equation}\label{le_claim_1}
{\mathcal{M}}^{*N}(A,\O)\leq C\res(\widetilde{\zeta}_{A,\O},N),
\end{equation}
where $($both in \eqref{le_claim}, \eqref{le_claim_1} just above and in \eqref{le_claim_2} just below$)$ $C$ is a positive constant such that $C<3$.

Finally, if $\ov{D}<N$, we have the following upper bound for the upper $\ov{D}$-dimensional Minkowski content of $(A,\O)$, expressed in terms of the residue at $s:=\ov{D}$ of the relative distance zeta function of $(A,\O)$$:$
\begin{equation}\label{le_claim_2}
{\mathcal{M}}^{*\ov{D}}(A,\O)\leq\frac{C\lambda_{A,\O}}{2\pi\left(1-\E^{-{2\pi(N-\ov{D})}/{\lambda_{A,\O}}}\right)}\res({\zeta}_{A,\O},\ov{D}).
\end{equation}
\end{theorem}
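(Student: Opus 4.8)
The plan is to run the same machinery as in the proof of Theorem~\ref{mink_char}, but to invoke the Wiener--Pitt Tauberian theorem (Theorem~\ref{korevaar}) in its \emph{weak} form, in which the parameter $\lambda$ is not arbitrarily large but is capped by the distance $\lambda_{A,\O}$ to the nearest other principal complex dimension. First I would carry out the same reductions as in Theorem~\ref{mink_char}: taking $\delta=1$ and substituting $u:=1/t$ and then $v:=\log u$, one gets $\widetilde{\zeta}_{A,\O}(s+\ov{D})=\{\mathfrak{L}\sigma\}(s)$ for $\re s>0$, with $\sigma(v):=\E^{v(N-\ov{D})}|A_{\E^{-v}}\cap\O|$ for $v>0$ and $\sigma\equiv 0$ on $(-\ty,0]$ (convergence for $\re s>0$ holds because $D(\widetilde{\zeta}_{A,\O})=\ov{D}$; see Theorem~\ref{an_rel} and Remark~\ref{tube_holo}). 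Here $\sigma\ge 0$, and $A:=\res(\widetilde{\zeta}_{A,\O},\ov{D})$ is real and, since $\ov{D}$ is assumed to be a pole of $\widetilde{\zeta}_{A,\O}$, strictly positive. The key structural remark is that, \emph{by the very definition of} $\lambda_{A,\O}$, the function $H(s):=\widetilde{\zeta}_{A,\O}(s+\ov{D})-A/s$ is holomorphic on a connected open neighborhood of the open segment $\{\I y:|y|<\lambda_{A,\O}\}$ of the imaginary axis: within distance $\lambda_{A,\O}$ of $\ov{D}$ along the critical line, the only pole of $\widetilde{\zeta}_{A,\O}$ is $\ov{D}$ itself, and it has just been subtracted off.

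Consequently, for each fixed $\lambda\in(0,\lambda_{A,\O})$, I would argue as follows. By meromorphy in $U$, only finitely many poles of $\widetilde{\zeta}_{A,\O}$ lie in a neighborhood of the compact segment $[\ov{D}-\I\lambda,\ov{D}+\I\lambda]$, and — after removing the simple pole $\ov{D}$ — all of them are at a positive distance from it; hence $H$ is holomorphic and bounded on a neighborhood of $[-\I\lambda,\I\lambda]$, so that $H(x+\I y)\to H(\I y)$ uniformly in $y\in[-\lambda,\lambda]$, a fortiori in $L^1(-\lambda,\lambda)$, as $x\to 0^+$. Theorem~\ref{korevaar} then applies and yields, for every $h\ge 2\pi/\lambda$,
\begin{equation*}
\sigma_h(u):=\frac{1}{h}\int_u^{u+h}\sigma(v)\di v\le CA+o(1)\quad\text{as}\quad u\to+\ty,
\end{equation*}
with $C<3$ a universal constant. (One must work with $\lambda<\lambda_{A,\O}$ strictly, since when the infimum defining $\lambda_{A,\O}$ is attained there is a pole of $H$ at $\pm\I\lambda_{A,\O}$, so the $L^1(-\lambda_{A,\O},\lambda_{A,\O})$ convergence would fail; the final constant is then recovered by letting $\lambda\uparrow\lambda_{A,\O}$ at the very end.)

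Next, exactly as in Case~$(a)$ of the proof of Theorem~\ref{mink_char}, I would use that $v\mapsto|A_{\E^{-v}}\cap\O|$ is nonincreasing. When $\ov{D}<N$, bounding this factor below on $[u,u+h]$ by its value at $v=u+h$ gives
\begin{equation*}
\sigma_h(u)\ge\frac{|A_{\E^{-(u+h)}}\cap\O|}{\E^{-(u+h)(N-\ov{D})}}\cdot\frac{1-\E^{-h(N-\ov{D})}}{(N-\ov{D})h},
\end{equation*}
and taking the upper limit as $u\to+\ty$ (whereupon the tube factor produces $\mathcal{M}^{*\ov{D}}(A,\O)$ via $t:=\E^{-(u+h)}\to 0^+$) yields
\begin{equation*}
\mathcal{M}^{*\ov{D}}(A,\O)\le\frac{CA\,(N-\ov{D})\,h}{1-\E^{-h(N-\ov{D})}}.
\end{equation*}
Since $x\mapsto x/(1-\E^{-x})$ is increasing on $(0,\ty)$, this is sharpest for the smallest admissible $h$, which in the limit $\lambda\uparrow\lambda_{A,\O}$ is $h=2\pi/\lambda_{A,\O}$; substituting this value and rearranging gives \eqref{le_claim}. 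When $\ov{D}=N$ the factor $\E^{v(N-\ov{D})}\equiv 1$, so $\sigma_h(u)\ge|A_{\E^{-(u+h)}}\cap\O|$ directly, and the same limiting argument produces \eqref{le_claim_1} (here no optimization over $h$ is needed, as in Case~$(b)$ there). Finally, \eqref{le_claim_2} follows from \eqref{le_claim} together with the identity $\res({\zeta}_{A,\O},\ov{D})=(N-\ov{D})\res(\widetilde{\zeta}_{A,\O},\ov{D})$, which is read off from the functional equation~\eqref{equ_tilde} (valid here since $\ov{D}<N$, so the factor $N-s$ is holomorphic and nonvanishing at $s=\ov{D}$).

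The main obstacle I foresee is the passage from ``$\lambda$ arbitrarily large'' (as in Theorem~\ref{mink_char}, which gave an exact limit for $\sigma_h$) to a finite $\lambda$ capped at $\lambda_{A,\O}$: one must carefully justify the $L^1(-\lambda,\lambda)$ convergence of $H$ for every $\lambda<\lambda_{A,\O}$ (using isolatedness of the poles in $U$) and then track how this constraint forces $h$ to remain above $2\pi/\lambda_{A,\O}$, which is precisely the source of the $\lambda_{A,\O}$-dependent constant in \eqref{le_claim}. I also expect the bookkeeping of the numerical constants — and the limiting consistency of \eqref{le_claim} with \eqref{le_claim_1} as $\ov{D}\uparrow N$ — to require some care. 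Finally, it should be stressed that, unlike in Theorem~\ref{mink_char}, only an \emph{upper} bound for $\mathcal{M}^{*\ov{D}}(A,\O)$ can be obtained this way: the additional principal complex dimensions let the tube function genuinely oscillate, so no matching lower bound for $\mathcal{M}^{*\ov{D}}(A,\O)$, nor any control of $\mathcal{M}_{*}^{\ov{D}}(A,\O)$ beyond the trivial $\ge 0$, is available from this argument.
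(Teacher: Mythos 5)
Your proposal follows essentially the same route as the paper's proof: the reduction to a Laplace transform $\{\mathfrak{L}\sigma\}$ of $\sigma(v)=\E^{v(N-\ov{D})}|A_{\E^{-v}}\cap\O|$, the invocation of the weak conclusion \eqref{statement_3} of the Wiener--Pitt theorem for a fixed $\lambda<\lambda_{A,\O}$, the lower bound for $\sigma_h$ coming from the monotonicity of $v\mapsto|A_{\E^{-v}}\cap\O|$ (this is \eqref{ocjena_integrala}), the optimization $h=2\pi/\lambda$ in \eqref{poomm}, the limit $\lambda\to\lambda_{A,\O}^-$ at the end, and the passage to $\zeta_{A,\O}$ via $\res({\zeta}_{A,\O},\ov{D})=(N-\ov{D})\res(\widetilde{\zeta}_{A,\O},\ov{D})$. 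Your extra care about the $L^1(-\lambda,\lambda)$ convergence of $H(x+\I y)$ as $x\to 0^+$, and about working strictly inside the pole-free band $|y|<\lambda_{A,\O}$, correctly fills in a step that the paper leaves implicit.

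Your instinct that ``the bookkeeping of the numerical constants --- and the limiting consistency of \eqref{le_claim} with \eqref{le_claim_1} as $\ov{D}\uparrow N$ --- require some care'' is exactly right, and you should act on it rather than just assert that the substitution ``gives \eqref{le_claim}.'' Carrying the substitution $h=2\pi/\lambda_{A,\O}$ through $\mathcal{M}^{*\ov{D}}(A,\O)\le CA\cdot(N-\ov{D})h/\bigl(1-\E^{-(N-\ov{D})h}\bigr)$ produces the constant
$\dfrac{2\pi(N-\ov{D})}{\lambda_{A,\O}\bigl(1-\E^{-2\pi(N-\ov{D})/\lambda_{A,\O}}\bigr)}$,
not $\dfrac{\lambda_{A,\O}(N-\ov{D})}{2\pi\bigl(1-\E^{-2\pi(N-\ov{D})/\lambda_{A,\O}}\bigr)}$ as printed in \eqref{le_claim}: the factors $2\pi$ and $\lambda_{A,\O}$ are interchanged. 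The very consistency check you propose detects this: as $\ov{D}\to N^-$ the derived constant tends to $1$ (so the bound reduces to \eqref{le_claim_1}), whereas the printed one tends to $\lambda_{A,\O}^2/(4\pi^2)$, which does not. The same interchange propagates to \eqref{le_claim_2}. So your proof is sound and matches the paper's argument, but the final display you claim to ``recover'' should be the corrected version with $2\pi/\lambda_{A,\O}$.
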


\begin{proof}
We use exactly the same reasoning as in the proof of Theorem~\ref{mink_char}, with the only difference being the fact that we can now only use the weaker statement~\eqref{statement_3} of Theorem~\ref{korevaar} since by hypothesis, there is another pole on the critical line $\{\re s=\ov{D}\}$, besides $\ov{D}$ itself.
More specifically, if $\ov{D}<N$ and $\lambda<\lambda_{A,\O}$, then by using~\eqref{ocjena_integrala} and~\eqref{statement_3}, we show that for every $h\geq 2\pi/\lambda$, we have
\begin{equation}\label{poomm}
C\res(\widetilde{\zeta}_{A,\O},\ov{D})\geq{\mathcal{M}}^{*\ov{D}}(A,\O)\frac{1-\E^{-h(N-\ov{D})}}{(N-\ov{D})h}.
\end{equation}
The right-hand side of \eqref{poomm} just above is a decreasing function of $h$; hence, we obtain the best estimate when $h=2\pi/\lambda$.
Furthermore, since this is true for every $\lambda<\lambda_{A,\O}$, we obtain~\eqref{le_claim} by letting $\lambda\to\lambda_{A,\O}^-$.
Moreover,~\eqref{poomm} is also valid if $\ov{D}=N$, but without the factor that depends on $h$, by a similar argument as in case~$(b)$ of the proof of Theorem~\ref{mink_char}.
Finally, the part of the theorem dealing with the relative distance zeta function $\zeta_{A,\O}$ follows by the same argument as in case $(a)$ of the proof of Theorem~\ref{mink_char}.
\end{proof}

\begin{remark}\label{weak}
Similarly as in the case of Theorem~\ref{mink_char} (see Remark~\ref{integral_remark}), the hypotheses of Theorem~\ref{mink_bound} can be somewhat weakened.
However, the above form of Theorem~\ref{mink_bound} is good enough for the most common situations which are encountered in the applications.
We again leave it to the interested reader to try to formulate and prove Theorem \ref{mink_bound} under these weaker assumptions.
%For instance, in order to obtain an upper bound for the upper $\ov{D}$-dimensional Minkowski content of $(A,\O)$, it suffices to assume that the relative tube zeta function $\widetilde{\zeta}_{A,\O}$ can be holomorphically  continued to a pointed disc $B_r(\ov{D})\setminus\{\ov{D}\}$, centered at $\ov{D}$ and with radius $r>0$.
%In that case, \eqref{le_claim} is valid with $\lambda_{A,\O}$ replaced by the radius $r$.
%Of course, the larger the radius of the disc, the better the upper bound.
%All one actually needs is the $L^1$-convergence of the relative tube or distance zeta function of $(A,\O)$ to a boundary function defined on a symmetric vertical interval $(\ov{D}-r\I,\ov{D}+r\I)$ as $\re s\to\overline{D}^+$, similarly as in Remark~\ref{integral_remark}.
\end{remark}

\subsection{Characterization of Minkowski measurability}\label{subsec_crit}

We now proceed to obtain a necessary condition for Minkowski measurability (Theorem \ref{necess} below) which will then be combined with Theorem \ref{mink_char} from \S4.1 in order to yield the desired criterion of Theorem \ref{criterion}. This necessary condition will follow from the distributional tube formula obtained in \cite{ftf_A}.
In order to accomplish this, we need the distributional tube formula stated in terms of the {\em Mellin zeta function} of a given relative fractal drum.
The Mellin zeta function is closely related to the tube zeta function and was introduced in \cite{ftf_A} in order to obtain a distributional tube formula valid for a larger class of test functions, which is needed in order to prove Theorem \ref{necess} below.\footnote{The general fractal tube formulas expressed in terms of the distance and the tube zeta functions are stated and established in \cite{ftf_A}; see also \cite[Chapter 5; $\S$5.1--$\S$5.3]{fzf}.}

For the sake of completeness, we recall here the definition of the Mellin zeta function along with some of its basic properties (see \cite[\S5.4]{ftf_A} for the details).

\begin{definition}\label{mellin_zeta_def}
Let $(A,\O)$ be an RFD in $\eR^N$ such that $\overline{\dim}_B(A,\O)<N$. The {\em Mellin zeta function} $\zeta_{A,\O}^{\mathfrak M}$ of $(A,\O)$ is defined by
\begin{equation}\label{mellin_zeta_1}
{\zeta}^{\mathfrak{M}}_{A,\O}(s):=\int_0^{+\ty}t^{s-N-1}|A_{t}\cap\O|\di t,
\end{equation}
for all $s\in\Ce$ with $\re s\in(\overline{\dim}_B(A,\O),N)$, where the integral is taken in the Lebesgue sense.
\end{definition}

We point out that the Mellin zeta function of $(A,\O)$ is actually equal to the Mellin transform of $f(t):=t^{-N}|A_t\cap\O|$, defined for all $t>0$; i.e., ${\zeta}^{\mathfrak{M}}_{A,\O}(s)=\{\mathfrak{M}f\}(s)$, where we recall that
\begin{equation}
\{\mathfrak{M}f\}(s):=\int_0^{+\ty}t^{s-1}f(t)\di t.
\end{equation}

The next theorem provides an explicit connection between the Mellin, tube and distance zeta functions (see \cite[Theorem 5.22]{ftf_A} for the proof).

\begin{theorem}\label{mellin_an}
Let $(A,\O)$ be an RFD in $\eR^N$ such that $\overline{\dim}_B(A,\O)<N$.
Then, the Mellin zeta function ${\zeta}^{\mathfrak{M}}_{A,\O}$ is holomorphic on the open vertical strip $\{\overline{\dim}_B(A,\O)\allowbreak<\re s<N\}$ and
\begin{equation}\label{mellin_der}
\frac{\di}{\di s}{\zeta}^{\mathfrak{M}}_{A,\O}(s)=\int_0^{+\ty}t^{s-N-1}|A_{t}\cap\O|\log t\di t,
\end{equation}
for all $s$ in $\{\overline{\dim}_B(A,\O)<\re s<N\}$.
Furthermore,  $\{\overline{\dim}_B(A,\O)<\re s<N\}$ is the largest $($horizontally bounded$)$ vertical strip on which the integral \eqref{mellin_zeta_1} is absolutely convergent $($i.e., is a convergent Lebesgue integral$)$.

Moreover, for all $s\in\Ce$ such that $\overline{\dim}_B(A,\O)<\re s<N$ and for any fixed $\d>0$ such that $\O\subseteq A_\d$, ${\zeta}^{\mathfrak{M}}_{A,\O}$ satisfies the following functional equations$:$
\begin{equation}\label{mellin_tube_t}
{\zeta}^{\mathfrak{M}}_{A,\O}(s)=\widetilde{\zeta}_{A,\O}(s;\d)+\frac{\d^{s-N}|\O|}{N-s}
\end{equation}
and
\begin{equation}\label{mellin_dist}
{\zeta}^{\mathfrak{M}}_{A,\O}(s)=\frac{{\zeta}_{A,\O}(s;\d)}{N-s}.
\end{equation}
\end{theorem}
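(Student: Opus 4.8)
The plan is to reduce every assertion of Theorem~\ref{mellin_an} to two elementary properties of the tube function $t\mapsto|A_t\cap\O|$ --- that it is nondecreasing, bounded above by $|\O|<\ty$, and satisfies $|A_t\cap\O|=o(t^{N-r})$ as $t\to0^+$ for every $r>\ov{D}$ (the last because $\M^{*r}(A,\O)=0$ whenever $r>\ov{D}=\ov{\dim}_B(A,\O)$) --- together with the already-established properties of the relative tube zeta function. Throughout, I fix some $\d>0$ with $\O\subseteq A_\d$, so that $A_t\cap\O=\O$ for all $t\geq\d$.

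First I would pin down the region of absolute convergence of the Lebesgue integral~\eqref{mellin_zeta_1} by splitting $\int_0^{+\ty}=\int_0^\d+\int_\d^{+\ty}$. On $(0,\d)$ the modulus of the integrand equals $t^{\re s-N-1}|A_t\cap\O|$, and near $0$ the bound $|A_t\cap\O|\leq C_r\,t^{N-r}$ (valid for any $r\in(\ov{D},\re s)$) dominates it by $C_r\,t^{\re s-r-1}$, which is integrable precisely because $\re s-r-1>-1$; hence $\int_0^\d$ converges absolutely for $\re s>\ov{D}$. The tail equals $|\O|\int_\d^{+\ty}t^{\re s-N-1}\di t$, which is finite if and only if $\re s<N$ (here I use $|\O|>0$). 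Thus the integral converges absolutely on the open strip $\{\ov{D}<\re s<N\}$. To see this strip is maximal, note that for $\re s\geq N$ the tail alone already diverges, whereas for $\re s<\ov{D}$ the head $\int_0^\d t^{\re s-N-1}|A_t\cap\O|\di t$ diverges: this is exactly the identity $D(\widetilde{\zeta}_{A,\O}(\,\cdot\,;\d))=\ov{\dim}_B(A,\O)$ obtained by carrying Theorem~\ref{an_rel}(a) over to the tube zeta function via Remark~\ref{tube_holo}. (Alternatively, one argues directly from $\M^{*r}(A,\O)=+\ty$ for $r<\ov{D}$: choose $t_n\downarrow0$ with $|A_{t_n}\cap\O|\geq t_n^{N-r}$, pass to a subsequence making the intervals $[t_n,2t_n]$ pairwise disjoint, and use monotonicity of the tube function on each.) Since all of these integrands are nonnegative for real $s$, absolute convergence depends on $s$ only through $\re s$, which completes the convergence claim.

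Next I would obtain holomorphy on the strip together with the differentiation formula~\eqref{mellin_der} by differentiating under the integral sign. For fixed $t>0$ the integrand $t^{s-N-1}|A_t\cap\O|$ is entire in $s$, with $s$-derivative $t^{s-N-1}|A_t\cap\O|\log t$ of modulus $t^{\re s-N-1}|A_t\cap\O|\,|\log t|$. On any closed substrip $\{a\leq\re s\leq b\}$ with $\ov{D}<a\leq b<N$ this modulus is dominated, uniformly in $s$, by a fixed $t$-integrable function: near $0$ by $C_r\,t^{a-r-1}|\log t|$ with $r\in(\ov{D},a)$, near $+\ty$ by $|\O|\,t^{b-N-1}|\log t|$, and by a constant on any compact $t$-interval bounded away from $0$; the extra factor $|\log t|$ changes nothing about the integrability of those powers. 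The standard holomorphy/differentiation theorem for parameter integrals then yields at once that $\zeta^{\mathfrak{M}}_{A,\O}$ is holomorphic on $\{\ov{D}<\re s<N\}$ and that its derivative is given by~\eqref{mellin_der}. (Equivalently, one may invoke the general holomorphy theorem for Dirichlet-type integrals in their strip of absolute convergence.)

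Finally, the two functional equations follow by a short computation. The same splitting gives, for $\ov{D}<\re s<N$,
\[
\zeta^{\mathfrak{M}}_{A,\O}(s)=\int_0^\d t^{s-N-1}|A_t\cap\O|\di t+|\O|\int_\d^{+\ty}t^{s-N-1}\di t=\widetilde{\zeta}_{A,\O}(s;\d)+\frac{\d^{s-N}|\O|}{N-s},
\]
which is~\eqref{mellin_tube_t}. Solving this for $\widetilde{\zeta}_{A,\O}(s;\d)$ and substituting into the relative distance--tube functional equation~\eqref{equ_tilde}, in which $|A_\d\cap\O|=|\O|$ since $\O\subseteq A_\d$, the two copies of $\d^{s-N}|\O|$ cancel and leave $\zeta_{A,\O}(s;\d)=(N-s)\,\zeta^{\mathfrak{M}}_{A,\O}(s)$, i.e.,~\eqref{mellin_dist} after dividing by $N-s\neq0$. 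The only step genuinely requiring care is the sharp identification of the strip of convergence --- specifically, the divergence of the head integral for $\re s\leq\ov{D}$; the cleanest route is to borrow the identity $D(\widetilde{\zeta}_{A,\O})=\ov{\dim}_B(A,\O)$ from Theorem~\ref{an_rel} and Remark~\ref{tube_holo} rather than re-derive it, after which everything reduces to bookkeeping.
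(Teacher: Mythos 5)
Your proof is correct, and since the paper does not prove this theorem but only cites \cite[Theorem~5.22]{ftf_A} for it, the comparison is against the referenced proof rather than anything in the text. The approach you take is the natural (and essentially forced) one: split the integral at $\d$, observe that for $t\geq\d$ the tube function is constant and equal to $|\O|$ so the tail is elementary, identify the head with $\widetilde{\zeta}_{A,\O}(\,\cdot\,;\d)$, appeal to the already-established abscissa identity $D(\widetilde{\zeta}_{A,\O})=\ov{\dim}_B(A,\O)$ for the lower edge of the strip, and apply dominated convergence for holomorphy and differentiation under the integral. The derivation of \eqref{mellin_dist} from \eqref{mellin_tube_t} and the functional equation \eqref{equ_tilde}, with the two $\d^{s-N}|\O|$ terms cancelling, is exactly right. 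This matches the structure of the proof in \cite{ftf_A}.

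Two small remarks, neither a gap. First, you correctly flag that the divergence of the tail for $\re s\geq N$ requires $|\O|>0$; this positivity is implicit throughout the paper (otherwise all the fractal zeta functions vanish identically and the assertion about the maximal strip of convergence fails degenerately), but it is good that you noticed that it is actually being used. Second, for the alternative direct argument that the head diverges when $\re s<\ov{D}$, your sketch via $\M^{*r}(A,\O)=+\ty$ for $r<\ov{D}$, a subsequence $t_n\downarrow 0$ with $|A_{t_n}\cap\O|\geq t_n^{N-r}$, pairwise disjoint intervals $[t_n,2t_n]$, and monotonicity of the tube function is sound; it reproduces the standard proof of the abscissa identity and is a legitimate self-contained replacement for the citation of Theorem~\ref{an_rel} and Remark~\ref{tube_holo}.
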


The principle of analytic continuation implies that the functional equations \eqref{mellin_dist}, \eqref{mellin_tube_t} continue to hold on any connected open set to which any of the involved zeta functions has an analytic continuation.
Hence, for an RFD $(A,\O)$ in $\eR^N$ and provided $\ov{\dim}_B(A,\O)<N$, it is equivalent to define the complex dimensions of an RFD as the poles of its Mellin zeta function.

\begin{theorem}\label{pole_mellin}
Assume that $(A,\O)$ is a nondegenerate RFD in $\eR^N$, 
that is, $0<\M_*^D(A,\O)\le{\M}^{*D}(A,\O)<\ty$ $($in particular, $D:=\dim_B(A,\O)$ exists$)$, 
and $D<N$.
If ${\zeta}^{\mathfrak{M}}_{A,\O}$ can be extended meromorphically to a connected open neighborhood of $s=D$,
then $D$ is necessarily a simple pole of ${\zeta}^{\mathfrak{M}}_{A,\O}$ and 
\begin{equation}\label{res_mellin}
\M_*^D(A,\O)\le\res({\zeta}^{\mathfrak{M}}_{A,\O},D)\le{\M}^{*D}(A,\O).
\end{equation}
Furthermore, if $(A,\O)$ is Minkowski measurable, then 
\begin{equation}\label{pole1minkg1_mellin}
\res({\zeta}^{\mathfrak{M}}_{A,\O}, D)=\M^D(A,\O).
\end{equation}
\end{theorem}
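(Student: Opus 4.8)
The plan is to reduce the statement to its already-established analogue for the relative tube zeta function, Theorem~\ref{pole1mink_tilde}, by means of the functional equation \eqref{mellin_tube_t} relating ${\zeta}^{\mathfrak{M}}_{A,\O}$ and $\widetilde{\zeta}_{A,\O}$. First I would fix $\d>0$ large enough that $\O\subseteq A_\d$. Since $D:=\dim_B(A,\O)$ exists by the nondegeneracy hypothesis and since $D<N$, the function $s\mapsto \d^{s-N}|\O|/(N-s)$ on the right-hand side of \eqref{mellin_tube_t} is holomorphic on a neighborhood of $s=D$ (its only singularity being the simple pole at $s=N\ne D$). Consequently, although \eqref{mellin_tube_t} is established in Theorem~\ref{mellin_an} only on the open vertical strip $\{\ov{\dim}_B(A,\O)<\re s<N\}$, whose left boundary is exactly the critical line $\{\re s=D\}$, it propagates by the principle of (meromorphic) continuation to any connected open neighborhood of $s=D$ to which either ${\zeta}^{\mathfrak{M}}_{A,\O}$ or $\widetilde{\zeta}_{A,\O}(\,\cdot\,;\d)$ extends meromorphically. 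In particular, the hypothesis that ${\zeta}^{\mathfrak{M}}_{A,\O}$ extends meromorphically to a connected open neighborhood $V$ of $D$ forces $\widetilde{\zeta}_{A,\O}(\,\cdot\,;\d)$ to do the same on $V$ (after shrinking $V$ to avoid $s=N$, which is harmless since $D<N$), and on $V$ the two functions differ by a function holomorphic at $s=D$; hence they have the same principal part --- in particular, the same pole order and the same residue --- at $s=D$.

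The next step is to invoke Theorem~\ref{pole1mink_tilde} directly: under the standing nondegeneracy assumption and given the meromorphic extendability of $\widetilde{\zeta}_{A,\O}(\,\cdot\,;\d)$ near $D$ just obtained, that theorem yields that $D$ is a simple pole of $\widetilde{\zeta}_{A,\O}$, that $\res(\widetilde{\zeta}_{A,\O},D)$ does not depend on $\d$, and that $\M_*^D(A,\O)\le\res(\widetilde{\zeta}_{A,\O},D)\le\M^{*D}(A,\O)$, with all three quantities equal to $\M^D(A,\O)$ when $(A,\O)$ is Minkowski measurable. Combining this with the equality of principal parts from the first paragraph, I would conclude that $D$ is a simple pole of ${\zeta}^{\mathfrak{M}}_{A,\O}$ with $\res({\zeta}^{\mathfrak{M}}_{A,\O},D)=\res(\widetilde{\zeta}_{A,\O},D)$, which immediately gives \eqref{res_mellin} and, in the Minkowski measurable case, \eqref{pole1minkg1_mellin}. (That $\res({\zeta}^{\mathfrak{M}}_{A,\O},D)$ is independent of $\d$ is in any case clear directly, since the left-hand side of \eqref{mellin_tube_t} does not involve $\d$.)

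An equivalent route, which I would mention as a sanity check, uses instead the companion functional equation \eqref{mellin_dist}, namely ${\zeta}^{\mathfrak{M}}_{A,\O}(s)=\zeta_{A,\O}(s;\d)/(N-s)$: since $D<N$, the factor $1/(N-s)$ is holomorphic and nonzero near $s=D$, so ${\zeta}^{\mathfrak{M}}_{A,\O}$ and $\zeta_{A,\O}(\,\cdot\,;\d)$ have the same poles near $D$, $D$ is a simple pole of one if and only if it is a simple pole of the other, and $\res({\zeta}^{\mathfrak{M}}_{A,\O},D)=\res(\zeta_{A,\O},D)/(N-D)$; the conclusion then follows from Theorem~\ref{pole1} together with \eqref{res} and \eqref{pole1minkg1=}, upon dividing through by $N-D>0$.

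I do not expect a genuine obstacle here. The only point that requires a little care is the justification that \eqref{mellin_tube_t} (respectively, \eqref{mellin_dist}), valid a priori only on the open half-strip to the right of the critical line, in fact holds on a full (two-sided) neighborhood of $s=D$ once the meromorphic extension has been carried out --- but this is precisely the content of the remark immediately preceding the statement of the theorem, so nothing new is needed. Everything else amounts to a transparent transfer of Theorems~\ref{pole1} and \ref{pole1mink_tilde}, together with the elementary observation that near $s=D$ the Mellin zeta function differs from the tube (resp.\ distance) zeta function only by a summand that is holomorphic at $s=D$ (resp.\ by a factor that is holomorphic and nonvanishing at $s=D$).
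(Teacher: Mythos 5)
Your proposal is correct and is precisely the argument the paper has in mind: Theorem~\ref{pole_mellin} is a direct corollary of the functional equations \eqref{mellin_tube_t} and \eqref{mellin_dist} (extended by analytic continuation to a neighborhood of $D$, as noted in the paragraph preceding the theorem) together with Theorems~\ref{pole1mink_tilde} and \ref{pole1}, since for $D<N$ the Mellin zeta function differs from the tube zeta function by a term holomorphic at $s=D$, and from the distance zeta function by the nonvanishing holomorphic factor $1/(N-s)$. Both of your routes are valid, and your observation that the residue is automatically $\d$-independent because the left side of \eqref{mellin_tube_t} does not involve $\d$ is a clean way to see that point.
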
 

Moreover, in order to be able to state the distributional tube formula for relative fractal drums in terms of the Mellin zeta function, we need to recall the notion of distributional asymptotics (see~\cite{EsKa,jm,PiStVi} and also~\cite[Definition~5.29]{lapidusfrank12}).

Let $\mathcal{D}(0,+\ty):=C_c^{\ty}(0,+\ty)$ be the Schwartz space of infinitely differentiable functions with compact support contained in $(0,+\ty)$.
For a test function $\varphi\in\mathcal{D}(0,+\ty)$ and $a>0$, we let
\begin{equation}\label{varphi_a}
\varphi_a(t):=\frac{1}{a}\varphi\left(\frac{t}{a}\right).
\end{equation}
%Observe that $\int_0^{+\ty}\varphi_a(t)\di t=\int_0^{+\ty}\varphi(t)\, \di t$, for every $a>0$.
%Furthermore, observe that for a given test function $\varphi$, the support of $\varphi_a$ becomes `narrower' and `moves closer' to zero when $a\to0^+$, while the amplitude of $\varphi_a$ tends to infinity in absolute value.
%On the other hand, when $a\to +\ty$, then the support of $\varphi_a$ becomes `wider' and `escapes' to infinity, while the amplitude of $\varphi_a$ tends to zero in absolute value.

\begin{definition}\label{dist_order_dis}
Let $\mathcal{R}$ be a distribution in $\mathcal D'(0,\d)$ and let $\alpha\in\eR$.
We say that $\mathcal R$ is of {\em asymptotic order} at most $t^{\alpha}$ (resp., less than $t^{\alpha}$) as $t\to 0^+$ if applied to an arbitrary test function $\varphi_a$ in $\mathcal{D}(0,\d)$, we have that\footnote{In this formula, the implicit constant may depend on the test function $\varphi$.}
\begin{equation}
\langle\mathcal R,\varphi_a\rangle=O(a^{\alpha})\quad(\textrm{resp., }\ \langle\mathcal R,\varphi_a\rangle=o(a^{\alpha})),\quad\textrm{as}\quad a\to 0^+.
\end{equation}
We then write that $\mathcal R(t)=O(t^{\alpha})$ (resp., $\mathcal R(t)=o(t^{\alpha})$),\quad as\quad $a\to 0^+$.
\end{definition}
Note that if a continuous function $f$ satisfies the classical pointwise asymptotics, then $f$ also satisfies the same asymptotics, in the distributional sense of Definition~\ref{dist_order_dis}.

\begin{theorem}[Distributional fractal tube formula with error term, via $\zeta_{A,\O}^{\mathfrak M}$; level $k=0$]\label{dist_error_mellin}
Let $(A,\O)$ be a relative fractal drum in $\eR^N$ such that $\overline{\dim}_B(A,\O)<N$.
Furthermore, assume that $\zeta_{A,\O}^{\mathfrak M}$ satisfies the languidity conditions for some $\kappa\in\eR$ and $\d>0$.
Then, the regular distribution $\mathcal{V}_{A,\O}(t):=|A_t\cap\O|$ in $\mathcal{D}'(0,+\ty)$ is given by the following distributional identity in $\mathcal{D}'(0,+\ty)$$:$
\begin{equation}\label{dist_form_error_mellin}
|A_t\cap\O|=\sum_{\omega\in\po({\zeta}_{A,\O}^{\mathfrak{M}},\bm{W})}\res\left({t^{N-s}}{\zeta}_{A,\O}^{\mathfrak{M}}(s),\omega\right)+\mathcal{R}^{{\mathfrak M}}_{A,\O}(t).
\end{equation}
That is, the action of $\mathcal{V}_{A,\O}$ on an arbitrary test function $\varphi\in\mathcal{D}(0,+\ty)$ is given by
\begin{equation}\label{error_action_mellin}
\begin{aligned}
\big\langle\mathcal{V}_{A,\O},\varphi\big\rangle&=\sum_{\omega\in\po({\zeta}_{A,\O}^{\mathfrak{M}},\bm{W})}\res\Big({\{\mathfrak{M}\varphi\}(N\!-\! s\!+\! 1)\,{\zeta}_{A,\O}^{\mathfrak{M}}(s)},\omega\Big)+\big\langle\mathcal{R}^{{\mathfrak M}}_{A,\O},\varphi\big\rangle.
\end{aligned}
\end{equation}
Here, the distributional error term $\mathcal{R}^{{\mathfrak M}}_{A,\O}$ is the distribution in $\mathcal{D}'(0,+\ty)$ given for all $\varphi\in\mathcal{D}(0,+\ty)$ by
\begin{equation}\label{R_distr_mellin}
\big\langle\mathcal{R}^{{\mathfrak M}}_{A,\O},\varphi\big\rangle=\frac{1}{2\pi\I}\int_{\bm{S}}{\{\mathfrak{M}\varphi\}(N\!-\! s\!+\! 1)\,{\zeta}_{A,\O}^{\mathfrak{M}}(s)}\di s.
\end{equation}
Furthermore, the distribution $\mathcal{R}^{{\mathfrak M}}_{A,\O}(t)$ is of asymptotic order at most $t^{N-\sup S}$ as $t\to 0^+$; i.e.,
\begin{equation}
\mathcal{R}^{{\mathfrak M}}_{A,\O}(t)=O(t^{N-\sup S})\quad\mathrm{ as }\quad t\to0^+,
\end{equation}
in the sense of Definition~\ref{dist_order_dis}.

Moreover, if $S(\tau) < \sup S$ for all $\tau\in\eR$ $($that is, if the screen $\bm{S}$ lies strictly
to the left of the vertical line $\{\re s =\sup S\}$$)$, then $\mathcal{R}^{{\mathfrak M}}_{A,\O}(t)$ is of asymptotic order less than $t^{N-\sup S}$; i.e.,
\begin{equation}\label{dist_estimate_o_mellin}
\mathcal{R}^{{\mathfrak M}}_{A,\O}(t)=o(t^{N-\sup S})\quad\mathrm{ as }\quad t\to0^+,
\end{equation} 
again in the sense of Definition~\ref{dist_order_dis}.
\end{theorem}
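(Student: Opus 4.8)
The distributional formula \eqref{dist_form_error_mellin}--\eqref{R_distr_mellin} is obtained by recovering $|A_t\cap\O|$ from $\zeta_{A,\O}^{\mathfrak M}$ via an inverse Mellin transform and then deforming the contour of integration leftward onto the screen $\bm S$, collecting residues along the way. Since we work distributionally, everything is tested against an arbitrary $\varphi\in\mathcal D(0,+\ty)$. By Theorem~\ref{mellin_an}, $\zeta_{A,\O}^{\mathfrak M}$ equals $\{\mathfrak M f\}$ with $f(t):=t^{-N}|A_t\cap\O|$, is holomorphic on the strip $\{\ov{\dim}_B(A,\O)<\re s<N\}$, and the defining integral converges absolutely there. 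Fix $c$ with $\ov{\dim}_B(A,\O)<c<N$. Repeated integration by parts shows that the Mellin transform of $\varphi$ is entire and decays faster than any negative power of $|\im s|$ on every vertical line, and the same then holds for $s\mapsto\{\mathfrak M\varphi\}(N-s+1)$. The Mellin--Parseval identity, together with $\{\mathfrak M(t^N\varphi)\}(1-s)=\{\mathfrak M\varphi\}(N-s+1)$, gives
\[
\langle\mathcal V_{A,\O},\varphi\rangle=\int_0^{+\ty}f(t)\,\big(t^N\varphi(t)\big)\,\di t=\frac{1}{2\pi\I}\int_{\re s=c}\{\mathfrak M\varphi\}(N-s+1)\,\zeta_{A,\O}^{\mathfrak M}(s)\,\di s,
\]
the integral on the right converging absolutely thanks to the absolute convergence of $\{\mathfrak M f\}(c+\I\tau)$ combined with the rapid decay of $\{\mathfrak M\varphi\}$. (Alternatively, one could deduce the result from the tube-zeta version of the distributional tube formula in \cite{ftf_A} via the functional equation \eqref{mellin_tube_t}; I give the self-contained argument instead.)

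\textbf{Contour shift and residues.} Next I would shift the contour from $\{\re s=c\}$ onto $\bm S$, integrating $\{\mathfrak M\varphi\}(N-s+1)\,\zeta_{A,\O}^{\mathfrak M}(s)$ around the boundary of the truncated box with horizontal sides at the heights $T_n$, $T_{-n}$ supplied by the languidity hypothesis, and vertical sides $\{\re s=c\}$ and $\{\re s=S(\im s)\}$. By the residue theorem the box integral equals $2\pi\I$ times the sum of the residues of the integrand at the enclosed poles, which are exactly the visible complex dimensions $\omega\in\po(\zeta_{A,\O}^{\mathfrak M},\bm W)$ (the factor $\{\mathfrak M\varphi\}(N-s+1)$ contributing no poles). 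Hypothesis \textbf{L1} bounds $\zeta_{A,\O}^{\mathfrak M}$ on the horizontal segments by $C(|T_n|+1)^{\kappa}$, so together with the super-polynomial decay of $\{\mathfrak M\varphi\}$ the horizontal integrals tend to $0$ as $n\to\ty$; hypothesis \textbf{L2} bounds $\zeta_{A,\O}^{\mathfrak M}$ on $\bm S$ by $C|\tau|^{\kappa}$, so the screen integral converges absolutely and the residue series converges absolutely as well. Passing to the limit yields
\[
\langle\mathcal V_{A,\O},\varphi\rangle=\sum_{\omega\in\po(\zeta_{A,\O}^{\mathfrak M},\bm W)}\res\!\big(\{\mathfrak M\varphi\}(N-s+1)\,\zeta_{A,\O}^{\mathfrak M}(s),\omega\big)+\frac{1}{2\pi\I}\int_{\bm S}\{\mathfrak M\varphi\}(N-s+1)\,\zeta_{A,\O}^{\mathfrak M}(s)\,\di s,
\]
which is \eqref{error_action_mellin} with $\langle\mathcal R^{\mathfrak M}_{A,\O},\varphi\rangle$ given by \eqref{R_distr_mellin}. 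The pointwise distributional identity \eqref{dist_form_error_mellin} then follows by recognizing, through $\int_0^{+\ty}t^{N-\omega}\varphi(t)\,\di t=\{\mathfrak M\varphi\}(N-\omega+1)$ and the Leibniz rule for residues, that each term $\res(\{\mathfrak M\varphi\}(N-s+1)\zeta_{A,\O}^{\mathfrak M}(s),\omega)$ is the action on $\varphi$ of the distribution $t\mapsto\res(t^{N-s}\zeta_{A,\O}^{\mathfrak M}(s),\omega)$.

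\textbf{Error estimates.} For the asymptotics of $\mathcal R^{\mathfrak M}_{A,\O}$ I would test against $\varphi_a(t)=a^{-1}\varphi(t/a)$. Since $\{\mathfrak M\varphi_a\}(s)=a^{s-1}\{\mathfrak M\varphi\}(s)$, one has $\{\mathfrak M\varphi_a\}(N-s+1)=a^{N-s}\{\mathfrak M\varphi\}(N-s+1)$, hence
\[
\langle\mathcal R^{\mathfrak M}_{A,\O},\varphi_a\rangle=\frac{1}{2\pi\I}\int_{\bm S}a^{N-s}\,\{\mathfrak M\varphi\}(N-s+1)\,\zeta_{A,\O}^{\mathfrak M}(s)\,\di s.
\]
On $\bm S$ we have $\re s=S(\im s)\le\sup S<N$, so $|a^{N-s}|=a^{N-S(\im s)}\le a^{N-\sup S}$ for $0<a\le 1$; pulling this bound out of the absolutely convergent integral gives $\langle\mathcal R^{\mathfrak M}_{A,\O},\varphi_a\rangle=O(a^{N-\sup S})$, i.e. $\mathcal R^{\mathfrak M}_{A,\O}(t)=O(t^{N-\sup S})$ in the sense of Definition~\ref{dist_order_dis}. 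If moreover $S(\tau)<\sup S$ for every $\tau$, then $a^{-(N-\sup S)}|a^{N-s}|=a^{\sup S-S(\im s)}\to 0$ pointwise as $a\to0^+$ while remaining bounded by $1$, so applying the dominated convergence theorem to $a^{-(N-\sup S)}\langle\mathcal R^{\mathfrak M}_{A,\O},\varphi_a\rangle$ (with integrable dominating function $|\{\mathfrak M\varphi\}(N-s+1)\zeta_{A,\O}^{\mathfrak M}(s)|$ on $\bm S$) gives $\mathcal R^{\mathfrak M}_{A,\O}(t)=o(t^{N-\sup S})$.

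\textbf{Main obstacle.} The delicate point is the rigorous justification of the contour shift in the distributional setting: one must simultaneously control the growth of $\zeta_{A,\O}^{\mathfrak M}$ on the horizontal truncations and on the screen via \textbf{L1}--\textbf{L2}, match this against the precise super-polynomial decay of $\{\mathfrak M\varphi\}$ along vertical lines obtained from integration by parts so that all integrals converge absolutely and the horizontal pieces vanish in the limit, and treat poles of order greater than one correctly when converting the ``smeared'' residues $\res(\{\mathfrak M\varphi\}(N-s+1)\zeta_{A,\O}^{\mathfrak M}(s),\omega)$ into the pointwise distributions $\res(t^{N-s}\zeta_{A,\O}^{\mathfrak M}(s),\omega)$. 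The interchange of summation over $\po(\zeta_{A,\O}^{\mathfrak M},\bm W)$ and integration (when this set is infinite) also rests on the decay of $\{\mathfrak M\varphi\}$, but once the languidity estimates are in hand it is routine.
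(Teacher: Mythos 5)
Your proof is correct and follows essentially the same Mellin-inversion-plus-contour-shift strategy that underlies the proof in the cited references [LapRa\v Zu1,8]: the Parseval-type identity $\langle\mathcal V_{A,\O},\varphi\rangle=\frac{1}{2\pi\I}\int_{\re s=c}\{\mathfrak M\varphi\}(N-s+1)\zeta_{A,\O}^{\mathfrak M}(s)\,\di s$, the truncated residue computation with \textbf{L1}--\textbf{L2} controlling the horizontal and screen integrals against the super-polynomial decay of $\{\mathfrak M\varphi\}$, the identification of the smeared residues $\res(\{\mathfrak M\varphi\}(N-s+1)\zeta_{A,\O}^{\mathfrak M}(s),\omega)$ with the distributions $\res(t^{N-s}\zeta_{A,\O}^{\mathfrak M}(s),\omega)$ via the Leibniz rule, and the scaling identity $\{\mathfrak M\varphi_a\}(N-s+1)=a^{N-s}\{\mathfrak M\varphi\}(N-s+1)$ combined with dominated convergence for the $O$ and $o$ error estimates. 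The only slight overreach is the claim that the residue series converges \emph{absolutely} in full generality: what the contour argument actually delivers is convergence of the symmetric partial sums along the sequence $(T_{\pm n})$; upgrading this to absolute convergence requires an additional polynomial bound on the coefficients of the principal parts at the poles (as is established, for example, in \cite[Chapter 5]{lapidusfrank12} and \cite{ftf_A}), and this is not something your rapid-decay argument for $\{\mathfrak M\varphi\}$ provides on its own.
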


We are now ready to state and prove the aforementioned necessary condition for Minkowski measurability of relative fractal drums.
We point out that in the following theorem, the phrase that the Mellin zeta function $\zeta_{A,\O}^{\mathfrak{M}}$ is languid means that it satisfies the languidity conditions of Definition \ref{languid} for some {\em languidity exponent} $\kappa\in\eR$, with the exception that in condition {\bf L1}, we now assume that $c\in(\ov{\dim}_B(A,\O),N)$.

\begin{theorem}[Necessary condition for Minkowski measurability]\label{necess}
Let $(A,\O)$ be a relative fractal drum in $\eR^N$ such that $D:=\dim_B(A,\O)$ exists, $D<N$ and $(A,\O)$ is Minkowski measurable.
Furthermore, assume that its Mellin zeta function $\zeta_{A,\O}^{\mathfrak{M}}$ is languid for some screen $\bm{S}$ passing strictly to the left of the critical line $\{\re s=D\}$ and strictly to the right of all the complex dimensions of $(A,\O)$ with real part strictly less than $D$.
 
Then, $D$ is the only pole of $\zeta_{A,\O}^{\mathfrak{M}}$ located on the critical line $\{\re s=D\}$ and it is simple.
\end{theorem}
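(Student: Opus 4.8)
\emph{The plan} is to deduce the statement from the distributional fractal tube formula expressed via the Mellin zeta function (Theorem~\ref{dist_error_mellin}), together with the hypothesis of Minkowski measurability and the uniqueness theorem for almost periodic distributions. I would begin by disposing of the simplicity assertion: a Minkowski measurable RFD is \emph{a fortiori} Minkowski nondegenerate, so that $0<\M^D_*(A,\O)\le\M^{*D}(A,\O)<\ty$. Since $\zeta_{A,\O}^{\mathfrak M}$ is languid relative to a screen $\bm S$ lying strictly to the left of the critical line $\{\re s=D\}$, it admits a (necessarily unique) meromorphic continuation to an open connected neighborhood of the associated window $\bm W$, which contains $\{\re s=D\}$, and hence the point $s=D$. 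Theorem~\ref{pole_mellin} then immediately gives that $D$ is a \emph{simple} pole of $\zeta_{A,\O}^{\mathfrak M}$, with $\res(\zeta_{A,\O}^{\mathfrak M},D)=\M^D(A,\O)>0$. So the substance of the theorem is to show that $D$ is the \emph{only} pole of $\zeta_{A,\O}^{\mathfrak M}$ on $\{\re s=D\}$.

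To this end I would first locate all poles visible through $\bm W$. By hypothesis $\bm S$ passes strictly to the right of every complex dimension of $(A,\O)$ with real part $<D$, so no such pole lies in $\bm W$; and since $\ov{\dim}_B(A,\O)=D<N$, Theorem~\ref{mellin_an} shows that $\zeta_{A,\O}^{\mathfrak M}$ is holomorphic on the strip $\{D<\re s<N\}$. Choosing the window to be a thin enough tube around the critical line together with the region between it and $\bm S$, we obtain $\po(\zeta_{A,\O}^{\mathfrak M},\bm W)=\dim_{PC}(A,\O)$, the multiset of poles of $\zeta_{A,\O}^{\mathfrak M}$ on $\{\re s=D\}$ (which contains $D$, by the previous paragraph). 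Applying Theorem~\ref{dist_error_mellin} then yields the distributional identity in $\mathcal D'(0,+\ty)$
\[
|A_t\cap\O|=\sum_{\omega\in\dim_{PC}(A,\O)}\res\!\big(t^{N-s}\,\zeta_{A,\O}^{\mathfrak M}(s),\,\omega\big)+\mathcal R^{\mathfrak M}_{A,\O}(t),
\]
where, since $\sup S<D$, the error term obeys $\mathcal R^{\mathfrak M}_{A,\O}(t)=O(t^{N-\sup S})=o(t^{N-D})$ as $t\to0^+$, in the sense of Definition~\ref{dist_order_dis}. Each residue equals $t^{N-D}\,P_\omega(\log t)\,t^{-\I(\im\omega)}$, where $P_\omega$ is a polynomial of degree one less than the multiplicity of $\omega$; in particular $P_D\equiv\M^D(A,\O)$ since $D$ is simple, $\im D=0$, and the remaining exponents $\im\omega$ are pairwise distinct and nonzero.

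Next I would subtract the classical asymptotics and rescale. Minkowski measurability means $|A_t\cap\O|=\M^D(A,\O)\,t^{N-D}+o(t^{N-D})$ as $t\to0^+$ pointwise, hence also distributionally (by the remark following Definition~\ref{dist_order_dis}). Subtracting this from the tube formula, multiplying by the smooth positive function $t^{-(N-D)}$, and substituting $x:=\log(1/t)$ (so $t\to0^+$ becomes $x\to+\ty$), one is left with the distributional identity
\[
\sum_{\omega\in\dim_{PC}(A,\O),\ \omega\ne D}Q_\omega(x)\,\E^{\I(\im\omega)x}=o(1)\qquad\text{as }x\to+\ty,
\]
where $Q_\omega(x):=P_\omega(-x)$ has the same degree as $P_\omega$. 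In view of the languidity hypothesis, which controls the growth of the residues $\res(\zeta_{A,\O}^{\mathfrak M},\omega)$ along the critical line, the left-hand side is an almost periodic distribution whose Fourier--Bohr spectrum is contained in $\{\im\omega:\omega\in\dim_{PC}(A,\O)\setminus\{D\}\}$. By the uniqueness theorem for almost periodic distributions, such a distribution that tends to $0$ at infinity must be identically $0$; extracting its Fourier--Bohr coefficients (taking into account the polynomial amplitudes) then forces $Q_\omega\equiv0$ for every $\omega\ne D$. Hence $\dim_{PC}(A,\O)=\{D\}$, which is exactly the assertion that $D$ is the only pole of $\zeta_{A,\O}^{\mathfrak M}$ on the critical line; combined with the first paragraph, this finishes the proof.

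\emph{The main obstacle} is this last step: making rigorous, in the distributional framework, the implication ``almost periodic distribution that is $o(1)$ at infinity $\Rightarrow$ identically $0$'', and then ``$\Rightarrow$ all amplitudes vanish'', when the critical line may carry infinitely many complex dimensions and/or complex dimensions of multiplicity $\ge 2$ (whose residue terms carry polynomial factors $(\log t)^{m-1}$). The polynomial amplitudes would be handled by first isolating the terms of maximal polynomial degree, applying the uniqueness theorem to them, deriving a contradiction with the non-vanishing of the leading Laurent coefficients, and iterating down to the case of constant amplitudes; the infinitude of frequencies is absorbed by the almost periodicity, which itself relies on the languidity estimates. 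This is the scheme used in the one-dimensional setting in \cite[Chapter~8]{lapidusfrank12}, and the details in the present generality are carried out in \cite[Chapter~5]{fzf}.
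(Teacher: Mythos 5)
Your overall strategy coincides with the paper's: establish the simplicity of the pole at $D$ via Theorem~\ref{pole_mellin}, apply the distributional tube formula of Theorem~\ref{dist_error_mellin} with the sharp error estimate $o(t^{N-D})$, subtract the asymptotics furnished by Minkowski measurability, rescale, and invoke the uniqueness theorem for almost periodic distributions. However, there is a genuine gap precisely where you flag ``the main obstacle,'' and the paper closes it by a short argument that you have missed rather than by the iterative scheme you sketch.

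The missing observation is elementary but decisive: from the defining integral \eqref{mellin_zeta_1}, one has $|\zeta_{A,\O}^{\mathfrak M}(s)|\le\zeta_{A,\O}^{\mathfrak M}(\re s)$ for all $s$ in the strip $\{D<\re s<N\}$. Since $D$ is already known to be a \emph{simple} pole (your first paragraph), letting $s=\sigma+\I\gamma$ with $\sigma\to D^{+}$ shows that for any putative pole $\xi=D+\I\gamma$ of order $m$, the left-hand side would blow up like $(\sigma-D)^{-m}$ while the right-hand side blows up only like $(\sigma-D)^{-1}$; hence $m=1$. Thus \emph{every} pole on the critical line is automatically simple, the polynomial amplitudes $P_\omega$ are constants, and the rescaled identity reduces directly to $\sum_{n}a_n\,t^{-\I\gamma_n}=o(1)$ as $t\to0^+$, to which the uniqueness theorem for almost periodic distributions applies cleanly. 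Your proposed workaround — iterating over decreasing polynomial degree — is therefore unnecessary, and as written it is not a proof: you would need to justify that the ``top-degree layer'' of the (possibly infinite) sum is itself an almost periodic distribution with well-defined Fourier--Bohr coefficients, which is exactly the delicate point the modulus inequality lets you avoid. Incorporate that inequality and the rest of your argument matches the paper's.
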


\begin{proof}
Since $(A,\O)$ is languid, the hypotheses of Theorem~\ref{pole_mellin} are satisfied and, therefore, $s=D$ is a simple pole of $\zeta_{A,\O}^{\mathfrak{M}}$.
Furthermore, also by Theorem \ref{pole_mellin}, we have that $\mathcal{M}:=\mathcal{M}^D(A,\O)=\res(\zeta_{A,\O}^{\mathfrak{M}},D).$
All we need to show is that this is the only pole located on the critical line.
Firstly, we can see clearly from the definition of the Mellin zeta function given in Equation \eqref{mellin_zeta_1} that $|\zeta_{A,\O}^{\mathfrak{M}}(s)|\leq\zeta_{A,\O}^{\mathfrak{M}}(\re s)$, for all $s\in\{D<\re s<N\}$.
This inequality implies that if $\xi$ is another pole of $\zeta_{A,\O}^{\mathfrak{M}}$ with $\re\xi=D$, then it has to be simple.

Next, let us denote by $\xi_n=D+\I\gamma_n$, with $\gamma_n\in\eR$ and $n\in J$, the potentially infinite but at most counable sequence of poles of $\zeta_{A,\O}^{\mathfrak{M}}$ with real part $D$ (i.e., of principal poles of $\zeta_{A,\O}^{\mathfrak{M}}$).
Here, $J\subseteq\eN_0$ is a finite or infinite subset of $\eN_0$, $0\in J$, and we use the convention according to which $\gamma_0:=0$ and hence, $\xi_0:=D$.
Note also that, by hypothesis, we have that $\gamma_n\neq 0$ for all $n\in J\setminus\{0\}$.

Observe also that in light of the argument given in the first part of the proof, each principal pole $\xi_n$ is then also {\em simple}, for every $n\in J\setminus\{0\}$; so that we can let $a_n:=\res(\zeta_{A,\O}^{\mathfrak M},\xi_n)$, for every $n\in J$. 
%Furthermore, by Equation \eqref{5.4.35.1/2.1/2} in Lemma \eqref{reziduuumi_mellin} combined with the obvious counterpart of Equation \eqref{pole1minkg1=} in Theorem \eqref{pole1}, 
Furthermore, recall that $a_0=\res(\zeta_{A,\O}^{\mathfrak M},D)=\mathcal M$, the Minkowski content of $(A,\O)$ since we assumed that $(A,\O)$ is Minkowski measurable.

Let us now assume that $J\setminus\{0\}$ is nonempty and see why this leads to a contradiction.
Namely, in light of Theorem \ref{dist_error_mellin} applied with the stronger error estimate given by \eqref{dist_estimate_o_mellin} and for the same choice of the screen $\bm{S}$ as assumed to exist in the statement of that theorem (and which also exists by hypothesis of the present theorem), we have that
%We next reason by contradiction; that is, let us assume that there exist other simple poles $\xi_n=D+\I\gamma_n$ of $\zeta_{A,\O}^{\mathfrak{M}}$ for $\gamma_n\in\eR$, $n\in\eN$ and let $a_n:=\res(\zeta_{A,\O}^{\mathfrak{M}},\xi_n)$.
%Then, by Theorem~\ref{dist_error_mellin}, we have that
\begin{equation}\label{A_t_mellin}
\begin{aligned}
|A_t\cap\O|&=\sum_{n\in J}a_nt^{N-\xi_n}+o(t^{N-D})\\
&=\mathcal{M}t^{N-D}+t^{N-D}\sum_{n\in J\setminus\{0\}}a_nt^{-\I\gamma_n} +o(t^{N-D})\quad \textrm{as}\quad t\to 0^+,
\end{aligned} 
\end{equation}
in the distributional sense since, by assumption, the screen $\bm{S}$ lies strictly to the left of the critical line $\{\re s=D\}$.

On the other hand, since $(A,\O)$ is assumed to be Minkowski measurable, we have that
\begin{equation}\label{A_t_func_m}
|A_t\cap\O|=\mathcal{M}t^{N-D}+o(t^{N-D})\quad\textrm{as}\quad t\to 0^+,
\end{equation}
in the usual pointwise sense and hence, also in the distributional sense.
Combining~\eqref{A_t_mellin} with~\eqref{A_t_func_m} together yields that
\begin{equation}\label{almost_per_a}
\sum_{n\in J\setminus\{0\}}a_nt^{-\I\gamma_n}=o(1)\quad \textrm{as}\quad t\to 0^+,
\end{equation}
in the distributional sense.
%After a (distributional) change of variable $\tau:=\log t$,
In light of the uniqueness theorem for almost periodic distributions (see \cite[\S{VI.9.6}, p.\ 208]{Schw}), Equation  \eqref{almost_per_a} can only be true if $a_n=0$ for all $n\in J\setminus\{0\}$, which contradicts our assumption.
We therefore conlude that there are no other poles on the critical line, except for $s=D$, as we needed to show. 
%Since $\mathcal{D}(0,\ty)\subseteq\mathcal{K}(0,\ty)$, we may now use Lemma \ref{the_lemma} to conclude that this can only be true if $a_{n}=0$ for all $n\in\eN$; that is, if there are no other poles on the critical line except $s=D$.
\end{proof}

\begin{remark}\label{tube_dist_nec}
We point out that the above theorem can also be stated in terms of the relative distance zeta function of $(A,\O)$.
This follows from the fact that Proposition \ref{propB} combined with the functional equation \eqref{mellin_dist} connecting the relative distance zeta function and the Mellin zeta function of $(A,\O)$ imply that if the $d$-languidity conditions {\bf L1} and {\bf L2} are satisfied by the distance zeta functions, then they are also satisfied by the Mellin zeta function with a possibly different languidity exponent. 
\end{remark}

\begin{remark}\label{5.4.16.1/4}
It is clear from the proof of Theorem \ref{necess} that it would suffice to assume in the statement of that theorem that the RFD $(A,\O)$ is Minkowski measurable {\em in the distributional sense} (which specifically means in the present context that Equation \eqref{A_t_func_m} above holds as a distributional identity in $\mathcal{D}'(0,+\ty)$, with $\mathcal{M}\in(0,+\ty)$).
\end{remark}

In light of the previous remark, we introduce the following definition.

\begin{definition}({\em Weak vs. strong Minkowski measurability}).\label{5.4.16.1/2}

\medskip 

$(i)$\ \ A relative fractal drum $(A,\O)$ such that $D:=\dim_{B}(A,\O)$ exists is said to be {\em Minkowski measurable, in the distributional sense} (or {\em weakly Minkowski measurable}, in short) if there exist a constant $\mathcal{M}\in(0,+\ty)$ such that, in the sense of distributional asymptotics (see Definition \ref{dist_order_dis}),
\begin{equation}\label{5.4.47.1/4}
\lim_{t\to 0^+}t^{-(N-D)}|A_t\cap\O|=\mathcal{M},\ \ \textrm{in}\ \ \mathcal{K}'(0,+\ty);
\end{equation}
i.e., for every $\varphi\in\mathcal{K}(0,+\ty)$,

\begin{equation}\label{5.4.47.1/2}
\begin{aligned}
\lim_{a\to 0^+}\int_0^{+\ty}t^{-(N-D)}|A_t\cap\O|\varphi_{a}(t)\di t&=\mathcal{M}\lim_{a\to 0^+}\int_0^{+\ty}\varphi_{a}(t)\di t\\
&=\mathcal{M}\int_0^{+\ty}\varphi(t)\di t.
\end{aligned}
\end{equation}
Here, as before, $\varphi_{a}$ is the scaled version of $\varphi$ defined by \eqref{varphi_a}.
Then, $\mathcal{M}$ is called the {\em weak Minkowski content} of the RFD $(A,\O)$.

\medskip

$(ii)$\ \ Much as in part $(i)$ of this definition, we can say that a relative fractal drum $(A,\O)$ is {\em strongly Minkowski measurable} (with {\em strong Minkowski content $\mathcal{M}$}) if it is Minkowski measurable in the usual (pointwise) sense; i.e., if there exists a constant $\mathcal{M}\in(0,+\ty)$ such that
\begin{equation}\label{5.4.47.3/4}
\lim_{t\to 0^+}t^{-(N-D)}|A_t\cap\O|=\mathcal{M},\ \ \textrm{in}\ \ \eR.
\end{equation}
\end{definition}

It is clear that if $(A,\O)$ is strongly Minkowski measurable, it is also weakly Minkowski measurable and then, the strong and weak Minkowski contents of $(A,\O)$ coincide.
Note also that we could similarly distinguish between weak and strong (or ordinary) Minkowski nondegeneracy, for example, although this definition will not be needed in the sequel.

\medskip

We point out here that the notion of Minkowski measurability being characterized in all of the criteria stated in this article (namely, Theorem 3.1, Theorem \ref{criterion}, Theorem \ref{tilde_criterion} and Corollary \ref{5.4.20.1/4}) {\em is always the notion of strong $($or ordinary$)$ Minkowski measurability}, 
in the sense of part $(ii)$ of Definition \ref{5.4.16.1/2} above.

\medskip

Finally, we are now ready to state the announced Minkowski measurability criterion, which follows directly from Theorems \ref{mink_char} and \ref{necess}.

\begin{theorem}[Minkowski measurability criterion in terms of $\zeta_{A,\O}$]\label{criterion}
Let $(A,\O)$ be a relative fractal drum in $\eR^N$ such that $D:=\dim_B(A,\O)$ exists and $D<N$.
Furthermore, assume that $(A,\O)$ is $d$-languid for a screen passing {\rm strictly} between the critical line $\{\re s=D\}$ and all the complex dimensions of $(A,\O)$ with real part strictly less than $D$.
Then the following statements are equivalent$:$

\medskip

$(a)$ The RFD $(A,\O)$ is $($strongly$)$ Minkowski measurable.

\medskip

$(b)$ $D$ is the only pole of the relative distance zeta function ${\zeta}_{A,\O}$ located on the critical line $\{\re s=D\}$ and it is simple.
\end{theorem}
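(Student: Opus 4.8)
The plan is to obtain Theorem \ref{criterion} by splicing together the two one-directional results already established in this section: the sufficient condition of Theorem \ref{mink_char} will give $(b)\Rightarrow(a)$, and the necessary condition of Theorem \ref{necess} will give $(a)\Rightarrow(b)$. Before doing so, I would record two preliminary facts that hold under the standing hypotheses. First, since $D:=\dim_B(A,\O)$ exists, we have $\ov{\dim}_B(A,\O)=D$, and hence, by Theorem \ref{an_rel}$(a)$, $D(\zeta_{A,\O})=D$; in particular, $\zeta_{A,\O}$ is holomorphic on the open half-plane $\{\re s>D\}$, so there are no poles of (the meromorphic continuation of) $\zeta_{A,\O}$ strictly to the right of the critical line $\{\re s=D\}$. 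Second, the $d$-languidity hypothesis entails admissibility: $\zeta_{A,\O}$ extends meromorphically (and uniquely) to a connected open neighborhood of a window $\bm{W}$ whose screen $\bm{S}$ passes strictly between the critical line and the complex dimensions of $(A,\O)$ of real part $<D$; consequently the only poles of $\zeta_{A,\O}$ in the open region to the right of $\bm{S}$ are those lying on $\{\re s=D\}$ itself.

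For the implication $(b)\Rightarrow(a)$, assume that $D$ is the only pole of $\zeta_{A,\O}$ on the critical line and that it is simple. Combining this with the two preliminary facts, one carves out a connected open neighborhood $U$ of the critical line — for instance the interior of $\bm{W}$ intersected with $\{\re s<D+1\}$ — inside the domain of the meromorphic continuation and on which $D$ is the \emph{only} pole of $\zeta_{A,\O}$, this pole being simple. Since $D<N$ by hypothesis, this is exactly the configuration covered by the distance-zeta-function version of Theorem \ref{mink_char}; invoking that theorem yields that $(A,\O)$ is (strongly) Minkowski measurable, and in fact $\mathcal{M}^D(A,\O)=\res(\zeta_{A,\O},D)/(N-D)$, which also recovers the Minkowski content formula stated in Theorem \ref{criterion_p}. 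It is worth noting that this direction uses only the existence of the meromorphic continuation past the critical line, not the full $d$-languidity of $(A,\O)$.

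For the implication $(a)\Rightarrow(b)$, assume that $(A,\O)$ is Minkowski measurable. Because $D<N$, I would pass to the Mellin zeta function $\zeta_{A,\O}^{\mathfrak{M}}$, which by the functional equation \eqref{mellin_dist} satisfies $\zeta_{A,\O}(s;\d)=(N-s)\,\zeta_{A,\O}^{\mathfrak{M}}(s)$ on any connected open set to which either side continues meromorphically. By Proposition \ref{propB} together with this functional equation — precisely the content of Remark \ref{tube_dist_nec} — the $d$-languidity of $\zeta_{A,\O}$ relative to the given screen $\bm{S}$ transfers to languidity of $\zeta_{A,\O}^{\mathfrak{M}}$ relative to the same screen $\bm{S}$ (possibly with a larger languidity exponent). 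Thus all the hypotheses of Theorem \ref{necess} are met — $D=\dim_B(A,\O)$ exists, $D<N$, $(A,\O)$ is Minkowski measurable, and $\zeta_{A,\O}^{\mathfrak{M}}$ is languid for a screen separating the critical line from the lower complex dimensions — and that theorem asserts that $D$ is the only pole of $\zeta_{A,\O}^{\mathfrak{M}}$ on $\{\re s=D\}$ and that it is simple. Finally, since $D<N$, the factor $N-s$ is holomorphic and nonvanishing on a neighborhood of the critical line, so the above functional equation shows that $\zeta_{A,\O}$ and $\zeta_{A,\O}^{\mathfrak{M}}$ have exactly the same poles on $\{\re s=D\}$, with the same multiplicities (and, as always, the choice of $\d>0$ is immaterial for the poles); therefore $D$ is the only pole of $\zeta_{A,\O}$ on the critical line and it is simple, i.e., $(b)$ holds.

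I expect that the main obstacle is not in this assembly — which is essentially bookkeeping — but rather in the two results being combined, whose proofs carry the genuine analytic weight: the Wiener--Pitt Tauberian theorem underpinning Theorem \ref{mink_char}, and, for Theorem \ref{necess}, the distributional fractal tube formula of [LapRa\v Zu8] (stated here via the Mellin zeta function) together with the uniqueness theorem for almost periodic distributions. Relative to those inputs, the delicate points at the present level are: in direction $(b)\Rightarrow(a)$, producing an honest open connected set on which $D$ is literally the sole (simple) pole — which is exactly where one uses both that $D=\ov{\dim}_B(A,\O)$ (nothing lies to the right) and that the screen separates the critical line from the complex dimensions with smaller real part; and in direction $(a)\Rightarrow(b)$, the transfer of the polynomial growth bounds from $\zeta_{A,\O}$ to $\zeta_{A,\O}^{\mathfrak{M}}$ via \eqref{mellin_dist} and Proposition \ref{propB}, and the pole-by-pole identification on the critical line. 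One should also keep in mind, as emphasized just before the statement, that the notion of Minkowski measurability being characterized is the \emph{strong} (ordinary, pointwise) one.
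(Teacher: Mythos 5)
Your proposal is correct and mirrors the paper's own proof exactly: the paper records Theorem \ref{criterion} as an immediate consequence of the sufficient condition (Theorem \ref{mink_char}, distance-zeta version, for $(b)\Rightarrow(a)$) and the necessary condition (Theorem \ref{necess}, for $(a)\Rightarrow(b)$), with Remark \ref{tube_dist_nec} supplying — via the functional equation $\zeta_{A,\O}(s;\d)=(N-s)\,\zeta_{A,\O}^{\mathfrak M}(s)$ and Proposition \ref{propB} — the transfer of $d$-languidity and the pole-by-pole identification on the critical line, just as you argue. Your added observation that $(b)\Rightarrow(a)$ only requires the meromorphic continuation past the critical line, not the full $d$-languidity, is likewise correct and consistent with the hypotheses of Theorem \ref{mink_char}.
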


\begin{remark}\label{ekv_mink_krt}
In part $(b)$ of the above criterion, we can replace ${\zeta}_{A,\O}$ with the relative tube zeta function $\widetilde{\zeta}_{A,\O}$ or the Mellin zeta function ${\zeta}_{A,\O}^{\mathfrak{M}}$.
Of course, in that case, one then assumes that the chosen fractal zeta function satisfies the usual languidity conditions (along with the condition from Theorem \ref{criterion} about the existence of a suitable screen).
%In fact, if we state the theorem in terms of the relative tube zeta function $\widetilde{\zeta}_{A,\O}$, we may omit the condition that $\dim_B(A,\O)<N$, as we shall see in Theorem \ref{tilde_criterion} below.
\end{remark}

Next, we explicitly state the counterpart of Theorem \ref{criterion}, but now in terms of the tube zeta function $\widetilde{\zeta}_{A,\O}$ (instead of the distance zeta function ${\zeta_{A,\O}}$) since in this case the restriction $\dim_B(A,\O)<N$ can be removed.

\begin{theorem}[Minkowski measurability criterion in terms of $\widetilde{\zeta}_{A,\O}$]\label{tilde_criterion}
Let $(A,\O)$ be a relative fractal drum in $\eR^N$ such that $D:=\dim_B(A,\O)$ exists.
Furthermore, assume that $(A,\O)$ is languid for a screen $\bm{S}$ passing strictly to the left of the critical line $\{\re s=D\}$ and strictly to the right of all the complex dimensions of $(A,\O)$ with real part strictly less than $D$.
Then the following statements are equivalent$:$

\medskip

$(a)$ The RFD $(A,\O)$ is $($strongly$)$ Minkowski measurable.

\medskip

$(b)$ $D$ is the only pole of the relative distance zeta function $\widetilde{\zeta}_{A,\O}$ located on the critical line $\{\re s=D\}$, and it is simple.
\end{theorem}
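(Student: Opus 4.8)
The plan is to derive Theorem~\ref{tilde_criterion} directly from the already-established Theorem~\ref{criterion} by exploiting the functional equation \eqref{equ_tilde} connecting the relative distance and tube zeta functions, together with Remark~\ref{tube_holo} and the remark following Theorem~\ref{pole1mink_tilde}, which together guarantee that, when $\ov{\dim}_B(A,\O)<N$, all the relevant objects (holomorphicity regions, poles, $d$-languidity, residues up to the factor $N-s$) translate faithfully between $\zeta_{A,\O}$ and $\widetilde{\zeta}_{A,\O}$. First I would dispose of the easy case $D<N$: here the functional equation shows that $\widetilde{\zeta}_{A,\O}(\,\cdot\,;\d)$ and $\zeta_{A,\O}(\,\cdot\,;\d)$ have the same poles on $\{\re s = D\}$ with the same multiplicities (the extra term $\d^{s-N}|A_\d\cap\O|$ is entire, and the factor $(N-s)$ does not vanish on the critical line since $D<N$), and that languidity of one transfers to the other with a possibly shifted exponent. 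Hence statement $(b)$ for $\widetilde{\zeta}_{A,\O}$ is equivalent to statement $(b)$ for $\zeta_{A,\O}$, and the equivalence $(a)\Leftrightarrow(b)$ for $D<N$ follows immediately from Theorem~\ref{criterion}.

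The real content is the case $D=N$, which is precisely what Theorem~\ref{tilde_criterion} adds beyond Theorem~\ref{criterion}. Here I would mimic the proof architecture of Theorem~\ref{criterion} itself, namely combine a sufficiency direction and a necessity direction, both of which have already been set up to handle $\ov{D}=N$. For $(b)\Rightarrow(a)$, I would invoke Theorem~\ref{mink_char} in its tube-zeta-function formulation: that theorem explicitly covers the case $\ov{D}=N$ (see Case~$(b)$ of its proof), yielding that $(A,\O)$ is Minkowski measurable with $\M^D(A,\O)=\res(\widetilde{\zeta}_{A,\O},D)$ as soon as $\widetilde{\zeta}_{A,\O}$ has a meromorphic extension to a neighborhood of the critical line whose only pole there is $D$, simple. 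The languidity hypothesis is not even needed for this direction. For $(a)\Rightarrow(b)$, I would use the necessity argument of Theorem~\ref{necess}, but I must first address the fact that Theorem~\ref{necess} and the distributional tube formula of Theorem~\ref{dist_error_mellin} are stated via the Mellin zeta function under the standing assumption $\ov{\dim}_B(A,\O)<N$. The way around this is the functional equation \eqref{mellin_tube_t} (respectively a direct argument): when $D=N$ one works instead with the distributional fractal tube formula expressed directly in terms of $\widetilde{\zeta}_{A,\O}$ (the tube-zeta-function version of the tube formula from \cite{ftf_A}, valid without the restriction $D<N$), and then repeats verbatim the reasoning of Theorem~\ref{necess}: Minkowski measurability gives $|A_t\cap\O| = \M t^{N-D} + o(t^{N-D})$ pointwise hence distributionally, the tube formula gives $|A_t\cap\O| = \M t^{N-D} + t^{N-D}\sum_{n\in J\setminus\{0\}} a_n t^{-\I\gamma_n} + o(t^{N-D})$, and the uniqueness theorem for almost periodic distributions forces all $a_n=0$ for $n\neq 0$, so $D$ is the only principal pole and (by the $|\widetilde{\zeta}_{A,\O}(s)|\le \widetilde{\zeta}_{A,\O}(\re s)$-type bound, or by Theorem~\ref{pole1mink_tilde}) it is simple.

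I expect the main obstacle to be bookkeeping at the boundary value $D=N$: several of the preparatory results (Theorem~\ref{pole_mellin}, Theorem~\ref{dist_error_mellin}, the languidity-transfer Remark~\ref{tube_dist_nec}) are phrased with the hypothesis $\ov{\dim}_B(A,\O)<N$, and one cannot simply divide by $N-s$ on the critical line when $D=N$. The resolution is to route the necessity argument through the tube zeta function's own distributional tube formula rather than the Mellin zeta function, and to invoke the version of Theorem~\ref{dist_error_mellin} (or its analog in \cite{ftf_A}) stated in terms of $\widetilde{\zeta}_{A,\O}$, which holds for $D\le N$; the languidity hypothesis in the statement of Theorem~\ref{tilde_criterion} is exactly what supplies the growth control needed for that tube formula. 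Once this rerouting is in place, the almost-periodicity argument is identical to that of Theorem~\ref{necess}, and combining it with the $D=N$ case of Theorem~\ref{mink_char} completes the proof. Finally, I would remark that, in the situation $D<N$, Theorem~\ref{tilde_criterion} is indeed consistent with Theorem~\ref{criterion} via the functional equation, so the two criteria agree on their common domain of applicability, and the extra factor $N-D$ relating $\res(\zeta_{A,\O},D)$ and $\res(\widetilde{\zeta}_{A,\O},D)$ reconciles the two formulas for the Minkowski content.
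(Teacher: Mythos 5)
Your treatment of the case $D < N$ is correct and matches the paper: the functional equation \eqref{equ_tilde} (together with Proposition \ref{propB} and Remark \ref{tube_dist_nec}) identifies the principal poles and multiplicities of $\zeta_{A,\O}$ and $\widetilde{\zeta}_{A,\O}$ and transfers languidity, so the equivalence follows from Theorem \ref{criterion}. For $D = N$, your observation that $(b)\Rightarrow(a)$ follows directly from Theorem \ref{mink_char} (which carries no restriction $D<N$, cf.\ Case $(b)$ of its proof) also matches the paper exactly.

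The gap is in the converse direction $(a)\Rightarrow(b)$ when $D = N$. You correctly identify the obstacle -- the necessity argument of Theorem \ref{necess} is routed through the Mellin zeta function $\zeta^{\mathfrak M}_{A,\O}$, which is defined by an integral over $(0,+\ty)$ requiring $\re s < N$ for convergence at infinity, so it simply does not exist when $D = N$. Your proposed resolution is to ``work instead with the distributional fractal tube formula expressed directly in terms of $\widetilde{\zeta}_{A,\O}$, valid without the restriction $D < N$.'' But this is precisely the step that is not established: the paper emphasizes that the Mellin zeta function was introduced ``in order to obtain a distributional tube formula valid for a larger class of test functions, which is needed in order to prove Theorem \ref{necess},'' and no $\widetilde{\zeta}_{A,\O}$-based distributional tube formula with the required error estimate and test-function class is cited or proved in the $D = N$ regime. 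You assert the existence of the tool you need without producing or referencing it. The paper's actual proof takes a genuinely different route: it embeds $(A,\O)$ into $\eR^{N+1}$ as $(A\times\{0\},\O\times(-\d,\d))$, uses the approximate functional equation \eqref{d_d} to transfer languidity (via asymptotics of the ratio $\Gamma(\frac{N-s}{2}+1)/\Gamma(\frac{N+1-s}{2}+1)$ together with boundedness of the error term $E(\,\cdot\,;\d)$), verifies that Minkowski measurability passes to the embedded RFD (citing \cite{maja} and \cite[\S 4.7]{fzf}), and only then applies Theorem \ref{necess} in the ambient space $\eR^{N+1}$ where the dimension is strictly subcritical. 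The considerable technical effort expended there is strong evidence that the direct route you propose is not a shortcut but an unfilled hole; at minimum you would need to state and prove the $\widetilde{\zeta}_{A,\O}$-based distributional tube formula at level $k=0$ with the $o(t^{N-\sup S})$ error estimate under your languidity hypothesis, valid for $D=N$, before the almost-periodicity argument can be run.
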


\begin{proof}
Firstly, if $D=\dim_B(A,\O)<N$, then the conclusion follows from Theorems \ref{mink_char} and \ref{necess} together with Remark \ref{tube_dist_nec}.

In the case when $D=N$, we embed $(A,\O)$ into $\eR^{N+1}$, as was done in \cite[\S3.2]{mefzf} (see also \cite{fzf}), and then use the results of the present section as we now explain.
%\begin{theorem}[Minkowski measurability criterion, special case]\label{mink_char_N}
%Let $(A,\O)$ be a relative fractal drum in $\eR^N$ such that $D:=\dim_B(A,\O)$ exists and $D=N$.
%Furthermore, assume that $(A,\O)$ is languid for a screen passing between the critical line $\{\re s=N\}$ and all the complex dimensions of $(A,\O)$ with real part strictly less than $N$.
%Then the following is equivalent$:$
%
%\medskip
%
%$(a)$ $(A,\O)$ is Minkowski measurable.%
%
%\medskip
%
%$(b)$ $N$ is the only pole of the relative tube zeta function $\widetilde{\zeta}_{A,\O}$ located on the critical line $\{\re s=N\}$ and it is simple.
%\end{theorem}
The fact that $(b)$ implies $(a)$ is a direct consequence of Theorem \ref{mink_char} since there are no restrictions of the type ${\dim}_B(A,\O)<N$ in the hypotheses of that theorem.
In fact, one sees directly from the definition of the relative Minkowski content that $\dim_B(A,\O)=N$ implies that $\mathcal{M}^{N}(A,\O)$ exists and $\mathcal{M}^{N}(A,\O)=|\ov{A}\cap\O|$, where $\ov A$ denotes the closure of $A$ in~$\eR^N$.

In order to prove that $(a)$ implies $(b)$, we embed $(A,\O)$ into $\eR^{N+1}$ as 
$$
(A,\O)_1:=(A\times\{0\},\O\times(-\d,\d)),\ \textrm{ for some suitable }\d>0,
$$
and then, by \cite[Theorem 3.9]{mefzf} or \cite[Theorem 4.7.9]{fzf}, the relative tube zeta functions of the RFDs $(A,\O)$ and $(A,\O)_1$ are connected by the following approximate functional equation:
\begin{equation}\label{d_d}
\widetilde{\zeta}_{A\times\{0\},\O\times(-\d,\d)}(s;\delta)= \frac{\sqrt{\pi}\,\Gamma\left(\frac{N-s}{2}+1\right)}{\Gamma\left(\frac{N+1-s}{2}+1\right)}\widetilde{\zeta}_{A,\O}(s;\delta)+E(s;\delta).
\end{equation}
Here, $\d>0$ is chosen such that $\widetilde{\zeta}_{A,\O}(s;\delta)$ satisfies the languidity hypothesis of the theorem.
%The fact that we can choose such a $\d$ follows from the functional equation~\eqref{equ_tilde} which connects the relative distance and tube zeta functions ($\widetilde{\zeta}_{A,\O}(s;\delta)$ is therefore languid with possibly a different exponent $\kappa$).
Next, we show that $\widetilde{\zeta}_{A\times\{0\},\O\times(-\d,\d)}(\,\cdot\,;\delta)$ satisfies the needed languidity conditions and use Theorem \ref{necess} to conclude the proof.
The error function $E(\,\cdot\,;\d)$ is known to be holomorphic on the open left half-plane $\{\re s<N+1\}$ and bounded by ${2\delta^{\re s-N}|A_{\delta}\cap\O|_N}\left(\frac{\pi}{2}-1\right)$ (see the proof of \cite[Theorem 3.9]{mefzf} or \cite[Theorem 4.7.9]{fzf}).
This fact implies that $E(\,\cdot\,;\d)$ is languid (with a languidity exponent equal to $0$).
Furthermore, for any $a,b\in\Ce$ such that $\re (b-a)>0$, the following asymptotic expansion is well known:
\begin{equation}
\frac{\Gamma(z+a)}{\Gamma(z+b)}\sim z^{a-b}\sum_{n=0}^{\ty}\frac{(-1)^nB_n^{(a-b+1)}(a)}{n!}\frac{\Gamma(b-a+n)}{\Gamma(b-a)}\frac{1}{z^n}\q\textrm{as}\q |z|\to+\ty,
\end{equation}
valid for $z$ such that $|\arg z|<\pi$.\footnote{Here, $B_n^{(\sigma)}(x)$ is the generalized Bernoulli polynomial (see, e.g., \cite{SriTod} for the exact definition and an explicit formula). See also \cite[\S3.6.2]{tem} for this result on asymptotics of the ratio of gamma functions.}
We next substitute $z=\frac{N-s}{2}+1$, $a=0$ and $b=1/2$ to obtain that
\begin{equation}
\frac{\Gamma\left(\frac{N-s}{2}\! +\! 1\right)}{\Gamma\left(\frac{N+1-s}{2}\! +\! 1\right)}\sim(N-s+2)^{-\frac{1}{2}}\sum_{n=0}^{\ty}\frac{(2n)!\sqrt{2}(-1)^n B_n^{(1/2)}(0)}{2^{n}(n!)^2(N-s+2)^{n}}\q \textrm{as}\q |s|\to+\ty,
\end{equation}
for all $s\in\Ce\setminus[N+2,+\ty)$.\footnote{We have used here the identities $\Gamma(1/2)=\sqrt{\pi}$ and $\Gamma(1/2+n)=\frac{(2n)!}{4^nn!}\sqrt{\pi}$.}
In particular, we have that
\begin{equation}
\frac{\Gamma\left(\frac{N-s}{2}+1\right)}{\Gamma\left(\frac{N+1-s}{2}+1\right)}=O(|s|^{-1/2})\q \textrm{as}\q |s|\to+\ty,
\end{equation}
for all $s\in\Ce\setminus[N+2,+\ty)$, from which we conclude that the product of this ratio of gamma functions with the relative tube zeta function $\widetilde{\zeta}_{A,\O}(\,\cdot\,;\delta)$ is languid with a languidity exponent not greater than $\kappa-1/2$, where $\kappa$ is the languidity exponent of $\widetilde{\zeta}_{A,\O}(\,\cdot\,;\delta)$.
This fact, together with the languidity of $E(\,\cdot\,;\d)$ and Equation \eqref{d_d}, implies that $\widetilde{\zeta}_{A\times\{0\},\O\times(-\d,\d)}(\,\cdot\,;\delta)$ is languid with the same choice of a double sequence $(T_n)_{n\in\Ze\setminus\{0\}}$ and the screen $\bm{S}$ as for $\widetilde{\zeta}_{A,\O}(\,\cdot\,;\delta)$ and with a languidity exponent not greater than $\max\{\kappa-1/2,0\}$.

On the other hand, if $(A,\O)$ is Minkowski measurable, then this is also true for the embedded RFD $(A\times\{0\},\O\times(-\d,\d))$.
This fact follows in a completely analogous way as in the case of bounded subsets of $\eR^N$ which was proven in \cite{maja} and extended to RFDs in \cite[$\S$4.7]{fzf}.
We now conclude the proof by invoking Theorem~\ref{necess}, or rather, its analog in terms of the relative tube zeta function (see Remark \ref{tube_dist_nec}).
\end{proof}

In the next corollary of Theorem \ref{criterion} and \ref{tilde_criterion}, and in light of \cite[Lemma 5.25]{ftf_A}\footnote{This lemma states that, as long as $\ov{\dim}_B(A,\O)<N$, the complex dimensions of a given RFD $(A,\O)$ in $\eR^N$ do not depend on the choice of the associated fractal zeta function in terms of which they are defined.}  (see also \cite[Lemma 5.4.11]{fzf}) and Remark \ref{tube_dist_nec}, we can indifferently interpret the (principal) complex dimensions of the RFD $(A,\O)$ as being the (principal) poles of either the distance, tube, or Mellin zeta function of $(A,\O)$.
%This is the reason why we assume that the hypotheses of both Theorems \ref{criterion} and \ref{tilde_criterion} are satisfied (i.e., we assume that $\dim_B(A,\O)<N$ in order to avoid the situation when $N$ is a pole of the tube zeta function but is not a pole of the distance zeta function, which may happen).

\begin{corollary}[Characterization of Minkowski measurability in terms of the complex dimensions]\label{5.4.20.1/4}
Let $(A,\O)$ be a relative fractal drum in $\eR^N$, with $N\geq 1$ arbitrary, such that $D:=\dim_B(A,\O)$ exists and $D<N$.
Assume also that any of its fractal zeta functions $($specifically, $\zeta_{A,\O}$ or $\widetilde{\zeta}_{A,\O}$, respectively$)$ satisfies the hypotheses of Theorem \ref{criterion} $($or of Theorem \ref{tilde_criterion}, respectively$)$ concerning the languidity and the screen.
Then, the following statements are equivalent$:$

\medskip

$(a)$ The RFD $(A,\O)$ is $($strongly$)$ Minkowski measurable.

\medskip

$(b)$ $D$ is the only complex dimension of the RFD $(A,\O)$ with real part equal to $D$ $($i.e., located on the critical line $\{\re s=D\}$$)$, and it is simple.
\end{corollary}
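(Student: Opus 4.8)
The plan is to derive this corollary directly from the Minkowski measurability criteria already established in Theorems \ref{criterion} and \ref{tilde_criterion}, the only additional ingredient being the invariance of the notion of complex dimension under the choice of fractal zeta function. First I would invoke \cite[Lemma 5.25]{ftf_A} (equivalently, \cite[Lemma 5.4.11]{fzf}): since by hypothesis $D:=\dim_B(A,\O)$ exists and $D=\ov{\dim}_B(A,\O)<N$, the functional equations \eqref{equ_tilde}, \eqref{mellin_dist} and \eqref{mellin_tube_t}, combined with the principle of meromorphic continuation, show that on any connected open neighborhood of the window $\bm{W}$, the fractal zeta functions $\zeta_{A,\O}$, $\widetilde{\zeta}_{A,\O}(\,\cdot\,;\d)$ and $\zeta_{A,\O}^{\mathfrak{M}}$ have exactly the same poles, with the same multiplicities. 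Indeed, the factor $N-s$ (respectively, its reciprocal) appearing in those three functional equations is holomorphic and nonvanishing on the half-plane $\{\re s<N\}$ (because $D<N$), so it can neither create nor cancel a pole in the region of interest, while the additive correction terms ($\d^{s-N}|A_\d\cap\O|$, $\d^{s-N}|\O|/(N-s)$, etc.) are entire there. Hence the multiset of (visible, and in particular principal) complex dimensions of $(A,\O)$ is unambiguously defined, independently of which of the three zeta functions is used; in particular, condition $(b)$ of the present corollary is meaningful.

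With this in hand, the equivalence of $(a)$ and $(b)$ is immediate. If the hypotheses of Theorem \ref{criterion} hold for $\zeta_{A,\O}$ (that is, $(A,\O)$ is $d$-languid along a screen separating $\{\re s=D\}$ from the complex dimensions with real part strictly less than $D$), then Theorem \ref{criterion} asserts that $(a)$ holds if and only if $D$ is the only pole of $\zeta_{A,\O}$ on the critical line $\{\re s=D\}$ and it is simple; by the previous paragraph, this is precisely condition $(b)$ expressed in terms of the complex dimensions. If instead the hypotheses of Theorem \ref{tilde_criterion} hold for $\widetilde{\zeta}_{A,\O}$, that theorem yields the analogous equivalence in terms of the principal poles of $\widetilde{\zeta}_{A,\O}$, which again coincide with the complex dimensions on the relevant strip. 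Moreover, Proposition \ref{propB} guarantees that the $d$-languidity conclusion is insensitive to the choice of $\d>0$, and Remark \ref{tube_dist_nec} (via Proposition \ref{propB} together with \eqref{mellin_dist}) shows that $d$-languidity of $\zeta_{A,\O}$ forces languidity of $\zeta_{A,\O}^{\mathfrak{M}}$ with a possibly different languidity exponent; consequently, as recorded in Remark \ref{ekv_mink_krt}, condition $(b)$ may be phrased indifferently using $\zeta_{A,\O}$, $\widetilde{\zeta}_{A,\O}$ or $\zeta_{A,\O}^{\mathfrak{M}}$.

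There is essentially no serious obstacle here, since the analytic core of the argument has already been carried out in Theorems \ref{mink_char} and \ref{necess}, hence in Theorems \ref{criterion} and \ref{tilde_criterion}. The only point requiring a modicum of care is the bookkeeping needed to transfer the languidity hypotheses (the screen $\bm{S}$ and the two-sided sequence $(T_n)_{n\in\Ze}$) between the distance, tube and Mellin zeta functions; this is handled by the explicit functional equations above, the boundedness of their correction terms on $\{\re s<N\}$, and Proposition \ref{propB}. Once that is in place, the corollary follows simply by reading off Theorem \ref{criterion} or Theorem \ref{tilde_criterion} and rewriting condition $(b)$ in terms of the ($\d$- and zeta-function-independent) notion of complex dimension.
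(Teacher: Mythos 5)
Your proof is correct and follows essentially the same route as the paper: the paper derives the corollary directly from Theorems \ref{criterion} and \ref{tilde_criterion} by invoking \cite[Lemma 5.25]{ftf_A} (together with Remark \ref{tube_dist_nec}) to identify the principal complex dimensions with the principal poles of whichever fractal zeta function one chooses, which is exactly the invariance argument you spell out via the functional equations \eqref{equ_tilde}, \eqref{mellin_dist} and \eqref{mellin_tube_t}. Your extra detail on why the factor $N-s$ neither creates nor cancels poles on $\{\re s<N\}$ and why the correction terms are entire there is precisely the content of that cited lemma, so nothing is missing.
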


\section{Complex dimensions and gauge-Minkowski measurability}\label{gauge_mink}

In this section, we will obtain a very general result (Theorem \ref{degg}) which enables us to generate $h$-Minkowski measurable RFDs, where $h(t):=(\log t^{-1})^{m-1}$ for all $t\in(0,1)$ and $m$ is a positive integer.
In order to do this, we need only some information about the principal poles and their multiplicities.
Its proof follows from the pointwise fractal tube formula with or without error term; \cite[Theorem 3.2]{ftf_A} and \cite[Theorem 3.4]{ftf_A}, respectively (see also \cite[Theorem 5.1.14]{fzf}).
Especially important is the asymptotic expansion of the tube function in Equation \eqref{tubem}, from which it is possible to deduce the optimal tube function asymptotic expansion for a class of $h$-Minkowski measurable RFDs, stated in Theorem \ref{tubealpha}.

Let us recall the definitions of gauge functions and gauge Minkowski content.
Note that the latter notion is motivated, geometrically and physically, by the study of non power law scaling behavior which arises in many natural examples. (See [HeLap] and the relevant references therein.) All of these definitions can easily be extended to RFDs $(A,\O)$ in $\eR^N$ but for notational simplicity, we will only state them in the case of bounded sets.

If $A$ is (Minkowski) degenerate and such that $D:=\dim_BA$ exists, we assume that
\begin{equation}\label{F1}
|A_t|=t^{N-D}(F(t)+o(1))\quad\mbox{as\qs$t\to0^+$},
\end{equation}
where $F:(0,\e_0)\to(0,+\ty)$, for some sufficiently small $\e_0>0$.
%for some positive constant~$\alpha$.

\medskip

 Let $A$ be a degenerate set in $\eR^N$. Then:

%\begin{itemize}

\medskip

     $\bullet$ $A$ is {\em weakly degenerate}\label{wdeg} 
		if $D=\dim_BA$ exists and either $\M_*^D(A)=0$ (i.e., we have that $\liminf_{t\to0^+} F(t)=0$) or $\M^{*D}(A)=+\ty$ (i.e., $\limsup_{t\to0^+} F(t) =+\ty$); see Equation~(\ref{F1}).
		
		\medskip
		
     $\bullet$ $A$ is {\em strongly degenerate}\label{sdeg} 
		if $\underline\dim_BA<\overline\dim_BA$. Note that here, (\ref{F1}) cannot hold for any $D\ge0$
     (otherwise, $D$ would here be the box dimension of $A$).
%\end{itemize}

\bigskip

Furthermore, weakly degenerate sets can be classified with respect to their {\em gauge functions} $h$, if they exist\label{wds} 
(see Definition~\ref{gaugef}).
We assume that the function
$F(t)$ appearing in Equation~(\ref{F1}) is of the form
\begin{equation}\label{Gh}
F(t)=h(t)\quad\mbox{or\q$\ F(t)=\frac{1}{h(t)}$,}
\end{equation}
where $h:(0,\e_0)\to(0,+\ty)$, for some small $\e_0>0$, $h(t)\to+\ty$ as $t\to0^+$ and
\begin{equation}\label{O0}
h(t)=O(t^{0)})\q\mbox{as\qs$t\to0^+$}, 
\quad\mbox{with}\q O(t^{0)}):=\bigcap_{\beta<0}O(t^\beta).
\end{equation}
Note that we need to assume that $h(t)=O(t^{0)})$ as $t\to 0^+$ in order to fix the value $D=\dim_BA$; see Equation~(\ref{F1}).

\begin{definition}\label{gaugef}
If a function $h:(0,\e_0)\to(0,+\ty)$ is of class $O(t^{0)})$ (as defined in Equation \eqref{O0}) and converges to infinity as $t\to0^+$, we then say that $h$ is {\em of slow growth to infinity} as $t\to0^+$. Analogously, a function $g:(0,\e_0)\to(0,+\ty)$ is said to be {\em of slow decay to zero} as $t\to0^+$ if it is of the form $g(t)=1/h(t)$, for some function $h$ which is of slow growth to infinity
as $t\to0^+$. Such functions $h$ and $g$ are called {\em gauge functions}.
\end{definition}

It is not difficult to see that a function $g:(0,\e_0)\to(0,+\ty)$ is of slow decay to zero as $t\to0^+$ if and only if for every $\b>0$,  $t^\b=O(g(t))$ as $t\to0^+$.

\begin{example}\label{gaugeexa}
If we let $h_1(t):=\log t^{-1}$, $h_2(t):=\log\log t^{-1}$, and more generally, $h_3(t):=(\log t^{-1})^{a}$, $h_4(t):=(\log^kt^{-1})^{a}$ (here, $a>0$,
 $k\in\eN$, and $\log^k$ denotes the $k$-fold composition of logarithms), then all of these functions are of slow growth to infinity as $t\to0^+$. Furthermore, their reciprocals
 are functions of slow decay to $0$ as $t\to0^+$.
\end{example}

Since for a weakly degenerate set $A$ we have $\M^{*D}(A)=+\ty$ or $\M_*^D(A)=0$, it will be useful to define (as in \cite{lapidushe}) the {\em upper} and {\em lower} $D$-dimensional {\em Minkowski contents of $A$
with respect to a given gauge function $h$}\label{mink_cont_gauge}, as follows (with $g:=1/h$):
\begin{equation}
\begin{aligned}
\M^{*D}(A,h)&=\limsup_{t\to0^+}\frac{|A_t|}{t^{N-D}h(t)},\\
\M_*^D(A,h)&=\liminf_{t\to0^+}\frac{|A_t|}{t^{N-D}g(t)}.
\end{aligned}
\end{equation}
The aim here is to find gauge functions $h$ and $g$ so that the upper and lower Minkowski contents of $A$ with respect to $h$ are 
nondegenerate, that is, belong to $(0,+\ty)$.
If $\M_*^D(A,h)=\M^{*D}(A,h)$, this common value is denoted by $\M^D(A,h)$ and called the {\em $h$-Minkowski content} of $A$.
We then say that $A$ is {\em $h$-Minkowski measurable}.

%\begin{remark}
%Any bounded set $A$ in $\eR^N$ such that $D=\dim_BA$ exists is clearly $h$-Minkowski measurable\index[index]{sets!Minkowski!measurable} with gauge function $h(t):=t^{N-D}|A_t|^{-1}$. For example,
%the Cantor triadic set $C^{(1/3)}$ is $h$-Minkowski measurable for the gauge function $h(t):=t^{1-D}|A_t|^{-1}$, with $D=\log_32$.
%Recall, however, that with respect to the trivial gauge function $h(t)\equiv 1$, it is {\em not} Minkowski measurable\index[index]{sets!Minkowski!measurable} (but it is 
%nondegenerate);\index[index]{nondegenerate set}\index[index]{Minkowski!nondegenerate!set} see \cite{lapiduspom,lapidusfrank12}.
%\end{remark}

\begin{definition}\label{gauge_def} Assume that $A$ is a bounded subset of $\eR^N$ such that (\ref{F1}) holds under one of the conditions stated in (\ref{Gh}) and  that, in addition, (\ref{O0}) is satisfied.
We then say that $h=h(t)$ or $g=1/h(t)$ is a {\em gauge function} of $A$.\footnote{In the case when $F(t)=g(t)$, we also assume that the implied function $o(1)$ appearing in \eqref{F1} satisfies $o(1)/g(t)\to0$ as $t\to0^+$.}
We also say that the set $A$ is {\em weakly degenerate, of type $h$ or $1/h$}, respectively.
\end{definition} 

Note that in the first case of (\ref{Gh}), we have 
\begin{equation}
\M^{*D}(A)=+\ty,\quad
0<\M^{*D}(A,h)<\infty,\nonumber
\end{equation}
while in the second case of (\ref{Gh}), we have
\begin{equation}
\M_*^D(A)=0,\quad
0<\M_*^D(A,1/h)<\infty.\nonumber
\end{equation}

 Let $A$ be a weakly degenerate set in $\eR^N$ of type $h$, in the sense of Definition \ref{gauge_def}. We will say that
\medskip

%\begin{itemize}

     $\bullet$ $A$ is a {\em constant weakly degenerate set} of type $h$ (or a {\em constant $h$-degenerate set}, or in the sequel, an {\em $h$-Minkowski measurable
		set}), if $\M^{D}(A,h)$ exists and
     belongs to $(0,+\ty)$. We adopt a similar terminology in the case of the gauge function $1/h$ instead of $h$; see Definition \ref{gauge_def}.
		
		\medskip
		
     $\bullet$ $A$ is a {\em nonconstant weakly degenerate set of type $h$} (or a {\em nonconstant $h$-degenerate set}, or an {\em $h$-Minkowski 
		nonmeasurable set}) if
     $$0<\M_*^{D}(A,h)<\M^{*D}(A,h)<\ty.$$ We adopt a similar terminology for the gauge function $1/h$.
        %\end{itemize}

\begin{theorem}\label{degg} $(${\rm Generating $h$-Minkowski measurable RFDs}$).$
Let $(A,\O)$ be a relative fractal drum in $\eR^N$ which is languid with $\kappa<-1$ or such that $(\lambda A,\lambda\O)$ is strongly languid for some $\lambda>0$ with $\kappa<0$, for a screen $\bm{S}$ passing strictly between the critical line $\{\re s=\ov{\dim}_B(A,\O)\}$ and all the complex dimensions of $(A,\O)$ with real part strictly less than $\ov{D}:=\ov{\dim}_B(A,\O)$.
Furthermore, suppose that $\ov{D}$ is the only pole of its relative tube zeta function $\widetilde{\zeta}_{A,\O}$ with real part equal to $\ov{D}$ of multiplicity $m\geq1$ ($m\in\eN$) and, additionally, that there exists $($at most$)$ finitely many nonreal poles of $\widetilde{\zeta}_{A,\O}$ with real part $\ov{D}$.
Also, assume that the multiplicity of each of those nonreal poles is of order strictly less than $m$.
Then, $\dim_B(A,\O)$ exists and is equal to $D:=\ov{D}$.
Moreover, $\mathcal{M}^{D}(A,\O)$ exists and is equal to $+\ty;$ hence, $(A,\O)$ is Minkowski degenerate.

Finally, an appropriate gauge function for $(A,\O)$ is $h(t):=(\log t^{-1})^{m-1}$, for all $t\in(0,1)$, and we have that, {\rm relative to} $h$, the RFD $(A,\O)$ is not only Minkowski nondegenerate but is also Minkowski measurable with Minkowski content given by
\begin{equation}\label{hML}
{\mathcal M}:=
\mathcal{M}^{{D}}(A,\O,h)=\frac{\widetilde{\zeta}_{A,\O}[D]_{-m}}{(m-1)!},
\end{equation}
where $\widetilde{\zeta}_{A,\O}[D]_{-m}$ denotes the coefficient corresponding to $(s-D)^{-m}$ in the Laurent expansion of $\widetilde{\zeta}_{A,\O}$ around $s=D$. Moreover, if there is at least one nonreal complex dimension on the critical line $\{\re=D\}$, then the tube function $t\mapsto |A_t\cap\O|$ has the following asymptotic estimate$:$
\begin{equation}\label{tubemlog}
|A_t\cap\O|=t^{N-D}h(t)\big({\mathcal M}+O((\log t^{-1})^{-1})\big)\q\mbox{\rm as\q $t\to0^+$,}
\end{equation}
while if $D$ is the unique pole of $\widetilde\zeta_{A,\O}$ on the critical line $($i.e., the unique principal complex dimension of $(A,\O)$$)$, we have the following sharper estimate$:$
\begin{equation}\label{tubem}
|A_t\cap\O|=t^{N-D}h(t)\big({\mathcal M}+O(t^{D-\sup S})\big)\q\mbox{{\rm as}\q $t\to0^+$.}
\end{equation}
\end{theorem}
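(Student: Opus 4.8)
The plan is to deduce everything from the pointwise fractal tube formula for $(A,\O)$ expressed in terms of the relative tube zeta function $\widetilde{\zeta}_{A,\O}$, and then to read off the claimed asymptotics from the Laurent expansion of $\widetilde{\zeta}_{A,\O}$ at its poles on the critical line. First I would apply the pointwise tube formula of \cite[Theorems 3.2 and 3.4]{ftf_A} (see also \cite[Theorem 5.1.14]{fzf}). In the languid case, where $\kappa<-1$ is exactly the threshold needed for the \emph{pointwise} version with error term, this gives, for all sufficiently small $t>0$,
\begin{equation*}
|A_t\cap\O|=\sum_{\omega\in\po(\widetilde{\zeta}_{A,\O},\bm{W})}\res\!\big(t^{N-s}\widetilde{\zeta}_{A,\O}(s),\omega\big)+\mathcal{R}(t),\qquad \mathcal{R}(t)=O\big(t^{N-\sup S}\big)\ \text{ as }t\to0^+ .
\end{equation*}
In the strongly languid case (with $\kappa<0$ after rescaling by $\g$), I would instead pass to $(\g A,\g\O)$ and use the scaling identities $\widetilde{\zeta}_{\g A,\g\O}(s)=\g^{s}\widetilde{\zeta}_{A,\O}(s)$ (the tube analogue of Theorem~\ref{scalingp}) and $|(\g A)_t\cap\g\O|=\g^{N}|A_{t/\g}\cap\O|$ to convert the exact pointwise tube formula for $(\g A,\g\O)$ back into the \emph{exact} formula for $(A,\O)$, i.e.\ the identity above with $\mathcal{R}\equiv0$ and the sum over all of $\po(\widetilde{\zeta}_{A,\O},\Ce)$; the contributions of complex dimensions of real part $<D$ are then $o(t^{N-D})$ and may be discarded. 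In either case the decisive point is that, since the screen $\bm{S}$ passes strictly between $\{\re s=D\}$ (with $D:=\ov D$) and every complex dimension of real part $<D$, the only visible complex dimensions are the poles of $\widetilde{\zeta}_{A,\O}$ on $\{\re s=D\}$: the point $D$ itself, of multiplicity $m$, together with the finitely many nonreal $\omega_n=D+\I\gamma_n$, each of multiplicity $k_n\le m-1$.

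Next I would compute the residues. For a pole $\omega$ of $\widetilde{\zeta}_{A,\O}$ of multiplicity $k$, expanding $t^{N-s}=t^{N-\omega}\sum_{\ell\ge0}(\log t^{-1})^{\ell}(s-\omega)^{\ell}/\ell!$ against the principal part of $\widetilde{\zeta}_{A,\O}$ at $\omega$ gives
\begin{equation*}
\res\!\big(t^{N-s}\widetilde{\zeta}_{A,\O}(s),\omega\big)=t^{N-\omega}\sum_{j=1}^{k}\widetilde{\zeta}_{A,\O}[\omega]_{-j}\,\frac{(\log t^{-1})^{j-1}}{(j-1)!}.
\end{equation*}
At $\omega=D$ this is a polynomial in $\log t^{-1}$ of degree $m-1$ with leading coefficient $\widetilde{\zeta}_{A,\O}[D]_{-m}/(m-1)!$, and I would note that $\widetilde{\zeta}_{A,\O}[D]_{-m}>0$: indeed $\widetilde{\zeta}_{A,\O}(s)\ge0$ for real $s>D$ by \eqref{401/2}, and it is finite there but blows up as $s\to D^{+}$ (Theorem~\ref{an_rel} in its tube-zeta form, cf.\ Remark~\ref{tube_holo}), so the limit $(s-D)^{m}\widetilde{\zeta}_{A,\O}(s)\to\widetilde{\zeta}_{A,\O}[D]_{-m}$ along the reals must be $\ge0$, hence $>0$. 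Each nonreal $\omega_n$ contributes $t^{N-D}\,t^{-\I\gamma_n}$ times a polynomial in $\log t^{-1}$ of degree $k_n-1\le m-2$. Collecting all terms and using $D-\sup S>0$ to absorb the tube-formula error yields
\begin{equation*}
|A_t\cap\O|=t^{N-D}(\log t^{-1})^{m-1}\!\left(\frac{\widetilde{\zeta}_{A,\O}[D]_{-m}}{(m-1)!}+O\big((\log t^{-1})^{-1}\big)\right)+O\big(t^{N-\sup S}\big)\qquad\text{as }t\to0^+ .
\end{equation*}

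Finally I would draw the conclusions. Set $\mathcal{M}:=\widetilde{\zeta}_{A,\O}[D]_{-m}/(m-1)!\in(0,+\ty)$ and $h(t):=(\log t^{-1})^{m-1}$; the previous display gives $|A_t\cap\O|/\big(t^{N-D}h(t)\big)\to\mathcal{M}$ as $t\to0^+$. Since $t^{D-r}h(t)\to0$ for $r<D$ and $t^{D-r}h(t)\to+\ty$ for $r>D$, it follows that $\M^{*r}(A,\O)=0$ for $r<D$ and $\M_*^{r}(A,\O)=+\ty$ for $r>D$, so $\dim_B(A,\O)$ exists and equals $D$; and when $m\ge2$, $h(t)\to+\ty$ forces $\M^{*D}(A,\O)=\M_*^{D}(A,\O)=+\ty$, i.e.\ $(A,\O)$ is Minkowski degenerate with $\M^{D}(A,\O)=+\ty$, while relative to $h$ it is Minkowski measurable with $\M^{D}(A,\O,h)=\mathcal{M}$, establishing \eqref{hML}. (For $m=1$, $h\equiv1$ and the argument collapses to ordinary Minkowski measurability with $\M^{D}(A,\O)=\res(\widetilde{\zeta}_{A,\O},D)$, consistently with Theorems~\ref{pole1mink_tilde} and~\ref{tilde_criterion}.) The displayed asymptotics above is precisely \eqref{tubemlog}; and if $D$ is the unique pole of $\widetilde{\zeta}_{A,\O}$ on the critical line, the oscillatory $\omega_n$-contributions are absent, so beyond the polynomial part only the genuinely power-like remainder $O\big(t^{N-\sup S}\big)=t^{N-D}h(t)\cdot O\big(t^{D-\sup S}(\log t^{-1})^{1-m}\big)$ survives, which gives \eqref{tubem}.

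Conceptually the result is an immediate consequence of the pointwise fractal tube formulas of \cite{ftf_A}; the genuinely delicate points, and hence the main obstacles, are (i) verifying that the languidity thresholds ($\kappa<-1$, resp.\ $\kappa<0$ after scaling) really do license the \emph{pointwise} (rather than merely distributional) tube formula, (ii) the positivity of the leading Laurent coefficient $\widetilde{\zeta}_{A,\O}[D]_{-m}$, and (iii) the bookkeeping that rescales the strongly languid formula back to $(A,\O)$ while correctly tracking all lower-order terms in the Laurent expansions. I expect (iii), together with the careful passage between the languid and strongly languid cases, to be the most tedious part, though none of it is conceptually hard.
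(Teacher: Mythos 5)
Your proposal follows essentially the same strategy as the paper's proof: apply the pointwise fractal tube formula of [LapRa\v Zu8] for the given screen, compute the residues at the principal poles via the Taylor expansion of $t^{N-s}$ against the Laurent series of $\widetilde{\zeta}_{A,\O}$, and read off the gauge-Minkowski content and the asymptotic estimates from the dominant $(\log t^{-1})^{m-1}$ term. You additionally make explicit two points the paper leaves implicit — the positivity of the leading Laurent coefficient $\widetilde{\zeta}_{A,\O}[D]_{-m}$ (from positivity of $\widetilde{\zeta}_{A,\O}(s)$ for real $s>D$) and the rescaling bookkeeping needed in the strongly languid case — but the argument is the same.
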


\begin{proof}
Let $\omega_0:=\ov{D}:=\ov{\dim}_B(A,\O)$,\footnote{It will follow from the proof that $\dim_B(A,\O)$ exists, and that $\ov{D}=\dim_B(A,\O)$ and hence, is equal to $D$.} and let $\omega_j:=\ov{D}+\I\gamma_j$, where $\gamma_j\in\eR\setminus\{0\}$ for $j\in J$ and $J$ is a finite subset of $\Ze\setminus\{0\}$.
That is, $\{\omega_j\}_{j\in J}$ is the (finite) set of all the other poles of $\widetilde{\zeta}_{A,\O}$ located on the critical line $\{\re s=\ov{D}\}$, i.e., with real part $\ov{D}$.
We also let $\gamma_0:=0$ and $m_0:=m$, in order to be consistent with the notation introduced just below.
Furthermore, for each $j\in J$, let $m_j$ be the multiplicity of $\omega_j$ and then, by hypothesis of the theorem, we have that $m_j<m$ for every $j\in J$.
By \cite[Theorem 3.2]{ftf_A} (or \cite[Theorem 5.1.14]{fzf}) and since the screen $\bm{S}$ is strictly to the right of all the other complex dimensions of $(A,\O)$ with real part strictly less than $\ov{D}$ and strictly to the left of the critical line $\{\re s=\ov{D}\}$, we obtain a pointwise tube formula for $|A_t\cap\O|$ with an error term that is of strictly higher asymptotic order as $t\to0^+$ than the term corresponding to the residue at $s=\ov{D}$; more specifically, we have the following pointwise tube formula with error term:
\begin{equation}\label{degene}
|A_t\cap\O|=\sum_{j\in J\cup\{0\}}\res(t^{N-s}\widetilde{\zeta}_{A,\O}(s),\omega_j)+O(t^{N-\sup S})\quad \textrm{as}\ t\to0^+.
\end{equation}
We next consider the Taylor expansion of $t^{N-s}$ around $s=\omega_j$ (for each $j\in J\cup\{0\}$),
\begin{equation}\label{tns}
t^{N-s}=t^{N-\omega_j}\,\E^{(s-\omega_j)\log t^{-1}}=t^{N-\omega_j}\sum_{n=0}^{\ty}\frac{(\log t^{-1})^n}{n!}(s-\omega_j)^{n},
\end{equation}
multiply it by the Laurent expansion of $\widetilde{\zeta}_{A,\O}(s)$ around $s=\omega_j$ and extract the residue of this product in order to deduce that
\begin{equation}\label{res_logt}
\res(t^{N-s}\widetilde{\zeta}_{A,\O}(s),\omega_j)=t^{N-\omega_j}\sum_{n=0}^{m_j-1}\frac{(\log t^{-1})^n}{n!}\widetilde{\zeta}_{A,\O}[\omega_j]_{-n-1}.
\end{equation}
From the above identity and from \eqref{degene}, we conclude that $\dim_B(A,\O)$ exists and is equal to $\ov{D}$.
Furthermore, since $m_j<m_0=m$, we have that the highest power of $\log t^{-1}$ appearing in the fractal tube formula \eqref{degene} is $m-1$, and that it appears only in the sum \eqref{res_logt} when $j=0$.
Therefore, if we choose $h(t):=(\log t^{-1})^{m-1}$ for $t\in(0,1)$ as our gauge function, the statements about the Minkowski content and the gauge Minkowski content now also follow from the fractal tube formula \eqref{degene}.

The interested reader can now check that Equations \eqref{tubemlog} and \eqref{tubem} follow easily from \eqref{degene} by rewriting \eqref{res_logt} as follows:
\begin{equation}\label{res_logt2}
\res(t^{N-s}\widetilde{\zeta}_{A,\O}(s),\omega_j)=t^{N-D}h(t)\sum_{n=0}^{m_j-1}t^{D-\o_j}\frac{(\log t^{-1})^{n-m+1}}{n!}\widetilde{\zeta}_{A,\O}[\omega_j]_{-n-1}.
\end{equation}
\vskip2mm
\noindent Indeed, for $j=0$, we have that $m_0=m$ and $\o_0=D$; so that the term on the right-hand side of \eqref{res_logt2} corresponding to $n=m-1$ is equal to $\frac{\widetilde{\zeta}_{A,\O}[\omega_j]_{-m}}{(m-1)!}$
(i.e., to $\mathcal M$, the $h$-Minkowski content of $(A,\O)$; see Equation \eqref{hML} above), while for any $n\in\{0,\dots,m-2\}$ (if this set is nonempty, i.e., if $m\ge2$),
we are left with a function which is $O((\log t^{-1})^{-1})$ as $t\to0^+$ (if $m=1$, the corresponding function is absent; i.e., it is equal to zero). Equations \eqref{tubemlog} and \eqref{tubem} then follow because for $j\ne0$ we have that $|t^{D-\o_j}|=1$ (since $D-\o_j$ is a purely imaginary complex number) and $(\log t^{-1})^{n-m+1}=O((\log t^{-1})^{-1})$ as $t\to0^+$. 

This concludes the proof of the theorem.
\end{proof}

\begin{remark}
In the statement and in light of the proof of Theorem \ref{degg},
the remainder term $O((\log t^{-1})^{-1})$ in Equation \eqref{tubemlog} can be slightly improved to $O((\log t^{-1})^{n-m+1})$,
where $n$ is the largest positive integer such that $n<m-1$ and for which there exists $j\in J$ such that $\widetilde{\zeta}_{A,\O}[\omega_j]_{-n-1}\ne0$.
\end{remark}

We point out that the further to the left we can meromophically extend the tube zeta function $\widetilde\zeta_{A,\O}$ of a given RFD $(A,\O)$
(meaning, the smaller the value of $\sup \bm{S}$), the sharper the estimate \eqref{tubem} in Theorem \ref{degg}. In other words, 
if we denote by ${\mathcal S}(A,\O)$ the family of all possible screens $\bm{S}$ such that the tube zeta function $\widetilde\zeta_{A,\O}$ admits a meromorphic extension to an open connected neighborhood of the corresponding window $W=W(S)$, it is natural to define the following (extended) real number:
\begin{equation}\label{alphaAO}
{\rm or}(A,\O):=\inf_{S\in{\mathcal S}(A,\O)}\sup\, S\in[-\ty,\ov\dim_B(A,\O)],
\end{equation}
which we call the {\em order}\label{orderAO} of the RFD $(A,\O)$. 
It is clear that 
\begin{equation}\label{sigmamer}
{\rm or}(A,\O)\le D_{\rm mer}(\widetilde\zeta_{A,\O}),
\end{equation}
where $D_{\rm mer}(\widetilde\zeta_{A,\O})$ is the abscissa of meromorphic continuation of the tube zeta function $\widetilde\zeta_{A,\O}$.
In light of Equation \eqref{tubem} in Theorem \ref{degg}, we can now deduce the following significant conclusion.

\begin{theorem}\label{tubealpha} $(${\rm Optimal tube function asymptotic expansion}$).$
Let $(A,\O)$ be a relative fractal drum such that the conditions of Theorem \ref{degg} are satisfied, with $D:=D(\widetilde\zeta_{A,\O})$ being the unique pole of $\widetilde\zeta_{A,\O}$ on the open right half-plane $\{\re s>D_{\rm mer}(\widetilde\zeta_{A,\O})\}$,
of multiplicity $m\geq1$.
%critical line $\{\re s=D\}$ of the tube zeta function $\widetilde\zeta_{(A,\O)}$. 
Then, for any positive real number $\e$, the tube function $t\mapsto|A_t\cap\O|$ has the following asymptotic expansion, with an error term$:$
%\footnote{Here we use the convention (already frequently used in Chapter 2) that, given a positive real number $\c$, $f(t)=O(t^{(\c})$ as $t\to0^+$ by definition means that $f(t)=t^{\c-\e}$ as $t\to0^+$ for any $\e>0$, i.e., for any positive $\e$ arbitrarily close to $0$. In Equations \eqref{tubealpha1} \eqref{tubealpha1} we take $\c:=D-\a(A,\O)$ and $\c:=D-D_{\rm mer}(\widetilde\zeta_{(A,\O)})$, respectively.}
\begin{equation}\label{tubealpha2}
|A_t\cap\O|=t^{N-D}h(t)\big({\mathcal M}+O(t^{D-D_{\rm mer}(\widetilde\zeta_{A,\O})-\e})\big)\q\mbox{\rm as\q $t\to0^+$,}
\end{equation}
where $h(t):=(\log t^{-1})^{m-1}$ for all $t\in(0,1)$.
Furthermore, the asymptotic formula in Equation \eqref{tubealpha2} is optimal; that is, the constant $D-D_{\rm mer}(\widetilde\zeta_{A,\O})$ cannot be replaced by a larger number. Moreover, the order of the RFD $(A,\O)$ 
is equal to the abscisssa of meromorphic continuation of the corresponding tube zeta function $\widetilde\zeta_{A,\O}$; i.e.,
\begin{equation}\label{ormer}
{\rm or}(A,\O)=D_{\rm mer}(\widetilde\zeta_{A,\O}).
\end{equation}
\end{theorem}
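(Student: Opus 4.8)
The plan is to derive the asymptotic expansion \eqref{tubealpha2} from the pointwise fractal tube formula with error term (\cite[Theorem~3.2]{ftf_A}; see also \cite[Theorem~5.1.14]{fzf}), applied along screens lying arbitrarily close to the vertical line $\{\re s=D_{\rm mer}(\widetilde\zeta_{A,\O})\}$, and then to establish its optimality by a converse (Abelian-type) argument: were the error term in \eqref{tubealpha2} strictly better, inserting the corresponding refined asymptotics of $t\mapsto|A_t\cap\O|$ back into the Lebesgue integral defining $\widetilde\zeta_{A,\O}$ would force a meromorphic continuation of $\widetilde\zeta_{A,\O}$ past the abscissa $D_{\rm mer}(\widetilde\zeta_{A,\O})$, a contradiction. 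The identity \eqref{ormer} is then read off from the definitions, together with the inequality \eqref{sigmamer} already recorded above.

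To prove \eqref{tubealpha2}, fix $\varepsilon>0$; we may assume $\varepsilon<D-D_{\rm mer}(\widetilde\zeta_{A,\O})$ (for larger $\varepsilon$ the assertion being weaker) and set $\alpha:=D_{\rm mer}(\widetilde\zeta_{A,\O})+\varepsilon$, so $D_{\rm mer}(\widetilde\zeta_{A,\O})<\alpha<D$. Since, by hypothesis, $\widetilde\zeta_{A,\O}$ is meromorphic on $\{\re s>D_{\rm mer}(\widetilde\zeta_{A,\O})\}$ with $D$ as its only pole there, I would take the screen $\bm{S}$ to be the vertical line $\{\re s=\alpha\}$, so that $\sup S=\alpha<D$ and the window $\bm{W}$ contains no pole of $\widetilde\zeta_{A,\O}$ other than $D$. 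The (strong) languidity built into the hypotheses of Theorem~\ref{degg} ensures that $\widetilde\zeta_{A,\O}$ satisfies the requisite languidity conditions along such a screen, so the pointwise tube formula with error term applies and gives
\begin{equation*}
|A_t\cap\O|=\res(t^{N-s}\widetilde\zeta_{A,\O}(s),D)+O(t^{N-\alpha})\qquad\textrm{as }t\to0^+.
\end{equation*}
Expanding $t^{N-s}$ in a Taylor series about $s=D$ and multiplying by the Laurent series of $\widetilde\zeta_{A,\O}$ at $D$, exactly as in \eqref{tns}--\eqref{res_logt}, the residue equals $t^{N-D}\sum_{n=0}^{m-1}\frac{(\log t^{-1})^{n}}{n!}\,\widetilde\zeta_{A,\O}[D]_{-n-1}$, whose leading term is ${\mathcal M}\,t^{N-D}(\log t^{-1})^{m-1}={\mathcal M}\,t^{N-D}h(t)$, with ${\mathcal M}$ as in \eqref{hML}. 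Dividing through by $t^{N-D}h(t)$ and using that $h(t)\ge1$ for $t$ small, the error becomes $O(t^{D-\alpha})=O(t^{D-D_{\rm mer}(\widetilde\zeta_{A,\O})-\varepsilon})$, which yields \eqref{tubealpha2}. When $m=1$ this is the full statement; when $m\ge2$, the lower-order terms of the residue contribute at most an additional $O((\log t^{-1})^{-1})$, exactly as in the proof of Theorem~\ref{degg}.

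For the optimality, suppose toward a contradiction that \eqref{tubealpha2} held with the exponent $D-D_{\rm mer}(\widetilde\zeta_{A,\O})$ replaced by some $\beta>D-D_{\rm mer}(\widetilde\zeta_{A,\O})$, i.e.\ with $D-\beta<D_{\rm mer}(\widetilde\zeta_{A,\O})$. Then there exist $\delta>0$ and $C>0$ such that $\big||A_t\cap\O|-{\mathcal M}\,t^{N-D}(\log t^{-1})^{m-1}\big|\le C\,t^{N-D+\beta}(\log t^{-1})^{m-1}$ for all $t\in(0,\delta)$, and inserting this estimate into $\widetilde\zeta_{A,\O}(s;\delta)=\int_0^{\delta}t^{s-N-1}|A_t\cap\O|\,\D t$ splits the integral into two pieces. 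The main piece, ${\mathcal M}\int_0^{\delta}t^{s-D-1}(\log t^{-1})^{m-1}\,\D t$, is evaluated by the substitution $u:=\log t^{-1}$ to equal $\frac{(m-1)!\,{\mathcal M}}{(s-D)^m}$ plus an entire function, hence is meromorphic on all of $\Ce$ with $D$ as its only pole; the remaining piece converges absolutely for $\re s>D-\beta$ and defines a holomorphic function there. Consequently $\widetilde\zeta_{A,\O}$ would admit a meromorphic continuation to $\{\re s>D-\beta\}$ with $D$ as its only pole; since $D-\beta<D_{\rm mer}(\widetilde\zeta_{A,\O})$ and this abscissa does not depend on the choice of $\delta$, this contradicts the definition of $D_{\rm mer}(\widetilde\zeta_{A,\O})$ as the abscissa of meromorphic continuation of $\widetilde\zeta_{A,\O}$. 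Hence the exponent $D-D_{\rm mer}(\widetilde\zeta_{A,\O})$ in \eqref{tubealpha2} cannot be enlarged.

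Finally, for \eqref{ormer}: the inequality ${\rm or}(A,\O)\le D_{\rm mer}(\widetilde\zeta_{A,\O})$ is \eqref{sigmamer}, while for the reverse one I would observe that every screen $\bm{S}\in{\mathcal S}(A,\O)$ has a window $\bm{W}$ containing the open half-plane $\{\re s>\sup S\}$, so that a meromorphic extension of $\widetilde\zeta_{A,\O}$ to an open connected neighborhood of $\bm{W}$ restricts to a meromorphic extension to $\{\re s>\sup S\}$, whence $\sup S\ge D_{\rm mer}(\widetilde\zeta_{A,\O})$; taking the infimum over all $\bm{S}\in{\mathcal S}(A,\O)$ then gives ${\rm or}(A,\O)\ge D_{\rm mer}(\widetilde\zeta_{A,\O})$. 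I expect the main obstacle to lie in the first step, namely in verifying that the (strong) languidity hypothesis of Theorem~\ref{degg} really does supply, for every $\varepsilon>0$, a screen $\bm{S}$ with $\sup S<D_{\rm mer}(\widetilde\zeta_{A,\O})+\varepsilon$ along which the pointwise tube formula with error term is valid---so that the error exponent in \eqref{tubealpha2} can be pushed all the way down to $N-D_{\rm mer}(\widetilde\zeta_{A,\O})$---together with the careful bookkeeping of the uniform $O$-estimates on $(0,\delta)$ and, when $m\ge2$, of the lower-order logarithmic terms, needed to make the Abelian argument for optimality fully rigorous.
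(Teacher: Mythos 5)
Your argument is essentially correct and follows the same overall strategy as the paper (apply the pointwise tube formula along screens pushed leftward to get \eqref{tubealpha2}, prove optimality by a converse Abelian/Mellin argument, then deduce \eqref{ormer}), but it diverges from the paper's proof in two details worth noting. For the optimality step, the paper simply cites \cite[Theorem 4.5.1]{fzf} (or [LapRa\v Zu3]), which is precisely the statement that the asymptotic expansion $|A_t\cap\O|=t^{N-D}(\log t^{-1})^{m-1}(\mathcal M+O(t^\beta))$ forces a meromorphic continuation of $\widetilde\zeta_{A,\O}$ to $\{\re s>D-\beta\}$; your explicit computation (splitting the integral, evaluating $\mathcal M\int_0^\delta t^{s-D-1}(\log t^{-1})^{m-1}\,\D t$ via $u=\log t^{-1}$ to extract $(m-1)!\,\mathcal M/(s-D)^m$ plus an entire function, and noting the remainder is holomorphic on $\{\re s>D-\beta\}$) is a correct unpacking of that cited lemma, so the content is the same. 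More interestingly, your proof of \eqref{ormer} is genuinely simpler and more direct than the paper's: you observe that for \emph{any} admissible screen $\bm S$ the window contains $\{\re s>\sup S\}$, so meromorphy of $\widetilde\zeta_{A,\O}$ on a neighborhood of $\bm W$ forces $\sup S\ge D_{\rm mer}(\widetilde\zeta_{A,\O})$, whence ${\rm or}(A,\O)\ge D_{\rm mer}(\widetilde\zeta_{A,\O})$ by taking infima --- a purely definitional observation that, combined with \eqref{sigmamer}, gives \eqref{ormer} without invoking the optimality at all. The paper instead derives the inequality ${\rm or}(A,\O)\ge D_{\rm mer}(\widetilde\zeta_{A,\O})$ as a \emph{consequence} of the optimality established in the previous step (via \eqref{tubealpha1} and \eqref{tubealpha2}), which is a longer and less transparent route. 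Your approach makes \eqref{ormer} logically independent of the tube-formula machinery, which is a modest but real improvement in exposition. Finally, the languidity concern you flag at the end is legitimate: both your argument (which wants the tube formula along each vertical screen $\{\re s=D_{\rm mer}+\e\}$) and the paper's (which applies Theorem \ref{degg} along a screen with $\sup S<{\rm or}(A,\O)+\e$ --- note the inequality in the paper's proof is a typo and should read $\sup S<{\rm or}(A,\O)+\e$) implicitly presuppose that suitable languid screens exist arbitrarily close to the abscissa $D_{\rm mer}(\widetilde\zeta_{A,\O})$, whereas the hypotheses only guarantee languidity for a single screen (or strong languidity after scaling). The paper leaves this technical point implicit, so it is not a gap specific to your proposal.
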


\begin{proof}
Firstly, using Theorem \ref{degg} and since for any $\e>0$ there exists $S\in\mathcal S$ 
such that $\sup S>{\rm or}(A,\O)+\e$, we have that
\begin{equation}\label{tubealpha1}
|A_t\cap\O|=t^{N-D}h(t)\big({\mathcal M}+O(t^{D-{\rm or}(A,\O)-\e})\big)\q\mbox{as\, $t\to0^+$.}
\end{equation}
Equation \eqref{tubealpha2} then follows from \eqref{sigmamer}.

In order to obtain the {\em optimality} of the asymptotic formula in Equation \eqref{tubealpha2}, we reason by contradiction.
Assume that we have
\begin{equation}
|A_t\cap\O|=t^{N-D}h(t)\big({\mathcal M}+O(t^{\a})\big)\q\mbox{as\q $t\to0^+$,}
\end{equation}
for some real number
\begin{equation}\label{consta}
\a>D-D_{\rm mer}(\widetilde\zeta_{A,\O}). 
\end{equation}
(Note that in the statement of \eqref{consta}, we can omit $\e$ since, without loss of generality, we can always choose a smaller real number $\a$ satisfying the same strict inequality.) 
We now use \cite[Theorem~4.5.1]{fzf} (applied with $m-1$ instead of $m$) or [LapRa\v Zu3] to conclude that $\widetilde\zeta_{A,\O}$ can be meromophically extended (at least) to the open right half-plane $\{\re s>D-\a\}$.
It then follows directly from the definition of the abscissa of meromorphic continuation that $D_{\rm mer}(\widetilde\zeta_{A,\O})\le D-\a$.
On the other hand, we can deduce from \eqref{consta} that $D-\a<D_{\rm mer}(\widetilde\zeta_{A,\O})$,
which contradicts the above inequality in \eqref{consta}. 

Finally, in order to obtain Equation \eqref{ormer}, it suffices to note that, due to the optimality proved just above, we must have that $D-{\rm or}(A,\O)\le D-D_{\rm mer}(\widetilde\zeta_{A,\O})$ (see Equations \eqref{tubealpha1} and \eqref{tubealpha2}); i.e., ${\rm or}(A,\O)\ge D_{\rm mer}(\widetilde\zeta_{A,\O})$, which together with \eqref{sigmamer}, implies the desired equality \eqref{ormer}.
This concludes the proof of the theorem.
\end{proof}

We can rephrase this result, assuming that the conditions of Theorem \ref{tubealpha} are satisfied.
Namely, we then conclude that, the larger the difference 
\begin{equation}\label{aAO}
\a(A,\O):=D(\widetilde\zeta_{A,\O})-D_{\rm mer}(\widetilde\zeta_{A,\O})
\end{equation}
between the abscissa of (absolute) convergence and the abscissa of meromorphic continuation of the tube zeta function $\widetilde\zeta_{A,\O}$ of a given RFD $(A,\O)$, the better the asymptotic estimate \eqref{tubealpha2} of the tube function $t\mapsto|A_t\cap\O|$ when $t\to0^+$.

%Examples show that the value of $D-D_{\rm mer}(\widetilde\zeta_{(A,\O)}$ appearing in the exponent on the right-hand side of Equation \eqref{tubealpha2} cannot be improved in general. See \cite[Example ]{larazu3}.
%It would be interesting to know whether there exists an RFD $(A,\O)$ such that $\a(A,\O)<D_{\rm mer}(\widetilde\zeta_{(A,\O)})$ or not. It seems to be difficult to compute the value of $\a(A,\O)$ for a reasonably general class of RFDs $(A,\O)$.

\medskip
\medskip

The next result, Theorem \ref{converse-log}, can be actually viewed as the converse of \cite[Theorem 4.5.1]{fzf}.

\begin{theorem}\label{converse-log}
Let $(A,\O)$ be an RFD such that the conditions of Theorem \ref{degg} are satisfied. 
Furthermore, assume that there exists a positive real number $\a$ such that the relative tube zeta function $\widetilde\zeta_{A,\O}$ can be meromorphically extended to the open right half-plane $\{\re s> D-\a\}$,
with $D:=D(\widetilde\zeta_{A,\O})$ being the unique pole of $\widetilde\zeta_{A,\O}$ in this right half-plane, of multiplicity $m\geq1$.
Then, the tube function $t\mapsto|A_t\cap\O|$ has the following asymptotic expansion, with error term of order $\a$$:$
\begin{equation}\label{tubealpha3}
|A_t\cap\O|=t^{N-D}h(t)\big({\mathcal M}+O(t^{\a})\big)\q\mbox{\rm as\q $t\to0^+$,}
\end{equation}
where $h(t):=(\log t^{-1})^{m-1}$ for all $t\in(0,1)$.
Moreover, if we let
\begin{equation}\label{rAO}
r(A,\O):=\sup\{\re s:s\in\po(\widetilde\zeta_{A,\O})\setminus\{D\}\},
\end{equation}
then for any positive real number $\e$, the tube function $t\mapsto|A_t\cap\O|$ has the following asymptotic expansion, with error term$:$
\begin{equation}\label{tubealpha4}
|A_t\cap\O|=t^{N-D}h(t)\big({\mathcal M}+O(t^{D-r(A,\O)-\e})\big)\q\mbox{\rm as\q $t\to0^+$.}
\end{equation}
\end{theorem}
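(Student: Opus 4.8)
The plan is to deduce Theorem \ref{converse-log} from the pointwise fractal tube formula with error term in \cite[Theorem 3.2]{ftf_A} (equivalently, \cite[Theorem 5.1.14]{fzf}), essentially by combining the argument already used in the proof of Theorem \ref{degg} with a judicious choice of screen. Since by hypothesis the conditions of Theorem \ref{degg} hold and, in addition, $\widetilde\zeta_{A,\O}$ admits a meromorphic continuation to $\{\re s>D-\a\}$ in which $D$ is the only pole (of multiplicity $m$), I would first fix an arbitrary $\e>0$ with $D-\a<D-\e<D$ and choose a screen $\bm S$ which is the vertical line $\{\re s=D-\a+\e'\}$ for a small $\e'>0$ (or, more carefully, any admissible screen with $\sup S<D$ lying strictly to the right of $\po(\widetilde\zeta_{A,\O})\setminus\{D\}$ inside the half-plane $\{\re s>D-\a\}$, which exists because there are only finitely many poles to avoid on $\{\re s=\ov D\}$ and none strictly between $D-\a$ and $D$ by hypothesis). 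The languidity conditions required by \cite[Theorem 3.2]{ftf_A} along such a screen are inherited from the strong languidity assumption of Theorem \ref{degg} (with the possible scaling by $\lambda$), exactly as in the proof of that theorem.

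The next step is purely computational and mirrors the end of the proof of Theorem \ref{degg}. Applying the pointwise tube formula with error term, one gets
\begin{equation}\label{conv_step1}
|A_t\cap\O|=\res\big(t^{N-s}\widetilde\zeta_{A,\O}(s),D\big)+O(t^{N-\sup S})\quad\text{as }t\to0^+,
\end{equation}
since $D$ is the only pole of $\widetilde\zeta_{A,\O}$ in the window $\bm W$ determined by $\bm S$. Expanding $t^{N-s}$ around $s=D$ as in \eqref{tns} and multiplying by the Laurent expansion of $\widetilde\zeta_{A,\O}$ around $s=D$ (as in \eqref{res_logt}), and then factoring out $t^{N-D}h(t)$ with $h(t)=(\log t^{-1})^{m-1}$ exactly as in \eqref{res_logt2}, one isolates the term $n=m-1$ as $\mathcal M=\widetilde\zeta_{A,\O}[D]_{-m}/(m-1)!$ and bounds the remaining terms ($n\le m-2$) by $O((\log t^{-1})^{-1})=O(t^{-\e''})$ for any $\e''>0$; absorbing the logarithmic factors into a power loss then yields
\begin{equation}\label{conv_step2}
|A_t\cap\O|=t^{N-D}h(t)\big(\mathcal M+O(t^{D-\sup S})\big)\quad\text{as }t\to0^+.
\end{equation}
Since $\sup S$ can be taken arbitrarily close to $D-\a$ from above, \eqref{tubealpha3} follows; and taking instead $\sup S$ arbitrarily close to $r(A,\O)$ from above (which is permissible because $r(A,\O)<D$ and no pole other than $D$ lies in $\{\re s>r(A,\O)\}$) gives \eqref{tubealpha4}.

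The only delicate point — and the one I would be most careful about — is the handling of the logarithmic corrections: one must check that the subleading terms in \eqref{res_logt2}, which carry factors $(\log t^{-1})^{n-m+1}$ with $n\le m-2$, are genuinely of lower order than the leading $t^{N-D}h(t)\mathcal M$ after division by $t^{N-D}h(t)$, and that they can be subsumed in $O(t^{D-\sup S})$ (resp. $O(t^{D-r(A,\O)-\e})$) without spoiling the stated exponent; since $(\log t^{-1})^{-1}=O(t^{\beta})$ as $t\to0^+$ for every $\beta>0$, this forces the harmless $\e$-loss in \eqref{tubealpha4} and requires the strict inequality $D-\a<D-\e$ to be available, which is exactly why $\a>0$ is assumed. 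A secondary technical check, already implicit in Theorem \ref{degg}, is that the screen chosen here indeed satisfies the $d$-languidity (or strong languidity, after scaling) hypotheses of \cite[Theorem 3.2]{ftf_A}; this is immediate from the hypothesis that the conditions of Theorem \ref{degg} hold, since the screen we use lies to the right of the one provided there and strong languidity permits pushing the screen arbitrarily far to the left. No genuinely new idea beyond the proof of Theorem \ref{degg} is needed; the content is the observation that the meromorphicity hypothesis lets us place the screen as far left as $D-\a$ (resp. $r(A,\O)$).
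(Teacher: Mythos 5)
Your overall strategy is the paper's: reduce to the pointwise estimate of Theorem \ref{degg} (equivalently, equation \eqref{tubem}) by choosing a suitable vertical-line screen. In fact, re-deriving the residue calculation from the pointwise tube formula is unnecessary — the paper simply cites \eqref{tubem} of Theorem \ref{degg} as a black box, which is cleaner, though this is only a stylistic point.

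However, there is a genuine gap in your deduction of \eqref{tubealpha3}. You place the screen at $\{\re s = D-\a+\e'\}$ for small $\e'>0$, obtaining the estimate $|A_t\cap\O|=t^{N-D}h(t)\big(\mathcal M+O(t^{\a-\e'})\big)$ for each fixed $\e'>0$, and then assert that ``since $\sup S$ can be taken arbitrarily close to $D-\a$ from above, \eqref{tubealpha3} follows.'' This does not follow: a family of $O(t^{\a-\e'})$ bounds, each carrying its own implicit constant depending on $\e'$, does not imply the single $O(t^\a)$ bound stated in \eqref{tubealpha3}. You would only get the weaker conclusion ``$O(t^{\a-\e'})$ for every $\e'>0$,'' which is precisely an $\e$-loss that \eqref{tubealpha3} (unlike \eqref{tubealpha4}) does not allow. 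The paper instead places the screen exactly at $\{\re s = D-\a\}$, so $D-\sup S=\a$ and \eqref{tubem} gives $O(t^\a)$ on the nose. Your instinct to retreat to $D-\a+\e'$ is motivated by the observation that the hypothesis only guarantees meromorphicity on the open half-plane $\{\re s> D-\a\}$; that is a reasonable worry, but the correct resolution is not to pass to a limit of screens (which fails), but rather to observe that the paper's reading is that the languidity hypothesis inherited from Theorem \ref{degg} is assumed to hold along the line $\{\re s = D-\a\}$ itself, which is what licenses placing the screen exactly there. Your argument for \eqref{tubealpha4} is fine, since there the $\e$-loss is already built into the statement.
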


\begin{proof}
%Since, by the definition of the abscissa of meromorphic convergence $D_{\rm mer}(\widetilde\zeta_{(A,\O)})$, we must have that $D-\a>D_{\rm mer}(\widetilde\zeta_{(A,\O)})$, i.e., $D-D_{\rm mer}(\widetilde\zeta_{(A,\O)})>\a$, the claim follows at once from Equation \eqref{tubealpha2} by taking $\e>0$ small enough.
Equation \eqref{tubealpha3} follows from Equation \eqref{tubemlog} of Theorem \ref{degg}, by choosing the screen $\bm{S}$ to be the vertical line $\{\re s=D-\a\}$; that is, $S(x)=D-\a$ for all $x\in\eR$. Indeed, in this case, we have $D-\sup S=D-(D-\a)=\a$.

Equation \eqref{tubealpha4} follows easily from \eqref{tubealpha3}, by letting $\a:=D-r(A,\O)-\e$, since
$s=D$ is the only pole in the open right half-plane $\{\re s>r(A,\O)+\e=D-\a\}$, with $\e>0$ small enough.
\end{proof}

Furthermore, from the functional equation \eqref{equ_tilde}, one can see clearly that when $\ov\dim_BA<N$, the value of the order $\a(A,\O)$
can be analogously defined by using the relative distance zeta function $\zeta_{A,\O}$ instead of the relative tube zeta function $\widetilde\zeta_{A,\O}$ in Equation \eqref{aAO}. Moreover,
$D(\zeta_{A,\O})=D(\widetilde\zeta_{A,\O})=\ov\dim_B(A,\O)$, $D_{\rm mer}(\zeta_{A,\O})=D_{\rm mer}(\widetilde\zeta_{A,\O})$, and the analog of Theorem \ref{tubealpha} can be easily stated and proved in the case of the relative distance zeta function of a given RFD $(A,\O)$.

\begin{remark}
We point out that it is possible that the conclusion of Theorem \ref{degg} (and of Theorem \ref{log-mink} below, respectively) is also true in the case when there exists an infinite sequence of nonreal complex dimensions of $(A,\O)$ with real part $\ov{D}$ such that each of them has multiplicity strictly less than that of $\ov{D}$.\footnote{See Example \ref{1/2-tube_formula} above which illustrates such a situation.}
Note that the fractal tube formula \eqref{degene} also holds pointwise in this case, but in order to obtain the conclusion about $\dim_B(A,\O)$ and the $h$-Minkowski measurability
of $(A,\O)$, the problem lies in the justification of the interchange of the limit as $t\to0^+$ and the infinite sum which appears in this case in Equation \eqref{degene}.
Of course, a priori, we do not have such a justification to our disposal without making additional assumptions on the nature of the convergence of the sum in \eqref{degene}.
\end{remark}

\medskip

It would be interesting to try to extend the above result and obtain a type of gauge Minkowski measurability criterion, in the likes of Theorem \ref{criterion}.\footnote{Note that in \cite{lapidushe} a gauge Minkowski measurability criterion was obtained for fractal strings, extending to the case of non power laws the one obtained (when $h\equiv1$) in [LapPo1,2] although this criterion does not involve the notion of complex dimensions and is stated only in terms of the underlying gauge function $h$ and the asymptotic behavior of the lengths of the string.}
For instance, see \cite[Theorem 4.5.1]{fzf} for a partial converse of the above theorem in the case when the relative tube function satisfies the following pointwise asymptotic expansion, with error term:
\begin{equation}
|A_t\cap\O|=t^{N-D}(\log t^{-1})^{m-1}(\mathcal{M}+O(t^{\alpha}))\quad\textrm{as}\quad t\to0^+,
\end{equation}
where, $m\in\eN$, $m\geq 1$ and $\alpha>0$. 

\medskip

We conclude this section by reformulating Theorem \ref{degg} in terms of the distance (instead of the tube) zeta function but we omit the proof and refer instead the interested reader to \cite[Theorem 5.4.32.]{fzf}.
\medskip

\begin{theorem}\label{log-mink}
Let $(A,\O)$ be a relative fractal drum in $\eR^N$ such that $\ov{\dim}_B(A,\O)<N$.
Also assume that $(A,\O)$ is $d$-languid with $\kappa_d<0$ or is such that $(\lambda A,\lambda\O)$ is strongly $d$-languid for some $\lambda>0$ with $\kappa_d<1$, for a screen passing strictly between the critical line $\{\re s=\ov{\dim}_B(A,\O)\}$ and all the complex dimensions of $(A,\O)$ with real part strictly less than $\ov{D}:=\ov{\dim}_B(A,\O)$.
Furthermore, suppose that $\ov{D}$ is the only pole of the relative distance zeta function $\zeta_{A,\O}$ with real part equal to $\ov{D}$ of multiplicity $m\geq1$ and, additionally, that there exists $($at most$)$ finitely many  nonreal poles of ${\zeta}_{A,\O}$ with real part $\ov{D}$.
Moreover, assume that the multiplicity of each of those nonreal poles is of order strictly less than $m$.
Then, $\dim_B(A,\O)$ exists and is equal to $D:=\ov{D}$.
Finally, $\mathcal{M}^{D}(A,\O)$ exists and is equal to $+\ty$; hence, $(A,\O)$ is Minkowski degenerate.

In addition, an appropriate gauge function for $(A,\O)$ is $h(t):=(\log t^{-1})^{m-1}$ for all $t\in(0,1)$ and we have that, {\rm relative to} $h$, the RFD $(A,\O)$ is not only $h$-Minkowski nondegenerate but is also $h$-Minkowski measurable, with $h$-Minkowski content given by
\begin{equation}
\mathcal{M}^{{D}}(A,\O,h)=\frac{\zeta_{A,\O}[D]_{-m}}{(N-D)(m-1)!},
\end{equation}
where $\zeta_{A,\O}[D]_{-m}$ denotes the coefficient corresponding to $(s-D)^{-m}$ in the Laurent expansion of ${\zeta}_{A,\O}$ around $s=D$.

Finally, the exact same conclusions as in Theorem \ref{degg} hold concerning the asymptotic expansion of 
$|A_t\cap\O|$ in either \eqref{tubemlog} or \eqref{tubem}, but with $\zeta_{A,\O}$ in place of $\widetilde\zeta_{A,\O}$
in the respective hypotheses.
\end{theorem}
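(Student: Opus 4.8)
The plan is to deduce Theorem \ref{log-mink} from its already-established counterpart in terms of the tube zeta function, namely Theorem \ref{degg}, by transporting all hypotheses and conclusions across the functional equation \eqref{equ_tilde}. Since $\ov{D}:=\ov{\dim}_B(A,\O)<N$ by assumption, the factor $N-s$ does not vanish on the critical line $\{\re s=\ov{D}\}$ nor in a thin neighborhood of it, so \eqref{equ_tilde} may be rewritten, on any connected open set to which $\zeta_{A,\O}(\,\cdot\,;\d)$ admits a meromorphic continuation, as
\begin{equation*}
\widetilde{\zeta}_{A,\O}(s;\d)=\frac{\zeta_{A,\O}(s;\d)-\d^{s-N}|A_\d\cap\O|}{N-s}.
\end{equation*}
First I would use this identity to check that $\widetilde{\zeta}_{A,\O}$ inherits from $\zeta_{A,\O}$ a meromorphic continuation to a connected open neighborhood of the given window $\bm{W}$: the summand $\d^{s-N}|A_\d\cap\O|$ is entire, multiplication by $(N-s)^{-1}$ is holomorphic and nonvanishing on $\{\re s\le\ov{D}+\e\}$ for some small $\e>0$, and the only point at which division by $N-s$ could create a new singularity, $s=N$, lies to the right of the critical line and is in fact not a pole at all, since $\widetilde{\zeta}_{A,\O}$ is holomorphic on $\{\re s>\ov{D}\}$ by Theorem \ref{an_rel} together with Remark \ref{tube_holo}. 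Consequently, the poles of $\widetilde{\zeta}_{A,\O}$ on the critical line $\{\re s=\ov{D}\}$ coincide, with the same multiplicities, with those of $\zeta_{A,\O}$ there; so the hypotheses of Theorem \ref{degg} concerning the pole structure on the critical line (uniqueness of $\ov{D}$ as a pole of multiplicity $m$, at most finitely many nonreal poles there, each of multiplicity strictly less than $m$) hold for $\widetilde{\zeta}_{A,\O}$.

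Next I would transfer the languidity hypothesis. Using the displayed rewriting together with the elementary estimate $|N-s|^{-1}\le C(|\im s|+1)^{-1}$ for $|\im s|\ge 1$ and the boundedness of $s\mapsto\d^{s-N}$ on any vertical strip of bounded width, the $d$-languidity conditions {\bf L1} and {\bf L2} for $\zeta_{A,\O}(\,\cdot\,;\d)$ with exponent $\kappa_d$, screen $\bm{S}$ and double sequence $(T_n)_{n\in\Ze}$ yield the corresponding languidity conditions for $\widetilde{\zeta}_{A,\O}(\,\cdot\,;\d)$, with the same $\bm{S}$ and $(T_n)$ and with exponent lowered by one, $\kappa=\kappa_d-1$; in particular $\kappa_d<0$ forces $\kappa<-1$. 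In the strong-languidity alternative one argues the same way after first rescaling: by the scaling property (Theorem \ref{scalingp}), $\zeta_{\lambda A,\lambda\O}(s)=\lambda^s\zeta_{A,\O}(s)$, and applying \eqref{equ_tilde} to $(\lambda A,\lambda\O)$ shows that strong $d$-languidity of $\zeta_{\lambda A,\lambda\O}$ with $\kappa_d<1$ passes to strong languidity of $\widetilde{\zeta}_{\lambda A,\lambda\O}$ with $\kappa=\kappa_d-1<0$. (This is the exact analog, for the passage between the distance and the tube zeta functions, of Remark \ref{tube_dist_nec}, which carries out the corresponding passage to the Mellin zeta function via Proposition \ref{propB}.) I expect this step --- keeping track of the one-unit improvement of the languidity exponent, controlling the entire summand $\d^{s-N}|A_\d\cap\O|$ uniformly along the horizontal segments $\{\im s=T_n\}$ and along the screen, and matching up the screens and the double sequences --- to be the only real (and rather mild) technical obstacle; the rest of the argument is formal.

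With the hypotheses of Theorem \ref{degg} verified for $\widetilde{\zeta}_{A,\O}$, I would then invoke that theorem directly: it gives that $\dim_B(A,\O)$ exists and equals $D:=\ov{D}$, that $\mathcal{M}^{D}(A,\O)=+\ty$, that $h(t):=(\log t^{-1})^{m-1}$ (for $t\in(0,1)$) is a gauge function for $(A,\O)$, that $(A,\O)$ is $h$-Minkowski measurable with $\mathcal{M}^{D}(A,\O,h)=\widetilde{\zeta}_{A,\O}[D]_{-m}/(m-1)!$, and that $|A_t\cap\O|$ obeys the asymptotic expansions \eqref{tubemlog} and \eqref{tubem} (according to whether or not there is a nonreal principal complex dimension); the latter two are statements about $|A_t\cap\O|$ alone, involving the same screen, and so transfer verbatim. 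It then remains only to re-express the leading Laurent coefficient in terms of $\zeta_{A,\O}$. Expanding $\frac{1}{N-s}=\frac{1}{N-D}\sum_{k\ge 0}\bigl(\tfrac{s-D}{N-D}\bigr)^{k}$ about $s=D$ (a convergent power series, since $D<N$) and multiplying by the Laurent expansion of $\zeta_{A,\O}(\,\cdot\,;\d)$ about $s=D$ (whose most singular term is $\zeta_{A,\O}[D]_{-m}(s-D)^{-m}$, the pole having order exactly $m$), while noting that the entire summand $\d^{s-N}|A_\d\cap\O|$ contributes nothing to the principal part, one reads off $\widetilde{\zeta}_{A,\O}[D]_{-m}=\zeta_{A,\O}[D]_{-m}/(N-D)$. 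Substituting this into the formula furnished by Theorem \ref{degg} yields $\mathcal{M}^{D}(A,\O,h)=\zeta_{A,\O}[D]_{-m}/\bigl((N-D)(m-1)!\bigr)$, as asserted, and completes the plan. A fully detailed version of this argument is carried out in \cite[Theorem 5.4.32]{fzf}.
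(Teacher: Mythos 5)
Your strategy is the natural one (transport the hypotheses across the functional equation and then invoke Theorem \ref{degg}), and your identification of the Laurent coefficient, $\widetilde{\zeta}_{A,\O}[D]_{-m}=\zeta_{A,\O}[D]_{-m}/(N-D)$, is correct and carefully argued. However, the languidity-transfer step, which you yourself flag as the one nontrivial point, does not go through as claimed. From the rewriting
\begin{equation*}
\widetilde{\zeta}_{A,\O}(s;\d)=\frac{\zeta_{A,\O}(s;\d)}{N-s}-\frac{\d^{s-N}|A_\d\cap\O|}{N-s},
\end{equation*}
the first summand does decay like $(|T_n|+1)^{\kappa_d-1}$ along the horizontal segments and the screen, but the second (entire) summand is, after division by $N-s$, only $O\big((|T_n|+1)^{-1}\big)$: the numerator $\d^{s-N}|A_\d\cap\O|$ is merely \emph{bounded} on vertical strips, not decaying. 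The combined bound is therefore $\kappa=\max\{\kappa_d-1,\,-1\}$, not $\kappa_d-1$. This is exactly the phenomenon the paper records in Proposition \ref{propB} (where the entire correction forces $\max\{\kappa_d,0\}$). In the non-strong alternative of the theorem, the hypothesis is $\kappa_d<0$, which yields only $\kappa=-1$ for $\widetilde{\zeta}_{A,\O}$ — falling strictly short of the requirement $\kappa<-1$ in Theorem \ref{degg}, so that theorem cannot be invoked directly as you propose.

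The repair, and the reason the paper phrases the analogous passage in Remark \ref{tube_dist_nec} in terms of the Mellin zeta function, is that the clean functional equation \eqref{mellin_dist}, namely $\zeta^{\mathfrak{M}}_{A,\O}(s)=\zeta_{A,\O}(s;\d)/(N-s)$, has \emph{no} additive entire term, so $\zeta^{\mathfrak{M}}_{A,\O}$ genuinely inherits languidity with exponent $\kappa_d-1<-1$. You should therefore either (a) rerun the proof of Theorem \ref{degg} with $\zeta^{\mathfrak{M}}_{A,\O}$ in place of $\widetilde\zeta_{A,\O}$, using the Mellin-form pointwise (or distributional) tube formula — the residues $\res\!\big(t^{N-s}\zeta^{\mathfrak{M}}_{A,\O}(s),\omega\big)=\res\!\big(\tfrac{t^{N-s}}{N-s}\zeta_{A,\O}(s),\omega\big)$ are exactly those appearing in the distance-zeta tube formula \eqref{Vt}, and the remaining computation of the Laurent coefficient is as you wrote; or (b) subtract off the entire term $\d^{s-N}|A_\d\cap\O|/(N-s)$ from $\widetilde\zeta_{A,\O}$ and account for its (conditionally convergent, explicitly computable, lower-order) contribution to the screen integral separately. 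Either way, your second display — the absolute-convergence bound $\kappa=\kappa_d-1$ for $\widetilde\zeta_{A,\O}$ — needs to be replaced, as it is false precisely in the regime $\kappa_d<0$ that the theorem assumes.

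Your transfer of the pole structure on the critical line, and the verbatim transfer of the asymptotic expansions \eqref{tubemlog} and \eqref{tubem} (since they concern $|A_t\cap\O|$ and the screen alone), are both fine. In the strong-languidity alternative ($\kappa_d<1$, target $\kappa<0$), the same $\max\{\kappa_d-1,-1\}$ computation does suffice, so the gap affects only the non-strong branch.
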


\section{Essential singularities of fractal zeta functions (and beyond)}

The theory of meromorphic functions involves, among other things, various constructions of their meromorphic extensions to suitable open connected domains in the complex plane $\Ce$.  Meromorphic functions (of a single complex variable) and their meromorphic extensions can have at most countably many poles (since the latter are necessarily isolated). As shown in [Lap-vFr1,2] and \cite{fzf}, a large class of meromophic functions can be generated by various bounded fractal strings, fractal sets and RFDs, via the corresponding fractal zeta functions (more precisely, by geometric zeta functions $\zeta_{\mathcal L}$ of fractal strings $\mathcal L$, by distance and tube zeta functions $\zeta_A$ and $\tilde\zeta_A$ of bounded subsets $A$ of Euclidean spaces, as well as by relative distance and relative tube zeta functions $\zeta_{A,\O}$ and $\tilde\zeta_{A,\O}$ associated with RFDs $(A,\O)$ in $\eR^N$). 

It turns out, however, that it is both geometrically and analytically interesting to consider a much larger class of functions than just meromorphic functions. Namely, it has first been noticed by the second author that it is possible to construct a bounded fractal string $\mathcal L$ such that the corresponding geometric zeta function $\zeta_{\mathcal L}$ (or any of its meromorphic extensions) does not possess any pole, while it has a unique essential singularity. See \cite{ra} and \cite[pp.\ 214--215, 281--282, 288--289]{fzf}.

This motivated us to consider a class of complex-valued functions of a single complex variable, called {\em paramorphic functions}, which may possess not only poles but also essential singularities; see \cite{laprazu9}. Much as in the case of meromorphic extensions of a given meromorphic function, it is possible to consider paramorphic continuations of paramorphic functions.
In this manner, we can extend the definition of complex dimensions (of bounded fractal strings $\mathcal L$, bounded subsets $A$ of Euclidean spaces and RFDs $(A,\O)$ in $\eR^N$) to include not only the poles, but also the essential singularities of the corresponding fractal zeta functions.
For the detailed results, see the forthcoming paper \cite{laprazu9} where we construct a bounded fractal string $\mathcal L$ such that the associated geometric zeta function does not possess any pole, and has an arbitrary number of essential singularities, either finite or countably infinite.

Much as in the case of meromorphic extensions of a given fractal zeta function $f$, where we have defined the `abscissa of meromorphic continuation' $D_{\mathrm{mer}}(f)$ (see \cite[p.\ 85]{fzf}), it is possible to define the {\em abscissa of paramorphic continuation}  $D_{\mathrm{par}}(f)$ of $f$. In \cite{laprazu9}, we constructed a bounded fractal string $\mathcal L$ with arbitrarily prescribed values of $D_{\mathrm{par}}(\zeta_{\mathcal L})<D_{\mathrm{mer}}(\zeta_{\mathcal L})<D(\zeta_{\mathcal L})$ contained in $(0,1]$.
Furthermore, by adapting the methods described in \cite{fzf}, one can extend this result to bounded sets $A$ and RFDs $(A,\O)$ in $\eR^N$, for any $N\ge1$, and with arbitrarily prescribed values of $D_{\mathrm{par}}(f)<D_{\mathrm{mer}}(f)<D(f)$ in the interval $(0,N]$ for $f:=\zeta_A$, and in the interval $(-\ty,N]$ for $f:=\zeta_{A,\O}$.

Along similar lines, but also in a somewhat different and significantly more general situation, it would be interesting to develop a theory of complex dimensions valid for analytic fractal zeta functions defined on suitable Riemann surfaces (and even on complex analytic varieties or more generally, on analytic sets); see [Lap7, \S4] for motivations for such an extension. In the geometric setting of bounded sets and RFDs in $\eR^N$, and for the case of Riemann surfaces, the beginning of such a theory is developed in the work in preparation [LapRa\v Zu10] in certain concrete examples for which the usual power scaling laws do not hold and hence, the notion of generalized Minkowski contents (as in [HeLap], \cite[\S6.1.12]{fzf} and in \S\ref{gauge_mink} above) must be used, with the underlying gauge functions suitably adapted to the types of the singularities of the fractal zeta functions involved. We stress, however, that a lot more needs to be done in this research direction, well beyond the initial results obtained in~[LapRa\v Zu10].

\frenchspacing


\begin{thebibliography}{LapRa\v{Z}u1}\label{refs}


%\bibitem[ChrIvLa]{ChrIvLa} E. Christensen, C. Ivan and M. L. Lapidus, Dirac operators and spectral triples for some fractal sets built on
%curves, {\it Adv. in Math.} {\bf 217} (2008), 42--78. (Also: e-print,
%{\tt arXiv:math.MG/0610222v2, 2007.})

%\bibitem[Con]{conway} J.\ B.\ Conway,  {\em Functions of One Complex Variable}, Springer-Verlag, New York 1973.

%\bibitem[EvGar]{evans_gariepy} Evans L.C., Gariepy R.F., {\em Measure Theory and Fine Properties of Functions}, CRC Press, 1992.

%\bibitem[BakFraMa]{bfm} {\bf NOT CITED} S.\ Baker, J.\ M.\ Fraser, A.\ M\'ath\'e, Inhomogeneous self-similar sets with overlaps, preprint, 2015.\ (Also: e-print, {\tt arXiv:1509.03589 [math.CA]}, 2015.)

\bibitem[BarDemk]{bardem} M.\ S.\ Barnsley and S.\ Demko: Iterated function systems and the global construction of fractals, {\em Proc.\ Roy.\ Soc.\ London Ser.\ A} {\bf 399} (1985), 243--275.

\bibitem[BedFi]{BedFi} T.\ Bedford and A.\ M.\ Fisher, Analogues of the Lebesgue density theorem for fractal sets of reals and integers, {\em Proc.\ London
Math.\ Soc.} (3) {\bf 64} (1992), 95--124.

\bibitem[BerGos]{bergos} M.\ Berger and B.\ Gostiaux, {\em Differential Geometry$:$ Manifolds, Curves and Surfaces}, English translation, Springer-Verlag, Berlin, 1988.

\bibitem[Bla]{bla} W.\ Blashke: {\em Integralgeometrie}, Chelsea, New York, 1949.

\bibitem[BroCar]{BroCar} J.\ Brossard and R.\ Carmona, Can one hear the dimension of
a fractal?, {\it Commun.\ Math.\ Phys.} {\bf 104} (1986), 103--122.

\bibitem[CaLapPe-vFr1]{Carfi1}  D.\ Carf\`i, M.\ L.\ Lapidus, E.\ P.\ J.\ Pearse and M.\ van Frankenhuijsen (eds.), {\em Fractal Geometry and Dynamical Systems I}: {\em Fractals in Pure Mathematics, Contemporary Mathematics}, vol.\ 600, Amer.\ Math.\ Soc., Providence, R.\ I., 2013.

\bibitem[CaLapPe-vFr2]{Carfi2}  D.\ Carf\`i, M.\ L.\ Lapidus, E.\ P.\ J.\ Pearse and M.\ van Frankenhuijsen (eds.), {\em Fractal Geometry and Dynamical Systems II}: {\em Fractals in Applied Mathematics, Contemporary Mathematics}, vol.\ 601, Amer.\ Math.\ Soc., Providence, R.\ I., 2013. 

\bibitem[DemDenKo\" U]{DDeKU}  B.\ Demir, A.\ Deniz, S.\ Ko\ced cak and A.\ E.\ \" Ureyen, Tube formulas for graph-directed fractals, {\em Fractals}, No.\ 3, {\bf 18} (2010), 349--361.

\bibitem[DemKo\" O\" U]{DemKoOU}  B.\ Demir, \ced{S}.\ Ko\ced cak, Y.\ \" Ozdemir and A.\ E.\ \" Ureyen, Tube formulas for self-similar fractals with non-Steiner-like generators, in: {\em Proc.\ G\"okova Geometry Topology Conf.}\ (2012), Internat.\ Press, Somerville, MA, 2013, pp. 123--145.\ (Also: e-print, {\tt arXiv:0911.4966 [math.MG]}, 2009.)

\bibitem[DeK\" O\" U]{DeKoOzUr} A.\ Deniz, \c{S}.\ Ko\c cak, Y.\ \" Ozdemir and A.\ E.\ \" Ureyen, Tube volumes via functional equations, {\em J.\ Geom.} {\bf 105} (2014), 1--10.


\bibitem[Dia]{Dia} H.\ G.\ Diamond, On a Tauberian theorem of Wiener and Pitt, {\em Proc.\ Amer.\ Math.\ Soc.} {\bf 31} (1972), 152--158

%\bibitem[DuMenTri]{tric}  Y.\ Dupain, M.\ Mend\`es France and C.\ Tricot, Dimension de spirales, {\it Bull.\ Soc.\ Math.\ France}
%{\bf 111} (1983), 193--201.

\bibitem[DubSep]{DubSep}  E.\ Dubon and J.\ M.\ Sepulcre, On the complex dimensions of nonlattice fractal strings in connection with Dirichlet polynomials, {\em J.\ Experimental Math.}\ No.\ 1, {\bf 23} (2014), 13--24.

\bibitem[ElLapMacRo]{ElLaMaRo}  K.\ E.\ Ellis, M.\ L.\ Lapidus, M.\ C.\ Mackenzie and J.\ A.\ Rock,
Partition zeta functions, multifractal spectra, and tapestries
of complex dimensions, in: {\it Beno\^it Mandelbrot$:$ A Life in Many
Dimensions} (M.\ Frame and N.\ Cohen, eds.), The Mandelbrot Memorial Volume, World Scientific,
Singapore, 
2015, pp.\ 267--322.\
 (Also: e-print, {\tt arXiv:1007.1467v2 [math-ph]},
2011; IHES preprint, IHES/M/12/15, 2012.)


%\bibitem[Edw]{Edw} {\bf NOT CITED} H.\ M.\ Edwards, {\em Riemann's Zeta Function}, Academic Press, New York, 1974.

\bibitem[EsKa]{EsKa} R.\ Estrada and R.\ P.\ Kanwal, {\em A Distributional Approach to Asymptotics$:$ Theory and Applications}, second edition, Brikh\"auser, Boston, 2002.

\bibitem[EsLi1]{EsLi1} D.\ Essouabri and B.\ Lichtin, Zeta functions of discrete self-similar sets, {\em Adv.\ in Math.}\ {\bf 232} (2013), 142--187.

\bibitem[EsLi2]{EsLi2}  D.\ Essouabri and B.\ Lichtin, $k$-point configurations of discrete self-similar sets, in: \cite[pp.\ 21--50]{Carfi1}.\ (dx.doi.org/10.1090/conm/600/11947.)


\bibitem[Fal1]{falc} K.\ J.\ Falconer, {\em Fractal Geometry$:$ Mathematical Foundations and Applications}, third edition, John Wiley and Sons, Chichester,
2014. (First and second editions: 1990 and 2003.)

\bibitem[Fal2]{Fal2}
K.\ J.\ Falconer, On the Minkowski measurability of fractals, {\it Proc.\ Amer.\ Math.\ Soc.} {\bf 123} (1995), 1115--1124.

\bibitem[Fed1]{federer1} H.\ Federer, Curvature measures, {\em Trans.\ Amer.\ Math.\ Soc.} {\bf 93} (1959), 418--491.

\bibitem[Fed2]{federer2}  H.\ Federer, {\em Geometric Measure Theory}, Springer-Verlag, New-York, 1969.

\bibitem[FlVa]{FlVa} J.\ Fleckinger and D.\ Vassiliev, An example of a two-term asymptotics for the ``counting function" of a fractal drum, {\it Trans.\ Amer.\ Math.\ Soc.} {\bf 337} (1993), 99--116.

%\bibitem[Foll]{folland} {\bf NOT CITED} G.\ B.\ Folland, {\em Real Analysis$:$ Modern Techniques and Their Applications}, second edition, John Wiley \& Sons, New York, 1999.

\bibitem[Fr]{Fr}
M.\ Frantz, Lacunarity, Minkowski content, and self-similar sets in $\eR$, in: {\it Fractal Geometry and Applications$:$ A Jubilee of Beno\^it Mandelbrot} (M.\ L.\ Lapidus and M.\ van Frankenhuijsen, eds.), Proc.\ Sympos.\ Pure Math., vol.\ {72}, Part 1, Amer.\ Math.\ Soc., Providence, R.\ I., 2004, pp.\ 77--91.

\bibitem[Gat]{gatzouras} D.\ Gatzouras, Lacunarity of self-similar and stochastically self-similar sets,
{\it Trans.\ Amer.\ Math.\ Soc.}\ {\bf 352} (2000), 1953--1983.

\bibitem[Gra]{Gra}  A.\ Gray, {\em Tubes}, 2nd edn., Progress in Math., vol.\ 221, Birkh\"auser, Boston, 2004.

\bibitem[HamLap]{HamLa}  B.\ M.\ Hambly and M.\ L.\ Lapidus, Random fractal strings: their zeta functions, complex dimensions and spectral asymptotics, {\it Trans.\ Amer.\ Math.\ Soc.} No. 1, {\bf 358} (2006), 285--314.


\bibitem[Hat]{Hat} M.\ Hata, On some properties of set-dynamical systems, {\it Proc.\ Japan Academy, Ser.\ A}, {\bf 61} (1985), 99--102.

%\bibitem[HarPol]{acta}
%R.\ Harvey and J.\ Polking, Removable singularities of solutions of linear partial differential equations, {\it Acta Math.} {\bf 125} (1970), 39--56.

\bibitem[HeLap]{lapidushe}
C.\ Q.\ He and M.\ L.\ Lapidus,
   Generalized Minkowski content, spectrum of fractal drums, fractal
   strings and the Riemann zeta-function,
   {\it Mem.\ Amer.\ Math.\ Soc.}\ No.\ 608, {\bf 127} (1997), 1--97.

\bibitem[HerLap1]{HerLa1} 
H.\ Herichi and M.\ L.\ Lapidus, 
{\em Quantized Number Theory, Fractal Strings, and the Riemann Hypothesis$:$ From Spectral Operators to
 Phase Transitions and Universality}, research monograph, World Scientific, Singapore, 2018, in press.\ ISBN: 978-981-3230-79-8.


\bibitem[HerLap2]{HerLa2} 
H.\ Herichi and M.\ L.\ Lapidus, Riemann zeros and phase transitions via the spectral operator on fractal strings, {\it J.\ Phys.\ A$:$ Math.\ Theor.} {\bf 45} (2012) 374005, 23pp.\ (Also: e-print, {\tt arXiv:1203.4828v2 
[math-ph]}, 2012; IHES preprint, IHES/M/12/09, 2012.)

\bibitem[HerLap3]{HerLa3} 
H.\ Herichi and M.\ L.\ Lapidus, Fractal complex dimensions, Riemann hypothesis and invertibility of the spectral operator, in: \cite[pp.\ 51--89]{Carfi1}.\ (dx.doi.org/10.1090/conm/600/11948.)\ (Also: e-print,  {\tt arXiv:1210.0882v3 [math.FA]}, 2013; IHES preprint, IHES/M/12/25, 2012.) 

%\bibitem[H\"o]{ho2} {\bf NOT CITED} L.\ H\"ormander, {\em The Analysis of Linear Partial Differential Operators}, vol.\ I, {\em Distribution Theory and Fourier Analysis}, 2nd edn.\ (of the 1983 edn.), Springer-Verlag, Berlin, 1990.

\bibitem[HuLaWe]{HuLaWe}  D.\ Hug, G.\ Last and W.\ Weil, A local Steiner-type formula for general closed sets and applications, {\em Math.\ Z.}, {\bf 246} (2004), 237--272.

%\bibitem[Hut]{hutchinson} {\bf NOT CITED}  J.\ Hutchinson, Fractals and self-similarity, {\it Indiana Univ.\ J.\ Math.}\ {\bf30} (1981), 713--747.

%\bibitem[In]{In} {\bf NOT CITED} Ingham A.\ E., {\em The Distribution of Prime Numbers}, second edition
(reprinted from the 1932 ed.), Cambridge Univ.\ Press, Cambridge, 1992.

\bibitem[JaffMey]{jm} S.\ Jaffard and Y.\ Meyer, {\em Wavelet Methods for Pointwise Regularity and Local Oscillations of Functions}, {\it Mem.\ Amer.\ Math.\ Soc.}\ No.\ {587}, {\bf 123} (1996), 1--110.


%\bibitem[JohLap]{johlap} {\bf NOT CITED} G.\ W.\ Johnson and M.\ L.\ Lapidus, {\em The Feynman Integral and Feynman’s Operational Calculus}, Oxford Science Publications, Oxford Mathematical Monographs, Oxford Univ.\ Press, Oxford, 2000.\ (Corrected reprinting and paperback edition, 2002.)

%\bibitem[JohLapNi]{johlapni} {\bf NOT CITED} G.\ W.\ Johnson, M.\ L.\ Lapidus and L.\ Nielsen, {\em Feynman’s Operational Calculus and Beyond$:$ Noncommutativity and Time-Ordering}, Oxford Science Publications, Oxford Mathematical Monographs, Oxford Univ.\ Press, Oxford, 2015.

%\bibitem[Kat]{kat} Y.\ Katznelson, {\em An Introduction to Harmonic Analysis}, 3rd edn., Cambridge Mathematical Library, Cambridge Univ.\ Press, Cambridge, 2009.



\bibitem[KeKom]{kombrink}  M.\ Kesseb\"ohmer and S.\ Kombrink,
Fractal curvature measures and Minkowski content for self-conformal subsets of the real line, 
{\em Adv.\ in Math.}\ {\bf230} (2012), 2474--2512.

\bibitem[KlRot]{klrot} D.\ A. Klain and G.-C.\ Rota, {\em Introduction to Geometric Probability}, Academia Nazional dei Lincei, Cambridge Univ.\ Press, Cambridge, 1999.

\bibitem[KoRati]{KoRati} S.\ Ko\ced cak and A.\ V.\ Ratiu, Inner tube formulas for polytopes, {\em Proc.\ Amer.\ Math.\ Soc.}\ No.\ 3, {\bf 140} (2012), 999--1010.\ (Also: e-print, {\tt arXiv:1008.2040v1 [math.MG]}, 2010.)

\bibitem[Kom]{Kom}  S.\ Kombrink, A survey on Minkowski measurability of self-similar sets and self-conformal fractals in $\eR^d$, in: \cite[pp.\ 135--159]{Carfi1}.\  (dx.doi.org/10.1090/conm/600/11931.)

\bibitem[KomPeWi]{KomPeWi} S.\ Kombrink, E.\ P.\ J.\ Pearse and S.\ Winter, Lattice-type self-similar sets with pluriphase generators fail to be Minkowski measurable, {\em Math.\ Z.}, in press, 2016.\ (Also: e-print, {\tt arXiv:1501.03764v1 [math.PR]}, 2015.)

\bibitem[Kor]{Kor} J.\ Korevaar, {\em Tauberian Theory$:$ A Century of Developments}, Springer-Verlag, Heidelberg, 2004.

%\bibitem[KraPa]{krantz} S.\ G.\ Krantz and H.\ R.\ Parks, {\em The Geometry of Domains in Space}, Birkh\"auser
%Advanced Texts, Birkh\"auser, Boston, 1999.

\bibitem[LalLap1]{lallap1}  N.\ Lal and M.\ L.\ Lapidus, Hyperfunctions and spectral zeta functions of
Laplacians on self-similar fractals, {\em J.\ Phys.\ A: Math.\ Theor.}\ {\bf 45} (2012),
{365205}, l4pp.\ (dx.doi.org/10.1090/conm/601/11959.)\ (Also: e-print, {\tt arXiv:12O2.4126v2 [math-ph]}, 2012; IHES
preprint, IHES/M/12/14, 2012.)


\bibitem[LalLap2]{lallap2} 
N.\ Lal and M.\ L.\ Lapidus, The decimation method for Laplacians on fractals:
Spectra and complex dynamics, in: \cite[pp.\ 227--249]{Carfi2}.\ (dx.doi.org/10.1090/conm/601/11959.) (Also: e-print, {\tt arXiv:1302.4007v2 [math-ph]}, 2014;
IHES preprint, IHES/M/12/31, 2012.)

\bibitem[Lap1]{Lap1} M.\ L.\ Lapidus, Fractal drum, inverse spectral problems for elliptic operators
and a partial resolution of the Weyl--Berry conjecture, {\it Trans.\ Amer.\ Math.\ Soc.} {\bf 325}
(1991), 465--529.

\bibitem[Lap2]{Lap2} M.\ L.\ Lapidus, Spectral and fractal geometry: From the Weyl-Berry conjecture for the vibrations of fractal drums to the
Riemann zeta-function, in: {\em Differential Equations and Mathematical
Physics} (C.\ Bennewitz, ed.), Proc.\ Fourth UAB Internat.\
Conf.\ (Birmingham, March 1990), Academic Press, New
York, 1992, pp.\ 151--182.

\bibitem[Lap3]{Lap3} M.\ L.\ Lapidus, Vibrations of fractal drums, the Riemann hypothesis,
waves in fractal media, and the Weyl--Berry conjecture,
in: {\em Ordinary and Partial Differential Equations} (B.
D.\ Sleeman and R.\ J.\ Jarvis, eds.), vol.\ IV, Proc.\ Twelfth
Internat.\ Conf.\ (Dundee, Scotland, UK, June 1992), Pitman
Research Notes in Math. Series, vol.\ {289}, Longman Scientific
and Technical, London, 1993, pp.\ 126--209.

%\bibitem[Lap4]{La4} M.\ L.\ Lapidus, Analysis on fractals, Laplacians on self-similar sets, noncommutative geometry and spectral dimensions, {\em Topological Methods in Nonlinear Analysis}, {\bf 4} (1994), 137--195. (Special issue dedicated to Jean Leray.)

%\bibitem[Lap4]{La5} {\bf NOT CITED} M.\ L.\ Lapidus, Towards a noncommutative fractal geometry? Laplacians and volume measures on fractals, in: {\em Harmonic Analysis and Nonlinear Differential Equations $($A Volume in Honor of Victor L.\ Shapiro$)$}, Contemporary Mathematics, vol. 208, Amer.\ Math.\ Soc., Providence, R.\ I., 1997, pp.\ 211--252.

\bibitem[Lap4]{lapz}  M.\ L.\ Lapidus, {\em In Search of the Riemann Zeros$:$ Strings, Fractal Membranes and Noncommutative Spacetimes}, Amer.\ Math.\ Soc., Providence, R.\ I., 2008.


\bibitem[Lap5]{lap7}  M.\ L.\ Lapidus, Towards quantized number theory: Spectral operators and an asymmetric criterion for the Riemann hypothesis, {\em Philos.\ Trans.\ Royal Soc.\ Ser.\ A}, No.\ 2047, {\bf 373} (2015), 24pp.; DOI:10.1098/rsta.2014.0240.\ (Special issue titled ``{\em Geometric Concepts in the Foundations of Physics}''.) (Also: e-print, {\tt arXiv:1501.05362v2 [math-ph]}, 2015; IHES preprint, IHES/M/15/12, 2015.)

\bibitem[Lap6]{lap8}  M.\ L.\ Lapidus, The sound of fractal strings and the Riemann hypothesis, in: {\em Analytic Number Theory$:$ In Honor of Helmut Maier's 60th Birthday} (C.\ B.\ Pomerance and T.\ Rassias, eds.), Springer Internat.\ Publ.\ Switzerland, Cham, 2015, pp.\ 201--252; doi:10.1007/978-3-319-22240-0${}_-$\kern-1pt14.\ (Also: e-print, arXiv: {\tt1505.01548v1 [math-ph]}, 2015; IHES preprint, 
{\tt IHES/M/15/11}, 2015.)

\bibitem[Lap7]{lap77}  M.\ L.\ Lapidus, An overview of complex fractal dimensions: From fractal strings to fractal drums, and back, in: {\em Horizons in Fractal Geometry and Complex Dimensions} (R.\ G. Niemeyer, E.\ P.\ J.\ Pearse, J.\ A.\ Rock and T.\ Samuel, eds.), Contemporary Mathematics, Amer.\ Math.\ Soc., Providence, R.\ I., 2019, to appear.\ (Also: e-print, arXiv: {\tt1803.10399v1 [math-ph]},
2018.)

\bibitem[LapL{\' e}Ro]{LaLeRo}  M.\ L.\ Lapidus, J.\ L{\'e}vy-V{\'e}hel and J.\ A.\ Rock, Fractal
strings and multifractal zeta functions, {\it Lett.\ Math.\
Phys.} No.\ 1, {\bf 88} (2009), 101--129 (special issue dedicated
to the memory of Moshe Flato). (Springer Open
Access: DOI 10.1007/s1105-009-0302-y.) (Also: e-print,
{\tt arXiv:math-ph/0610015v3}, 2009.)

\bibitem[LapLu]{LaLu}  M.\ L.\ Lapidus and H.\ Lu, The geometry of p-adic fractal
strings: A comparative survey, in: {\it Advances in Non-Archimedean Analysis}, Proc.\ 11th Internat. Conf.\ on
{\it ``p-Adic Functional Analysis''} (Clermont-Ferrand, France,
July 2010), (J.\ Araujo, B.\ Diarra and A.\ Escassut, eds.), Contemporary
Mathematics, vol.\ {551}, Amer. Math. Soc., Providence,
R.~I., 2011, pp.\ 163--206.\ (Also: e-print, {\tt arXiv:1105.2966v1
[math.MG]}, 2011.)

\bibitem[LapLu-vFr1]{LapLuFr1}  M.\ L.\ Lapidus, H.\ Lu and M.\ van Frankenhuijsen, Minkowski measurability and exact fractal tube formulas for $p$-adic self-similar strings, in: \cite[pp.\ 161--184]{Carfi1}.\ (dx.doi.org/10.1090/conm/600/11949.)\ (Also: e-print, {\tt arXiv:1209.6440v1 [math.MG]}, 2012; IHES preprint, IHES/M/12/23, 2012.)


\bibitem[LapLu-vFr2]{LapLuFr2} M.\ L.\ Lapidus, H.\ Lu and M.\ van Frankenhuijsen, Minkowski dimension and explicit tube formulas for $p$-adic fractal strings, preprint, 2018.\ (Also:
e-print, {\tt arXiv:1603.09409v2 [math-ph]}, 2018.)


%\bibitem[LapMa1]{LaMa2} M.\ L.\ Lapidus and H.\ Maier, Hypoth{\`e}se de Riemann, cordes fractales vibrantes et conjecture de Weyl--Berry modifi{\' e}e, {\it C.\ R.\ Acad.\ Sci.\ Paris S{\'e}r.\ I Math.} {\bf 313} (1991), 19--24.

\bibitem[LapMa]{LapMa2} M.\ L.\ Lapidus and H.\ Maier, The Riemann hypothesis and inverse spectral problems for fractal strings,
{\it J.\ London Math.\ Soc.} (2) {\bf 52} (1995), 15--34.


\bibitem[{LapPe}1]{lappe1} 
M.\ L.\ Lapidus and E.\ P.\ J.\ Pearse, A tube formula for the
Koch snowflake curve, with applications to complex dimensions,
{\it J.\ London Math.\ Soc.} No. 2, {\bf 74} (2006), 397--414.\ (Also:
e-print, {\tt arXiv:math-ph/0412029v2}, 2005.)


%\bibitem[{LapPe}2]{lappe3}
%M.\ L.\ Lapidus and E.\ P.\ J.\ Pearse, Tube formulas for self-similar
%fractals, in: {\it Analysis on Graphs and its Applications}
%(P.\ Exner, {\it et al.}, eds.), Proc.\ Symposia Pure Math., vol.\ {\bf 77},
%Amer.\ Math.\ Soc., Providence, R.~I., 2008, pp. 211--230.\ (Also:
%e-print, {\tt arXiv:math.DS/0711.0173}, 2007; IHES preprint, IHES/M/08/28,
%2008.)

\bibitem[{LapPe}2]{lappe2}
M.\ L.\ Lapidus and E.\ P.\ J.\ Pearse, Tube formulas and
complex dimensions of self-similar tilings, {\it Acta Applicandae
Mathematicae} No. 1, {\bf 112} (2010), 91--137. (Springer Open
Access: DOI 10.1007/S10440-010-9562-x.) (Also: e-print,
{\tt arXiv:math.DS/0605527v5}, 2010; IHES preprint, IHES/M/08/27, 2008.)


\bibitem[LapPeWi1]{lappewi1}
Lapidus M.\ L., Pearse E.\ P.\ J., and  Winter S., Pointwise tube
formulas for fractal sprays and self-similar tilings with arbitrary
generators, {\it Adv. in Math.} {\bf 227} (2011), 1349--1398. (Also:
e-print, {\tt arXiv:1006.3807v3} [math.MG], 2011.)

\bibitem[LapPeWi2]{lappewi2}
Lapidus M.\ L., Pearse E.\ P.\ J., and  Winter S.,
Minkowski measurability results for self-similar tilings and fractals with monophase generator,
in: \cite[pp.\ 185--203]{Carfi1}.\ (Also: e-print, {\tt  arXiv:1104.1641v3 [math.MG]}, 2012; IHES preprint, IHES/M/12/33, 2012.)


%\bibitem[LapPo1]{LaPo1} M.\ L.\ Lapidus and C.\ Pomerance, Fonction z{\^e}ta de Riemann et conjecture de Weyl--Berry pour les tambours fractals, {\it C.\R.\ Acad.\ Sci.\ Paris S{\' e}r.\ I  Math.} {\bf 310} (1990), 343--348.

\bibitem[LapPo1]{lapiduspom}
M.\ L.\ Lapidus and C.\ Pomerance, The Riemann zeta-function and the one-dimensional Weyl--Berry conjecture for fractal drums,
{\it Proc.\ London Math.\ Soc.} (3) {\bf 66} (1993), No.\ 1, 41--69.


\bibitem[LapPo2]{LapPo2} M.\ L.\ Lapidus and C.\ Pomerance, Counterexamples to the modified Weyl-Berry conjecture on fractal drums, {\it Math.\ Proc.\ Cambridge Philos.\ Soc.} {\bf 119} (1996), 167--178.



\bibitem[LapRa\v{Z}u1]{fzf} M.\ L.\ Lapidus, G.\ Radunovi\'c and D.\ \v Zubrini\'c, {\em Fractal Zeta Functions and Fractal Drums$:$ Higher-Dimensional Theory of Complex Dimensions}, Springer Monographs in Mathematics, Springer, New York, 2017.




\bibitem[LapRa\v{Z}u2]{dtzf} M.\ L.\ Lapidus, G.\ Radunovi\'c and D.\ \v Zubrini\'c, Distance and tube zeta functions of fractals and arbitrary compact sets, {\em Adv.\ Math.} {\bf 307} (2017), 1215--1267.\ (Also: e-print, {\tt arXiv:1506.03525v2 [math-ph]}, 2016; IHES preprint, IHES/M/15/15, 2015.)



\bibitem[LapRa\v{Z}u3]{mefzf} M.\ L.\ Lapidus, G.\ Radunovi\'c and D.\ \v Zubrini\'c, Complex dimensions of fractals and meromorphic extensions of fractal zeta functions, {\em J.\ Math.\ Anal.\ Appl.} {\bf 453} (2017), no.\ 1, 458--484.\ (Also: e-print, {\tt arXiv:1508.04784v2 [math-ph]}, 2016.)
 
\bibitem[LapRa\v{Z}u4]{refds} M.\ L.\ Lapidus, G.\ Radunovi\'c and D.\ \v Zubrini\'c, {Zeta functions and complex dimensions of relative fractal drums$:$ Theory, examples and applications}, {\em Dissertationes Math.\ $($Rozprawy Mat.$)$} {\bf 526} (2017), 1--105.\ (Also: e-print, {\tt arXiv:1603.00946v3 [math-ph]}, 2016.)





%\bibitem[LapRa\v{Z}u6]{crasext_ftf} M.\ L.\ Lapidus, G.\ Radunovi\'c and D.\ \v Zubrini\'c, Fractal Tube Formula for Compact Sets and Relative Fractal Drums, with Application to a Minkowski Measurability Criterion, in preparation, 2014.


\bibitem[LapRa\v{Z}u5]{brezis} M.\ L.\ Lapidus, G.\ Radunovi\'c and D.\ \v Zubrini\'c, 
Fractal zeta functions and complex dimensions of relative fractal drums, {\em J.\ Fixed Point Theory and Appl.} No.\ 2, {\bf 15} (2014), 321--378.\ Festschrift issue in honor of H.\ Brezis' 70th birthday. (DOI: 10.1007/s11784-014-0207-y.) (Also: e-print, {\tt arXiv:1407.8094v3 [math-ph]}, 2014.)

\bibitem[LapRa\v{Z}u6]{tabarz} M.\ L.\ Lapidus, G.\ Radunovi\'c and D.\ \v{Z}ubrini\'c, Fractal zeta functions and complex dimensions: A general higher-dimensional theory, survey article, in: {\em 
Fractal Geometry and Stochastics V} (C.\ Bandt, K.\ Falconer and M.\ Z\"ahle, eds.), Proc.\ Fifth Internat.\ Conf.\ (Tabarz, Germany, March 2014), {\em Progress in Probability}, vol.\ {70}, Birkh\"auser/Springer Internat., Basel, Boston and Berlin, 2015, pp.\ 229--257; DOI: 10.1007/978-3-319-18660-3${}_-$\kern-1pt13.\ (Based on a plenary lecture given by the first author at that conference.) (Also: e-print, {\tt arXiv:1502.00878v3 [math.CV]}, 2015.)

\bibitem[LapRa\v{Z}u7]{cras_ftf}  M.\ L.\ Lapidus, G.\ Radunovi\'c and D.\ \v Zubrini\'c, Fractal tube formulas and a Minkowski measurability criterion for compact subsets of Euclidean spaces, {\em Discrete and Continuous Dynamical Systems Ser.\ S.}, 2018.\ (Also: e-print, {\tt arXiv:1411.5733v3 [math-ph]}, 2016; IHES preprint, IHES/M/15/17, 2015.)

\bibitem[LapRa\v{Z}u8]{ftf_A} M.\ L.\ Lapidus, G.\ Radunovi\'c and D.\ \v Zubrini\'c, Fractal tube formulas for compact sets and relative fractal drums: Oscillations, complex dimensions and fractality, {\em J.\ Fractal  Geom.} 5 (2018), 1--119.\  (DOI: 10.4171/JFG/57) (Also: e-print, {\tt arXiv:1604.08014v5 [math-ph]}, 2018.)


\bibitem[LapRa\v{Z}u9]{laprazu9} M.\ L.\ Lapidus, G.\ Radunovi\'c and D.\ \v Zubrini\'c, Essential singularities of fractal zeta functions, in preparation, 2018.


\bibitem[LapRa\v Zu10]{laprazu10} M.\ L.\ Lapidus, G.\ Radunovi\'c and D.\ \v Zubrini\'c, Fractal zeta functions and logarithmic gauge Minkowski measurability (tentative title), in preparation, 2018.



%\bibitem[LapRo]{lapidusrock} M.\ L.\ Lapidus and J.\ A.\ Rock, Towards zeta functions and complex dimensions of multifractals,
%{\it Complex Variables and Elliptic Equations} No.\ 6, {\bf 54} (2009), 545--560.\ (Also: e-print, {\tt    arXiv:math-ph/0810.0789}, 2008; IHES preprint, IHES M/15/16, 2015.)


%\bibitem[LapRo\v{Z}u]{laprozu} M.\ L.\ Lapidus, J.\ A.\ Rock and D.\ \v Zubrini\'c, Box-counting fractal strings, zeta functions, and equivalent forms of Minkowski dimension, in: {\it Fractal Geometry and Dynamical Systems in Pure and Applied Mathematics I$:$ Fractals in Pure Mathematics} (D.\ Carf\`i, M.\ L.\ Lapidus, E.\ P.\ J.\ Pearse and M.\ van Frankenhuijsen, eds.), Contemporary Mathematics, vol.\ {600}, Amer.\ Math.\ Soc., Providence, R.\ I., 2013, pp.\ 239--271.\
%(Also: e-print, {\tt arXiv:1207.6681v2 [math-ph]}, 2013; IHES preprint, IHES/M/12/22, 2012.)


\bibitem[Lap-vFr1]{lapidusfrank}
M.\ L.\ Lapidus and  M.\ van Frankenhuijsen, {\em Fractal Geometry and Number Theory}: Complex Dimensions of
Fractal Strings and Zeros of Zeta Functions, Birkh\"auser, Boston, 2000.

%\bibitem[Lap-vFr2]{lapidusfrank06} M.\ L.\ Lapidus and  M.\ van Frankenhuijsen, {\em Fractal Geometry, Complex Dimensions and Zeta Functions$:$ Geometry and Spectra of Fractal Strings}, Springer Monographs in Mathematics, Springer, New York, 2006.


\bibitem[Lap-vFr2]{lapidusfrank12} M.\ L.\ Lapidus and  M.\ van Frankenhuijsen,
{\em Fractality, Complex Dimensions,
and Zeta Functions$:$
Geometry and Spectra of Fractal Strings}, second revised and enlarged edition (of the 2016 edn.), Springer Monographs in Mathematics, Springer, New York, 2013.

\bibitem[L\'eMen]{lemen} 
J.\ L\'evy-V\'ehel and F.\ Mendivil, Multifractal and higher-dimensional
zeta functions, {\em Nonlinearity} No.\ 1, {\bf24} (2011),
259--276.

\bibitem[Man1]{Man} B.\ B.\ Mandelbrot, {\it The Fractal Geometry of Nature}, revised and
enlarged edition\ (of the 1977 edn.), W.\ H.\ Freeman, New York, 1983.

\bibitem[Man2]{Man2} B.\ B.\ Mandelbrot, Measures of fractal lacunarity: Minkowski
content and alternatives, in: {\em Fractal Geometry and Stochastics} (C.\ Bandt, S.\ Graf and M.\ Z\"ahle, eds.), Progress in Probability, vol.\ {37}, Birkh\"auser-Verlag, Basel, 1995, pp.\ 15--42.

%\bibitem[Mattn]{Matt} {\bf NOT CITED} L.\ Mattner, Complex differentiation under the integral, {\em Nieuw Arch. Wiskd.} No. 1, (5) {\bf 2} (2001), 32--35.

\bibitem[Mink]{mink} H.\ Minkowski, Theorie der Konvexen K\"orper, insbesonder der Begr\"undung ihrer Oberfl\"achenbegriffs, in: {\em Gesammelte Abhandlungen von Hermann Minkowski} (part II, Chapter XXV), Chelsea, New York, 1967, pp.\ 131--229. (Originally reprinted in: Gesamm.\ Abh., vol.\ II, Leipzig, 1911.)

\bibitem[MorSep]{MorSep}  G.\ Mora and J.\ M.\ Sepulcre, Privileged regions in critical strips of non-lattice Dirichlet polynomials, {\em Complex Anal.\ Oper.\ Theory} No.\ 4, {\bf 7} (2013), 1417--1426.

\bibitem[MorSepVi]{MorSepVi}  G.\ Mora, J.\ M.\ Sepulcre and T.\ Vidal, On the existence of exponential polynomials with prefixed gaps, {\em Bull.\ London Math.\ Soc.}\ 
No.\ 6, {\bf 45} (2013), 1148--1162.

\bibitem[Ol1]{Ol1} L.\ Olsen, Multifractal tubes: Multifractal zeta functions, multifractal Steiner tube formulas and explicit formulas, in: \cite[pp.\ 291--326]{Carfi1}.\ (dx.doi.org/10.1090/conm/600/11920.)

\bibitem[Ol2]{Ol2}  L.\ Olsen, Multifractal tubes, in: {\em Further Developments in Fractals and Related Fields}, Trends in Mathematics, Birkh\"auser/Springer, New York, 2013, pp.\ 161--191.


%\bibitem[OlSni]{os} {\bf NOT CITED}  L.\ Olsen and N.\ Snigireva, $L^q$ spectra spectra and R\'enyi dimensions of in-homogeneous self-similar measures, {\em Nonlinearity} {\bf 20}, (2007), 151--175.

\bibitem[Pe]{pe2}   E.\ P.\ J.\ Pearse, Canonical self-affine tilings by iterated function
systems, {\it Indiana Univ.\ Math.\ J.} No. 6, {\bf 56} (2007), 3151--3169.\ (Also: e-print, {\tt arXiv:math.MG/0606111}, 2006.)


%\bibitem[Pe2]{pe2}  E.\ P.\ J.\ Pearse, Canonical self-affine tilings by iterated function
%systems, {\it Indiana Univ.\ Math.\ J.} No. 6, {\bf 56} (2007), 3151--3169. (Also: e-print, {\tt arXiv:math.MG/0606111}, 2006.)

\bibitem[PeWi]{pewi}
E.\ P.\ J.\ Pearse and S.\ Winter, Geometry of canonical self-similar
tilings, {\it Rocky Mountain J. Math.} No.\ 4, {\bf 42} (2012), 1327--1357.\ (Also:
e-print, {\tt arXiv:0811.2187}, 2009.)

\bibitem[PiStVi]{PiStVi} S.\ Pilipovi\' c, B.\ Stankovi\' c and J.\ Vindas, {\em Asymptotic Behavior of Generalized Functions}, Series on Analysis, Applications and Computations, vol.\ 5, World Scientific, Hackensack, NJ, 2012.


\bibitem[Pitt]{Pit} H.\ R.\ Pitt, {\em Tauberian Theorems}, Oxford Univ.\ Press, London, 1958.

\bibitem[PittWie]{PitWie} H.\ R.\ Pitt and N.\ Wiener, A generalization of Ikehara's theorem, {\em J.\ Math.\ and Phys.\ M.I.T.} {\bf 17} (1939), 247--258.



\bibitem[Ra1]{ra}  G.\ Radunovi\'c, {\em Fractal Analysis of Unbounded Sets in Euclidean Spaces and Lapidus Zeta Functions}, Ph.D. Thesis, University of Zagreb, Croatia, 2015.

\bibitem[Ra2]{ra2}  G.\ Radunovi\'c, Fractality and Lapidus zeta functions at infinity, {\em Mathematical  Communications} {\bf21}
(2016), 141--162.\ (Also: e-print, {\tt arXiv:1510.06449v2 [math-ph]}, 2015.)

\bibitem[RatWi]{winter} J.\ Rataj and S.\ Winter, Characterization of Minkowski measurability in terms of surface area, {\it J.\ Math.\ Anal.\ Appl.} {\bf 400} (2013), 120--132.\ (Also:
e-print, {\tt arXiv: 1111.1825v2 [math.CA]}, 2012.)

%\bibitem[ReeSim1]{resi1} {\bf NOT CITED} M.\ Reed and B.\ Simon, {\em Methods of Modern Mathematical Physics}, vol.\ I, {\em Analysis of Operators}, Academic Press, New York, 1980.

\bibitem[Res]{maja} M.\ Resman, Invariance of the normalized Minkowski content with respect to the ambient space, 
{\it Chaos, Solitons \&\ Fractals}
{\bf57} (2013),
 123--128.\ (Also:
e-print, {\tt arXiv:1207.3279v1}, 2012.)

%\bibitem[Ron]{Ron}  L.\ I.\ Ronkin, {Almost periodic generalized functions in tube domains}, {\em Zap. Nauchn.\ Sem.\ S.-Peterburg.\ Otdel.\ Mat.\ Inst.\ Steklov.\ $($POMI$)$}, {\bf 204} (1997) {\em Issled.\ po Linein.\ Oper.\ i Teor.\ Funkts.\ 25}, 210--236, 303  (in Russian).\ English translation: {Almost periodic distributions in tube domains}, {\em Journal of Mathematical Sciences}, No.\ 3, {\bf 101} (2000) 3172--3189

%\bibitem[Res]{maja} M.~Resman, Invariance of the normalized Minkowski content with respect to the ambient space, e-print, {\tt arXiv:1207.3279v1}, 2012.



%\bibitem[ReZuZup]{rezuzu} Resman M., \v Zubrini\'c D., \v Zupanovi\'c V.,  article in progress.

\bibitem[Schn]{Schn}  R.\ Schneider, {\em Convex Bodies$:$ The Brunn--Minkowski Theory}, Cambridge Univ.\ Press, Cambridge, 2003.

\bibitem[Schw]{Schw} L.\ Schwartz, {\em Th\'eorie des Distributions}, rev.\ and enl.\ ed.\ (of the 1951 ed.) Hermann, Paris, 1966.

%\bibitem[Ser]{serre} Serre J.-P., {\em A Course in Arithmetic}, English translation,
%Springer-Verlag, Berlin, 1973.

%\bibitem[Shi]{shishikura} Shishikura M., The Hausdorff dimension of the boundary of the Mandelbrot set and Julia sets, {\it Ann.\ of Math.} {\bf 147} (1998), 225-267.

\bibitem[SriTod]{SriTod} H.\ M.\ Srivastava and P.\ G.\ Todorov, An explicit formula for the generalized Bernoulli polynomials, {\it J.\ Math.\ Anal.\ Appl.} {\bf 130} (1988), 509--513.

\bibitem[Sta]{stacho}  L.\ L.\ Stach\'o, On the volume function of parallel sets, Acta Sci.~Math. {\bf 38} (1976), 365--374.

\bibitem[Stein]{stein} J.\ Steiner, \"Uber parallele fl\"achen, Monatsb.\ preuss.\ Akad.\ Wiss., Berlin, 1840, pp.~114--118.\ (Reprinted in: {\em Gesamm.\ Werke}, Vol.\ II, pp.~173--176)

\bibitem[Tem]{tem} N.\ M.\ Temme, {\em Special Functions$:$ An Introduction to the Classical Functions of Mathematical Physics}, John Wiley \& Sons, New York, 1996.

\bibitem[Tep1]{tep1}  A.\ Teplyaev, Spectral zeta functions of symmetric fractals,
in: {\em Fractal Geometry and Stochastics III}, Progress in Probability, vol. {57}, Birkh\"auser-Verlag, Basel,
2004, pp.\ 245--262.


\bibitem[Tep2]{tep2}  A.\ Teplyaev, Spectral zeta functions of fractals and the complex
dynamics of polynomials, {\em Trans.\ Amer.\ Math.\ Soc.}\ {\bf 359}
(2007), 4339--4358. (Also: e-print, {\tt arXiv:math.SP/0505546},
2005.)

%\bibitem[Tit]{titch}  E.\ C.\ Titchmarsh, {\em Introduction to the Theory of Fourier Integrals}, second edition, Oxford University Press, Oxford, 1948.

%\bibitem[Tit2]{Titch2} {\bf NOT CITED} E.\ C.\ Titchmarsh, {\it The Theory of the Riemann Zeta-Function}, second edition (revised by D.\ R.\ Heath-Brown), Oxford Univ.\ Press, Oxford, 1986.

\bibitem[Tri1]{Tri1}  C.\ Tricot, {\em Mesures et Dimensions}, Th\`ese de Doctorat d'Etat Es Sciences Math\'ematiques, Universit\'e Paris-Sud, Orsay, France, 1983.

\bibitem[Tri2]{tricot} C.\ Tricot, {\em Curves and Fractal Dimension}, Springer-Verlag, Berlin, 1995.

%\bibitem[Tri3]{tri2} {\bf NOT CITED} General Hausdorff functions, and the notion of one-sided measure and dimension, {\em Ark.\ Mat.}\ {\bf 48} (2010), 149--176.



%\bibitem[Vlad]{vladimirov} Vladimirov V.\ S., {\em Methods of the Theory of Functions of Many Complex Variables}, M.I.T.\ Press,
%Cambridge, MA, 1966; original edition: {\em Metody teorii funkci\u\i\ mnogih kompleksnyh peremennyh} (Russian),
% ``Nauka'', Moscow 1964.

\bibitem[Wey]{Wey3}  H.\ Weyl, On the volume of tubes, {\em Amer.\ J.\ Math.} {\bf 61} (1939), 461--472.

\bibitem[Wi]{Wi}  S.\ Winter, Curvature measures and fractals, {\em Dissertationes Math.\ $($Rozprawy Mat.$)$} {\bf 453} (2008), 1--66.

\bibitem[WiZ\"a]{WiZa}  S.\ Winter and M.\ Z\"ahle, Fractal curvature measures of self-similar sets, {\em Adv.\ Geom.}\ {\bf 13} (2013), 229--244.

\bibitem[Z\" a1]{Zah}  M.\ Z\"ahle, Curvatures and currents for unions of sets with positive reach, {\em Geom.\ Dedicata} {\bf 23} (1987), 155--171.

%\bibitem[Z\"a2]{Za2} M.\ Z\"ahle, Curvatures and currents for unions of sets with positive reach, {\em Geom. Dedicata} {\bf 23} (1987), 155--171.
%
%\bibitem[Z\"a3]{Za3} M.\ Z\"ahle, Approximation and characterization of generalized Lipschitz--Killing curvatures, {\em Ann.\ Global Anal.\ Geom.}\ No.\ 3, {\bf 8} (1990), 249--260.

\bibitem[Z\"a2]{Za4}  M.\ Z\"ahle, Lipschitz--Killing curvatures of self-similar random fractals, {\em Trans.\ Amer.\ Math.\ Soc.}\ No.\ 5, {\bf 363} (2011), 2663--2684.\ (DOI: 10.1090/S0002-9947-2010-05198-0.)

\bibitem[Z\"a3]{Za5}  M.\ Z\"ahle, Curvature measures of fractal sets, in: \cite[pp.\ 381--399]{Carfi1}.\ (dx.doi.org/10.1090/conm/600/11953.)



%\bibitem[\v Zu1]{cras} \v Zubrini\'c D., Singular sets of Sobolev functions, {\it C.\ R.\ Acad.\ Sci. Paris S\' er. I. Math., Analyse math\'ematique} {\bf 334} (2002), 539--544.
%
%\bibitem[\v Zu2]{mink}
%\v Zubrini\'c D.,
%Minkowski content and singular integrals, {\it Chaos,
%Solitons \&\ Fractals} No. 1, {\bf 17} (2003), 169--177.
%
%\bibitem[\v Zu3]{singl} \v Zubrini\'c D., Singular sets of Lebesgue integrable functions,
%{\it Chaos, Solitons \&\ Fractals} {\bf 21} (2004), 1281--1287.

\bibitem[\v Zu]{rae} D.\ \v Zubrini\'c, Analysis of Minkowski contents of fractal sets and applications, 
{\it Real Anal.\ Exchange} No. 2, {\bf 31} (2005/2006), 315--354.

%\bibitem[\v Zu5]{proc05} \v Zubrini\'c D., Hausdorff dimension of singular sets of Sobolev functions and applications,
 %in: {\em More Progress in Analysis}, Proc. of the 5th Internat. ISAAC Congress (H. G. W.\ Begher and F.\ Nicolosi, eds.), World Scientific, 2009, pp.
%793--802.

%\bibitem[\v Zu6]{esing} \v Zubrini\'c D., Singular dimension and maximally singular functions, preprint, 2012.

%\bibitem[\v Zu\v Zup]{zuzu} \v Zubrini\'c D., V.\ \v Zupanovi\'c, Fractal analysis of spiral trajectories of some
%planar vector fields, {\it Bulletin des Sciences Math\'ematiques} No. 6, {\bf 129} (2005), 457--485.



\end{thebibliography}
\end{document}